\setlist[itemize]{itemsep=0pt}
\setlist[enumerate]{itemsep=0pt}
\Crefname{lemma}{Lemma}{Lemmas}
\Crefname{fact}{Fact}{Facts}
\Crefname{theorem}{Theorem}{Theorems}
\Crefname{corollary}{Corollary}{Corollaries}
\Crefname{claim}{Claim}{Claims}
\Crefname{example}{Example}{Examples}
\Crefname{problem}{Problem}{Problems}
\Crefname{definition}{Definition}{Definitions}
\Crefname{notation}{Notation}{Notations}
\Crefname{assumption}{Assumption}{Assumptions}
\Crefname{subsection}{Subsection}{Subsections}
\Crefname{section}{Section}{Sections}
\newtheorem{theorem}{Theorem}[section]
\newtheorem*{theorem*}{Theorem}
\newtheorem{proposition}[theorem]{Proposition}
\newtheorem*{proposition*}{Proposition}
\newtheorem*{property*}{Property}
\newtheorem{lemma}[theorem]{Lemma}
\newtheorem*{lemma*}{Lemma}
\newtheorem*{corollary*}{Corollary}
\newtheorem*{conjecture*}{Conjecture}
\newtheorem{fact}[theorem]{Fact}
\newtheorem*{fact*}{Fact}
\newtheorem*{exercise*}{Exercise}
\newtheorem*{hypothesis*}{Hypothesis}
\newtheorem{conjecture}[theorem]{Conjecture}
\theoremstyle{definition}
\newtheorem{definition}[theorem]{Definition}
\newtheorem{example}[theorem]{Example}
\newtheorem{exercise-easy}[theorem]{Exercise}
\newtheorem{exercise-med}[theorem]{Exercise}
\newtheorem{exercise-hard}[theorem]{Exercise$^\star$}
\newtheorem{claim}[theorem]{Claim}
\newtheorem*{claim*}{Claim}
\newtheorem{remark}[theorem]{Remark}
\newtheorem*{remark*}{Remark}
\newtheorem*{observation*}{Observation}
\DeclareSymbolFont{extraup}{U}{zavm}{m}{n}
\DeclareMathSymbol{\varheart}{\mathalpha}{extraup}{86}
\DeclareMathSymbol{\vardiamond}{\mathalpha}{extraup}{87}
\DeclareMathOperator*{\E}{\mathbb E}
\DeclareMathOperator*{\Var}{\mathrm{Var}}
\DeclareMathOperator*{\Cov}{\mathrm{Cov}}
\renewcommand{\Pr}{\operatorname*{\mathbf{Pr}}}
\newcommand{\Mod}[1]{\ (\mathrm{mod}\ #1)}
\newcommand{\eps}{\varepsilon}
\newcommand{\abs}[1]{\left| #1 \right|}
\newcommand{\vabs}[1]{\left\| #1 \right\|}
\newcommand{\pbra}[1]{\left( #1 \right)}
\newcommand{\sbra}[1]{\left[ #1 \right]}
\newcommand{\cbra}[1]{\left\{ #1 \right\}}
\newcommand{\floorbra}[1]{\left\lfloor #1 \right\rfloor}
\renewcommand{\mid}{\,\middle\vert\,}
\newcommand{\Bin}{\mathsf{Bin}}
\newcommand{\bin}{\{0,1\}}
\newcommand{\poly}{\mathsf{poly}}
\newcommand{\polylog}{\mathsf{polylog}}
\newcommand{\err}{\mathsf{err}}
\newcommand{\supp}[1]{\mathsf{supp}\pbra{#1}}
\newcommand{\ac}{\mathsf{AC^0}}
\newcommand{\nc}{\mathsf{NC^0}}
\newcommand{\sym}{\mathrm{sym}}
\newcommand{\Deven}{\mathtt{evens}}
\newcommand{\Dodd}{\mathtt{odds}}
\newcommand{\Dall}{\mathtt{all}}
\newcommand{\Fbb}{\mathbb{F}}
\newcommand{\Nbb}{\mathbb{N}}
\newcommand{\Rbb}{\mathbb{R}}
\newcommand{\Zbb}{\mathbb{Z}}
\newcommand{\Dcal}{\mathcal{D}}
\newcommand{\Ecal}{\mathcal{E}}
\newcommand{\Fcal}{\mathcal{F}}
\newcommand{\Gcal}{\mathcal{G}}
\newcommand{\Hcal}{\mathcal{H}}
\newcommand{\Ical}{\mathcal{I}}
\newcommand{\Mcal}{\mathcal{M}}
\newcommand{\Pcal}{\mathcal{P}}
\newcommand{\Qcal}{\mathcal{Q}}
\newcommand{\Ucal}{\mathcal{U}}
\newcommand{\Wcal}{\mathcal{W}}
\newcommand{\tvdist}[1]{\vabs{#1}_\mathsf{TV}}
\renewcommand{\bar}{\overline}
\title{Symmetric Distributions from Shallow Circuits}
\author{
Daniel M. Kane\thanks{University of California, San Diego. Email: \texttt{dakane@ucsd.edu}. Supported by NSF Medium Award CCF-2107547 and NSF CAREER Award CCF-1553288.}
\and
Anthony Ostuni\thanks{University of California, San Diego. Email: \texttt{aostuni@ucsd.edu}.}
\and
Kewen Wu\thanks{Caltech. Email: \texttt{shlw\_kevin@hotmail.com}.}
}
\date{}
\begin{document}

\maketitle

\begin{abstract}
    We characterize the symmetric distributions that can be (approximately) generated by shallow Boolean circuits.
    More precisely, let $f\colon \{0,1\}^m \to \{0,1\}^n$ be a Boolean function where each output bit depends on at most $d$ input bits.
    Suppose the output distribution of $f$ evaluated on uniformly random input bits is close in total variation distance to a symmetric distribution $\mathcal{D}$ over $\{0,1\}^n$.
    Then $\mathcal{D}$ must be close to a mixture of the uniform distribution over $n$-bit strings of even Hamming weight, the uniform distribution over $n$-bit strings of odd Hamming weight, and $\gamma$-biased product distributions for $\gamma$ an integer multiple of $2^{-d}$.
    Moreover, the mixing weights are determined by low-degree, sparse $\mathbb{F}_2$-polynomials.
    This extends the previous classification for generating symmetric distributions that are also uniform over their support.
\end{abstract}

\newpage
\setcounter{tocdepth}{2}
\tableofcontents
\newpage

\section{Introduction}\label{sec:intro}

One of the most celebrated results in complexity theory is that $\ac$ circuits\footnote{Recall that these are (families of) Boolean circuits of constant depth and unbounded fan-in gates. They may be contrasted with $\nc$ circuits, which are also of constant depth, but have bounded fan-in gates.} require an exponential number of gates to compute the parity function \cite{FSS84, Ajt83, yao1985separating, hastad1986almost, haastad1986computational}.
Surprisingly, the weaker circuit class $\nc$ suffices to perform a very similar task.
In particular, mapping uniformly random bits $(x_1, x_2, \ldots, x_n)$ to $(x_1 \oplus x_2, x_2 \oplus x_3, \ldots, x_{n-1} \oplus x_n, x_n \oplus x_1)$ produces the uniform distribution over $n$-bit strings of even parity, or equivalently, over input-output pairs $(x, \textrm{PARITY}(x))$ \cite{babai1987random, boppana1987one}.
This observation begs the question: what computational resources are required to (approximately) \emph{generate} specific distributions, as opposed to the traditional task of \emph{computing} specific functions?

A fundamental question in its own right, the complexity of sampling from distributions also has numerous applications.
For example, results on the hardness of sampling can be translated to data structure lower bounds \cite{viola2012complexity, lovett2011bounded, beck2012large, viola2020sampling, chattopadhyay2022space, viola2023new, yu2024sampling, kane2024locality, alekseev2025sampling}, provide input-independent quantum-classical separations \cite{watts2023unconditional, viola2023new, kane2024locality, grier2025quantum}, and are key components to the construction of explicit codes \cite{shaltiel2024explicit}.
Moreover, techniques and intuition developed in this setting have successfully been applied to pseudorandom generators \cite{viola2012complexity, lovett2011bounded, beck2012large} and extractors \cite{viola2012extractors, de2012extractors, viola2014extractors, chattopadhyay2016explicit, cohen2016extractors}.

While the general problem was considered in early work (see, e.g., \cite{jerrum1986random}), a focus on generating distributions via shallow circuits was advocated for more recently by Viola \cite{viola2012complexity}.
Since then, the field has seen a number of exciting developments (see, e.g., the recent works \cite{filmus2023sampling, viola2023new, kane2024locality, shaltiel2024explicit, kane2024locally2, alekseev2025sampling} and references therein). 
One notable takeaway from prior works is that $\nc$ circuits can sample very few uniform symmetric distributions (i.e., uniform distributions over a symmetric support), even allowing for small errors.
In particular, the line of work \cite{viola2012complexity, filmus2023sampling, viola2023new, kane2024locality, kane2024locally2} recently culminated in the following classification result, which confirmed a conjecture of Filmus, Leigh, Riazanov, and Sokolov \cite{filmus2023sampling}.
For a function $f\colon\bin^m\to\bin^n$, let $f(\Ucal^m)$ be the distribution resulting from applying $f$ to $x \sim \Ucal^m$, the uniform distribution over $\bin^m$.

\begin{theorem}[{\cite{kane2024locally2}}]\label{thm:nc0_classification}
Let $\eps\in[0,1]$ be arbitrary.
Assume $f\colon\bin^m\to\bin^n$ is computable by an $\nc$ circuit of constant depth and $f(\Ucal^m)$ is $\eps$-close in total variation distance to a uniform symmetric distribution where $n$ is sufficiently large.
Then $f(\Ucal^m)$ is $O(\eps)$-close to one of the following six special uniform symmetric distributions:
    \begin{itemize}
        \item Point distribution on $0^n$,
        \item Point distribution on $1^n$,
        \item Uniform distribution over $\cbra{0^n,1^n}$,
        \item Uniform distribution over $n$-bit strings with even Hamming weights,
        \item Uniform distribution over $n$-bit strings with odd Hamming weights,
        \item Uniform distribution over all $n$-bit strings.
    \end{itemize}
\end{theorem}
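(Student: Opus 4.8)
The plan is to combine the rigidity that bounded locality imposes on low‑order marginals with a decomposition of $f(\Ucal^m)$ into a constant‑size mixture of ``spread out'' pieces, and then to recover the six possibilities from the distribution of the Hamming weight $|f(x)|$. First, I would assume $\eps$ is below a small constant $c=c(d)$, since otherwise the conclusion is vacuous once the hidden constant exceeds $1/c$. Constant depth gives locality $d=O(1)$: each output is $y_i=f_i(x_{S_i})$ with $|S_i|\le d$. Two consequences: (a) each marginal $\mu_i:=\Pr_x[y_i=1]$ is an integer multiple of $2^{-d}$; (b) coordinates $i_1,\dots,i_k$ with pairwise disjoint $S_{i_\ell}$ are mutually independent under $f(\Ucal^m)$. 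Since $\mathcal D$ is symmetric and within $\eps$ of $f(\Ucal^m)$, all $\mu_i$ lie within $\eps$ of the common marginal $\mu$ of $\mathcal D$; if $\mu<2^{-d}-\eps$ then every $\mu_i=0$, so $f(\Ucal^m)$ is the point distribution on $0^n$, and symmetrically for $\mu>1-2^{-d}+\eps$. Hence I may assume $\mu$, and every $\mu_i$, is bounded away from $0$ and $1$ by $\Omega_d(1)$.

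\emph{Structural decomposition.} Next I would run the heavy‑variable recursion standard for locally samplable sources: iteratively condition on any input bit feeding a constant fraction of the still‑live output coordinates. This halts after fixing a set $B$ of $O_d(1)$ input bits, and for each $b$ the restriction $f_b:=f|_{x_B=b}$ is ``spread out''---after deleting $O_d(1)$ exceptional coordinates, no input bit is shared by more than $O_d(1)$ outputs, so the graph on $[n]$ recording shared variables has bounded degree, with each component thin/tree‑like (as in the $\mathrm{PARITY}$ sampler $y_i=x_i\oplus x_{i+1}$). Thus $f(\Ucal^m)=\E_b[f_b(\Ucal^m)]$ is a mixture of $2^{O_d(1)}$ spread‑out locality‑$d$ sources, still $\eps$‑close to the symmetric $\mathcal D$.

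\emph{Endgame via the Hamming weight.} As $\mathcal D$ is symmetric and uniform on its support, $\mathcal D$ is uniform over $\{y:|y|\in A\}$ for some $A\subseteq\{0,\dots,n\}$, so its weight profile is proportional to $\binom na$ on $A$; the goal is to force $A$ into one of the six shapes. For one spread‑out piece $g(\Ucal^m)$ that is (close to) symmetric, near‑symmetry forces every pairwise covariance---zero on the many disjoint pairs by (b)---to be uniformly tiny, so $\Var(|g(x)|)=\sum_i\mu_i(1-\mu_i)+o(n)=\Theta(n)$ unless almost every $g_i$ is constant; a central limit theorem (indeed a local limit theorem, to capture a possible period‑$O_d(1)$ lattice structure) then places $|g(x)|$ within $o(1)$ in Kolmogorov distance of a Gaussian of variance $\Theta(n)$. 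Propagating the mean and variance of $|f(x)|$ through the $O_d(1)$‑fold mixture and matching against $\binom na$ restricted to $A$: the exponential skew of $\binom na$ away from $a=n/2$ forces $\mu=\tfrac12$---which is what kills the ``intermediate'' symmetric targets, e.g.\ the uniform distribution on weight‑$k$ strings for $0<k<n$ or on weights in a short interval---and pins $A$ to a union of residue classes filling a window of width $\omega(\sqrt n)$ about $n/2$; a character‑sum (transfer‑matrix) estimate on the thin structure then shows $|f(x)|\bmod q$ equidistributes for every $q\ge 3$, so only periods $q\le 2$ survive (only then is there a modulus‑one eigenvalue), leaving ``all weights'', ``even weights'', or ``odd weights'' near $n/2$, separated by whether the degree‑$\le d$ polynomial $\bigoplus_i f_i$ is $\equiv 0$, $\equiv 1$, or $\eps$‑balanced. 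The degenerate branch---almost every $g_i$ constant, hence large uniform correlations---is closed by the heavy‑variable/sunflower analysis again: $O_d(1)$ input bits control essentially all outputs, leaving the point distributions on $0^n$ and $1^n$ and the uniform distribution on $\{0^n,1^n\}$.

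\emph{Main obstacle.} I expect the crux to be making the structural decomposition quantitatively compatible with near‑symmetry: the mixture pieces $f_b(\Ucal^m)$ are not themselves symmetric, so one must repeatedly exploit that only their convex combination is---precisely where the case analysis in prior work becomes delicate---and in the spread‑out regime one needs a limit theorem sharp enough, together with the combinatorics of $\binom na$ and the $\bmod\ q$ equidistribution for $q\ge 3$, to rule out the genuine uniform symmetric distributions (weight‑$k$ slices, short weight‑intervals, weights in a residue class modulo $q\ge 3$) that a shallow circuit must provably fail to approximate.
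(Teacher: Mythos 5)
The high-level program you outline---condition away heavy input bits, decompose $f(\Ucal^m)$ into a constant-size mixture, then pin down the support $A$ of the uniform symmetric target by matching the Hamming-weight distribution---is indeed the program followed in \cite{kane2024locally2}, but your structural decomposition step contains a false claim that the rest of the argument leans on. If you iteratively condition on input bits feeding a constant \emph{fraction} (say $\ge n/A$ for some constant $A$) of the live outputs, then after $O_d(1)$ conditionings the surviving inputs merely have degree $< n/A$, which can still be $\Theta(n)$; it is not true that ``no input bit is shared by more than $O_d(1)$ outputs,'' and the dependency hypergraph on $[n]$ is \emph{not} of bounded degree after deleting $O_d(1)$ exceptional coordinates. (Demanding instead an absolute-constant degree threshold would force $\Omega(n)$ conditionings, which is unaffordable.) Consequently the restricted pieces $f_b$ are not sums of independent or weakly dependent blocks in any direct sense, and the central/local limit theorem you invoke to say ``$|g(x)|$ is within $o(1)$ in Kolmogorov distance of a Gaussian of variance $\Theta(n)$'' has no hypothesis to apply to.

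What takes the place of a CLT in \cite{kane2024locally2} (and in the generalization proved in this paper) is a granularity argument: after conditioning on the heavy bits one shows that \emph{almost every} $k$-tuple of output coordinates has \emph{exactly} the correct product marginal (cf.\ \Cref{prop:independence_after_cond}), exploiting that $d$-local tuple marginals have resolution $2^{-kd}$ and hence are either exactly right or detectably wrong; this limited independence then fools Kolmogorov tests against a binomial (\Cref{lem:biased_kol_dist}/\Cref{prop:kol_dist}), while the pointwise continuity needed to upgrade Kolmogorov closeness to TV closeness comes from a separate neighborhood decomposition (\Cref{lem:neighborhoods}) plus an LLT-type lemma (\Cref{lem:comp_llt}). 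Your ``character-sum / transfer-matrix'' equidistribution modulo $q\ge 3$ is morally what \Cref{lem:anticoncentration_after_coupling_all} and \Cref{lem:comp_llt} deliver, but only after first isolating $\Omega_d(n)$ pairwise non-adjacent neighborhoods---a step your sketch elides. Finally, the heuristic ``exponential skew of $\binom na$ forces $\mu=1/2$'' points at the right phenomenon (a short interval $A$ away from $n/2$ makes $|\Dcal|\propto\binom na\mid_A$ monotone and hence non-Gaussian), but carrying it out rigorously again requires the Kolmogorov-plus-continuity machinery rather than a direct limit theorem. The endgame via the parity polynomial $\bigoplus_i f_i$ is correct in spirit.
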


We emphasize that \Cref{thm:nc0_classification} works with \emph{uniform} symmetric distributions, which does not capture, for example, the $(1/4)$-biased product distribution that is symmetric and easily sampleable by $\nc$ circuits.

\subsection{Our Result}

In this paper, we take another step toward understanding the sampleability of distributions by extending \Cref{thm:nc0_classification} to handle \emph{arbitrary} symmetric distributions.
Let the \emph{locality} of a Boolean function be the largest number of input bits that any output bit depends on.
Note that $\nc$ is precisely the class of sequences of functions with bounded locality.

\begin{theorem}[Informal version of \Cref{thm:main}]\label{thm:special_main}
    Let $d\ge0$ be an integer.
    For any $\eps\in(0,1]$, there exists some $\delta$ such that $\delta\to0$ as $\eps\to0$ and the following holds.

    Suppose $f\colon\bin^m\to\bin^n$ is a $d$-local function and $f(\Ucal^m)$ is $\eps$-close in total variation distance to a symmetric distribution where $n$ is sufficiently large in terms of $d$ and $\eps$.
    Then $f(\Ucal^m)$ is $\delta$-close to some mixture of
    \begin{enumerate}
        \item The uniform distribution over $n$-bit strings of even Hamming weight,

        \item The uniform distribution over $n$-bit strings of odd Hamming weight, and

        \item $\gamma$-biased product distributions on $n$ bits for $\gamma$ an integer multiple of $2^{-d}$.
    \end{enumerate}
    Moreover, the mixing weights are determined by degree-$O_d(1)$ $\Fbb_2$-polynomials with $O_d(n)$ monomials.
\end{theorem}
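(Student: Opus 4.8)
The plan is to analyze $f$ through its input–output dependency structure: first peel off a bounded set of ``global'' input variables by conditioning, and then show that what remains — a \emph{spread} $d$-local function, one in which no input bit influences too many outputs — can produce a near-symmetric distribution only if it is close to a $\gamma$-biased product or to an even/odd-weight uniform distribution, once $n$ is large. For Step~1, call an input bit $x_k$ \emph{heavy} if at least $\alpha n$ output bits depend on it, where $\alpha=\alpha(d,\eps)$ is a small threshold; since $f$ is $d$-local, $\sum_k\abs{\cbra{i:f_i\text{ depends on }x_k}}\le dn$, so at most $d/\alpha=O_{d,\eps}(1)$ variables are heavy, forming a set $H$. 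Conditioning on $x_H$ writes $f(\Ucal^m)$ as the mixture $\sum_{\rho\in\bin^{H}}2^{-\abs H}f^{\rho}(\Ucal)$, where each $f^{\rho}$ is a $d$-local function of the remaining inputs in which every input bit influences fewer than $\alpha n$ outputs. The selector recording $\rho$ is given by the $\abs H$ single-monomial, degree-$1$ $\Fbb_2$-polynomials $x_k$ ($k\in H$) — these will be among the promised low-degree sparse polynomials — and if the individual pieces $f^{\rho}$ fail to be close to symmetric, one further conditions each on the value of the single polynomial $\bigoplus_i f_i$ (degree $\le d$, at most $n2^d=O_d(n)$ monomials) to split off the one ``parity'' degree of freedom, with boundedly many such refinements sufficing. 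It then remains to classify near-symmetric outputs of spread $d$-local functions.

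\emph{Step 2 (rigidity for spread functions — the crux).} Let $g$ be spread and $d$-local with $g(\Ucal)$ close to symmetric. Each marginal $\Pr\sbra{g_i=1}$ is an integer multiple of $2^{-d}$, and approximate symmetry forces all but an $O(\eps)$-fraction of them to equal a common $\gamma\in\cbra{0,2^{-d},\dots,1}$, after which the few exceptional output bits can be cleaned up. Now form the correlation graph on $[n]$ ($i\sim j$ when $g_i,g_j$ share an input bit) and split into two cases. If its components are all small, then — using symmetry once more to rule out correlated clusters (e.g.\ two always-equal output bits, which would put unequal mass on a Hamming level) — the distribution factorizes into $\gamma$-biased single bits up to small error, and a quantitative central-limit/anticoncentration estimate shows $\wt(g(x))$ is close in law to $\Bin(n,\gamma)$, i.e.\ $g(\Ucal)$ is close to the $\gamma$-biased product. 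If instead there is long-range structure, the point is that symmetry at large $n$ forces it to be $\Fbb_2$-\emph{linear}: any surviving \emph{nonlinear} correlated cluster (such as the cyclic $x_i\wedge x_{i+1}$ pattern) makes the density non-constant across some Hamming level, contradicting approximate symmetry; and the space of $\Fbb_2$-linear dependencies among the $g_i$ must be permutation-invariant, hence one of $\cbra{0}$, the all-ones line, the even-weight hyperplane, or all of $\Fbb_2^n$ — yielding respectively a (degenerate) product, the uniform distribution over $\cbra{0^n,1^n}$, a point distribution, or an even/odd-weight uniform distribution (with $\gamma=1/2$ on the non-degenerate coordinates, since a codimension-$1$ linear constraint forces uniform marginals). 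In every case $g(\Ucal)$ is $\delta$-close to one of: a $\gamma$-biased product with $\gamma$ a multiple of $2^{-d}$; an even- or odd-weight uniform distribution; or a mixture of the point distributions on $0^n$ and $1^n$ (the $0$- and $1$-biased products).

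\emph{Step 3 (assembly) and the main obstacle.} Substituting Step~2 into the mixture of Step~1 and regrouping by which special distribution occurs shows $f(\Ucal^m)$ is $\delta$-close to a mixture of even-weight uniform, odd-weight uniform, and $\gamma$-biased products with $\gamma\in 2^{-d}\Zbb$; the mixing weights are the measures of the cells cut out by the selector polynomials (the heavy-bit functions $x_k$, the boundedly many low-complexity refinements, and the global parity $\bigoplus_i f_i$), each of degree $O_d(1)$ and with $O_d(n)$ monomials, and each $\gamma$ that appears is a marginal of some $f^{\rho}$, hence a multiple of $2^{-d}$. Finally the finitely many error terms compose to a single $\delta=\delta(d,\eps)$ with $\delta\to 0$ as $\eps\to0$, valid once $n$ is large in terms of $d$ and $\eps$. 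The main obstacle is Step~2: making the rigidity of symmetric distributions quantitative enough to rule out \emph{every} exotic near-symmetric output of a spread local function — in particular, to show that any mildly heavy variable, or any moderately-supported non-global or nonlinear $\Fbb_2$-combination of outputs, forces a skew or bump in the weight distribution that survives the $\eps$-approximation once $n$ is large — and to control the interplay between the ``product part'' and the ``parity part'' (e.g.\ that a hard global parity constraint is compatible with symmetry only together with $1/2$-biased marginals), all with error bounds clean enough to conclude in total variation distance.
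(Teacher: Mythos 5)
The central gap is Step~2, and it is not merely a technical lacuna: the dichotomy you propose (small correlation-graph components, versus long-range $\Fbb_2$-linear structure) is neither exhaustive nor established, and the arguments you give for each arm have specific errors. In the ``long-range'' arm, permutation-invariance of the annihilator subspace of the output distribution holds only under \emph{exact} symmetry; under $\eps$-approximate symmetry the linear relations among the $g_i$ can be completely arbitrary, and the function $g$ itself is not assumed symmetric, so its linear structure has no reason to respect permutations. Moreover, even granting an exactly permutation-invariant annihilator, this does not pin down the output distribution (it only constrains its support). Your key rigidity claim --- that any nonlinear correlated cluster forces a detectable bump across some Hamming level surviving the $\eps$-approximation --- is asserted, not proved; you flag it yourself as ``the main obstacle.'' The paper avoids this dichotomy entirely by passing to weight distributions via \Cref{lem:distance_to_sym} and never attempting a structural classification of $g$.

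Even in your ``small-components, near-product'' arm, a quantitative CLT or anticoncentration estimate yields only Kolmogorov (CDF) closeness to $\Bin(n,\gamma)$. Since $\Bin(n,\gamma)$ has pointwise mass on the order of $n^{-1/2}$, CDF closeness does not imply total variation closeness; one needs a separate continuity property saying the mass at weight $w$ and $w+\Delta$ differ by only $O_d(|\Delta|/n)$. The paper supplies this in \Cref{prop:continuity} by decomposing $f(\Ucal^m)$ into a far-from-symmetric piece and a piece satisfying such continuity, using anticoncentration for sums of independent non-constant integer random variables; your sketch has no replacement, and this is precisely what closes the gap between CDF and TVD. The paper also handles the $\Deven/\Dodd$ cases not by classifying linear structure but by tracking the parity of the output weight when $\gamma=1/2$ (where the continuity result holds only for even $\Delta$), via the parity-randomization theorem used in \Cref{prop:kol_dist}. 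Finally, a single conditioning with threshold $\alpha(d,\eps)n$ cannot yield the $O_d(1)$-bit dyadic mixing weights in the ``moreover'' clause, since $|H|$ depends on $\eps$; the paper performs a second, coarser conditioning at threshold $n/2^{100d}$ (independent of $\eps$) and then shows (\Cref{lem:combining_weights}) that the two conditionings yield matching mixing weights.
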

The ``moreover'' conclusion implies this mixture can be exactly produced by $O_d(1)$-local functions (see \Cref{rmk:NC0_can_sample}).

\paragraph{Learning Structured Distributions.}
As mentioned above, the study of sampling is rife with applications to other areas of theoretical computer science.
Continuing this trend, we highlight one consequence of \Cref{thm:special_main} to learning theory.

The reconstruction of an unknown probability density function based on observed data is a fundamental problem in both statistics and computer science. 
The typical setting is the PAC-learning model \cite{valiant1984theory,blumer1989learnability,kearns1994learnability}: given access to independent samples of an unknown distribution $\Dcal$, the goal is to output a hypothesis distribution $\Dcal'$ close to $\Dcal$.
Much research has been carried out for Gaussian mixtures \cite{daskalakis2014faster}, log-concave distributions \cite{dumbgen2009maximum}, monotone distributions \cite{birge1987estimating}, sums of independent integer random variables \cite{daskalakis2013learning}, junta distributions \cite{aliakbarpour2016learning}, mixtures of structured distributions \cite{lindsay1995mixture}, and more.
The interested reader may wish to consult the survey \cite{diakonikolas2016learning} by Diakonikolas for additional background and references.

A black-box use of \Cref{thm:special_main} in this setting is to learn symmetric distributions that are locally sampleable.
Let $\Dcal$ be a symmetric distribution over $\bin^n$. 
Assume $\Dcal$ is produced by a $d$-local function, i.e., $\Dcal=f(\Ucal^m)$ for some $d$-local function $f\colon\bin^m\to\bin^n$.
Then for any $\eps > 0$ with $n$ sufficiently large in terms of $d$ and $\eps$, \Cref{thm:special_main} and the standard cover method \cite{yatracos1985rates} (see also \cite[Theorem 1.5.1]{diakonikolas2016learning}) imply that we can efficiently learn $\Dcal$ up to $\eps$-error in total variation distance with $O_d(1/\eps^2)$ samples with high probability.
We believe this should hold for all $\eps>0$. Indeed, in \Cref{sec:open_prob} we propose \Cref{conj:exact_classification} for an exact classification of locally sampleable symmetric distributions, which, if true, would imply the desired learning theoretic result.

\subsection{Open Problems}\label{sec:open_prob}

Recall that \Cref{thm:special_main} does not work for the case of $\eps=0$, i.e., \emph{exact} sampling.
Indeed, we cannot conclude mixtures of the form given by \Cref{thm:special_main} are the only distributions that can be exactly sampled by $\nc$ circuits. 
We identify the following example, which illustrates that this is not simply a weakness in our analysis, but rather there are other symmetric distributions that can be sampled exactly.

\begin{example}\label{ex:conj}
    Let $\Ucal_{1/4}^n$ be the $(1/4)$-biased product distribution over $\bin^n$. Additionally, let $\Deven$ and $\Dodd$ denote the uniform distribution over $n$-bit strings of even Hamming weight and odd Hamming weight, respectively.
    The distribution\footnote{More formally, $\Pcal$ is the distribution that assigns $x$ probability $\Ucal_{1/4}^n(x) + 2^{-n-1}\Deven(x) - 2^{-n-1}\Dodd(x)$.} 
    $$
    \Pcal = \Ucal_{1/4}^n + 2^{-n-1}\Deven - 2^{-n-1}\Dodd
    $$
    is not of the form given by \Cref{thm:special_main}, yet it can be sampled exactly by a bitwise AND of $\Deven$ and the uniform distribution over $\bin^n$, which is $3$-local.
    The full details can be found in \Cref{app:open_problems}.
\end{example}

Additionally, we suspect that, similarly to the uniform symmetric case \cite{kane2024locally2}, one can take the upper bound in \Cref{thm:special_main} to be linear in $\tvdist{f(\Ucal^m) - \Dcal}$ without any dependency on $d$.
Combining with the previous discussion, we conjecture the following strengthening of \Cref{thm:special_main} holds. 
Observe that \Cref{conj:exact_classification} captures \Cref{ex:conj}.

\begin{conjecture}\label{conj:exact_classification}
    For every $d \in \Nbb$, $\eps \in (0,1)$, and $n$ large enough in terms of $d$, if $f\colon\bin^m\to\bin^n$ is a $d$-local function and $f(\Ucal^m)$ is $\eps$-close in total variation distance to a symmetric distribution, then $f(\Ucal^m)$ is $O(\eps)$-close to some mixture
    \[
        \Mcal = \sum_i \alpha_i \cdot g_i(\Dcal^{(i)}_1, \dots, \Dcal^{(i)}_d),
    \]
    where each $g_i$ is a bitwise function and each $\Dcal^{(i)}_j$ is either the uniform distribution over $n$-bit strings of even Hamming weight or over $n$-bit strings of odd Hamming weight.
    Moreover, the mixing weights are determined by degree-$O_d(1)$ $\Fbb_2$-polynomials with $O_d(n)$ monomials, as in \Cref{thm:main}.
\end{conjecture}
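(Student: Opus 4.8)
The plan is to reduce \Cref{conj:exact_classification} to a strengthened form of \Cref{thm:main}---one whose error is linear in $\tvdist{f(\Ucal^m) - \Dcal}$ and whose threshold on $n$ depends only on $d$---and then to attack that strengthening along the lines of \cite{kane2024locally2}. The starting point is that the two mixture families are compatible: since $\Ucal^n = \tfrac12\Deven + \tfrac12\Dodd$ exactly, and a bitwise function commutes with independent mixing of its inputs, a $\gamma$-biased product with $\gamma = k\,2^{-d}$---obtained by applying coordinatewise the indicator $\indicator_S$ of a $k$-element set $S\subseteq\bin^d$ to $d$ independent copies of $\Ucal^n$---equals $2^{-d}\sum_{\sigma\in\{\Deven,\Dodd\}^d} g(\sigma_1,\dots,\sigma_d)$, where $g$ is the coordinatewise $\indicator_S$; this is a mixture of the \Cref{conj:exact_classification} type with trivial weights, and $\Deven,\Dodd$ are already of that type. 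Hence every mixture produced by \Cref{thm:main} is, verbatim, a mixture of the conjectured form with the same low-degree sparse $\Fbb_2$-polynomial weights, so the remaining content of \Cref{conj:exact_classification} is to (a) replace the error $\delta(\eps,d)$ of \Cref{thm:main} by $O(\eps)$ with a threshold on $n$ depending only on $d$, and (b) extend the conclusion to $\eps$ exponentially small in $n$---in particular exact sampling---where the old family is provably insufficient.

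For (a) I would follow the route by which \cite{kane2024locally2} sharpened \cite{kane2024locality} in the uniform-support case: work directly with the $\Fbb_2$-linear structure of a $d$-local $f$ and replace each $d$-dependent loss in the proof of \Cref{thm:main} by a $d$-independent one. Concretely: (i) use anticoncentration of bounded-degree $\Fbb_2$-polynomials to force the nonlinear part of each output bit to be either negligible or of a constrained shape; (ii) cluster the output coordinates by their linear forms and argue that closeness to a symmetric distribution pins all but $O(\eps n)$ coordinates into a rigid product-or-parity pattern at total cost $O(\eps)$, rather than incurring a union bound over the $\le 2^d$ possible local behaviors; and (iii) read off the mixture and its defining $\Fbb_2$-polynomials from the cluster data. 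I expect step (ii) to be the bottleneck: the target family here is far richer than the six uniform-support distributions, so the rigidity that yields an $O(\eps)$ bound in the uniform case is weaker, and pinning the mixing weights to within $O(\eps)$ plausibly demands a genuinely $d$-uniform argument---say an entropy or compression bound on how far a $d$-local distribution can lie from symmetric without already being $O(\eps)$-close to one of the listed mixtures.

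Part (b), which I expect to be hardest, is an exact-structure statement. When $\eps$ is exponentially small, \Cref{thm:main} cannot even be invoked, and, worse, its conclusion phrased in the old family is false: by \Cref{ex:conj} the exactly-sampleable distribution $\Pcal$ lies $2^{-\Omega(n)}$-far from every mixture of $\Deven$, $\Dodd$, and dyadic products. This is precisely why the conjecture is stated with bitwise functions of $\Deven/\Dodd$: that family does contain $\Pcal$, since $\Pcal = \tfrac12\,g(\Deven,\Deven) + \tfrac12\,g(\Deven,\Dodd)$ with $g$ the bitwise AND. So one must show that the $2^{-\Omega(n)}$ ``parity defect'' forced by an \emph{arbitrary} $d$-local source close to symmetric is always of a form the new family can express; \Cref{ex:conj} exhibits a defect of the benign shape $c\,(\Deven - \Dodd)$, but handling this in general appears to need a new idea. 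A sensible first step is to settle small $d$ (say $d \le 3$, matching \Cref{ex:conj}) by classifying which symmetric distributions a $d$-local map produces exactly, and then to isolate the general mechanism---perhaps that the full-parity Fourier component of such a source's defect is always governed by a single bounded-degree $\Fbb_2$-polynomial.
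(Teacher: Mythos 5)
\Cref{conj:exact_classification} is posed in \Cref{sec:open_prob} as an open problem; the paper gives no proof of it, so there is no ``paper's proof'' to compare against, and you rightly present a plan rather than a proof, flagging the gaps yourself (``I expect step (ii) to be the bottleneck''; ``Part (b)\dots I expect to be hardest'').

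The one technical assertion you make and rely on is correct: the family of \Cref{thm:main} embeds in the conjectured family. Since $\Ucal^n = \tfrac12\Deven + \tfrac12\Dodd$ exactly, and a deterministic bitwise map commutes with independently mixing its arguments, applying the coordinatewise indicator $\indicator_S$ of a $k$-element $S\subseteq\bin^d$ to $d$ independent copies of $\Ucal^n$ realizes $\Ucal_{k/2^d}^n = 2^{-d}\sum_{\sigma\in\{\Deven,\Dodd\}^d}g(\sigma_1,\dots,\sigma_d)$, and the resulting weights are still dyadic with denominator $2^{O_d(1)}$, while $\Deven$ and $\Dodd$ embed trivially. Your rewriting $\Pcal=\tfrac12\,g(\Deven,\Deven)+\tfrac12\,g(\Deven,\Dodd)$ with $g$ bitwise AND also matches \Cref{app:open_problems}, confirming that \Cref{ex:conj} is in the new family. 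One minor overclaim: you state $\Pcal$ is $2^{-\Omega(n)}$-far from the \Cref{thm:main} family, whereas \Cref{clm:conj_exp} only shows that $\Pcal$ is not exactly in that family; the quantitative version does follow from the same Fourier computation plus a ($d$-dependent) stability bound on the Vandermonde inverse, but you would need to supply that.

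Beyond these consistency checks, items (i)--(iii) of your step (a) and all of step (b) are a research program, not an argument, and you acknowledge as much. Nothing in the proof of \Cref{thm:main} suggests how to remove the $d$- and $\eps$-dependences from its error and $n$-threshold, nor how to control exponentially small parity defects, and the paper explicitly declines to attempt this. Your submission is an accurate restatement of the open conjecture together with a correct check that its family subsumes both \Cref{thm:main} and \Cref{ex:conj}; it is not, and does not claim to be, a proof.
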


Note the bitwise condition on the above $g_i$'s guarantees the output distribution is symmetric.

\paragraph{Paper Organization.}
We provide an overview of the proof of \Cref{thm:special_main} in \Cref{sec:overview}, as well as a brief comparison of our techniques to those of prior literature.
Background material and useful results are collected in \Cref{sec:prelim}.
The bulk of our work is in \Cref{sec:characterize}, where we state and prove \Cref{thm:main}, the full version of \Cref{thm:special_main}.
The appendices contain a number of deferred proofs.
\section{Proof Overview}\label{sec:overview}

In this section, we sketch the proof of \Cref{thm:special_main} before discussing how the details compare to prior work.

\subsection{Proof Overview of Theorem \ref{thm:special_main}}

We begin with a useful observation from \cite{kane2024locally2}: the total variation distance between the distribution $f(\Ucal^m)$ and any symmetric distribution $\Pcal$ over $\bin^n$ is, up to constant factors, equal to the distance between the corresponding Hamming weight distributions of $f(\Ucal^m)$ and $\Pcal$ plus the distance between $f(\Ucal^m)$ and its symmetrization (i.e., the distribution resulting from randomly permuting the coordinates of a string $x \sim f(\Ucal^m)$).
Expressed symbolically, we have
\begin{equation}\label{overview:eq:TVD_weight_plus_sym}
    \tvdist{f(\Ucal^m) - \Pcal} = \Theta(\tvdist{|f(\Ucal^m)| - |\Pcal|} + \tvdist{f(\Ucal^m) - f(\Ucal^m)_\sym}).
\end{equation}
The proof is straightforward, and it essentially follows from several applications of the triangle inequality which show that the two ways $f(\Ucal^m)$ can be far from $\Pcal$ are a weight mismatch or $f(\Ucal^m)$ being far from symmetric itself (see \Cref{lem:distance_to_sym}).

By assumption, we know $f(\Ucal^m)$ is $\eps$-close to a symmetric distribution $\Dcal$.
Thus, \Cref{overview:eq:TVD_weight_plus_sym} implies $\tvdist{f(\Ucal^m) - f(\Ucal^m)_\sym} = O(\eps)$.
Hence, it suffices to prove a weaker version of \Cref{thm:special_main} that only compares the weight distributions (\Cref{lem:combining_weights}).
More precisely, we want to show that for some $\delta$ tending to 0 with $\eps$, the Hamming weight distribution of $f(\Ucal^m)$, denoted $|f(\Ucal^m)|$, is $\delta$-close to a mixture $\Mcal$ of 
\begin{enumerate}
    \item The binomial distribution $\Bin(n,\gamma)$ with $n$ trials and success probability $\gamma$ for $\gamma$ an integer multiple of $2^{-d}$,

    \item The binomial distribution $\Bin(n,1/2)$ conditioned on the outcome being even, and

    \item The binomial distribution $\Bin(n,1/2)$ conditioned on the outcome being odd.
\end{enumerate}
Moreover, we want the mixing weights to be determined by low-degree $\Fbb_2$-polynomials with few monomials.
(It will later become clear what ``determined'' means in this context.)
We first prove a weaker version of this result that does not include mixing weight information (\Cref{lem:combining_weights_primitive}).
Afterwards, we will discuss how to obtain the desired control over the weights.
Before proceeding to the details, we first give a brief, high-level overview of the four main steps of the proof and how they fit together.

In the first step, we show that any fixing of the values of input bits which affect many output bits results in the bias of the output weight distribution of $f(\Ucal^m)$ concentrating around a fixed dyadic rational\footnote{Recall a \emph{dyadic rational} is a number that can be expressed as a fraction whose denominator is a power of two.} $\gamma$ multiple of $n$.
This allows us to represent the weight distribution $|f(\Ucal^m)|$ as a mixture of distributions produced by the restricted functions.
We then argue in step two that after grouping the parts of the mixture which concentrate around the same $\gamma$, each group assigns roughly the same amount of mass to any contiguous interval as the binomial distribution with success probability $\gamma$.
The third step is to prove a continuity result for each grouped part of the mixture; namely, that the mass assigned to some weight $w$ and to $w+\Delta$ for a small integer $\Delta$ are roughly equal.
(There is a slight subtlety here in the case of $\gamma=1/2$, but we will defer its discussion to later in the proof overview.)
This allows us to conclude in the fourth and final step that most output weights are assigned comparable mass by the part of the mixture of $|f(\Ucal^m)|$ concentrated around $\gamma$ and the corresponding binomial distribution.
In particular, our weight distribution is close in total variation distance to a mixture of binomial distributions, which is what we wanted to show.

\paragraph{Step 1: Removing Large Influences.}\label{par:remove_large_inf}

In an extremely ideal setting, we might wish that all of $f$'s output bits are roughly independent with bias around $\gamma = a/2^d$ for some integer $0 \le a \le 2^d$; in this case the output weight would resemble the binomial distribution $\Bin(n,\gamma)$.
Of course, this is far too much to assume.
In actuality, it may be that one specific input bit affects the value of \emph{every} output bit, so none of them are independent.

To progress toward this dream scenario, we follow in the footsteps of many prior works (e.g., \cite{viola2012complexity, lovett2011bounded, beck2012large, viola2020sampling, viola2023new, filmus2023sampling, kane2024locality, kane2024locally2, grier2025quantum}) by strategically conditioning on certain input bits to express the distribution as a mixture of more structured sub-distributions.
For example, suppose some input bit $b$ affects many output bits.
Since $b$ takes value 0 and 1 with equal probability, we may express the distribution $f(\Ucal^m)$ as the mixture
\[
    f(\Ucal^m) = \frac{1}{2}\cdot f\pbra{\Ucal^m \mid b = 0} + \frac{1}{2} \cdot f\pbra{\Ucal^m \mid b = 1}.
\]
Observe that the large influence of $b$ is no longer a problem in either of the restricted distributions, as its value is fixed.

In particular, we will condition on all ``high degree'' input bits that affect more than $n/A$ output bits for some $A$ to be chosen later.
The goal is now to argue that the function resulting from each conditioning produces a distribution which is roughly of the form we originally sought.
Note that this will not depend on the actual values that the input bits are set to in the conditioning.

We start with a result from \cite{kane2024locality}: let $\Dcal_q$ be the uniform distribution over $n$-bit strings of Hamming weight $q$.
If $q/n$ is far from every integer multiple of $2^{-d}$, then any $d$-local function must produce a distribution far (in total variation distance) from $\Dcal_q$.
This essentially follows from the fact that the input bits determining any fixed output bit can only be set in $2^d$ equally likely ways.
Observe that symmetric distributions are simply mixtures of $\Dcal_q$ for different $q$'s.
Thus we can show (\Cref{lem:dyadic_weight_eps}) that if $f(\Ucal^m)$ is close to a symmetric distribution $\Dcal$, then for a typical $x \sim \Ucal^m$, the normalized output weight $|f(x)|/n$ has distance at most $n^{-1/(800d)}$ from an integer multiple of $2^{-d}$.
Note that the closest multiple may be different for different inputs; however, after each conditioning $\rho \in \bin^S$ on the set of high degree input bits $S \subseteq [m] \coloneqq \cbra{1,2,\dots, m}$, we can strengthen this result (\Cref{lem:dyadic_weight_after_cond}) to say that for a typical $x \sim \Ucal^{[m]\setminus S}$, the normalized output weight $|f(x, \rho)|/n$ is close to the \emph{same} integer multiple of $2^{-d}$.
Here, we will only need ``high degree'' to correspond to affecting more than $n/O_{d}(1)$ many output bits (i.e., can take $A = O_d(1)$), although later we will obtain other constraints on how we must set $A$.
Henceforth, we will shorthand $f(x, \rho)$ by $f_\rho(x)$ for clarity.

At a high level, the proof uses the second-moment method.
By conditioning on the high degree input bits, we ensure that the variance of the resulting distribution is small, and thus we have concentration around a particular weight.
Moreover, this weight must be close to an integer multiple of $n/2^{d}$, or else our previous insights imply $f(\Ucal^m)$ cannot be close to $\Dcal$.
To be slightly more precise, a direct second-moment argument by itself is insufficient to rule out the case that some non-negligible fraction of the time the output weight is close to some other multiple of $n/2^{d}$.
However, one can show (\Cref{clm:lem:dyadic_weight_after_cond_5}) that if this occurs, $f_\rho(\Ucal^{[m]\setminus S})$ would also have to assign decent probability mass to weights between these integer multiples, which would again imply $f(\Ucal^m)$ cannot be close to $\Dcal$.

Finally, we combine these deductions to prove that for each conditioning $\rho \in \bin^S$ of the bits in $S$, there exists a set $T \coloneqq T_\rho \subseteq [n]$ of size $|T| \le O_{d,k}(1)$ such that every $k$-tuple of output bits in $[n] \setminus T$ has marginal distribution $\Ucal_{\gamma_\rho}^k$, the $\gamma_\rho$-biased product distribution over $\bin^k$, for $\gamma_\rho$ an integer multiple of $2^{-d}$ (see \Cref{prop:independence_after_cond}).
Here, $k$ is some parameter at most $O_d(\log(1/\eps))$.
The proof operates by constructing a degree-$k$ multilinear polynomial $P_i \colon \bin^n \to \bin$ for each $k$-tuple with the ``wrong'' distribution, where the expectation of $P_i$ over $f_\rho(\Ucal^{[m]\setminus S})$ is larger than over the $\gamma$-biased product distribution $\Ucal_\gamma^n$ by at least an additive $2^{-kd}$ term.
Note this follows from using our locality bound to view $P_i(f_\rho(\Ucal^{[m]\setminus S}))$ as a polynomial of degree $kd$ in the input bits.
Summing the polynomials together into $P = \sum_i P_i$ magnifies the difference in expectations, and if the number of terms (i.e., number of bad $k$-tuples) is too large, we end up contradicting the assumption that $f(\Ucal^m)$ is close to a symmetric distribution.

\paragraph{Step 2: Kolmogorov Distance.}\label{par:KD}

We now group the restricted functions according to their biases, defining $F_\gamma(\Ucal^{[m]\setminus S}) = \E_{\rho : \gamma_\rho = \gamma} [f_\rho(\Ucal^{[m]\setminus S})]$ for each $\gamma$.
In this second step, we aim to show that every contiguous interval of output weights is assigned roughly the same amount of mass by $|F_\gamma(\Ucal^{[m]\setminus S})|$ and $\Bin(n,\gamma)$.
Since this difference in probability mass is convex over mixtures, it suffices to consider the individual distributions $|f_\rho(\Ucal^{[m]\setminus S})|$.
Moreover, it is enough to provide a bound on the \emph{Kolmogorov distance} 
\[
    \max_t \bigg|\Pr\sbra{|f_\rho(\Ucal^{[m]\setminus S})| \ge t} - \Pr\sbra{\Bin(n, \gamma) \ge t}\bigg|.
\]

In a sentence, the bound follows from combining the $k$-wise independence of most output coordinates with the fact that $T$ is too small to have much of an effect.
In the case of $\gamma = 1/2$ and $|T| = 0$, Diakonikolas, Gopalan, Jaiswal, Servedio, and Viola \cite{diakonikolas2010bounded} gave an upper bound of roughly $1/\sqrt{k}$ using techniques from approximation theory. 
(In fact, their result holds for arbitrary threshold functions.)
This was generalized by Gopalan, O'Donnell, Wu, and Zuckerman \cite{gopalan2010fooling} to include the case of arbitrary $\gamma \in (0,1)$.
In our case, $|T|$ will likely not be 0, but it is small enough that a similar result (\Cref{prop:kol_dist}) still holds.
In the edge cases of $\gamma \in \bin$, we can no longer use \cite{diakonikolas2010bounded, gopalan2010fooling}, nor would the size of $T$ be negligible even if we could, but these special cases can be addressed later via simple arguments (see, e.g., the proof of \Cref{lem:close_to_bin_weights_mixture}).

For reasons that will become apparent in the subsequent step, we will also need a comparable bound in the case of $\gamma = 1/2$, even accounting for the parity of the output weight.
That is, we wish to show
\[
    \Pr\sbra{|f_\rho(\Ucal^{[m]\setminus S})| \ge t \text{ and } |f_\rho(\Ucal^{[m]\setminus S})| \text{ is even}} \approx \Pr\sbra{\Bin(n, 1/2) \ge t}\cdot \Pr\sbra{|f_\rho(\Ucal^{[m]\setminus S})| \text{ is even}}.
\]
Our analysis here is similar to the previous case, but we now crucially rely on \cite[Theorem 3.1]{chattopadhyay2020xor}.
In our context, it implies that there is a small set of input bits $R \subseteq [m]\setminus S$ such that re-randomizing over $R$ typically re-randomizes the parity of $f_\rho$'s output weight.
Moreover, since $R$ is small and we have already conditioned on input bits of large degree, the vast majority of output bits are unaffected by $R$.
Thus, we can compare $\Pr\sbra{|f_\rho(\Ucal^{[m]\setminus S})| \ge t}$ to $\Pr\sbra{\Bin(n, 1/2) \ge t}$ using these unaffected output bits as a proxy for the entirety of the output bits.
We briefly note that for the error bounds on the ``parity Kolmogorov distance'' to be meaningful given our choice of $k$, we need to take $A$ (defined in our earlier high degree threshold) to be $O_{d,\eps}(1)$ -- larger than is necessary for \hyperref[par:remove_large_inf]{Step 1}.

\paragraph{Step 3: Approximate Continuity.}\label{par:approx_cont}

Our third step is to argue that for each bias $\gamma$, the distribution $F_\gamma(\Ucal^{[m]\setminus S})$ can be expressed as a mixture $\lambda\cdot E_\gamma + (1-\lambda)\cdot W_\gamma$, where $\lambda$ is small and $W_\gamma$ assigns similar probability mass to similar weights.
Once again, it suffices to analyze the distributions produced by the individual restricted functions $f_\rho(\Ucal^{[m]\setminus S})$.
Here, we show that $f_\rho(\Ucal^{[m]\setminus S})$ can be written as a mixture of distributions $E$ and $W$, where $E$ is extremely far from every symmetric distribution supported on strings of Hamming weight $\gamma n \pm n^{2/3}$, and the weight distribution of $W$ satisfies a certain continuity property (see \Cref{prop:continuity}).
More specifically, for positive integers $w$ and $\Delta$, the probability that $W$'s output weight is $w$ differs by no more than $O_d\pbra{\Delta/n}$ from the probability that $W$'s output weight is $w + \Delta$.
Since $f(\Ucal^m)$ is close to $\Dcal$, one can show that $F_\gamma(\Ucal^{[m]\setminus S})$ is close to $\Dcal$ conditioned on the output weight being $\gamma n \pm n^{2/3}$ (see \Cref{clm:gamma_restricted_TVD}).
Thus, proving the above result for $f_\rho(\Ucal^{[m]\setminus S})$ implies minimal mass will typically be assigned to the $E$ part of the mixtures, and so $\lambda$ (in the mixture defining $F_\gamma(\Ucal^{[m]\setminus S})$) will be small, as desired.

In our analysis, we will require a structural result about hypergraphs from \cite{kane2024locality}.
Observe that we can associate any function $g\colon\bin^m \to \bin^n$ to a hypergraph on the vertex set $[n]$ with an edge for each input bit $b$ containing all of the output bits that depend on $b$.
In the case that $g$ is $d$-local, this hypergraph has maximum degree at most $d$.
We follow standard hypergraph terminology in defining the \emph{neighborhood} of a vertex $v$ to be the set of vertices sharing an edge with $v$.
We additionally call two neighborhoods $N_1, N_2$ \emph{connected} if there exist two adjacent (i.e., contained in the same edge) vertices $v_1 \in N_1, v_2 \in N_2$.

By \cite[Corollary 4.11]{kane2024locality}, we can find a collection of $r = \Omega_d(n)$ non-connected neighborhoods in $\Hcal$ of size $O_d(1)$ by only removing $O_d(n)$ (with a small implicit constant) edges.
Translating the result to $f_\rho$, we find that there exists a not-too-large set $B \subseteq [m]\setminus S$ of input bits such that any conditioning $\sigma \in \bin^B$ yields a sub-function $f_{\rho, \sigma} \colon \bin^{[m]\setminus (S\cup B)} \to \bin^n$ with many small, pairwise independent collections of output bits.
We will define $E$ and $W$ according to the behavior of these neighborhoods, similarly to the arguments in \cite{kane2024locality, kane2024locally2, grier2025quantum}.

Suppose for at least $r/2$ of the $r$ non-connected neighborhoods $N_1, \dots, N_r$, we have that $f_{\rho,\sigma}(\Ucal^{[m]\setminus (S\cup B)})$ restricted to the neighborhood $N_i$ has a marginal distribution which differs from the $\gamma$-biased product distribution over $N_i$, denoted $\Ucal_\gamma^{N_i}$.
Since $\Ucal_\gamma^{N_i}$ is pointwise close to the uniform distribution over strings of Hamming weight around $\gamma n$, the marginal distributions of these neighborhoods (in $f_{\rho,\sigma}$) must also differ from the marginal distributions of the symmetric distribution $\Dcal$.
We can then accumulate these errors via concentration bounds to show that $f_{\rho,\sigma}(\Ucal^{[m]\setminus (S\cup B)})$ is far from $\Dcal$ (see \Cref{lem:tvdist_after_product}).

We now set $E$ to be the mixture over all conditions of the bits in $B$ where most resulting neighborhoods differ from the corresponding $\gamma$-biased product distribution, and set $W$ to be the mixture over the remaining conditionings.
Since each sub-distribution in the mixture $E$ is far from $\Dcal$ by the previous paragraph, a union bound argument implies $E$, itself, must also be far from $\Dcal$ (see \Cref{lem:tvdist_after_conditioning}).
It remains to show the continuity property for $W$.

Suppose the $r$ non-connected neighborhoods are generated by $v_1, v_2, \dots, v_r \in [n]$ in the sense that all bits in the $i$-th neighborhood $N_i$ are either $v_i$ or in an edge that also contains $v_i$.
We further condition on all input bits that do not affect any of $v_1, \dots, v_r$, so that the value of every output bit outside of $N_1 \cup \cdots \cup N_r$ is fixed.
In this way, the output weight of $f_{\rho,\sigma}(\Ucal^{[m]\setminus (S\cup B)})$ becomes a fixed integer (corresponding to the fixed bits outside of the neighborhoods) plus the sum of the neighborhoods' output weights.
Since we are in the case where most neighborhoods are extremely structured, we can show that for each $2 \le \ell \le t$, the output weight distribution modulo $\ell$ of $N_i$ is not constant for a constant fraction of the $N_i$'s with high probability (see \Cref{clm:X_i_weight_fixed}).
Finally, we can obtain our desired continuity result through known density comparison theorems for sums of independent, non-constant integer random variables, such as \cite[Theorem A.1]{kane2024locally2}, which relies on classical anticoncentration tools.

There is one important exception to the above analysis: the case of $\ell = 2$ and $\gamma = 1/2$.
Here, we are not guaranteed to typically get many non-connected neighborhoods with variable output weight modulo 2 (even if most of them have the ``correct'' marginal distribution).
To better understand where our reasoning breaks down, consider one of the most surprising examples in this area of work.
Define $h\colon \bin^n\to\bin^n$ to be
\[
    h(x_1, \dots, x_n) = (x_1 \oplus x_2, x_2 \oplus x_3, \dots, x_{n-1}\oplus x_n, x_n \oplus x_1),
\]
so that $h(\Ucal^n)$ is the uniform distribution over $n$-bit strings of even Hamming weight.
Observe that $h$ is extremely simple, only requiring two bits of locality.
Additionally, the marginal distribution onto any $k \le n-1$ coordinates is exactly the product distribution $\Ucal_{1/2}^k$.

Let us focus on a particular output bit $y_i = x_i \oplus x_{i+1}$. 
(For simplicity, assume $i$ is not too close to 1 or $n$.)
Its neighborhood is $y_{i-1} = x_{i-1} \oplus x_{i}$, $y_i$, and $y_{i+1} = x_{i+1} \oplus x_{i+2}$, so its Hamming weight modulo 2 is
\[
    (x_{i-1} \oplus x_{i}) \oplus (x_i \oplus x_{i+1}) \oplus (x_{i+1} \oplus x_{i+2}) = x_{i-1} \oplus x_{i+2}.
\]
If we follow our earlier analysis and fix the value of the input bits that do not affect $y_i$ (i.e., $x_{i-1}$ and $x_{i+2}$), the neighborhood's Hamming weight modulo two is fixed, regardless of the values of $x_i$ and $x_{i+1}$.

At a high level, the reason for this exceptional case is that it is the only setting of $\ell$ and $\gamma$ for which the binomial distribution $\Bin(t, \gamma) \bmod \ell$ can equal $\Bin(t-1, \gamma) \bmod \ell$.
In other words, the weight distribution modulo $\ell$ of $\Ucal_\gamma^k$ conditioned on the first bit being 0 is different than that distribution conditioned on the first bit being 1, unless $\ell = 2$ and $\gamma = 1/2$.
This difference means that in the case of $\ell = 2$ and $\gamma = 1/2$, we can only derive a continuity result for weights that are an even distance apart.
Hence, we must be cognizant of the output weight's parity throughout much of our analysis, which explains the required parity version of our Kolmogorov distance result mentioned earlier in the proof overview.

\paragraph{Step 4: Putting It Together.}

At this point, all that remains is to combine the pieces.
Recall we have already conditioned on high degree (i.e., larger than $n/O_{d,\eps}(1)$) input bits to find sub-functions $\cbra{f_\rho}_\rho$, each producing a distribution with some approximate bias $\gamma_\rho$.
Additionally, the mixtures $F_\gamma(\Ucal^{[m]\setminus S})$ obtained by grouping the sub-functions around their biases each satisfy a Kolmogorov distance bound and approximate continuity.

We partition $\cbra{0,1,\dots, n}$ into consecutive intervals of length $c\sqrt{n}$ for some small $c > 0$, and restrict our attention to the $O(\log(1/\alpha))$ of them that contain all but $O(\alpha)$ of the mass.
By our Kolmogorov distance result (\Cref{lem:biased_kol_dist}), we have that $|F_\gamma(\Ucal^{[m]\setminus S})|$ and the binomial distribution $\Bin(n,\gamma)$ assign similar mass to each of these intervals.
Moreover, our continuity result (\Cref{prop:continuity}) implies the mass in a fixed interval is almost uniformly distributed.
Thus for most weights $w$, $|F_\gamma(\Ucal^{[m]\setminus S})|$ assigns roughly the same mass to $w$ as $\Bin(n,\gamma)$ does.
Summing over all weights provides an upper bound on the total variation distance between the two weight distributions.
We remark that there is a slight complication in the case of $\gamma = 1/2$, as there \Cref{prop:continuity} only lets us compare weights that are an even distance apart. 
However, by further subdividing each interval we consider into its even and odd parts, we are able to carry out a similar argument as before.

Finally, we recall that $f(\Ucal^m)$ is a mixture of the distributions $F_\gamma(\Ucal^{[m]\setminus S})$.
Since the weight distribution of each $F_\gamma(\Ucal^{[m]\setminus S})$ is close to the weight distribution corresponding to one from \Cref{thm:special_main}, their mixture $|f(\Ucal^m)|$ naturally is close to a mixture of those weight distributions (see \Cref{lem:combining_weights_primitive}).
We conclude by recalling that \Cref{overview:eq:TVD_weight_plus_sym} implies it was sufficient to classify the output weight distribution.

\paragraph{Mixing Weights.}

The above argument shows that $f(\Ucal^m)$ is close to a mixture $\Mcal$ of the form specified in \Cref{thm:special_main}, but without the additional control on the mixing weights.
The reason for this shortcoming is that the threshold $A$ at which we consider an input bit ``high degree'' depends not just on $d$, but also on $\eps$, and so the mixing weights in the obtained mixture also depend on $\eps$.
If we instead chose a threshold that only depended on $d$, we would not have been able to obtain an effective error bound in the Kolmogorov distance step (i.e., \Cref{prop:kol_dist}) when $\gamma = 1/2$.

The key observation is that the arguments of \hyperref[par:remove_large_inf]{Step 1} only require conditioning on input bits of degree larger than $n/O_d(1)$.
In other words, if we perform some different conditioning $\kappa$ on the set of input bits $S' \subseteq [m]$ above the weaker threshold $n/O_d(1)$, the distributions generated by the restricted functions will still have output weights which strongly concentrate around some integer multiple of $n/2^{d}$.
Moreover, the mixing weights in this setting \emph{are} integer multiples of $2^{-O_d(1)}$.

It remains to argue that the mixing weights on $\Mcal$, the mixture we proved $f(\Ucal^m)$ is close to by the stronger conditioning $\rho$, correspond to the mixing weights derived from the weaker conditioning $\kappa$ (see \Cref{lem:combining_weights}).
Since both conditionings produce mixtures of the same distribution, we have
\[
     \sum_{\gamma} \Pr_{\rho}\sbra{\gamma_\rho = \gamma} \cdot \E_{\rho : \gamma_\rho = \gamma} f_\rho(\Ucal^{[m]\setminus S}) = f(\Ucal^m) = \sum_{\gamma} \Pr_{\kappa}\sbra{\gamma_\kappa = \gamma} \cdot \E_{\kappa : \gamma_\kappa = \gamma} f_\kappa(\Ucal^{[m]\setminus S'}).
\]

By standard concentration bounds, the probability that $x \sim f(\Ucal^m)$ has Hamming weight close to $\gamma n$ is almost entirely determined by the mass assigned to the distributions whose output weight concentrates around $\gamma$.
In other words, we know $\Pr_{\rho}\sbra{\gamma_\rho = \gamma} \approx \Pr_{\kappa}\sbra{\gamma_\kappa = \gamma}$ for each $\gamma$.
By our earlier analysis and the assumption that $f(\Ucal^m)$ is close to $\Dcal$, it must be that the mixing weight on $\Bin(n, \gamma)$ in $\Mcal$ is roughly $\Pr_{\rho}\sbra{\gamma_\rho = \gamma}$, so it must also be close to $\Pr_{\kappa}\sbra{\gamma_\kappa = \gamma}$, which has the form we sought.

The analysis in the case of $\gamma = 1/2$ is similar, but as always, we need to keep track of the output weight's parity.
Here, the analogous equivalence is
\begin{equation}\label{eq:pf_overview_even}
    \Pr_{\rho}\sbra{\gamma_\rho = 1/2 \land |f(\Ucal^m)| \text{ is even}} \approx \frac{1}{2^{|S'|}}\sum_{\kappa : \gamma_\kappa = 1/2} \Pr_{x\sim \Ucal^{[m]\setminus S'}}\sbra{|f_\kappa(x)| \text{ is even}}.
\end{equation}
Since $f$ is $d$-local with $n$ output bits, the parity of $|f_\kappa(x)|$ can be represented as a degree-$d$ $\Fbb_2$-polynomial with $O_d(n)$ monomials.
Thus, the mixing weight for the uniform distribution over $n$-bit strings of even Hamming weight in $\Mcal$ must be close to the right-hand side of \Cref{eq:pf_overview_even}, which is the precise sense in which the mixing weights are ``determined by low-degree $\Fbb_2$-polynomials''.
The odd parity case is essentially identical.
This concludes the proof overview of \Cref{thm:special_main}.

\subsection{Technical Comparison to Prior Works}

We now briefly survey the approaches of related prior works.
To avoid an overly verbose digression, we attempt to focus only on the most relevant papers and do not hope to be exhaustive.
Similarly, we restrict our attention within the mentioned works to the results and techniques most pertinent to our own; most, if not all, of the papers discussed contain a number of other interesting results.

The study of the complexity of sampling goes back to at least the 1980s in the influential work of Jerrum, Valiant, and Vazirani \cite{jerrum1986random}.
In the context of shallow circuits, several clever sampling constructions (including the one in the introduction) were devised in \cite{babai1987random, boppana1987one, impagliazzo1996efficient}, while the first serious treatment of the complexity of sampling (with shallow circuits) appeared in \cite{viola2012complexity}.
There, Viola proved an assortment of sampling-related results, but we will confine our attention to two on the hardness of approximately generating $\Dcal_{n/2}$, the uniform distribution over $n$-bit strings of Hamming weight $n/2$.
(This specific choice is purely for simplicity, and both results also apply to the setting of $\Dcal_{\alpha n}$ for $\alpha \in (0,1)$.)

The first result we will mention \cite[Theorem 1.6]{viola2012complexity} is an unconditional lower bound of $2^{-O(d)} - O(1/n)$ on the distance between $f(\Ucal^m)$, where $f\colon \bin^m \to \bin^n$ is a $d$-local\footnote{In fact, the result is proven for an adaptive version of locality.} function, and $\Dcal_{n/2}$.
The proof, much like our own (see \hyperref[par:KD]{Step 2}), relies on a Kolmogorov distance bound obtained from a $k$-wise independence assumption \cite{diakonikolas2010bounded, gopalan2010fooling}.
If the output distribution $f(\Ucal^m)$ is $k$-wise independent for some large integer $k$, then \cite[Theorem I.5]{gopalan2010fooling} implies it has the wrong Hamming weight with constant probability, and we are done.
Otherwise, there exists a $k$-tuple $T \subseteq [n]$ of output bits that are not uniformly distributed over $\bin^k$.
In this case, one observes that the probability assigned to any element of $\bin^k$ is an integer multiple of $2^{-kd}$, so the marginal distribution over $T$ must be at least $2^{-kd}$-far from uniform.
The proof concludes by noting that the marginal distribution of $\Dcal_{n/2}$ onto $T$ is very close to uniform.
Note that we use similar granularity ideas in the proof of \Cref{prop:independence_after_cond} (in \hyperref[par:remove_large_inf]{Step 1}).

The second result \cite[Theorem 1.3]{viola2012complexity} provides a much stronger lower bound of $1 - 1/\poly(n)$ for the distance between $f(\Ucal^m)$ and $\Dcal_{n/2}$ for any $O(\log n)$-local function $f$, but is \emph{conditional} on the input size $m$ not substantially exceeding the information-theoretic minimum required to generate the target distribution.
The proof begins by partitioning the input bits $u$ of $f$ as $u = (x,y)$ and expressing
\[
    f(u) = f(x,y) = h(y) \circ g_1(x_1,y) \circ \dots \circ g_s(x_s, y),
\]
where each $g_i$ may depend arbitrarily on $y$ but only on the single bit $x_i$ of $x$.
Additionally, each $g_i(x_i, y) \in \bin^{O(d)}$.
By a greedy approach, this can be accomplished with $s \ge \Omega(n/d^2)= n/\polylog(n)$.
The argument proceeds by conditioning on certain input bits (in this case $y$) and splitting into two scenarios depending on the amount of concentration; this strategy has been employed by several later works, including this one.

In the ``concentrated'' case where at least $\sqrt{n}$ many $g_i$'s become fixed, the output distribution of $f(x,y)$ has small support. 
Even taking the union over all choices of such $y$, one finds that many elements in the support of $\Dcal_{n/2}$ cannot be obtained.
Note that this is where the input length restriction originates.
Alternatively, at least $s - \sqrt{n} = n/\polylog(n)$ of the $g_i$'s take multiple values.
One would like to apply standard anticoncentration results to argue the output weight is too spread out, but as noted earlier in the proof overview (see \hyperref[par:approx_cont]{Step 3}), it may be that the Hamming weight of a $g_i$ is fixed, even if the output is not.
Viola circumvents this issue by adding an additional ``test'' to determine whether at most $2\sqrt{n}$ many $g_i$'s can output the all zeros string; in this case, taking two or more values implies the weight cannot be fixed, and thus anticoncentration inequalities can be applied.

We now turn to the recent work of Filmus, Leigh, Riazanov, and Sokolov \cite{filmus2023sampling}, who addressed the case of $\Dcal_{k}$ for $k = o(n)$.
They proved \cite[Theorem 1.2]{filmus2023sampling} that $\widetilde{\Omega}(\log(n/k))$ bits of locality (even adaptively chosen) are required to generate $\Dcal_{k}$ to constant error.
Moreover, the same result holds for the uniform distribution over strings whose Hamming weight lies in a set $S$, where $\max_{s\in S}s = k$.
Their proof first reduces to the case of considering $\Dcal_1$, and then proceeds via a hitting set vs independent set dichotomy.
This can again be viewed as fitting into the concentration vs anticoncentration paradigm.
If there are $O(d2^d)$ input bits that together affect every output bit, then one can afford to condition on them and induct on the remaining $(d-1)$-local sub-distributions.
Bounds from these distributions can eventually be recombined to deduce bounds on the full distribution using some version of a union bound, such as \Cref{lem:tvdist_after_conditioning}.
Otherwise, there are $\Omega(2^d)$ independent output bits, and one can argue it is very likely that two of these bits evaluate to 1, since each (nonzero) output bit is 1 with probability at least $2^{-d}$.
Interestingly, the authors are able to use the \emph{robust sunflower lemma} \cite{rossman2014monotone, alweiss2021improved} from extremal combinatorics to improve their quantitative bounds; we refer the reader to their original paper for details.

Also recently, strong unconditional locality lower bounds for sampling $\Dcal_{\alpha n}$ where $\alpha$ is a non-dyadic rational were proven independently in \cite{viola2023new, kane2024locality}.
At a very high level, both proofs follow from the observation (recorded in \cite{viola2012bit}) that the marginal distribution on any output bit of a $d$-local function $f$ must have probabilities that are integer multiples of $2^{-d}$, which cannot accurately approximate (say) $1/3$, corresponding to the marginal distribution for $\Dcal_{n/3}$.
We begin by describing the approach of Viola \cite[Theorem 25]{viola2023new}, continuing to focus on the case of $\alpha = 1/3$ for simplicity.
First assume by contradiction $\tvdist{f(\Ucal^m) - \Dcal_{n/3}} \le 1-\eps$ for some $\eps$ to be determined.
The proof proceeds by arguing that there must exist a large subset $R \subseteq \bin^m$ of the inputs, such that $f(\Ucal(R))$, $f$ evaluated on inputs drawn uniformly at random from $R$, lands entirely in the support of $\Dcal_{n/3}$ and has almost full min-entropy.\footnote{Recall the \emph{min-entropy} of a random variable $X$ is $\log(1/\max_{x \in \supp{X}} \Pr[X = x])$.}
The so-called \emph{fixed-set lemma} of Grinberg, Shaltiel, and Viola \cite{grinberg2018indistinguishability} then implies that $R$ can be further restricted to a large subset $R' \subseteq R$ whose uniform distribution is nearly indistinguishable by $d$-local functions from a product distribution where each bit is either fixed or uniform.
Then the large min-entropy of $f(\Ucal(R))$ (and thus $f(\Ucal(R'))$) combined with the fact that the distribution is contained in the support of $\Dcal_{n/3}$ implies there must be at least one output bit whose marginal distribution is extremely close to $1/3$.
This, however, contradicts the fact that $1/3$ cannot accurately be approximated by an integer multiple of $2^{-d}$.

The proof in \cite{kane2024locality} instead operates very similarly to the details of \hyperref[par:approx_cont]{Step 3}.
Through a graph-theoretic lemma \cite[Corollary 4.8]{kane2024locality}, they show that there must be a set of at most $r/2^d$ many input bits upon whose conditioning results in $r$ many independent output bits for $r \approx n / 2^{d^2}$.
As previously noted, each of these output bits inevitably incurs some error from mismatched marginal distributions, and these errors can be aggregated via standard concentration inequalities (as in \Cref{lem:tvdist_after_product}).
Bounds from the individual sub-distributions can once again be recombined via a union bound result like \Cref{lem:tvdist_after_conditioning}.

The work \cite{kane2024locality} also contains locality bounds for $\Dcal_{\alpha n}$ when $\alpha$ is a dyadic rational.
The proof has strong similarities to the non-dyadic case, except that now the marginal distributions on output bits of $f$ do not necessarily disagree with those of the target distribution.
The authors overcome this issue by turning to the familiar concentration vs.~anticoncentration paradigm.
The details have almost entirely been spelled out already in \hyperref[par:approx_cont]{Step 3}, as our analysis is nearly identical at that part.
We briefly note that the quantitative dependencies here are much worse than in the non-dyadic case, essentially for the reason that one needs to obtain a much richer structure than simply independent output bits, and so the analogous graph-theoretic lemma in this case has poor bounds.\footnote{Moreover, the bounds in this graph-theoretic lemma are essentially best possible; see \cite[Appendix A]{kane2024locality}.}

Finally, let us mention the predecessor of this work \cite{kane2024locally2}, which classifies what uniform symmetric distributions can be sampled by functions of bounded locality, confirming a conjecture of \cite{filmus2023sampling}.
In broad strokes, it follows similarly to the proof overview, so we will be brief.
Through a strategic conditioning of input bits, one may reduce to the case where no input bits have large degree and almost all the output bits are $k$-wise independent.
Then one can show the resulting Hamming weight distribution is close in Kolmogorov distance to the binomial distribution, and it satisfies a continuity-type property.
Combining these steps together, one classifies the Hamming weight distribution, which is enough to classify the actual distribution, since we assume $f(\Ucal^m)$ is close to a uniform symmetric distribution.
Much of the present work is an extension of the ideas in \cite{kane2024locally2} to the general symmetric case, although there are several parts (e.g., much of the analysis in \hyperref[par:remove_large_inf]{Step 1} and the finer control on the mixing weights) that require substantially new ideas.

We conclude by noting that several works \cite{lovett2011bounded, beck2012large, viola2014extractors, viola2020sampling, chattopadhyay2022space} prove hardness results against the more powerful classes of $\ac$ circuits or read-once branching programs.
There is a reasonable overlap in the techniques used with those discussed above, but the proofs of all these results rely on the special properties of certain pseudorandom objects like good codes \cite{lovett2011bounded, beck2012large, chattopadhyay2022space} or extractors \cite{viola2012extractors, viola2020sampling}.
\section{Preliminaries}\label{sec:prelim}

For a positive integer $n$, we use $[n]$ to denote the set $\cbra{1,2,\ldots,n}$.
We use $\Rbb$ to denote the set of real numbers, use $\Nbb=\cbra{0,1,2,\ldots}$ to denote the set of natural numbers, and use $\Zbb$ to denote the set of integers.
For a binary string $x$, we use $|x|$ to denote its Hamming weight.
We use $\log(x)$ and $\ln(x)$ to denote the logarithm with base $2$ and $e$ respectively.
For $a,b \in \Rbb^{\ge 0}$, we use $a \pm b$ to shorthand a number in the interval $[a-b, a+b]$.

\paragraph{Asymptotics.}
We use the standard $O(\cdot), \Omega(\cdot), \Theta(\cdot)$ notation, and emphasize that in this paper they only hide universal positive constants that do not depend on any parameter. 
Occasionally we will use subscripts to suppress a dependence on particular variable (e.g.,\ $O_d(1)$).
The notation $\poly(\cdot)$ is also sometimes used to denote a quantity that is polynomial with an unspecified exponent.
That is, $\poly(n) = \Theta(n^c)$ for some $c > 0$.

\paragraph{Locality and Hypergraphs.}
Let $f\colon\bin^m\to\bin^n$. We say $f$ is a $d$-local function if each output bit $i\in [n]$ depends on at most $d$ input bits. Unless otherwise stated, $n,m,d$ are positive integers.
 
We sometimes take an alternative view, using hypergraphs to model the dependency relations in $f$.
Let $G=(V,E)$ be an (undirected) hypergraph.
For each $i\in V$, we use $I_G(i)\subseteq E$ to denote the set of edges that are incident to $i$.
We say $G$ has maximum degree $d$ if $|I_G(i)|\le d$ holds for all $i\in V$.
Define $N_G(i)=\cbra{i'\in V\colon I_G(i)\cap I_G(i')\neq\emptyset}$ to be the neighborhood of $i$. We visualize the input-output dependencies of a function $f\colon\bin^m\to\bin^n$ as a hypergraph on the output bits $[n]$ with an edge for each input bit containing all of the output
bits that depend on it.
Note that a $d$-local function corresponds to a hypergraph with maximum degree $d$.

\paragraph{Probability.}
Let $\Pcal$ be a (discrete) distribution. We use $x\sim\Pcal$ to denote a random sample $x$ drawn from the distribution $\Pcal$.
If $\Pcal$ is a distribution over a product space, then we say $\Pcal$ is a product distribution if its coordinates are independent.
In addition, let $S$ be a non-empty set. If $S$ indexes $\Pcal$, we use $\Pcal[S]$ to denote the marginal distribution of $\Pcal$ on coordinates in $S$. 
We reserve $\Ucal$ to denote the uniform distribution over $\bin$.

For a deterministic function $f$, we use $f(\Pcal)$ to denote the output distribution of $f(x)$ given a random $x\sim\Pcal$. 
For every event $\Ecal$, we define $\Pcal(\Ecal)$ to denote the probability that $\Ecal$ occurs under distribution $\Pcal$.
In addition, we use $\Pcal(x)$ to denote the probability mass of $x$ under $\Pcal$, and use $\supp{\Pcal}=\cbra{x:\Pcal(x)>0}$ to denote the support of $\Pcal$.
Of particular interest to this work is the following special class of distributions.
\begin{definition}[Symmetric Distribution]
    A distribution $\Pcal$ over $\bin^n$ is \emph{symmetric} if $\Pcal(x) = \Pcal(y)$ for any $x,y \in\bin^n$ such that $|x| = |y|$.
\end{definition}

Let $\Qcal$ be a distribution. We use $\tvdist{\Pcal-\Qcal}=\frac12\sum_x\abs{\Pcal(x)-\Qcal(x)}$ to denote their total variation distance.\footnote{To evaluate total variation distance, we need two distributions to have the same sample space. This will be clear throughout the paper and thus we omit it for simplicity.}
We say $\Pcal$ is $\eps$-close to $\Qcal$ if $\tvdist{\Pcal-\Qcal}\le\eps$, and $\eps$-far otherwise.

\begin{fact}\label{fct:tvdist}
Total variation distance has the following equivalent characterizations:
$$
\tvdist{\Pcal-\Qcal}=\max_{\text{event }\Ecal}\Pcal(\Ecal)-\Qcal(\Ecal)=\min_{\substack{\text{random variable }(X,Y)\\\text{$X$ has marginal $\Pcal$ and $Y$ has marginal $\Qcal$}}}\Pr\sbra{X\neq Y}.
$$
\end{fact}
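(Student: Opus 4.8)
The plan is to establish the two claimed equalities in turn; both are standard. For the first equality, I would let $A = \cbra{x : \Pcal(x) > \Qcal(x)}$. Since $\sum_x \Pcal(x) = \sum_x \Qcal(x) = 1$, the positive and negative parts of $\sum_x \abs{\Pcal(x) - \Qcal(x)}$ coincide, so
$$\tvdist{\Pcal - \Qcal} = \sum_{x \in A}\pbra{\Pcal(x) - \Qcal(x)} = \Pcal(A) - \Qcal(A).$$
Conversely, for any event $\Ecal$ we have $\Pcal(\Ecal) - \Qcal(\Ecal) \le \Pcal(\Ecal \cap A) - \Qcal(\Ecal \cap A) \le \Pcal(A) - \Qcal(A)$: the first inequality discards points of $\Ecal$ lying outside $A$, where $\Pcal \le \Qcal$, and the second adds points of $A$ lying outside $\Ecal$, where $\Pcal \ge \Qcal$. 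Hence the maximum over events is attained at $A$ and equals $\tvdist{\Pcal - \Qcal}$.

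For the second equality I would show both directions. For one direction, take any joint distribution $(X, Y)$ with marginals $\Pcal, \Qcal$ and the set $A$ from above; then $\tvdist{\Pcal - \Qcal} = \Pr\sbra{X \in A} - \Pr\sbra{Y \in A} \le \Pr\sbra{X \in A, Y \notin A} \le \Pr\sbra{X \ne Y}$, so $\tvdist{\Pcal - \Qcal}$ lower bounds the minimum. For the reverse, I would construct the maximal coupling: set $p = \tvdist{\Pcal - \Qcal}$ and $\mu(x) = \min\cbra{\Pcal(x), \Qcal(x)}$, so that $\sum_x \mu(x) = 1 - p$ by the first part. With probability $1 - p$ draw $X = Y$ from $\mu / (1 - p)$, and with probability $p$ draw $X$ from $(\Pcal - \mu)/p$ and, independently, $Y$ from $(\Qcal - \mu)/p$ (omitting whichever branch is empty when $p \in \cbra{0, 1}$). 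A direct check gives the correct marginals for $X$ and $Y$, and since $\Pcal - \mu$ and $\Qcal - \mu$ have disjoint supports, $X \ne Y$ occurs only in the second branch, so $\Pr\sbra{X \ne Y} = p$. Combining the two directions identifies the minimum with $\tvdist{\Pcal - \Qcal}$.

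There is no real obstacle here: the only mildly delicate points are verifying that the maximal coupling has the prescribed marginals and handling the degenerate cases $p = 0$ and $p = 1$, both of which are routine. I expect the entire argument to be a short, standard computation.
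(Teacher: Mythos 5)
Your proof is correct and is the standard argument for both characterizations; the paper states this as a \Cref{fct:tvdist} without proof, so there is nothing to compare against. Both the identification of the optimal event $A = \cbra{x : \Pcal(x) > \Qcal(x)}$ and the maximal-coupling construction are carried out accurately, including the marginal check and the disjoint-support observation that forces $X \neq Y$ in the residual branch.
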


Let $\Pcal_1,\ldots,\Pcal_t$ be distributions.
Then $\Pcal_1\times\cdots\times\Pcal_t$ is a distribution denoting the product of $\Pcal_1,\ldots,\Pcal_t$.
We also use $\Pcal^t$ to denote $\Pcal_1\times\cdots\times\Pcal_t$ if each $\Pcal_i$ is the same as $\Pcal$.
For a finite set $S \subseteq [t]$, we use $\Pcal^S$ to denote the distribution $\Pcal^t$ restricted to the coordinates of $S$.
We say distribution $\Pcal$ is a mixture (or convex combination) of $\Pcal_1,\ldots,\Pcal_t$ if there exist $\alpha_1,\ldots,\alpha_t\in[0,1]$ such that $\sum_{i\in[t]}\alpha_i=1$ and $\Pcal(x)=\sum_{i\in[t]}\alpha_i\cdot\Pcal_i(x)$ for all $x$ in the sample space. When it is clear from context, we will occasionally write mixtures more simply as $\Pcal=\sum_{i\in[t]}\alpha_i\cdot\Pcal_i$.
In the case where the $\Pcal_i$ are all of the form $f_i(\Ucal^m)$ for deterministic functions $f_1, \dots, f_t\colon \bin^m \to \bin^n$, we will occasionally abuse notation by writing $F(\Ucal^m)$ for the mixture $F = \sum_{i\in[t]}\alpha_i\cdot f_i(\Ucal^m)$.

We collect two useful probabilistic results from prior work.
The first says that two distributions must be far apart if many of their marginals do not match.

\begin{lemma}[{\cite[Lemma 4.2]{kane2024locality}}]\label{lem:tvdist_after_product}
Let $\Pcal$, $\Qcal$, and $\Wcal$ be distributions over an $n$-dimensional product space, and let $S\subseteq[n]$ be a non-empty set of size $s$.
Assume
\begin{itemize}
\item $\Pcal[S]$ and $\Wcal[S]$ are two product distributions,
\item $\tvdist{\Pcal[\cbra{i}]-\Wcal[\cbra{i}]}\ge\eps$ holds for all $i\in S$, and
\item $\Wcal(x)\ge\nu\cdot\Qcal(x)$ holds for some $\nu>0$ and all $x$.
\end{itemize}
Then
$$
\tvdist{\Pcal-\Qcal}\ge1-2\cdot e^{-\eps^2s/2}/\nu.
$$
\end{lemma}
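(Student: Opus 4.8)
The statement to prove is \Cref{lem:tvdist_after_product}, cited from prior work \cite{kane2024locality}, so I will sketch how one would prove it from scratch.

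\medskip

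\textbf{Proof proposal for \Cref{lem:tvdist_after_product}.}
The plan is to exploit the variational characterization of total variation distance: $\tvdist{\Pcal - \Qcal} \ge \Pcal(\Ecal) - \Qcal(\Ecal)$ for any event $\Ecal$, so it suffices to exhibit a single event that $\Pcal$ almost certainly satisfies but $\Qcal$ rarely does. The natural candidate is built from the product structure on $S$. For each coordinate $i \in S$, since $\tvdist{\Pcal[\{i\}] - \Wcal[\{i\}]} \ge \eps$, \Cref{fct:tvdist} gives a set $E_i \subseteq \bin$ (a ``distinguishing event'' for the $i$-th bit) with $\Pcal[\{i\}](E_i) - \Wcal[\{i\}](E_i) \ge \eps$. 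Define, for $x$ in the product space, the count $Z(x) = \sum_{i \in S} \indicator[x_i \in E_i]$ of distinguished coordinates, and let $\mu_\Pcal = \E_{\Pcal}[Z]$ and $\mu_\Wcal = \E_{\Wcal}[Z]$, so that $\mu_\Pcal - \mu_\Wcal \ge \eps s$ by linearity. The event I would use is $\Ecal = \{ x : Z(x) > (\mu_\Pcal + \mu_\Wcal)/2 \}$, the midpoint threshold.

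\medskip

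The key steps are then three concentration/counting estimates. First, under $\Pcal$: since $\Pcal[S]$ is a product distribution, $Z$ is a sum of independent indicators (restricted to coordinates in $S$, which is all that $Z$ sees), so by a Chernoff/Hoeffding bound $\Pr_{\Pcal}[Z \le \mu_\Pcal - \eps s/2] \le e^{-\eps^2 s/2}$ (using that $\mu_\Pcal - (\mu_\Pcal+\mu_\Wcal)/2 = (\mu_\Pcal - \mu_\Wcal)/2 \ge \eps s /2$ and the standard bound on the lower tail of a sum of $s$ independent $[0,1]$ variables). Hence $\Pcal(\Ecal) \ge 1 - e^{-\eps^2 s/2}$. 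Second, under $\Wcal$: again $\Wcal[S]$ is a product distribution, so the same Hoeffding bound gives $\Pr_{\Wcal}[Z \ge \mu_\Wcal + \eps s/2] \le e^{-\eps^2 s / 2}$, i.e.\ $\Wcal(\Ecal) \le e^{-\eps^2 s/2}$. Third, transfer to $\Qcal$: the hypothesis $\Wcal(x) \ge \nu \cdot \Qcal(x)$ pointwise gives $\Qcal(\Ecal) = \sum_{x \in \Ecal} \Qcal(x) \le \nu^{-1}\sum_{x \in \Ecal}\Wcal(x) = \nu^{-1}\Wcal(\Ecal) \le e^{-\eps^2 s/2}/\nu$. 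Combining, $\tvdist{\Pcal - \Qcal} \ge \Pcal(\Ecal) - \Qcal(\Ecal) \ge 1 - e^{-\eps^2 s/2} - e^{-\eps^2 s/2}/\nu \ge 1 - 2e^{-\eps^2 s/2}/\nu$, where the last inequality uses $\nu \le 1$ (which we may assume, since $\Wcal$ and $\Qcal$ are distributions forcing $\nu \le 1$; if not, the bound is trivial or one absorbs it).

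\medskip

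The main obstacle is a bookkeeping subtlety rather than a deep one: one must be careful that $Z$ genuinely depends only on coordinates in $S$, so that the independence supplied by ``$\Pcal[S]$ and $\Wcal[S]$ are product distributions'' actually applies — this is why the event is phrased purely in terms of the marginals on $S$. A secondary point is getting the constant in the exponent right: the two-sided split at the midpoint $(\mu_\Pcal + \mu_\Wcal)/2$ is what lets a single parameter $\eps s/2$ control both tails, and Hoeffding's inequality for a sum of $s$ independent variables in $[0,1]$ with deviation $\eps s/2$ yields exactly $\exp(-2(\eps s/2)^2/s) = \exp(-\eps^2 s/2)$, matching the claimed bound. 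Everything else is a direct application of \Cref{fct:tvdist} and a union bound over the two failure events, so no new ideas beyond the standard second-moment/Chernoff toolkit are needed.
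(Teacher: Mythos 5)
Your proof is correct, and since the paper cites this lemma from \cite{kane2024locality} without reproducing the argument, there is no in-paper proof to compare against; but the Hoeffding-based scheme you sketch is exactly the standard one for such statements and is almost certainly what the original reference does. The logic is airtight: the counting statistic $Z$ depends only on $x_S$, so the product-structure hypotheses on $\Pcal[S]$ and $\Wcal[S]$ supply the independence needed for Hoeffding; the midpoint threshold $(\mu_\Pcal+\mu_\Wcal)/2$ together with $\mu_\Pcal - \mu_\Wcal \ge \eps s$ gives deviation $\eps s/2$ on each side, producing the exponent $\eps^2 s/2$; the pointwise inequality $\Wcal(x)\ge\nu\Qcal(x)$ transfers $\Wcal(\Ecal)\le e^{-\eps^2 s/2}$ to $\Qcal(\Ecal)\le e^{-\eps^2 s/2}/\nu$; and $\nu\le1$ follows by summing the pointwise inequality, so the two error terms combine into $2e^{-\eps^2 s/2}/\nu$. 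One cosmetic slip: you write $E_i\subseteq\bin$, but the lemma is stated over a general $n$-dimensional product space, so $E_i$ should be a subset of the $i$-th coordinate alphabet — the argument is unchanged, but the notation should not presuppose binary coordinates.
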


The second allows us to reason about the distance between a fixed distribution and a mixture by reasoning about the individual distributions composing the mixture.

\begin{lemma}[{\cite[Corollary 4.2]{viola2020sampling}}]\label{lem:tvdist_after_conditioning}
Let $\Pcal_1,\ldots,\Pcal_t$ and $\Qcal$ be distributions.
Assume there exists a value $\eps$ such that $\tvdist{\Pcal_i-\Qcal}\ge1-\eps$ for all $i\in [t]$.
Then for the balanced mixture $\Pcal = \sum_i \frac{1}{t}\cdot \Pcal_i$, we have
$$
\tvdist{\Pcal-\Qcal}\ge1-t\cdot\eps.
$$
\end{lemma}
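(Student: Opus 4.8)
The plan is to derive this from the dual (event) characterization of total variation distance recorded in \Cref{fct:tvdist}, together with a union bound. First, for each $i\in[t]$ I would use $\tvdist{\Pcal_i-\Qcal}=\max_{\Ecal}\pbra{\Pcal_i(\Ecal)-\Qcal(\Ecal)}\ge 1-\eps$ to extract a witnessing event $\Ecal_i$ with $\Pcal_i(\Ecal_i)-\Qcal(\Ecal_i)\ge 1-\eps$; since $\Pcal_i(\Ecal_i)\le 1$, this simultaneously forces $\Qcal(\Ecal_i)\le\eps$.

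The single event to test would then be $\Ecal\coloneqq\bigcup_{i=1}^{t}\Ecal_i$. On one hand, each $\Pcal_i$ gives $\Ecal$ mass at least $\Pcal_i(\Ecal_i)\ge 1-\eps+\Qcal(\Ecal_i)$, so averaging yields $\Pcal(\Ecal)=\frac1t\sum_i\Pcal_i(\Ecal)\ge (1-\eps)+\frac1t\sum_i\Qcal(\Ecal_i)$. On the other hand, a union bound gives $\Qcal(\Ecal)\le\sum_i\Qcal(\Ecal_i)$. Subtracting and then invoking $\Qcal(\Ecal_i)\le\eps$ for each $i$,
\[
    \tvdist{\Pcal-\Qcal}\;\ge\;\Pcal(\Ecal)-\Qcal(\Ecal)\;\ge\;(1-\eps)-\frac{t-1}{t}\sum_{i=1}^{t}\Qcal(\Ecal_i)\;\ge\;(1-\eps)-(t-1)\eps\;=\;1-t\eps,
\]
which is exactly the claimed bound. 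The same computation goes through verbatim for an arbitrary convex combination $\Pcal=\sum_i\alpha_i\Pcal_i$ (replace $\frac1t$ by $\alpha_i$ and note $\sum_i(1-\alpha_i)=t-1$), yielding the identical bound $1-t\eps$, but only the balanced case is needed here.

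There is no genuine obstacle in this argument; the only point requiring a moment of care is the bookkeeping that produces the clean constant $1-t\eps$ rather than the weaker $1-(t+1)\eps$ one gets by bounding $\Pcal(\Ecal)\ge 1-\eps$ and $\Qcal(\Ecal)\le t\eps$ separately. The improvement comes from not discarding the $\Qcal(\Ecal_i)$ terms: the extra $\frac1t\sum_i\Qcal(\Ecal_i)$ mass that $\Pcal$ is guaranteed to place on $\Ecal$ partially cancels the $\sum_i\Qcal(\Ecal_i)$ that $\Qcal$ could place there according to the union bound.
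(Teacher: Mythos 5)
Your proof is correct. The paper does not reprove this lemma—it cites it directly from Viola's work—but your argument via \Cref{fct:tvdist} (extract a witnessing event $\Ecal_i$ for each $\Pcal_i$, note $\Qcal(\Ecal_i)\le\eps$ automatically, test the union $\Ecal=\bigcup_i\Ecal_i$, and union-bound $\Qcal(\Ecal)$) is the standard argument for this fact, and your bookkeeping that retains the $\frac1t\sum_i\Qcal(\Ecal_i)$ term to get the clean constant $1-t\eps$ rather than $1-(t+1)\eps$ is exactly the right refinement. The observation that the same computation handles arbitrary convex weights $\alpha_i$ (giving the same $1-t\eps$, not something depending on $\max_i\alpha_i$) is also correct and is a nice aside, though unused here.
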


\paragraph{Weight Distributions and Symmetrization.}

If $\Pcal$ is a distribution over $\bin^n$, we use $|\Pcal|$ to denote the distribution over weights. That is, $|\Pcal|(w) = \sum_{x : |x|=w}\Pcal(x)$.
We additionally define the symmetrized distribution $\Pcal_\sym$ to be the distribution resulting from randomly permuting the coordinates of a string $x\sim \Pcal$.

We will require the following lemma, which lets us control the distance between two distributions via the distance between their weight distributions, assuming one distribution is symmetric and the other is close to being symmetric.
\begin{lemma}[{\cite[Lemma 4.8]{kane2024locally2}}]\label{lem:distance_to_sym}
Let $\Pcal$ and $\Qcal$ be two distributions on $\{0,1\}^n$ with $\Qcal$ symmetric. Then
$$
\tvdist{\Pcal-\Qcal} = \Theta(\tvdist{|\Pcal|-|\Qcal|} + \tvdist{\Pcal-\Pcal_\sym}).
$$
\end{lemma}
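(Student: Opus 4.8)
The plan is to prove the two one-sided bounds separately; each is a short triangle-inequality argument resting on one elementary fact: the total variation distance between two \emph{symmetric} distributions on $\bin^n$ equals the total variation distance between their Hamming-weight distributions. Indeed, a symmetric distribution $\mu$ is uniform on each weight level, so $\mu(x)=|\mu|(w)/\binom{n}{w}$ for every $x$ of weight $w$; hence for symmetric $\mu,\nu$,
\[
\tvdist{\mu-\nu}=\frac12\sum_{w=0}^n\binom{n}{w}\cdot\abs{\frac{|\mu|(w)}{\binom{n}{w}}-\frac{|\nu|(w)}{\binom{n}{w}}}=\frac12\sum_{w=0}^n\abs{|\mu|(w)-|\nu|(w)}=\tvdist{|\mu|-|\nu|}.
\]
I will also use two trivial facts: symmetrization leaves the weight distribution unchanged, $|\Pcal_\sym|=|\Pcal|$; and $|\cdot|$ is a deterministic map, so $\tvdist{|\Pcal|-|\Qcal|}\le\tvdist{\Pcal-\Qcal}$ by the data-processing property of total variation distance (e.g.\ via the coupling characterization in \Cref{fct:tvdist}).

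For the upper bound ($\tvdist{\Pcal-\Qcal}=O(\cdot)$), I would route through $\Pcal_\sym$:
\[
\tvdist{\Pcal-\Qcal}\le\tvdist{\Pcal-\Pcal_\sym}+\tvdist{\Pcal_\sym-\Qcal}=\tvdist{\Pcal-\Pcal_\sym}+\tvdist{|\Pcal|-|\Qcal|},
\]
where the last equality uses that $\Pcal_\sym$ and $\Qcal$ are both symmetric, together with $|\Pcal_\sym|=|\Pcal|$. This already gives the $O(\cdot)$ direction with constant $1$.

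For the lower bound ($\tvdist{\Pcal-\Qcal}=\Omega(\cdot)$), the weight term is immediate from data processing, $\tvdist{|\Pcal|-|\Qcal|}\le\tvdist{\Pcal-\Qcal}$. For the symmetrization term I would instead route through $\Qcal$: since $\Qcal$ and $\Pcal_\sym$ are both symmetric,
\[
\tvdist{\Pcal-\Pcal_\sym}\le\tvdist{\Pcal-\Qcal}+\tvdist{\Qcal-\Pcal_\sym}=\tvdist{\Pcal-\Qcal}+\tvdist{|\Qcal|-|\Pcal|}\le2\,\tvdist{\Pcal-\Qcal}.
\]
Adding the two bounds yields $\tvdist{|\Pcal|-|\Qcal|}+\tvdist{\Pcal-\Pcal_\sym}\le3\,\tvdist{\Pcal-\Qcal}$, which is the $\Omega(\cdot)$ direction with constant $1/3$.

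There is essentially no obstacle: the argument is entirely bookkeeping with the triangle inequality, and the only genuine input is the one-line computation that a symmetric distribution is determined by its weight distribution, so that total variation distance between symmetric distributions collapses to total variation distance between weight distributions. The only thing to get right is which intermediate distribution to insert in each step --- $\Pcal_\sym$ for the upper bound and $\Qcal$ for the lower bound. Explicit constants $1$ and $1/3$ fall out, which is slightly stronger than the stated $\Theta(\cdot)$.
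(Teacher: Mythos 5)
Your proof is correct, and it matches what the paper describes: the lemma is imported from \cite[Lemma 4.8]{kane2024locally2} without a proof here, but the overview says it follows from several applications of the triangle inequality, which is exactly your argument (routing the upper bound through $\Pcal_\sym$, the lower bound through $\Qcal$, and using the one-line identity $\tvdist{\mu-\nu}=\tvdist{|\mu|-|\nu|}$ for symmetric $\mu,\nu$ together with data processing and $|\Pcal_\sym|=|\Pcal|$). Your explicit constants $1$ and $1/3$ are a clean bonus.
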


\paragraph{Binomials and Entropy.}
Let $\Hcal(x)=x\cdot\log\pbra{\frac1x}+(1-x)\cdot\log\pbra{\frac1{1-x}}$ be the binary entropy function.
We will use the following estimate regarding binomial coefficients and the entropy function.

\begin{fact}[{See e.g., \cite[Lemma 17.5.1]{cover2006elements}}]\label{fct:individual_binom}
For $1\le k\le n-1$, we have
$$
    \binom nk \ge \frac{2^{n\cdot\Hcal(k/n)}}{\sqrt{8k(1-k/n)}}.
$$
\end{fact}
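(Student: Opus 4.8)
The plan is to read this bound off from a sufficiently sharp form of Stirling's formula. Concretely, I would invoke the Robbins refinement
\[
    \sqrt{2\pi N}\,\pbra{N/e}^N e^{1/(12N+1)} < N! < \sqrt{2\pi N}\,\pbra{N/e}^N e^{1/(12N)} \qquad (N \ge 1),
\]
apply its lower bound to the $n!$ in $\binom nk = n!/\pbra{k!\,(n-k)!}$ and its upper bound to each of $k!$ and $(n-k)!$, and then simplify the resulting expression.

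The simplification hinges on one elementary identity: writing $p = k/n$, the pure power part is $\frac{n^n}{k^k(n-k)^{n-k}} = p^{-k}(1-p)^{-(n-k)} = 2^{n\Hcal(p)}$, which follows by taking base-$2$ logarithms. Meanwhile the $e^{-N}$ factors cancel, and the $\sqrt{2\pi N}$ factors collapse via $\frac{\sqrt{2\pi n}}{\sqrt{2\pi k}\,\sqrt{2\pi(n-k)}} = \frac{1}{\sqrt{2\pi k(1-k/n)}}$ (using $k(n-k)/n = k(1-k/n)$). Collecting the leftover exponential corrections gives
\[
    \binom nk \ge \frac{2^{n\Hcal(k/n)}}{\sqrt{2\pi k(1-k/n)}}\cdot\exp\pbra{\frac{1}{12n+1} - \frac{1}{12k} - \frac{1}{12(n-k)}},
\]
so it suffices to show the exponential factor is at least $\sqrt{2\pi}/\sqrt 8 = \sqrt\pi/2$, i.e.\ that the exponent is at least $\ln(\sqrt\pi/2) = -0.1207\ldots$

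This last step I would handle by a short case split on $\min(k,n-k)$. If $k,n-k \ge 2$, then $\frac1{12k}+\frac1{12(n-k)}\le\frac1{12}$, so the exponent exceeds $-\frac1{12}\approx-0.083$. If $\min(k,n-k)=1$ with $n\ge3$ — say $k=1$ — then the exponent equals $\frac1{12n+1}-\frac1{12}-\frac1{12(n-1)}=-\frac1{12}-\frac{13}{12(n-1)(12n+1)}\ge-\frac1{12}-\frac{13}{888}>-0.099$. Both quantities are comfortably above $\ln(\sqrt\pi/2)$. The single remaining instance, $n=2$ and $k=1$, I would verify directly: $\binom21=2=2^{2\Hcal(1/2)}/\sqrt{8\cdot1\cdot(1/2)}$, which in fact holds with equality.

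The only genuinely delicate point — and the one I expect to be the main obstacle — is that the constant $\sqrt8$ is essentially optimal (equality at $n=2$), so crude estimates will not suffice: one needs the \emph{two-sided} Robbins bound rather than merely $n!\ge\sqrt{2\pi N}(N/e)^N$, and one must treat the boundary case $n=2$ separately by hand. Everything else is routine manipulation of the binary entropy function $\Hcal$.
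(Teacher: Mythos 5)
Your proof is correct and complete. The paper does not supply its own argument here --- it merely cites \cite[Lemma 17.5.1]{cover2006elements} --- and the Robbins-refined Stirling computation you give is precisely the standard route, with the case split on $\min(k,n-k)$ handled correctly and the tight boundary instance $n=2$, $k=1$ (where equality holds, forcing the two-sided Robbins form rather than the cruder one-sided $n!\ge\sqrt{2\pi n}(n/e)^n$) checked by hand as it must be.
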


For positive integer $n$ and parameter $\gamma\in[0,1]$, define $\Bin(n,\gamma)$ to be the binomial distribution of $n$ bits and bias $\gamma$, i.e., $x\sim\Bin(n,\gamma)$ is a random integer between $0$ and $n$ with probability density function $\binom nx\gamma^{x}(1-\gamma)^{n-x}$.
We need the following standard estimates, the proofs of which can be found in \Cref{sec:app_prelim}.

\begin{fact}\label{fct:bin_interval}
    Let $\gamma \in (0,1)$, $a < b \in \Rbb$, and $n \in \Nbb^{\ge 1}$.
    Then the binomial distribution $\Bin(n, \gamma)$ satisfies
    \[
        \Pr\sbra{\Bin(n, \gamma) \in [a,b]} \le O\pbra{\frac{\lceil b-a \rceil}{\sqrt{\gamma (1-\gamma)n}}}.
    \]
\end{fact}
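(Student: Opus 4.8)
The plan is to reduce the estimate to an anticoncentration bound at a single point and then prove that point bound via Stirling's approximation. First, since $a<b$, the interval $[a,b]$ contains at most $\lfloor b\rfloor-\lceil a\rceil+1\le b-a+1\le 2\lceil b-a\rceil$ integers (using $\lceil b-a\rceil\ge 1$), so a union bound gives
\[
\Pr\sbra{\Bin(n,\gamma)\in[a,b]}\le 2\lceil b-a\rceil\cdot\max_{0\le k\le n}\Pr\sbra{\Bin(n,\gamma)=k}.
\]
Thus it suffices to prove the point-mass bound $\max_{0 \le k \le n}\Pr\sbra{\Bin(n,\gamma)=k}=O\pbra{1/\sqrt{\gamma(1-\gamma)n}}$.

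For the point-mass bound, I would first use the symmetry $k\leftrightarrow n-k$ (equivalently $\gamma\leftrightarrow 1-\gamma$) to assume $\gamma\le1/2$, and then split on the size of $\gamma n$. If $\gamma n\le1$, then $\gamma(1-\gamma)n\le1$, so the target quantity is $\Omega(1)$ and the bound is trivial since probabilities are at most $1$. If $\gamma n>1$, then the mode of $\Bin(n,\gamma)$ is $m=\lfloor(n+1)\gamma\rfloor$, which in this regime satisfies $1\le m\le n-1$ and $m-\gamma n\in(-1,1/2]$; in particular $m=\Theta(\gamma n)$ and $n-m=\Theta((1-\gamma)n)$. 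Plugging the non-asymptotic Stirling bounds $\sqrt{2\pi\ell}\,(\ell/e)^\ell\le\ell!\le e\sqrt{\ell}\,(\ell/e)^\ell$ into $\binom nm\gamma^m(1-\gamma)^{n-m}$ yields
\[
\Pr\sbra{\Bin(n,\gamma)=m}\le O\pbra{\sqrt{\tfrac{n}{m(n-m)}}}\cdot\pbra{\tfrac{\gamma n}{m}}^m\pbra{\tfrac{(1-\gamma)n}{n-m}}^{n-m}.
\]
Using $1+x\le e^x$ together with $m-\gamma n\in(-1,1/2]$ (and $m,\,n-m\ge1$), each of the last two factors is $e^{O(1)}=O(1)$, while $\sqrt{n/(m(n-m))}=\Theta\pbra{1/\sqrt{\gamma(1-\gamma)n}}$ by the proportionality above; combining gives the claim. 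The finitely many small values of $n$ for which these $\Theta$-estimates degenerate are checked directly, since there every quantity in sight is bounded by an absolute constant.

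The computations above are routine, and I do not expect a genuine obstacle: this is a standard local-limit-type estimate, and one could alternatively cite a local central limit theorem for the binomial in place of the explicit Stirling bounds. The only point requiring a little care is the case analysis guaranteeing that $m$ and $n-m$ stay proportional to $\gamma n$ and $(1-\gamma)n$ respectively — which is precisely why the regimes $\gamma n\le1$ and (by symmetry) $(1-\gamma)n\le1$ are dispatched separately by the trivial bound before Stirling is invoked.
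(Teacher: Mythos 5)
Your reduction to the point-mass bound $\max_k \Pr[\Bin(n,\gamma)=k]=O(1/\sqrt{\gamma(1-\gamma)n})$ via a union bound over the at most $2\lceil b-a\rceil$ integers in $[a,b]$ matches the paper's first step. Where you diverge is in how that point-mass bound is established: you argue via a local-limit-type estimate, locating the mode $m=\lfloor(n+1)\gamma\rfloor$, disposing of the degenerate regime $\gamma n\le1$ (and its mirror) by the trivial bound, and then applying non-asymptotic Stirling bounds together with $1+x\le e^x$; the bookkeeping (that $m$ and $n-m$ stay proportional to $\gamma n$ and $(1-\gamma)n$, and that $m-\gamma n\in(-1,1/2]$ keeps the exponential factors $O(1)$) all checks out. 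The paper instead avoids reasoning about the mode entirely: assuming $\gamma\le1/2$, it writes a $\Bin(n,\gamma)$ sample as $\sum_i W_iB_i$ with $W_i\sim\mathrm{Ber}(2\gamma)$ and $B_i\sim\mathrm{Ber}(1/2)$ independent, observes that conditioned on $W$ the sum is $\Bin(|S|,1/2)$ for $S=\{i:W_i=1\}$ whose point masses are uniformly $O(1/\sqrt{|S|})$, and invokes Chernoff to get $|S|\ge\gamma n$ except with probability $e^{-\gamma n/4}$. That decoupling trick is arguably slicker in that it bounds every $k$ at once and only requires the central binomial coefficient estimate, whereas your argument implicitly uses unimodality of the binomial to reduce to the mode; on the other hand, yours is the more standard computation and yields the same conclusion with no more work. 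Both are correct; the paper's is a reduction to the fair-coin case, yours is direct Stirling.
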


\begin{fact}\label{clm:bin_difference}
    Let $\gamma \in (0,1)$, $a,b \in \Nbb$, and $n \in \Nbb^{\ge 1}$.
    Then the binomial distribution $\Bin(n, \gamma)$ satisfies
    \[
        |\Pr\sbra{\Bin(n,\gamma) = a} - \Pr\sbra{\Bin(n,\gamma) = b}| \le O\pbra{\frac{|b-a|}{\gamma(1-\gamma)n}}.
    \]
\end{fact}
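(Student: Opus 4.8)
The plan is to telescope and reduce to a single-step estimate. Write $p_k \coloneqq \Pr\sbra{\Bin(n,\gamma) = k} = \binom{n}{k}\gamma^k(1-\gamma)^{n-k}$ for $0\le k\le n$ and $p_k = 0$ otherwise, and assume without loss of generality $a \le b$ (the case $a=b$ being trivial). Since $|p_a - p_b| \le \sum_{k=a}^{b-1}|p_{k+1}-p_k|$, it suffices to prove the single-step bound $|p_{k+1}-p_k| = O\pbra{\frac{1}{\gamma(1-\gamma)n}}$ for every integer $k$; summing over the $|b-a|$ steps then gives the claim. I would also assume $\gamma(1-\gamma)n \ge 1$, since otherwise $\frac{|b-a|}{\gamma(1-\gamma)n}\ge 1 \ge |p_a-p_b|$ and there is nothing to prove; note this forces $\gamma n, (1-\gamma)n > 1$.

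The workhorse is the binomial ratio $p_{k+1}/p_k = \frac{(n-k)\gamma}{(k+1)(1-\gamma)}$, which rearranges into the two identities
\[
    |p_{k+1}-p_k| \;=\; p_k\cdot\frac{\abs{(n+1)\gamma-(k+1)}}{(k+1)(1-\gamma)} \;=\; p_{k+1}\cdot\frac{\abs{(n+1)\gamma-(k+1)}}{(n-k)\gamma}.
\]
The one nontrivial ingredient I would establish first is the pointwise bound $p_j\cdot\max(1,\abs{j-\gamma n}) = O(1)$ for every $j$. This follows from unimodality of the binomial pmf together with $\sum_j p_j = 1$: the mode lies within $1$ of $\gamma n$, so for $j$ with $\abs{j-\gamma n}\ge 4$ there are $\Omega(\abs{j-\gamma n})$ integers strictly between the mode and $j$, each carrying mass at least $p_j$ by monotonicity, which forces $p_j = O(1/\abs{j-\gamma n})$; for $\abs{j-\gamma n} < 4$ one simply uses $p_j \le 1$. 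Combined with $\abs{(n+1)\gamma-(k+1)} \le \abs{\gamma n - k} + 1 \le 2\max(1,\abs{\gamma n-k})$, this yields $p_k\cdot\abs{(n+1)\gamma-(k+1)} = O(1)$, and likewise with $k+1$ in place of $k$.

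Now I would split on the size of $k$. If $k+1 \ge \gamma n/2$, then $\frac{1}{(k+1)(1-\gamma)} \le \frac{2}{\gamma(1-\gamma)n}$, so the first identity gives $|p_{k+1}-p_k| = O(1)\cdot\frac{2}{\gamma(1-\gamma)n}$. If instead $k+1 < \gamma n/2 \le n/2$, then $n-k > n/2$, so $\frac{1}{(n-k)\gamma} < \frac{2}{\gamma n} \le \frac{2}{\gamma(1-\gamma)n}$, and the second identity gives the same bound. This proves the single-step estimate for all interior $k$; the two boundary steps $k=-1$ and $k=n$, where one of the probabilities vanishes, are checked directly from $(1-\gamma)^n, \gamma^n = O\pbra{\frac{1}{\gamma(1-\gamma)n}}$. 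Summing over the at most $|b-a|$ steps completes the proof.

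I expect the main obstacle to be the pointwise bound $p_j\max(1,\abs{j-\gamma n}) = O(1)$: it is the only place where the ``$1/\sqrt{\gamma(1-\gamma)n}$ Gaussian prefactor'' is needed, and neither the interval estimate of \Cref{fct:bin_interval} nor a crude Chernoff or entropy tail bound supplies it on its own — one genuinely needs the unimodality-plus-total-mass observation (equivalently, Stirling's formula). Everything else is elementary manipulation of the binomial ratio.
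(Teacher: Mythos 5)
Your proof is correct, and it takes a genuinely different route from the paper's. The paper proves this fact by a probabilistic decomposition: writing each $\Bin(n,\gamma)$ coordinate as $W_i\cdot B_i$ with $W_i\sim\mathrm{Bern}(2\gamma)$ and $B_i\sim\mathrm{Bern}(1/2)$ (assuming $\gamma\le1/2$), it represents $\Bin(n,\gamma)$ as a mixture of $\Bin(|S_w|,1/2)$ over random supports $S_w$, invokes a cited single-bias estimate $|\Pr[\Bin(k,1/2)=a]-\Pr[\Bin(k,1/2)=b]|\le O((b-a)/k)$ from their prior work, and controls the contribution from small $|S_w|$ with Chernoff. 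Your argument is entirely self-contained: telescoping to single steps, the exact ratio identity $p_{k+1}/p_k=\frac{(n-k)\gamma}{(k+1)(1-\gamma)}$ yielding $|p_{k+1}-p_k|=p_k\frac{|(n+1)\gamma-(k+1)|}{(k+1)(1-\gamma)}=p_{k+1}\frac{|(n+1)\gamma-(k+1)|}{(n-k)\gamma}$, the unimodality-plus-total-mass bound $p_j\max(1,|j-\gamma n|)=O(1)$, and the case split on $k+1$ versus $\gamma n/2$. I checked the pointwise bound (the mode is within distance $1$ of $\gamma n$, so for $|j-\gamma n|\ge4$ there are more than $\frac12|j-\gamma n|$ integers strictly between the mode and $j$ each with mass at least $p_j$, giving $p_j<2/|j-\gamma n|$), the boundary steps (e.g.\ $\gamma(1-\gamma)n\cdot(1-\gamma)^n\le\gamma n\cdot e^{-\gamma n}\le e^{-1}$ under your reduction $\gamma(1-\gamma)n\ge1$), and both branches of the case split; all are sound. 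Your approach avoids any dependence on the prior paper's Fact A.3 and needs no Chernoff bound, at the cost of doing the unimodality counting argument by hand; the paper's route is shorter given that the $\gamma=1/2$ estimate is already available as a black box.
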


\paragraph{Concentration and Anti-concentration.}
We need the following standard (anti-)concentration bounds.

\begin{fact}[Hoeffding's Inequality]\label{fct:hoeffding}
Assume $X_1,\ldots,X_n$ are independent random variables such that $a\le X_i\le b$ holds for all $i\in[n]$.
Then for all $\delta\ge0$, we have
$$
\max\cbra{\Pr\sbra{\frac1n\sum_{i\in[n]}\pbra{X_i-\E[X_i]}\ge\delta},\Pr\sbra{\frac1n\sum_{i\in[n]}\pbra{X_i-\E[X_i]}\le-\delta}}
\le\exp\cbra{-\frac{2n\delta^2}{(b-a)^2}}.
$$
\end{fact}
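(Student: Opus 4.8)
The plan is to prove Hoeffding's inequality by the standard exponential-moment (Chernoff) method. First I would recenter, setting $Y_i \coloneqq X_i - \E[X_i]$ for each $i\in[n]$: the $Y_i$ are then independent, each has mean zero, and each is supported in an interval of length $b-a$ (namely $[a-\E[X_i],\,b-\E[X_i]]$). It suffices to bound the upper tail $\Pr\sbra{\sum_{i\in[n]}Y_i\ge n\delta}$, since the lower-tail bound follows by applying the same argument to $-X_1,\dots,-X_n$, which also lie in an interval of length $b-a$. Fixing a parameter $t>0$ and applying Markov's inequality to the nonnegative random variable $\exp\pbra{t\sum_i Y_i}$ gives
\[
    \Pr\sbra{\sum_{i\in[n]}Y_i\ge n\delta}\le e^{-tn\delta}\cdot\E\sbra{e^{t\sum_i Y_i}}=e^{-tn\delta}\cdot\prod_{i\in[n]}\E\sbra{e^{tY_i}},
\]
where the final equality uses independence.

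The key ingredient is Hoeffding's Lemma: if $Y$ is a mean-zero random variable supported in an interval of length $L$, then $\E\sbra{e^{tY}}\le e^{t^2L^2/8}$ for all $t\in\Rbb$. I would prove this via convexity of $y\mapsto e^{ty}$: writing $Y\in[\alpha,\beta]$ with $\beta-\alpha=L$, the pointwise inequality $e^{tY}\le\frac{\beta-Y}{L}e^{t\alpha}+\frac{Y-\alpha}{L}e^{t\beta}$ holds, and taking expectations with $\E[Y]=0$ gives $\E\sbra{e^{tY}}\le e^{\phi(u)}$ after the substitution $u=tL$, where $\phi$ is an explicit function with $\phi(0)=\phi'(0)=0$ and $\phi''(u)\le 1/4$ for all $u$ (the second derivative being the variance of a two-point random variable). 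Taylor's theorem with remainder then yields $\phi(u)\le u^2/8$, i.e.\ $\E\sbra{e^{tY}}\le e^{t^2L^2/8}$.

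Combining Hoeffding's Lemma (with $L=b-a$) into the product above gives $\Pr\sbra{\sum_i Y_i\ge n\delta}\le\exp\pbra{-tn\delta+nt^2(b-a)^2/8}$, and I would optimize the exponent over $t>0$: the minimizer is $t=4\delta/(b-a)^2$, which produces the bound $\exp\pbra{-2n\delta^2/(b-a)^2}$. Rescaling the deviation event by $n$ recovers the stated upper-tail inequality, and, as noted, the lower tail is handled identically. The only genuine obstacle is Hoeffding's Lemma; the convexity bound is immediate, and the calculus verification of $\phi''\le 1/4$ is short once one recognizes the relevant expression as $q(1-q)$ for some $q\in[0,1]$. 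Everything else is routine bookkeeping.
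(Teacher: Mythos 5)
Your proposal is the standard Chernoff/exponential-moment proof of Hoeffding's inequality, complete with the recentering, Hoeffding's Lemma via convexity and the $\phi''\le 1/4$ calculus check, and the optimization $t = 4\delta/(b-a)^2$. The paper states this as a known fact without proof, so there is nothing to compare against; your argument is correct and is the standard one.
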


\begin{fact}[Chernoff's Inequality]\label{fct:chernoff}
Assume $X_1,\ldots,X_n$ are independent random variables such that $X_i\in[0,1]$ holds for all $i\in[n]$.
Let $\mu=\sum_{i\in[n]}\E[X_i]$.
Then for all $\delta\in[0,1]$, we have
$$
\Pr\sbra{\sum_{i\in[n]}X_i\le(1-\delta)\mu}
\le\exp\cbra{-\frac{\delta^2\mu}2}.
$$
\end{fact}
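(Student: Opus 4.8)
The plan is to run the standard exponential-moment (Chernoff--Bernstein) argument for the lower tail. First I would fix a parameter $t>0$, to be optimized later, and apply Markov's inequality to the nonnegative random variable $e^{-t\sum_i X_i}$:
\[
\Pr\sbra{\sum_{i\in[n]}X_i\le(1-\delta)\mu}=\Pr\sbra{e^{-t\sum_{i}X_i}\ge e^{-t(1-\delta)\mu}}\le e^{t(1-\delta)\mu}\cdot\E\sbra{e^{-t\sum_{i}X_i}}.
\]
Since the $X_i$ are independent, the expectation factors as $\prod_{i\in[n]}\E\sbra{e^{-tX_i}}$, so the task reduces to bounding a single factor.

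For a single factor I would use that $X_i\in[0,1]$ together with the convexity of $z\mapsto e^{-tz}$: on $[0,1]$ this map lies below the secant line through its endpoints, which gives the pointwise bound $e^{-tX_i}\le 1-(1-e^{-t})X_i$. Taking expectations and then applying $1+z\le e^{z}$ yields $\E\sbra{e^{-tX_i}}\le 1-(1-e^{-t})\E[X_i]\le\exp\pbra{-(1-e^{-t})\E[X_i]}$, and multiplying over $i$ gives $\E\sbra{e^{-t\sum_i X_i}}\le\exp\pbra{-(1-e^{-t})\mu}$. Plugging this into the first display,
\[
\Pr\sbra{\sum_{i\in[n]}X_i\le(1-\delta)\mu}\le\exp\pbra{\mu\pbra{t(1-\delta)-1+e^{-t}}}.
\]

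It remains to choose $t$. Differentiating the exponent in $t$ and setting it to zero gives $e^{-t}=1-\delta$, i.e.\ $t=\ln\tfrac1{1-\delta}>0$ (valid for $\delta\in[0,1)$). Substituting back, the exponent becomes $\mu\pbra{-\delta-(1-\delta)\ln(1-\delta)}$, so it suffices to establish the elementary inequality $-\delta-(1-\delta)\ln(1-\delta)\le-\delta^2/2$ on $[0,1)$; writing $u=1-\delta\in(0,1]$ this is equivalent to $u\ln u\ge(u^2-1)/2$, and for $h(u)=u\ln u-(u^2-1)/2$ one checks $h(1)=h'(1)=0$ and $h''(u)=(1-u)/u\ge0$ on $(0,1]$, whence $h\ge0$. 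The only point needing any care — and the step I would treat most carefully — is the endpoint $\delta=1$, where the optimal $t$ diverges and the statement reads $\Pr\sbra{\sum_i X_i\le0}\le e^{-\mu/2}$; this follows either by letting $t\to\infty$ in the penultimate display or by taking the $\delta\to1^-$ limit of the cases already proven (and is anyway vacuous when $\mu=0$). Everything else is routine.
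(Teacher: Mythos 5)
Your proof is correct; it is the standard exponential-moment/MGF argument for the lower Chernoff tail, and all the steps check out (the secant bound on $[0,1]$, the optimization $e^{-t}=1-\delta$, the elementary inequality $u\ln u\ge(u^2-1)/2$ via the second-derivative test, and the $\delta=1$ endpoint via $t\to\infty$). The paper states this as a standard fact without proof, so there is no in-paper argument to compare against, but your derivation is the canonical one and would be a suitable reference proof.
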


\begin{fact}[See e.g., {\cite[Lemma 2.9]{dinur2006fourier}}]\label{fct:low-deg_beyond_exp}
Assume $f\colon\bin^n\to\mathbb R$ is a degree $k$ polynomial. Let $\mu=\E_{x\sim\bin^n}[f(x)]$.
Then
$$
\Pr_{x\sim\bin^n}\sbra{f(x)\ge\mu}\ge2^{-O(k)}.
$$
\end{fact}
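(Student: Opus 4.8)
The plan is to reduce to a mean-zero anticoncentration statement and then invoke hypercontractivity. Set $g \coloneqq f - \mu$, so that $\E_{x\sim\bin^n}[g(x)] = 0$ and the event $\cbra{f(x) \ge \mu}$ coincides with $\cbra{g(x) \ge 0}$. Note $g$ is again a polynomial of degree at most $k$ (as a multilinear polynomial, and passing to the $\pm1$ Fourier basis does not increase the degree). If $g$ is identically zero the claim is trivial, so assume $\E[g^2] > 0$.

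First I would apply the $(2,4)$-hypercontractive inequality (the Bonami--Beckner lemma) to the degree-$\le k$ function $g$, which gives $\E[g^4] \le 9^{k}\cdot\pbra{\E[g^2]}^2$. Next, Hölder's inequality with exponents $3/2$ and $3$ applied to the factorization $g^2 = \abs{g}^{2/3}\cdot\abs{g}^{4/3}$ yields $\E[g^2] \le \pbra{\E[\abs{g}]}^{2/3}\pbra{\E[g^4]}^{1/3}$; combining this with the hypercontractive bound and rearranging gives the lower bound $\E[\abs{g}] \ge 3^{-k}\sqrt{\E[g^2]}$.

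Finally, since $\E[g] = 0$, the positive and negative parts of $g$ carry equal mass, so $\E\sbra{g\cdot\indicator_{g\ge0}} = \tfrac12\E[\abs{g}] \ge \tfrac12\cdot 3^{-k}\sqrt{\E[g^2]}$. On the other hand, Cauchy--Schwarz gives $\E\sbra{g\cdot\indicator_{g\ge0}} \le \sqrt{\E[g^2]}\cdot\sqrt{\Pr[g\ge0]}$. Comparing the two estimates and cancelling the (positive) factor $\sqrt{\E[g^2]}$ yields $\sqrt{\Pr[g\ge0]} \ge \tfrac12 3^{-k}$, hence $\Pr_{x\sim\bin^n}\sbra{f(x)\ge\mu} = \Pr[g\ge0] \ge \tfrac14\cdot 9^{-k} = 2^{-O(k)}$, as desired. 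The only nonelementary ingredient is the hypercontractive inequality for low-degree functions on the hypercube, which is completely standard; the remaining steps are just Hölder and Cauchy--Schwarz bookkeeping, so I do not expect a genuine obstacle. (Alternatively, one may simply cite the statement from the referenced source, as the paper does.)
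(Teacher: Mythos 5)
Your proof is correct, and it is the standard hypercontractivity-plus-fourth-moment argument for this anticoncentration fact (Bonami's lemma to get $\E[g^4]\le 9^k(\E[g^2])^2$, Hölder to lower bound $\E[|g|]$ against $\sqrt{\E[g^2]}$, then the mean-zero symmetry $\E[g\cdot\indicator_{g\ge 0}]=\tfrac12\E[|g|]$ combined with Cauchy--Schwarz). The paper does not prove this statement itself---it simply cites it as a known fact from \cite{dinur2006fourier}---and your argument matches the proof in that reference in essence, so there is no genuinely different route to compare.
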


Recall that random variables $X_1, \dots, X_n$ over some domain $D$ are called \emph{$k$-wise independent} if for any values $d_1,\dots, d_k \in D$ and distinct indices $i_1, \dots, i_k \in [n]$, we have
\[
    \Pr\sbra{X_{i_1} = d_1, \dots, X_{i_k} = d_k} = \prod_{j=1}^k \Pr\sbra{X_{i_j} = d_j}.
\]

\begin{fact}[See e.g., {\cite[Lemma 2.2]{bellare1994randomness}}]\label{fct:k-moments}
    Let $k \ge 4$ be an even integer.
    Suppose $X_1, \dots, X_n$ are $k$-wise independent random variables taking values in $[0,1]$.
    Let $X = X_1 + \cdots + X_n$ and $t > 0$.
    Then,
    \[
        \Pr\sbra{|X - \E[X]| \ge t} \le C_k \cdot \pbra{\frac{nk}{t^2}}^{k/2},
    \]
    where $C_k = 2\sqrt{\pi k}\cdot e^{1/(6k)} \cdot e^{-k/2} \le 1.0004$.
\end{fact}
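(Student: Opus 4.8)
The plan is the classical $k$-th moment method, with $k$-wise independence playing the role of full independence. Since $k$ is even, $(X-\E[X])^k \ge 0$, so Markov's inequality gives $\Pr\sbra{\abs{X-\E[X]}\ge t} = \Pr\sbra{(X-\E[X])^k \ge t^k} \le \E\sbra{(X-\E[X])^k}/t^k$, and it suffices to prove $\E\sbra{(X-\E[X])^k} \le C_k (nk)^{k/2}$. Writing $Y_i = X_i - \E[X_i]$, we have $\E[Y_i]=0$, $\abs{Y_i}\le 1$, and
\[
    \E\sbra{(X-\E[X])^k} = \sum_{(i_1,\dots,i_k)\in[n]^k} \E\sbra{Y_{i_1}\cdots Y_{i_k}}.
\]

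First I would invoke $k$-wise independence: each summand mentions at most $k$ distinct indices, so $\E[Y_{i_1}\cdots Y_{i_k}]$ factors over the distinct indices appearing, and if some index occurs exactly once that factor is $\E[Y_i]=0$, killing the summand. Thus only tuples in which every distinct index occurs at least twice survive, each surviving summand has magnitude at most $1$ (as $\abs{Y_i}\le 1$), and such a tuple has at most $\lfloor k/2\rfloor$ distinct indices. Next I would count the surviving tuples by grouping them according to the partition $\pi$ of $\cbra{1,\dots,k}$ into the level sets $\cbra{a : i_a = v}$, all of whose blocks have size $\ge 2$: the number of tuples realizing a given $\pi$ with $b(\pi)$ blocks is $n(n-1)\cdots(n-b(\pi)+1)\le n^{b(\pi)}\le n^{k/2}$, using $b(\pi)\le k/2$ and $n\ge 1$. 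This yields $\E\sbra{(X-\E[X])^k}\le n^{k/2} M_k$, where $M_k$ is the number of partitions of a $k$-element set into blocks of size at least $2$. The last step is the estimate $M_k\le C_k k^{k/2}$: the dominant contribution is the number of perfect matchings of $\cbra{1,\dots,k}$, namely $(k-1)!! = k!/\pbra{2^{k/2}(k/2)!}$, while partitions containing a block of size $\ge 3$ are of strictly lower order; feeding in two-sided Stirling bounds for $k!$ and $(k/2)!$ then produces exactly $C_k = 2\sqrt{\pi k}\,e^{1/(6k)}e^{-k/2}$, with $C_k\le 1.0004$ (attained at $k=4$) falling out immediately.

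The step I expect to be the real work is this last estimate — extracting the \emph{clean} constant $C_k$ rather than a crude $k^{O(k)}$ bound. That requires (i) showing explicitly that pairings dominate among all partitions into blocks of size $\ge 2$, i.e.\ bounding the total number of the non-pairing partitions by a lower-order quantity, and (ii) careful non-asymptotic Stirling estimates (e.g.\ $\sqrt{2\pi}\,k^{k+1/2}e^{-k}\le k!\le \sqrt{2\pi}\,k^{k+1/2}e^{-k}e^{1/(12k)}$) to pin down the numeric constant. Everything preceding it is routine moment-method bookkeeping.
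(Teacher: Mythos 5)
The paper itself does not prove this fact --- it is quoted as a black box from Bellare--Rompel's Lemma~2.2 --- so there is no in-paper argument to compare against. Your overall approach (Markov on the $k$-th centered moment, expansion into monomials, $k$-wise independence killing any monomial with a singleton index, then a combinatorial count of the survivors) is the standard route and matches the cited source in spirit.

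The closing estimate $M_k \le C_k k^{k/2}$ is, however, false, so the argument does not close as written. Here $M_k$ is the number of set partitions of $[k]$ with every block of size $\ge 2$; the values are $M_2=1,\ M_4=4,\ M_6=41,\ M_8=715,\ M_{10}=17722,\dots$, and already $M_{10}=17722 > C_{10}\cdot 10^{5}\approx 7680$. The heuristic ``perfect matchings dominate, partitions with a block of size $\ge 3$ are lower order'' is wrong at the level of counting partitions: $M_k/(k-1)!!$ equals $1,\ 4/3,\ 41/15,\ 715/105,\ 17722/945 \approx 1,\ 1.33,\ 2.73,\ 6.81,\ 18.75$ for $k=2,4,6,8,10$ and diverges ($M_k$ is Bell-number-sized, super-exponentially larger than $(k-1)!!\asymp (k/e)^{k/2}$), so no Stirling manipulation of $(k-1)!!$ alone can bound $M_k$.

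What actually makes matchings dominate the \emph{moment} is the very weight that your argument throws away one step earlier. A partition $\pi$ with $b(\pi)<k/2$ blocks is realized by only $n^{b(\pi)}\ll n^{k/2}$ tuples, and moreover each block contributes a factor $\bigl|\E\bigl[Y_v^{a_j}\bigr]\bigr| \le \E\bigl[Y_v^2\bigr]=\Var(X_v)\le 1/4$, not merely $1$ (you only used $|Y_i|\le 1$, which is far too lossy). By flattening every partition's tuple count to $n^{k/2}$ and every per-block factor to $1$, you erase exactly the suppression that makes the non-matching partitions subdominant, and the bare count $M_k$ is then too large. A proof along these lines must instead retain the weighted sum $\sum_{b\le k/2} T(k,b)\,\sigma^{2b}$ (with $T(k,b)$ the number of such partitions into exactly $b$ blocks and $\sigma^2=\sum_i\Var(X_i)\le n/4$), and argue that the matching term $(k-1)!!\,\sigma^k$ controls this sum before applying Stirling; without that weighting the inequality genuinely fails.
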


\section{The Characterization}\label{sec:characterize}

In this section, we prove our main result.
Recall $\Deven$ and $\Dodd$ denote the uniform distribution over $n$-bit strings of even Hamming weight and odd Hamming weight, respectively.

\begin{restatable}{theorem}{thmclassificationfull}\label{thm:main}
    Let $f\colon\bin^m\to\bin^n$ be a $d$-local function, and let $\eps \in (0,1)$.
    Assume $f(\Ucal^m)$ is $\eps$-close to a symmetric distribution $\Dcal$ over $\bin^n$.
    Then if $n$ is sufficiently large in terms of $d$ and $\eps$, $f(\Ucal^m)$ is $O_d\pbra{\frac{1}{\log(1/\eps)}}^{1/5}$-close to a distribution of the form
    \[
        \sum_{\substack{a\in [0,2^d] \cap \Zbb \\ 
        a \ne 2^{d-1}}} c_a\cdot\Ucal_{a/2^d}^n + c_e \cdot \Deven + c_o\cdot \Dodd,
    \]
    where each $c_a = c_a' / 2^C$ for some integer $0 \le c_a' \le 2^C$ and a fixed integer $C = O_d(1)$.
    Moreover, there exist at most $2^C$ many degree-$d$ $\Fbb_2$-polynomials $\{p_i \colon \Fbb_2^m \to \Fbb_2\}$, each with $O_d(n)$ monomials, such that 
    \[
        c_e = \frac{1}{2^C}\cdot \sum_i \Pr_{x\sim \Ucal^m}\sbra{p_i(x) = 0} \quad\text{and}\quad c_o = \frac{1}{2^C}\cdot \sum_i \Pr_{x\sim \Ucal^m}\sbra{p_i(x) = 1}.
    \]
\end{restatable}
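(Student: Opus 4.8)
The plan is to reduce the distributional statement to a claim about Hamming weight distributions, classify the weight distribution via a four-step conditioning-and-recombination argument along the lines of the proof overview, and then run a separate argument to obtain the structured mixing weights. To set up the reduction, apply \Cref{lem:distance_to_sym} with $\Pcal = f(\Ucal^m)$ and $\Qcal = \Dcal$: from $\tvdist{f(\Ucal^m) - \Dcal} \le \eps$ one obtains both $\tvdist{|f(\Ucal^m)| - |\Dcal|} = O(\eps)$ and $\tvdist{f(\Ucal^m) - f(\Ucal^m)_\sym} = O(\eps)$. Hence it suffices to show $|f(\Ucal^m)|$ is $\delta$-close to the weight distribution $|\Mcal|$ of a mixture $\Mcal = \sum_{a \ne 2^{d-1}} c_a\cdot\Ucal_{a/2^d}^n + c_e\cdot\Deven + c_o\cdot\Dodd$; since $\Mcal$ is symmetric, a second application of \Cref{lem:distance_to_sym} with $\Qcal = \Mcal$ combines this with the second bound above to give $\tvdist{f(\Ucal^m) - \Mcal} = O(\delta + \eps)$, which is exactly the claimed form (note $|\Ucal_{a/2^d}^n| = \Bin(n, a/2^d)$, while $|\Deven|$ and $|\Dodd|$ are $\Bin(n,1/2)$ conditioned on even, resp.\ odd, outcome).

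For the weight classification I would condition on the set $S$ of ``high-degree'' input bits, those affecting more than $n/A$ output bits, with $A = O_{d,\eps}(1)$ fixed later. For each restriction $f_\rho$, a second-moment argument (each remaining input bit now affects few output bits) shows the output weight concentrates around some $\gamma_\rho n$; the granularity fact of \cite{kane2024locality} --- a $d$-local function's output-bit marginals are integer multiples of $2^{-d}$, so a normalized output weight bounded away from every multiple of $2^{-d}$ forces positive distance from $\Dcal$ --- pins $\gamma_\rho$ to an integer multiple of $2^{-d}$, and a refinement rules out a non-negligible fraction of inputs concentrating on a \emph{different} multiple, since that would place mass on intermediate weights, again contradicting closeness to $\Dcal$. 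A polynomial-magnification argument --- sum the degree-$(kd)$ polynomials in the input bits witnessing each ``bad'' $k$-tuple, with $k \le O_d(\log(1/\eps))$, and observe too many of them contradict closeness to $\Dcal$ --- then yields a set $T_\rho \subseteq [n]$ of size $O_{d,k}(1)$ whose complement has all $k$-tuples distributed as $\Ucal_{\gamma_\rho}^k$. Grouping by bias gives $f(\Ucal^m) = \sum_\gamma \Pr_\rho\sbra{\gamma_\rho = \gamma}\cdot F_\gamma$ with $F_\gamma = \E_{\rho : \gamma_\rho = \gamma}\sbra{f_\rho(\Ucal^{[m]\setminus S})}$.

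Next I would bound the Kolmogorov distance between $|F_\gamma|$ and $\Bin(n,\gamma)$ from the near-$k$-wise independence of the output coordinates via \cite{diakonikolas2010bounded, gopalan2010fooling} (applicable since $T_\rho$ is small), together with a parity-refined version for $\gamma = 1/2$, for which I would invoke \cite[Theorem 3.1]{chattopadhyay2020xor} to find a small set of input bits whose re-randomization typically flips the output-weight parity --- it is this step that forces $A = O_{d,\eps}(1)$ rather than $O_d(1)$. Then I would write $F_\gamma = \lambda E_\gamma + (1-\lambda) W_\gamma$ with $\lambda$ small and $W_\gamma$ obeying the continuity estimate $\bigl| \, |W_\gamma|(w) - |W_\gamma|(w+\Delta) \, \bigr| = O_d(\Delta/n)$ (only for $w, w+\Delta$ of equal parity when $\gamma = 1/2$): $\lambda$ is small because $E_\gamma$ is, by \Cref{lem:tvdist_after_product} and \Cref{lem:tvdist_after_conditioning}, far from $\Dcal$ conditioned on output weight $\gamma n \pm n^{2/3}$, while the continuity estimate follows by extracting $\Omega_d(n)$ pairwise-independent small neighborhoods via \cite[Corollary 4.11]{kane2024locality} and then applying a density-comparison theorem for sums of non-constant integer random variables such as \cite[Theorem A.1]{kane2024locally2}. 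Finally, partitioning $\cbra{0,\dots,n}$ into $\Theta(\sqrt{n})$-length intervals (further subdivided by parity when $\gamma = 1/2$), the Kolmogorov bound makes $|F_\gamma|$ and $\Bin(n,\gamma)$ agree interval-by-interval while continuity makes the mass nearly uniform within each interval, so $|F_\gamma|$ is $\delta$-close to $\Bin(n,\gamma)$ --- to a mixture of even/odd-conditioned $\Bin(n,1/2)$ when $\gamma = 1/2$; mixing over $\gamma$ gives the claimed form for $|f(\Ucal^m)|$, and the reduction above lifts it to $f(\Ucal^m)$. Tracking the parameters $k$, $A$, and the interval length through this argument yields the stated bound $\delta = O_d(1/\log(1/\eps))^{1/5}$.

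For the mixing-weight ``moreover'', I would rerun the first step with the weaker threshold $n/O_d(1)$ (which still produces output-weight concentration), obtaining a conditioning $\kappa$ on a set $S'$ whose mixing weights are integer multiples of $2^{-O_d(1)}$; standard concentration gives $\Pr_\rho\sbra{\gamma_\rho = \gamma} \approx \Pr_\kappa\sbra{\gamma_\kappa = \gamma}$, so the coefficient on $\Bin(n,\gamma)$ for $\gamma \ne 1/2$ in the final mixture is close to an integer multiple of $2^{-O_d(1)}$, and for $\gamma = 1/2$ the even and odd coefficients agree, up to the same approximation, with $2^{-|S'|}\sum_{\kappa : \gamma_\kappa = 1/2}\Pr_x\sbra{|f_\kappa(x)| \text{ even}}$ and its complement; since $|f_\kappa(x)| \bmod 2$ is a degree-$d$ $\Fbb_2$-polynomial in $x$ with $O_d(n)$ monomials, this is precisely the asserted description. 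The main obstacle throughout is the exceptional case $\ell = 2$, $\gamma = 1/2$ of the continuity step: a neighborhood's output weight mod $2$ can become a fixed function of the conditioned-out bits (as the cycle map $x_{i-1} \oplus x_{i+2}$ illustrates), so continuity holds only across gaps of even length, which is precisely why the output-weight parity must be threaded through the Kolmogorov bound, the interval subdivision, and the mixing-weight analysis alike.
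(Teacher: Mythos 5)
Your proposal follows essentially the same four-step route as the paper: reduce to weight distributions via \Cref{lem:distance_to_sym}, condition on high-degree inputs with a second-moment/granularity argument and polynomial magnification to obtain approximate $k$-wise independence (\Cref{prop:independence_after_cond}), combine a Kolmogorov-distance bound (\Cref{prop:kol_dist}, using \cite{gopalan2010fooling} and \cite{chattopadhyay2020xor}) with a hypergraph-based continuity estimate (\Cref{prop:continuity}) via interval partitioning, and finally recover the structured mixing weights by re-conditioning at the weaker threshold $n/O_d(1)$ and comparing via concentration. This is the paper's argument, correctly reproduced, including the parity-threaded treatment of the $\gamma=1/2$ exception.
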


Before proceeding to the proof, we make several remarks.

\begin{remark}\label{rmk:NC0_can_sample}
    Any distribution of this form can be exactly produced by an $\nc$ function (with additional input bits and locality):
    \begin{itemize}
        \item Use $C$ bits of locality to select either a product distribution $\Ucal_{a/2^d}^n$ or one of the $\Fbb_2$-polynomials $p_i \colon \Fbb_2^m \to \Fbb_2$.

        \item If some $\Ucal_{a/2^d}^n$ is selected, we can sample from it with an additional $d$ bits of locality.

        \item Otherwise an $\Fbb_2$-polynomial $p_i \colon \Fbb_2^m \to \Fbb_2$ is selected.
        In this case, we wish to produce the distributions $\Deven$ and $\Dodd$ with probability $\Pr_{x\sim \Ucal^m}\sbra{p_i(x) = 0}$ and $\Pr_{x\sim \Ucal^m}\sbra{p_i(x) = 1}$, respectively.
        Arbitrarily partition the $O_d(n)$ monomials of $p_i$ into $n$ bins of size $O_d(1)$, and set $y \in \bin^n$ to have the $j$-th coordinate equal to the sum of the monomials in the $j$-th bin.
        Since $p_i$ has degree $d$, this can be done with $O_d(1)$ bits of locality.
        Note that $y$ has the right weight distribution, but it may not be symmetric.
        To remedy this, we sample $z \sim \Deven$ (with fresh randomness) and output $y \oplus z$.
    \end{itemize}
\end{remark}

\begin{remark}
    An alternative formulation of the last conclusion of \Cref{thm:main} is that there exists a degree-$O_d(1)$ $\Fbb_2$-polynomial $P(x, y) = \sum_{i} \mathbbm{1}(x = i)\cdot p_i(y)$\footnote{In an abuse of notation, we identify an integer $i$ with its binary representation.} with $O_d(n)$ monomials such that $c_e = 2^{-C}\cdot \Pr_z\sbra{P(z) = 0}$ and $c_o = 2^{-C}\cdot \Pr_z\sbra{P(z) = 1}$.
    In this formulation, we can still exactly produce distributions of this form via a similar algorithm to the one in \Cref{rmk:NC0_can_sample}, only now each output bit requires larger locality to compensate for $P$'s larger degree.
\end{remark}

\begin{remark}
    We have chosen to focus on the most commonly studied setting where the random bits fed to $f$ are unbiased.
    For readers interested in more general input biases, we note that a similar result to \Cref{thm:main} (with the biases of the product distributions and the mixing weights appropriately adjusted) should be provable using the techniques presented here.
    It is important, however, that the input bits are identically distributed, as the first of our four main steps (see \Cref{subsec:matching_moments}) requires the possible output biases to lie in a discrete set. 
\end{remark}

Our proof will proceed via the four steps outlined in \Cref{sec:overview}, each corresponding to its own subsection.

\subsection{Removing Large Influences}\label{subsec:matching_moments}

Our first step is to argue that after conditioning on the high degree input bits of $f$, almost any small collection of output bits looks identical to those of a $\gamma$-biased distribution $\Ucal_\gamma^n$, where $\gamma$ is an integer multiple of $2^{-d}$.

\begin{restatable}{proposition}{propindependenceaftercond}\label{prop:independence_after_cond}
Let $f\colon\bin^m\to\bin^n$ be a $d$-local function, and let $A \ge 2^{100d}$ be a parameter.
Assume $f(\Ucal^m)$ is $\eps$-close to a symmetric distribution over $\bin^n$ for some $\eps < 2^{-cdA}$, where $c > 0$ is a sufficiently large absolute constant.
Further assume that $n$ is sufficiently large in terms of $d,k,A,\eps$.
Define $S\subseteq[m]$ to be the set of input bits with degree at least $n/A$.

Let $k \le \log(1/\eps)/C_d$ be an arbitrary integer, where $C_d\ge1$ is a sufficiently large constant depending only on $d$.
Then for each conditioning $\rho\in\bin^S$ on the bits in $S$, there exists a subset $T_\rho\subseteq[n]$ of size $|T_\rho|\le O_{d,k,A}(1)$ such that every $k$-tuple of output bits in $[n]\setminus T_\rho$ has distribution $\Ucal_{\gamma_\rho}^k$, where $\gamma_\rho=a_\rho/2^d$ and $0 \le a_\rho \le 2^d$ is an integer.
\end{restatable}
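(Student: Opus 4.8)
The goal is to show that, after conditioning on the high-degree input bits $S$, each restricted function $f_\rho$ produces an output distribution whose $k$-wise marginals agree with a single $\gamma_\rho$-biased product distribution, except on a small set $T_\rho$ of output coordinates. I would first establish the ``dyadic concentration'' statements referenced in the overview. Starting from \cite[the locality lower bound for $\Dcal_q$]{kane2024locality}, which says that a $d$-local function cannot closely generate $\Dcal_q$ when $q/n$ is far from every multiple of $2^{-d}$, and using that a symmetric distribution is a mixture of the $\Dcal_q$'s, I would deduce that for $x\sim\Ucal^m$ the normalized weight $|f(x)|/n$ is typically within $n^{-1/(800d)}$ of some integer multiple of $2^{-d}$ (this is \Cref{lem:dyadic_weight_eps}). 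Then, for each conditioning $\rho\in\bin^S$, I would run a second-moment argument: because $S$ contains all input bits of degree $\ge n/A$, the output bits of $f_\rho$ split into groups each controlled by input bits of degree $<n/A$, so the variance of $|f_\rho(\Ucal^{[m]\setminus S})|$ is $O(n^2/A)$; Chebyshev then gives concentration of $|f_\rho|/n$ around its mean $\mu_\rho$. Combining with the dyadic statement forces $\mu_\rho$ to be within $o(1)$ of a single multiple $\gamma_\rho = a_\rho/2^d$ --- but only after ruling out that a non-negligible fraction of inputs land near a \emph{different} multiple. That exclusion is \Cref{clm:lem:dyadic_weight_after_cond_5}: if mass split between two multiples $\gamma\ne\gamma'$, then by intermediate-value/continuity reasoning on the weight distribution, $f_\rho(\Ucal^{[m]\setminus S})$ --- and hence $f(\Ucal^m)$ --- would put nontrivial mass on weights strictly between $\gamma n$ and $\gamma' n$, contradicting closeness to $\Dcal$ combined with the dyadic statement again. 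This yields \Cref{lem:dyadic_weight_after_cond}: for typical $x\sim\Ucal^{[m]\setminus S}$, $|f_\rho(x)|/n$ is within $n^{-\Omega(1/d)}$ of the \emph{same} $\gamma_\rho$.

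\textbf{From weight concentration to $k$-wise product marginals.} Next I would upgrade weight-level concentration to the statement about $k$-tuples. Fix $\rho$ and $\gamma=\gamma_\rho$. Call a $k$-tuple $I\subseteq[n]$ ``bad'' if the marginal distribution of $f_\rho(\Ucal^{[m]\setminus S})$ on $I$ differs from $\Ucal_\gamma^k$. For each bad $I$ there is some $z\in\bin^k$ whose marginal probability differs from $\gamma^{|z|}(1-\gamma)^{k-|z|}$; since $f_\rho$ is $d$-local, each such marginal probability is an integer multiple of $2^{-kd}$ (the input bits feeding $I$ can be set in at most $2^{kd}$ equally likely ways), and $\gamma^{|z|}(1-\gamma)^{k-|z|}$ is a multiple of $2^{-kd}$ as well since $\gamma=a/2^d$ --- so if they differ at all they differ by at least $2^{-kd}$. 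Using $z$, build a degree-$k$ multilinear indicator-type polynomial $P_I\colon\bin^n\to\{0,1\}$ (the indicator that the bits of $I$ equal $z$) whose expectation under $f_\rho(\Ucal^{[m]\setminus S})$ exceeds (or falls below, after possibly complementing) its expectation under $\Ucal_\gamma^n$ by at least $2^{-kd}$. Pulling $P_I\circ f_\rho$ back to the input bits gives a degree-$kd$ polynomial, so by \Cref{fct:low-deg_beyond_exp}-type reasoning and anticoncentration one can sum $P=\sum_{I\text{ bad}}P_I$ over a sign-consistent subfamily and conclude $\E_{f_\rho(\Ucal^{[m]\setminus S})}[P] - \E_{\Ucal_\gamma^n}[P] \ge (\#\text{bad }I)\cdot 2^{-kd}$ up to constants. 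If the number of bad $k$-tuples is $\omega_{d,k,A}(n^k)$ --- more precisely, if it cannot be covered by a set $T_\rho$ of $O_{d,k,A}(1)$ coordinates --- then by a Turán/sunflower-style combinatorial argument there is a large ``matching'' of disjoint bad tuples, and evaluating $P$ on such a matching forces the output weight distribution of $f_\rho$ (hence of $f$, after averaging over $\rho$ with appropriate weights) to disagree with every symmetric distribution having the right weight profile, contradicting the $\eps$-closeness of $f(\Ucal^m)$ to $\Dcal$. Hence the bad tuples are covered by a bounded set $T_\rho$, which is exactly the conclusion.

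\textbf{Assembling the pieces and the main obstacle.} With \Cref{lem:dyadic_weight_eps}, \Cref{lem:dyadic_weight_after_cond}, \Cref{clm:lem:dyadic_weight_after_cond_5}, and the polynomial-magnification argument in hand, \Cref{prop:independence_after_cond} follows by: (i) defining $S$ as the degree-$\ge n/A$ bits; (ii) for each $\rho$, extracting $\gamma_\rho$ from the weight concentration; (iii) defining $T_\rho$ as a minimal cover of the bad $k$-tuples, whose size must be $O_{d,k,A}(1)$ by the contradiction argument; and (iv) observing every $k$-tuple in $[n]\setminus T_\rho$ is then good, i.e., has marginal $\Ucal_{\gamma_\rho}^k$. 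The bookkeeping on parameters --- checking that $\eps < 2^{-cdA}$ and $k\le\log(1/\eps)/C_d$ give enough room for the $2^{-kd}$ gap to survive against the accumulated error terms, and that $n$ large in terms of $d$ suffices for the binomial estimates (\Cref{fct:bin_interval}, \Cref{clm:bin_difference}, \Cref{fct:individual_binom}) --- is routine but must be done carefully. I expect the \emph{main obstacle} to be the exclusion step \Cref{clm:lem:dyadic_weight_after_cond_5}: a naive second-moment bound controls the variance but does not by itself preclude a small fraction of inputs concentrating around a different dyadic multiple, and ruling this out requires the more delicate argument that such a split would force weight on the ``forbidden'' intermediate region, which in turn needs the granularity/continuity structure of $d$-local output weight distributions rather than just moment bounds. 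The second hardest part is converting ``many bad $k$-tuples'' into a genuine contradiction: one needs enough disjointness among the bad tuples (via a combinatorial extraction) for the polynomial $P$ to have a controllable variance when pulled back to the input bits, so that its large mean gap actually translates into the output weight distribution being far from symmetric.
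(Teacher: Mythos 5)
Your high-level skeleton matches the paper: establish dyadic weight concentration (\Cref{lem:dyadic_weight_eps}, \Cref{lem:dyadic_weight_after_cond}, with the intermediate-value exclusion \Cref{clm:lem:dyadic_weight_after_cond_5}), use $2^{-kd}$-granularity to build indicator polynomials $P_i$ for bad $k$-tuples, sum them into $P$ with a magnified expectation gap, and derive a contradiction with closeness to $\Dcal$ if too many disjoint bad tuples exist. The setup with a maximal disjoint family whose union is $T_\rho$ is also essentially what the paper does.

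However, your description of the final contradiction step contains a genuine gap. You claim that ``evaluating $P$ on such a matching forces the output weight distribution of $f_\rho$ \ldots to disagree with every symmetric distribution having the right weight profile,'' and you say disjointness is needed ``for the polynomial $P$ to have a controllable variance when pulled back to the input bits.'' Both of these are off the mark. The weight distribution of $f_\rho$ does \emph{not} disagree with $\Dcal$'s weight profile --- both concentrate near $\gamma n$; what disagrees is the joint behavior of the values on $T_\rho$ together with the weight. And the pulled-back polynomial $P\circ f_\rho$ needs no variance control at all: the paper applies \Cref{fct:low-deg_beyond_exp}, a one-sided anticoncentration bound that holds for any low-degree polynomial, to get that $P$ exceeds its mean with probability $2^{-O(kd)}$. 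Disjointness of the $P_i$'s is instead needed \emph{under the product distribution $\Ucal_\gamma^n$}, where Hoeffding's inequality gives that $P$ is ``large'' with probability at most $\exp(-R/2^{2kd})$.

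The missing ingredient you would need to make the contradiction go through --- and the one your sketch does not mention --- is that under $\Dcal$ conditioned on weight in $\gamma n \pm n^{1-\Omega(1/d)}$, the marginal on the small set $T_\rho$ is close to $\Ucal_\gamma^{T_\rho}$. The paper establishes this by sequentially coupling the $|T_\rho|$ bits of the hypergeometric-like marginal against $\Ucal_\gamma$, giving a TVD of roughly $|T_\rho|\cdot n^{-1/(800d)}$. This bridge from ``$\Dcal$ with the right weight'' to ``$\gamma$-biased product on $T_\rho$'' is exactly what lets you compare: under $f(\Ucal^m)$, ``$P$ large and weight $\approx\gamma n$'' has probability $\ge 2^{-dA}\cdot(2^{-O(kd)} - \poly(\eps))$ after randomizing over $\rho$; under $\Dcal$, that same event has probability at most $\exp(-R/2^{2kd}) + |T_\rho|n^{-1/(800d)} + O(|T_\rho|^2/n)$. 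Closeness to $\Dcal$ then forces $R = O_{d,k,A}(1)$. Without the coupling-to-product step, the comparison cannot be made, so as written your proposal would not close.
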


In our analysis, we will often need to consider the distance between some bias and its closest integer multiple of $2^{-d}$, so we introduce the following notation.

\begin{definition}[Binary Representation Error]\label{def:abs_binary_rep_err}
For each $d\in\Nbb$, we use $\err(\gamma,d)$ to denote the minimum distance of $\gamma$ to an integer multiple of $2^{-d}$.
In particular, given a binary representation of $\gamma$ as $\gamma=\sum_{i\in\Zbb}a_i\cdot 2^i$ where each $a_i\in\bin$, we have
$$
\err(\gamma,d)=\min\cbra{\sum_{i<-d}a_i\cdot2^i,\sum_{i<-d}(1-a_i)\cdot2^i}.
$$
\end{definition}

The proof of \Cref{prop:independence_after_cond} involves a number of similar looking estimates, so we provide a brief overview of the remainder of the section before continuing.
It is known from previous work \cite{kane2024locality} that local functions cannot accurately sample $\Dcal_k$, the uniform distribution over $n$-bit binary strings of Hamming weight $k$, so long as $k/n$ has large binary representation error.

\begin{lemma}[{\cite[Theorem 5.7]{kane2024locality}}]\label{lem:locality_single_non-dyadic}
Let $f\colon\bin^m\to\bin^n$ be a $d$-local function, and let $1\le k\le n-1$ be an integer.
If $\err(k/n,d)\ge\delta$ for some $\delta>0$, then
$$
\tvdist{f(\Ucal^m)-\Dcal_k}\ge1-4\sqrt{2n}\cdot\exp\cbra{-n\cdot\delta^{40d}}.
$$
\end{lemma}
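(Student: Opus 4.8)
The plan is the standard conditioning-and-recombining strategy for locality lower bounds, built on the three prior-work tools already recorded in \Cref{sec:prelim}: \Cref{lem:tvdist_after_product}, \Cref{lem:tvdist_after_conditioning}, and \Cref{fct:individual_binom}. The starting point is a granularity observation. Each output coordinate $f_i$ is a Boolean function of at most $d$ of the uniformly random input bits, so its marginal $p_i\coloneqq\Pr_{x\sim\Ucal^m}[f_i(x)=1]$ is an integer multiple of $2^{-d}$; the same is true of the marginals of every restriction of $f$ obtained by fixing some subset of the input bits, since this only shrinks the set of relevant bits. Because $\err(k/n,d)\ge\delta$, this forces $|p_i-k/n|\ge\delta$ for every $i\in[n]$ and for every restriction, and (using $\err(k/n,d)\le\min\{k/n,1-k/n\}$) it also forces $\delta\le k/n\le 1-\delta$. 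Intuitively we want to show $|f(\Ucal^m)|$ hits the value $k$ only with probability $\le 4\sqrt{2n}\,e^{-n\delta^{40d}}$, since $\tvdist{f(\Ucal^m)-\Dcal_k}\ge 1-\Pr[|f(\Ucal^m)|=k]$; the argument below does essentially this, routed through a product structure.

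Next, view the input--output dependency structure of $f$ as a hypergraph $G$ on the $n$ output bits with maximum degree $\le d$, one (hyper)edge per input bit. I would apply a disconnection lemma for bounded-degree hypergraphs of the type of \cite[Corollary 4.8]{kane2024locality}, tuned in its edge-removal parameter so as to produce a set $T\subseteq[m]$ of input bits and a set $R\subseteq[n]$ of output bits with: (i) $|R|=r$ as large as $\Omega_d(n)\cdot\delta^{O(d)}$; (ii) $|T|$ small compared to $\delta^2 r$, say $|T|\le\delta^2 r/(4\ln 2)$; and (iii) after deleting from $G$ the edges corresponding to $T$, the vertices of $R$ have pairwise disjoint residual neighborhoods. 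Property (iii) says that for every fixing $\rho\in\bin^T$, the restricted function $f_\rho\colon\bin^{[m]\setminus T}\to\bin^n$ has its coordinates in $R$ depending on pairwise disjoint sets of input bits, so $f_\rho(\Ucal^{[m]\setminus T})$ restricted to $R$ is a product distribution --- and this holds uniformly in $\rho$, since fixing an input bit removes its edge regardless of the assigned value.

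Now fix $\rho\in\bin^T$ and invoke \Cref{lem:tvdist_after_product} with $\Pcal=f_\rho(\Ucal^{[m]\setminus T})$, $\Qcal=\Dcal_k$, $S=R$, $\eps=\delta$, and $\Wcal$ the product distribution on $\bin^n$ all of whose marginals equal $k/n$. The three hypotheses hold: $\Pcal[R]$ and $\Wcal[R]$ are product distributions; for each $i\in R$ we have $\tvdist{\Pcal[\{i\}]-\Wcal[\{i\}]}=|p_i^{(\rho)}-k/n|\ge\delta$ by the granularity observation (even if the coordinate has collapsed to a constant, since $0$ and $1$ are still $\ge\delta$ from $k/n$); and $\Wcal(x)\ge\nu\,\Qcal(x)$ for all $x$ with $\nu=\binom nk(k/n)^k(1-k/n)^{n-k}$, because this ratio is attained with equality on the weight-$k$ strings (the support of $\Dcal_k$) and $\Qcal(x)=0$ otherwise. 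Using $\binom nk(k/n)^k(1-k/n)^{n-k}=\binom nk 2^{-n\Hcal(k/n)}$ with \Cref{fct:individual_binom} and $k(n-k)/n\le n/4$ gives $\nu\ge 1/\sqrt{2n}$, so \Cref{lem:tvdist_after_product} yields $\tvdist{f_\rho(\Ucal^{[m]\setminus T})-\Dcal_k}\ge 1-2\sqrt{2n}\,e^{-\delta^2 r/2}$. Since $f(\Ucal^m)$ is the balanced mixture $\sum_{\rho\in\bin^T}2^{-|T|}f_\rho(\Ucal^{[m]\setminus T})$, \Cref{lem:tvdist_after_conditioning} gives $\tvdist{f(\Ucal^m)-\Dcal_k}\ge 1-2^{|T|}\cdot 2\sqrt{2n}\,e^{-\delta^2 r/2}$; by property (ii), and since the claimed bound is vacuously true whenever $n\delta^{40d}$ is not large, this is at least $1-4\sqrt{2n}\,e^{-n\delta^{40d}}$ once $r$ is chosen with $\delta^2 r/2\ge 2n\delta^{40d}$.

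The main obstacle is the step above on the hypergraph: making the disconnection lemma deliver enough mutually independent output bits, $r=\Omega_d(n)\cdot\delta^{O(d)}$, while keeping the conditioning set small, $|T|=O(\delta^2 r)$, so that the $2^{|T|}$ union-bound blow-up from \Cref{lem:tvdist_after_conditioning} is absorbed by the $e^{-\delta^2 r/2}$ gain. This is genuinely tight because $\delta$ can be as small as $\err(k/n,d)$ permits (up to $2^{-d-1}$), so $e^{-\delta^2 r/2}$ decays only slowly in $r$; chasing the $2^{\Theta(d^2)}$-type losses intrinsic to such hypergraph disconnection results through this optimization is exactly what produces the exponent $40d$. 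Everything else --- the granularity observation, the choice and domination property of $\Wcal$, the entropy/binomial estimate for $\nu$, and the recombination --- is routine given the tools in \Cref{sec:prelim}.
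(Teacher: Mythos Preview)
The paper does not prove this lemma --- it is imported directly from \cite[Theorem 5.7]{kane2024locality} and used as a black box. Your sketch is correct and matches the approach of that reference (which the present paper also summarizes in its technical comparison in \Cref{sec:overview}): condition on a small set $T$ of input bits, chosen via a hypergraph disconnection lemma (\cite[Corollary 4.8]{kane2024locality}), so that $r$ output bits become independent; apply the granularity gap $|p_i-k/n|\ge\delta$ together with \Cref{lem:tvdist_after_product} (with $\Wcal=\Ucal_{k/n}^n$ and your $\nu\ge1/\sqrt{2n}$ via \Cref{fct:individual_binom}); then recombine over $\rho\in\bin^T$ with \Cref{lem:tvdist_after_conditioning}. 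You are also right that the only nontrivial step is tuning the disconnection lemma so that $|T|=O(\delta^2 r)$ while $r=\Omega_d(n)\cdot\delta^{O(d)}$, and that chasing those constants is what produces the exponent $40d$.
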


This implies that with high probability, the output weight $|f(x)|$ is close to \emph{some} dyadic rational multiple of $n$, at least to the degree that $f$ is symmetric (\Cref{lem:dyadic_weight_eps}).
In order to ensure $|f(x)|$ is concentrated around a \emph{fixed} dyadic rational, we condition on input bits of degree at least $n/A$.
This bounds the variance of the weight of $f$ to provide good concentration (\Cref{lem:dyadic_weight_after_cond}).
However, there is still a chance that some non-negligible fraction of the time, the output weight is close to a different dyadic rational multiple of $n$.
In this case, we can show (\Cref{clm:lem:dyadic_weight_after_cond_5}) that the weight distribution must also assign decent probability mass to the weights between these dyadic multiples, which by \Cref{lem:locality_single_non-dyadic} would contradict our original assumption on the distance between $f(\Ucal^m)$ and $\Dcal$.

Now we proceed toward proving \Cref{prop:independence_after_cond}. 
Recall that any symmetric distribution $\Dcal$ is simply a mixture of $\Dcal_k$ for different values of $k$.
Thus, if $f(\Ucal^m)$ is close to a symmetric distribution, \Cref{lem:locality_single_non-dyadic} implies most of the output weight must have bias close to some multiple of $2^{-d}$.

\begin{lemma}\label{lem:dyadic_weight_eps}
Let $f\colon\bin^m\to\bin^n$ be a $d$-local function.
Assume $f(\Ucal^m)$ is $\eps$-close to a symmetric distribution $\Dcal$ over $\bin^n$.
Then
\begin{align}\label{eq:lem:dyadic_weight_eps_1}
\Pr_{x\sim\Ucal^m}\sbra{\err\pbra{\frac{\abs{f(x)}}n,d}\ge\frac1{n^{1/(800d)}}}\le O\pbra{\eps+e^{-n^{0.9}}},
\end{align}
where we recall that $\err(\gamma,d)$ is the minimum distance between $\gamma$ and integer multiples of $2^{-d}$.
\end{lemma}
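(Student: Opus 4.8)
The plan is to combine two facts: a symmetric distribution decomposes over Hamming slices as $\Dcal=\sum_k\beta_k\Dcal_k$ (with $\beta_k:=\Dcal(\{x:|x|=k\})$), and by \Cref{lem:locality_single_non-dyadic} the locally-generated distribution $f(\Ucal^m)$ is essentially disjoint from any slice $\Dcal_k$ for which $k/n$ is far from every dyadic rational of precision $d$. Concretely, set $\delta:=n^{-1/(800d)}$ and call a weight $k$ \emph{bad} if $\err(k/n,d)\ge\delta$. The key numerology is $\delta^{40d}=n^{-1/20}$, so \Cref{lem:locality_single_non-dyadic} yields $\tvdist{f(\Ucal^m)-\Dcal_k}\ge 1-\eta$ for every bad $k$, where $\eta:=4\sqrt{2n}\exp\{-n^{19/20}\}$. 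Note $k=0$ and $k=n$ satisfy $\err(\cdot,d)=0$ and hence are never bad, so every bad $k$ lies in $\{1,\dots,n-1\}$, the range in which \Cref{lem:locality_single_non-dyadic} applies. Writing $P:=f(\Ucal^m)$, the quantity to bound is exactly $p:=\sum_{k\text{ bad}}P(\{x:|x|=k\})$.

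The core step is a per-slice lower bound on the contribution of each bad $k$ to $\tvdist{P-\Dcal}$. Fix a bad $k$, let $A_k:=\{x:|x|=k\}$ and $p_k:=P(A_k)$. Since $\Dcal_k$ is the \emph{uniform} distribution on $A_k$, expanding $2\tvdist{P-\Dcal_k}=(1-p_k)+\sum_{x\in A_k}|P(x)-1/\binom nk|\ge 2(1-\eta)$ gives $\sum_{x\in A_k}|P(x)-1/\binom nk|\ge 1+p_k-2\eta$. Because $\Dcal$ restricted to $A_k$ is the constant $\beta_k/\binom nk$, summing the pointwise triangle inequality $|P(x)-\beta_k/\binom nk|\ge |P(x)-1/\binom nk|-(1-\beta_k)/\binom nk$ over the $\binom nk$ strings of $A_k$ yields $\sum_{x\in A_k}|P(x)-\Dcal(x)|\ge (1+p_k-2\eta)-(1-\beta_k)\ge p_k-2\eta$. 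Comparing $P$ against the uniform slice $\Dcal_k$ (rather than directly against $\beta_k\Dcal_k$) is what keeps this estimate clean.

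To conclude, I would sum over all (pairwise disjoint, at most $n$) bad slices and invoke $\tvdist{P-\Dcal}\le\eps$:
\[
    2\eps\;\ge\;\sum_{k\text{ bad}}\ \sum_{x\in A_k}|P(x)-\Dcal(x)|\;\ge\;\sum_{k\text{ bad}}(p_k-2\eta)\;\ge\;p-2n\eta,
\]
so $p\le 2\eps+2n\eta=2\eps+8\sqrt2\,n^{3/2}\exp\{-n^{19/20}\}$. Since $n^{19/20}$ dominates $n^{9/10}+\tfrac32\ln n$, for all sufficiently large $n$ the tail term is at most $e^{-n^{0.9}}$; for the remaining small $n$ the trivial bound $p\le 1$ is absorbed into the $O(\cdot)$ (as $e^{-n^{0.9}}=\Omega(1)$ there). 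Either way $p\le O(\eps+e^{-n^{0.9}})$, which is the claim.

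There is no serious obstacle in this argument. The only points that need care are (i) verifying that the extreme weights $0$ and $n$ cannot be bad, so that \Cref{lem:locality_single_non-dyadic} (stated only for $1\le k\le n-1$) covers every relevant slice, and (ii) the triangle-inequality bookkeeping that turns the per-slice estimate into the final sum, where the trick is to pass through the uniform slice $\Dcal_k$. The choice $\delta=n^{-1/(800d)}$ is calibrated precisely so that $n\delta^{40d}=n^{0.95}$ comfortably exceeds the target exponent $n^{0.9}$ with enough slack to swallow the polynomial prefactors such as $n^{3/2}$.
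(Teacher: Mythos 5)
Your proof is correct. It uses the same key input (\Cref{lem:locality_single_non-dyadic}) and the same slice decomposition of the symmetric target, but the mechanism for extracting the conclusion is genuinely different from the paper's. The paper splits into cases: if $\Dcal$ puts little mass on bad weights, the claim follows immediately from $\tvdist{f(\Ucal^m)-\Dcal}\le\eps$; otherwise it builds the union $\Ecal=\bigvee_{\text{bad }k}\Ecal_k$ of abstract distinguishing events and derives a contradiction by showing $\Ecal$ separates $f(\Ucal^m)$ from $\Dcal$ by more than $\eps$. You instead skip the case split entirely: you lower-bound $\tvdist{P-\Dcal}$ directly by summing per-slice contributions, first unpacking $2\tvdist{P-\Dcal_k}$ explicitly (exploiting that $\Dcal_k$ is uniform on the slice $A_k$) to get $\sum_{x\in A_k}\lvert P(x)-1/\binom nk\rvert\ge 1+p_k-2\eta$, then passing from the uniform slice to $\Dcal$ restricted to $A_k$ via a pointwise triangle inequality, which yields a clean per-slice bound of $p_k-2\eta$. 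Summing over the at most $n-1$ bad slices gives $2\eps\ge p-2n\eta$, and the same numerology ($n\delta^{40d}=n^{0.95}$, absorbing the $n^{3/2}$ prefactor into the gap down to $n^{0.9}$, trivial bound for small $n$) closes the argument. Your route avoids the contradiction and the union bound at the price of more explicit TVD bookkeeping; it also makes transparent the role of the symmetric structure (that $\Dcal$ is constant on each slice), which the paper's event-based argument leaves implicit. Both give the same final quantitative bound.
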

\begin{proof}
Without loss of generality, assume $n$ is sufficiently large.
If
\begin{align}\label{eq:lem:dyadic_weight_eps_2}
\Pr_{z\sim\Dcal}\sbra{\err\pbra{\frac {|z|}n,d}\ge\frac1{n^{1/(800d)}}}\le 100\cdot\pbra{\eps+e^{-n^{0.9}}},
\end{align}
then \Cref{eq:lem:dyadic_weight_eps_1} holds due to the assumption on $f(\Ucal^m)$.
Now we assume \Cref{eq:lem:dyadic_weight_eps_2} does not hold.

For each $0\le k\le n$, recall that $\Dcal_k$ is the uniform distribution over Hamming weight $k$ strings.
Then $\Dcal$ is a mixture of $\cbra{\Dcal_k}$, i.e., $\Dcal=\sum_k\alpha_k\cdot\Dcal_k$.
We say $k$ is \emph{bad} if $\err(k/n,d)\ge n^{-1/(800d)}$.
Then the violation of \Cref{eq:lem:dyadic_weight_eps_2} is equivalent to
\begin{align}\label{eq:lem:dyadic_weight_eps_3}
\sum_{\text{bad }k}\alpha_k>100\cdot\pbra{\eps+e^{-n^{0.9}}}.
\end{align}
By \Cref{lem:locality_single_non-dyadic}, for each bad $k$, there is an event $\Ecal_k$ such that
\begin{itemize}
\item under $f(\Ucal^m)$, it happens with probability at most $4\sqrt{2n}\cdot e^{-n^{0.95}}$;
\item under $\Dcal_k$, it happens with probability at least $1-4\sqrt{2n}\cdot e^{-n^{0.95}}\ge1/2$.
\end{itemize}
Hence considering $\Ecal=\bigvee_{\text{bad }k}\Ecal_k$, we have
\begin{itemize}
\item under $f(\Ucal^m)$, it happens with probability at most $4n\sqrt{2n}\cdot e^{-n^{0.95}}$ which is at most $e^{-n^{0.9}}$, since we assumed $n$ is sufficiently large;
\item under $\Dcal$, it happens with probability at least 
$$
\sum_{\text{bad }k}\alpha_k\cdot\pbra{1-4\sqrt{2n}\cdot e^{-n^{0.95}}}\ge50\cdot\pbra{\eps+e^{-n^{0.9}}}.
$$
\end{itemize}
This means $f(\Ucal^m)$ is not $\eps$-close to $\Dcal$, a contradiction.
\end{proof}

By conditioning on the high degree input bits, we can reduce the variance of the output weight distribution to obtain a version of \Cref{lem:dyadic_weight_eps} where the output weight is concentrated around a \emph{fixed} dyadic rational multiple of $n$.

\begin{lemma}\label{lem:dyadic_weight_after_cond}
Let $f\colon\bin^m\to\bin^n$ be a $d$-local function with $d \ge 1$, and let $A \ge 2^{100d}$ be a parameter.
Assume $f(\Ucal^m)$ is $\eps$-close to a symmetric distribution over $\bin^n$ for some $\eps < 2^{-cdA}$, where $c > 0$ is a sufficiently large absolute constant.
Further assume that $n$ is sufficiently large in terms of $d, A, \eps$.
Define $S\subseteq[m]$ to be the set of input bits with degree at least $n/A$.
Then for each conditioning $\rho\in\bin^S$ on bits in $S$, there exists an integer $0\le a_\rho\le2^d$ such that
$$
\Pr_{x\sim\Ucal^{[m]\setminus S}}\sbra{\abs{\frac{\abs{f(x,\rho)}}n-\frac{a_\rho}{2^d}}\ge\frac1{n^{1/(800d)}}}\le \poly(\eps).
$$
\end{lemma}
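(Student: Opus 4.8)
The plan is to fix a single conditioning $\rho\in\bin^S$ and study the $d$-local function $f_\rho\colon\bin^{[m]\setminus S}\to\bin^n$ directly, combining two ingredients. First, by the definition of $S$, every surviving input bit influences fewer than $n/A$ output bits, which forces the output weight $|f_\rho(x)|$ to have small variance. Second, since $\Ucal^m=\Ucal^S\times\Ucal^{[m]\setminus S}$, the distribution $f(\Ucal^m)$ is the uniform mixture of $f_\rho(\Ucal^{[m]\setminus S})$ over $\rho\in\bin^S$, so any high-probability statement about $f$ transfers to each individual $\rho$ at the cost of a factor $2^{|S|}$; and this cost is affordable because the total number of input--output incidences is at most $dn$ (each output bit reads $\le d$ inputs), while each bit of $S$ accounts for $\ge n/A$ of them, giving $|S|\le dA$. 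Concretely, \Cref{lem:dyadic_weight_eps} applied to $f$, together with the hypotheses $\eps<2^{-cdA}$ (for $c$ a large constant) and $n$ large, yields for \emph{every} $\rho\in\bin^S$ a bound of the form
\[
  \Pr_{x\sim\Ucal^{[m]\setminus S}}\sbra{\err\pbra{\tfrac{|f_\rho(x)|}{n},d}\ge\tfrac1{n^{1/(800d)}}}\le\poly(\eps),
\]
i.e.\ for a typical $x$ the normalized weight of $f_\rho(x)$ is within $n^{-1/(800d)}$ of \emph{some} integer multiple of $2^{-d}$ (possibly depending on $x$).

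It then remains to upgrade ``some multiple'' to ``one fixed multiple $a_\rho/2^d$.'' I would first run the second moment method: writing $|f_\rho(x)|=\sum_{i\in[n]}y_i$ with each $y_i$ a Boolean function of at most $d$ surviving input bits, $\Cov(y_i,y_j)$ vanishes unless $y_i,y_j$ share an input bit, and the number of such pairs is at most $n\cdot d\cdot (n/A)$, so $\Var\pbra{|f_\rho(x)|}=O(dn^2/A)$. By Chebyshev, $|f_\rho(x)|/n$ lies within $O(\sqrt{d/A})\ll 2^{-d}$ of its mean with high probability; combined with the displayed bound, the mean itself sits within $O(\sqrt{d/A})+n^{-1/(800d)}<2^{-d}/2$ of a \emph{unique} multiple $a_\rho/2^d$, which is therefore the ``dominant'' one. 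The remaining danger is that a small but non-negligible slice of probability clusters near a \emph{different} multiple $\gamma'$; the second moment alone only kills this down to a $2^{-\Theta(d)}$-type quantity, not down to $\poly(\eps)$. Here I would invoke the structural claim \Cref{clm:lem:dyadic_weight_after_cond_5}: because flipping one surviving input bit changes $|f_\rho(x)|$ by fewer than $n/A\ll n\cdot 2^{-d}$ coordinates, the weight distribution of $f_\rho(\Ucal^{[m]\setminus S})$ cannot put comparable mass near two multiples that are $\ge n\cdot 2^{-d}$ apart without also putting non-negligible mass on the ``forbidden'' weights strictly between them, contradicting the displayed bound (equivalently, \Cref{lem:locality_single_non-dyadic} applied to $f(\Ucal^m)$ and the closeness to $\Dcal$). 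Hence the total mass away from $a_\rho/2^d$ is at most $\poly(\eps)$, which is exactly the assertion.

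I expect the last step — converting ``two far-apart clusters'' into ``non-negligible intermediate mass,'' quantitatively strong enough to reach a $\poly(\eps)$ error for \emph{every} $\rho$ — to be the main obstacle. A naive hybrid-path coupling (walk from an $x$ with $|f_\rho(x)|\approx\gamma^* n$ to an $x'$ with $|f_\rho(x')|\approx\gamma' n$, one input bit at a time) is too lossy: the path has order-$m$ steps but only order-$A$ of them are forced to land in the forbidden band, so the argument loses a factor polynomial in $n$. One instead needs a sharper anticoncentration-type estimate for sums of bounded-dependency random variables that charges the intermediate mass without reference to $m$ or $n$. The other components — the incidence count bounding $|S|$, the variance estimate, the Chebyshev concentration, and the pigeonhole identifying the dominant multiple — are routine once those are in place.
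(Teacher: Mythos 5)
Your framework matches the paper up through the reduction to the key claim (\Cref{clm:lem:dyadic_weight_after_cond_5}): that if $|f_\rho(x)|/n$ is $\delta$-often far from $a_\rho/2^d$, then $\Omega_d(\delta)$ mass must fall on ``forbidden'' intermediate weights. But you then reject the hybrid-path coupling, and that is a genuine gap: the coupling is exactly how the paper proves this claim, and no sharper bounded-dependency anticoncentration tool is required. Your worry --- that a bit-flip walk of length $m'$ has only $\Theta(A)$ steps that are ``forced'' into the forbidden band, hence a $1/\poly(n)$ loss --- is valid only if the flip order is fixed; it is defeated by randomizing the order and exploiting the fact that the degrees of the surviving bits are individually tiny.

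Concretely, the paper picks a uniformly random permutation $\pi$ of $[m]\setminus S$ and conditions on the event $\Ecal_\pi$ that every aligned window of $L=\lfloor m'/2^{10d}\rfloor$ consecutive $\pi$-steps has total degree at most $n/2^{5d}$. Since the total degree is at most $dn$ and each surviving bit has degree at most $n/A\le n/2^{100d}$, a mean-and-variance estimate on the windowed degree sums, Chebyshev, and a union bound over the $O(2^{10d})$ windows give $\Pr_\pi\sbra{\Ecal_\pi}\ge 7/8$. Conditioned on $\Ecal_\pi$, once the walk from a ``far'' $x$ to a ``close'' $z$ first reaches weight roughly $6^{-d}n$ away from $a_\rho n/2^d$ at some step $i^*$, each of the $L$ preceding steps shifts the weight by so little that all $L+1$ intermediate points remain in the band $[8^{-d}n,4^{-d}n]$ away from $a_\rho n/2^d$. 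Picking a uniform stopping time $t\in\{0,\ldots,m'\}$ therefore catches the walk in the band with probability at least $(L+1)/(m'+1)\ge 2^{-10d}$ --- a constant depending only on $d$, not on $m$ or $n$. This is the ingredient you are missing, and it completes \Cref{clm:lem:dyadic_weight_after_cond_5} with exactly the tools you already had at hand.
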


For clarity, we prove \Cref{lem:dyadic_weight_after_cond} through a series of claims, the most routine of which have their proofs deferred to \Cref{app:missing_sec:characterize}.
The high-level idea is to use the second moment method to show that for any restriction $\rho$ on the high degree input bits, the Hamming weight of $f(x,\rho)$ is typically close to its expectation, which by the previous lemma must be close to a multiple of $2^{-d}$.
We then turn to a more involved argument to boost the quantitative behavior of these bounds.

\begin{proof} 
First note that $|S|\le d A$.
Fix an arbitrary $\rho\in\bin^S$ and define $g\colon\bin^{m-|S|}\to\cbra{0,1,\dots,n}$ by $g(x)=\abs{f(x,\rho)}$.
Then $g(x)=\sum_{i=1}^ng_i(x)$, where each $g_i\colon\bin^{m-|S|}\to\bin$ is a $d$-junta (i.e., depends on at most $d$ of its input bits) and every input bit appears in at most $n/A$ different $g_i$'s.
Therefore
\begin{align}
\Var[g]
&=\sum_{i,j\in[n]}\Cov\pbra{g_i,g_j}
\notag\\
&\le\sum_{i\in[n]}\#\cbra{j\in[n]\mid g_j\text{ correlates with }g_i}
\tag{since each $g_i$ is Boolean}\\
&\le n\cdot dn/A
\le dn^2/2^{100d}. \tag{since $A \ge 2^{100d}$}
\end{align}
Define $p=\E[g]/n$.
Then by Chebyshev's inequality, we have
\begin{align}\label{eq:lem:dyadic_weight_after_cond_2}
\Pr_{x\sim\bin^{[m]\setminus S}}\sbra{\abs{\frac{|f(x,\rho)|}n-p}>2^{-30d}}=\Pr\sbra{\abs{\frac{g(x)}n-p}>2^{-30d}}\le\frac{d}{2^{40d}}\le\frac14.
\end{align}

Combining the above fact that $|f(x,\rho)|/n$ is typically close to $p$ with the fact that it must also typically be close to an integer multiple of $2^{-d}$ (by \Cref{lem:dyadic_weight_eps}), we obtain the following claim.

\begin{claim}[Proved in \Cref{app:missing_sec:characterize}]\label{clm:lem:dyadic_weight_after_cond_1}
$\err(p,d)\le 2\cdot 2^{-30d}$.
\end{claim}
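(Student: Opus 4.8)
The plan is to combine the two concentration facts already in hand. Chebyshev's inequality, as recorded in \eqref{eq:lem:dyadic_weight_after_cond_2}, shows that for the fixed restriction $\rho$ we have $\abs{|f(x,\rho)|/n - p} \le 2^{-30d}$ with probability at least $3/4$ over $x \sim \Ucal^{[m]\setminus S}$. On the other hand, \Cref{lem:dyadic_weight_eps} controls $\err(|f(x)|/n, d)$ only under the \emph{global} distribution $f(\Ucal^m)$, so we must transfer it to the fiber over $\rho$. Since $\rho$ has probability exactly $2^{-|S|} \ge 2^{-dA}$ under $\Ucal^S$ (recall $|S| \le dA$), we get
\[
\Pr_{x\sim\Ucal^{[m]\setminus S}}\sbra{\err\pbra{\frac{|f(x,\rho)|}{n},d}\ge\frac{1}{n^{1/(800d)}}}\le 2^{|S|}\cdot O\pbra{\eps+e^{-n^{0.9}}}\le 2^{dA}\cdot O\pbra{\eps+e^{-n^{0.9}}}.
\]
Using the hypothesis $\eps < 2^{-cdA}$ with $c$ a sufficiently large absolute constant, together with $n$ large in terms of $d$, this conditional failure probability is at most $1/4$.

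Now apply a union bound: the two failure events each have probability at most $1/4$, so there exists at least one point $x^\star \in \bin^{[m]\setminus S}$ with
\[
\abs{\frac{|f(x^\star,\rho)|}{n} - p} \le 2^{-30d}
\qquad\text{and}\qquad
\err\pbra{\frac{|f(x^\star,\rho)|}{n},d} \le \frac{1}{n^{1/(800d)}}.
\]
The second inequality means there is an integer $0 \le a^\star \le 2^d$ with $\abs{|f(x^\star,\rho)|/n - a^\star/2^d} \le n^{-1/(800d)}$, and the triangle inequality then yields
\[
\abs{p - \frac{a^\star}{2^d}} \le 2^{-30d} + \frac{1}{n^{1/(800d)}}.
\]
Taking $n$ large enough (in terms of $d$) that $n^{-1/(800d)} \le 2^{-30d}$, we conclude $\err(p,d) \le \abs{p - a^\star/2^d} \le 2\cdot 2^{-30d}$.

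The only non-routine point is the transfer of \Cref{lem:dyadic_weight_eps} from $f(\Ucal^m)$ to the conditional distribution on the fiber over $\rho$, which costs a factor of $2^{|S|}$ in the error bound; this is precisely why the statement couples $\eps$ to $A$ through $\eps < 2^{-cdA}$, and absorbing that $2^{dA}$ loss is what the large constant $c$ buys us. Everything else is a single application of the triangle inequality.
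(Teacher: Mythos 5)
Your proposal is correct and amounts to the contrapositive of the paper's own argument: the paper assumes $\err(p,d) > 2\cdot 2^{-30d}$, notes that together with the Chebyshev bound this forces $\err(|f(x,\rho)|/n,d) \ge 2^{-30d}$ on a conditional event of probability $\ge 3/4$, lifts that back to $\Ucal^m$ at the cost of a $2^{-|S|}$ factor, and derives a contradiction with \Cref{lem:dyadic_weight_eps}; you instead transfer \Cref{lem:dyadic_weight_eps} forward into the fiber at the same $2^{|S|}$ cost, take a union bound to locate a single witness $x^\star$, and conclude by the triangle inequality. Same two ingredients, same role for the hypothesis $\eps < 2^{-cdA}$, and same use of $n$ large to absorb $n^{-1/(800d)}$ into $2^{-30d}$ — just a direct argument rather than a proof by contradiction.
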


By \Cref{clm:lem:dyadic_weight_after_cond_1}, there exists an integer $0\le a\le2^d$ such that $|p-a/2^d|\le2\cdot 2^{-30d}$.
Now it suffices to show
\begin{align}\label{eq:lem:dyadic_weight_after_cond_3}
\Pr_{x\sim\bin^{[m]\setminus S}}\sbra{\abs{\frac{|f(x,\rho)|}n-\frac a{2^d}}>\frac1{n^{1/(800d)}}}\le \poly(\eps).
\end{align}
We start with two primitive bounds.
The first follows from combining \Cref{lem:dyadic_weight_eps} with the observation that any event is assigned at most $2^{|S|}$ times more mass by $f(\Ucal^{[m]\setminus S}, \rho)$ than by $f(\Ucal^m)$.

\begin{claim}[Proved in \Cref{app:missing_sec:characterize}]\label{clm:lem:dyadic_weight_after_cond_2}
$$
\Pr_{x\sim\bin^{[m]\setminus S}}\sbra{\err\pbra{\frac{|f(x,\rho)|}n,d}>\frac1{n^{1/(800d)}}}\le \poly(\eps).
$$
\end{claim}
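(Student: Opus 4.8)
The plan is to deduce this estimate directly from \Cref{lem:dyadic_weight_eps} via a one-line averaging argument, paying only a multiplicative factor of $2^{|S|}$ for the conditioning on $S$. Since the input bits are independent and uniform, $\Ucal^m=\Ucal^S\times\Ucal^{[m]\setminus S}$, so for any event $\Ecal\subseteq\bin^n$,
$$
\Pr_{x\sim\Ucal^m}\sbra{f(x)\in\Ecal}=\E_{\sigma\sim\Ucal^S}\Pr_{x\sim\Ucal^{[m]\setminus S}}\sbra{f(x,\sigma)\in\Ecal}\ge2^{-|S|}\cdot\Pr_{x\sim\Ucal^{[m]\setminus S}}\sbra{f(x,\rho)\in\Ecal},
$$
because the single term $\sigma=\rho$ already contributes $2^{-|S|}\cdot\Pr_{x\sim\Ucal^{[m]\setminus S}}\sbra{f(x,\rho)\in\Ecal}$ to the average. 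Instantiating this with $\Ecal=\cbra{y\in\bin^n\colon\err(|y|/n,d)>n^{-1/(800d)}}$ and invoking \Cref{lem:dyadic_weight_eps} to bound $\Pr_{x\sim\Ucal^m}\sbra{f(x)\in\Ecal}\le O(\eps+e^{-n^{0.9}})$ yields
$$
\Pr_{x\sim\Ucal^{[m]\setminus S}}\sbra{\err\pbra{\frac{|f(x,\rho)|}{n},d}>\frac1{n^{1/(800d)}}}\le2^{|S|}\cdot O\pbra{\eps+e^{-n^{0.9}}}.
$$

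It remains to see that the right-hand side is $\poly(\eps)$. Here I would use that $|S|$ is small: the $n$ output bits account for at most $dn$ input-output incidences in total, while every input bit of $S$ accounts for at least $n/A$ of them, so $|S|\le dA$. Combined with the hypothesis $\eps<2^{-cdA}$ (for a sufficiently large absolute constant $c$), this gives $2^{|S|}\le2^{dA}\le(1/\eps)^{1/c}$, hence $2^{|S|}\cdot\eps\le\eps^{1-1/c}$, which is polynomially small for $c\ge2$. The remaining term $2^{|S|}\cdot e^{-n^{0.9}}\le(1/\eps)^{1/c}\cdot e^{-n^{0.9}}$ is absorbed into the same $\poly(\eps)$ bound once $n$ is large.

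I do not expect a genuine obstacle: the argument is essentially a union/averaging bound. The only point that needs care is the bookkeeping that renders the $2^{|S|}$ overhead harmless, namely verifying that the ``high degree'' threshold $n/A$ is coarse enough that $dA=O(\log(1/\eps))$ — which is exactly what $\eps<2^{-cdA}$ provides — so that a $2^{O(dA)}$ factor still leaves a polynomially small probability, together with checking that the exponentially small slack $e^{-n^{0.9}}$ from \Cref{lem:dyadic_weight_eps} survives the multiplication, which it does for $n$ sufficiently large.
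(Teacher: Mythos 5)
Your proposal is correct and matches the paper's proof essentially line for line: the paper also bounds the conditional probability by $2^{|S|}$ times the unconditional one, invokes \Cref{lem:dyadic_weight_eps}, and then uses $|S|\le dA$ together with the hypothesis $\eps<2^{-cdA}$ to conclude $\poly(\eps)$. Your extra derivation of $|S|\le dA$ from the incidence count is the same observation the paper records at the start of the proof of \Cref{lem:dyadic_weight_after_cond}.
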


\begin{claim}\label{clm:lem:dyadic_weight_after_cond_3}
$$
\Pr_{x\sim\bin^{[m]\setminus S}}\sbra{\abs{\frac{|f(x,\rho)|}n-\frac a{2^d}}>\frac{3}{2^{30d}}}\le\frac14.
$$
\end{claim}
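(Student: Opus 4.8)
The plan is to derive this claim purely from the two facts already established immediately above it, namely the Chebyshev bound \Cref{eq:lem:dyadic_weight_after_cond_2} and \Cref{clm:lem:dyadic_weight_after_cond_1}, via a single application of the triangle inequality. Recall that $g(x) = |f(x,\rho)|$, that $p = \E[g]/n$, and that the integer $a$ was chosen so that $|p - a/2^d| \le 2\cdot 2^{-30d}$.

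First I would observe that whenever $\abs{g(x)/n - a/2^d} > 3/2^{30d}$ holds, the triangle inequality forces
\[
    \abs{\frac{g(x)}{n} - p} \;\ge\; \abs{\frac{g(x)}{n} - \frac{a}{2^d}} - \abs{\frac{a}{2^d} - p} \;>\; \frac{3}{2^{30d}} - \frac{2}{2^{30d}} \;=\; \frac{1}{2^{30d}}.
\]
Hence the event in the claim is contained in the event $\{\abs{g(x)/n - p} > 2^{-30d}\}$. By \Cref{eq:lem:dyadic_weight_after_cond_2}, the latter occurs with probability at most $1/4$ over $x \sim \bin^{[m]\setminus S}$, which gives the bound.

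There is no real obstacle here: the content is entirely in the second-moment computation that yielded \Cref{eq:lem:dyadic_weight_after_cond_2} and in \Cref{clm:lem:dyadic_weight_after_cond_1} (the latter being where \Cref{lem:dyadic_weight_eps} is invoked to pin $p$ near a multiple of $2^{-d}$). This claim is just the bookkeeping step that merges ``$|f(x,\rho)|/n$ concentrates near $p$'' with ``$p$ is near $a/2^d$'' into ``$|f(x,\rho)|/n$ concentrates near $a/2^d$,'' at the cost of enlarging the radius from $2^{-30d}$ to $3\cdot 2^{-30d}$. It will in turn feed into the subsequent, more delicate argument (\Cref{clm:lem:dyadic_weight_after_cond_5}) that boosts the failure probability down to $\poly(\eps)$ and rules out mass escaping to other dyadic multiples.
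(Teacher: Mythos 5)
Your proof is correct and takes exactly the same route as the paper: the paper's proof of this claim is a one-line appeal to \Cref{eq:lem:dyadic_weight_after_cond_2} and \Cref{clm:lem:dyadic_weight_after_cond_1}, and the triangle-inequality step you spell out is precisely the omitted bookkeeping.
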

\begin{proof}[Proof of \Cref{clm:lem:dyadic_weight_after_cond_3}]
This follows directly from \Cref{eq:lem:dyadic_weight_after_cond_2} and \Cref{clm:lem:dyadic_weight_after_cond_1}.
\end{proof}

Define
$$
\delta\coloneqq \Pr_{x\sim\bin^{[m]\setminus S}}\sbra{\abs{\frac{|f(x,\rho)|}n-\frac a{2^d}}>4^{-d}}.
$$
Then by \Cref{clm:lem:dyadic_weight_after_cond_2}, we can relate the LHS of \Cref{eq:lem:dyadic_weight_after_cond_3} to $\delta$, because it is very unlikely that $|f(x,\rho)|/n$ is between $n^{-1/(800d)}$- and $4^{-d}$-close to $a/2^d$.

\begin{claim}[Proved in \Cref{app:missing_sec:characterize}]\label{clm:lem:dyadic_weight_after_cond_4}
$$
\Pr_{x\sim\bin^{[m]\setminus S}}\sbra{\abs{\frac{|f(x,\rho)|}n-\frac a{2^d}}>\frac1{n^{1/(800d)}}}\le \delta+\poly(\eps).
$$
\end{claim}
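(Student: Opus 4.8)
\textbf{Proof proposal for \Cref{clm:lem:dyadic_weight_after_cond_4}.}
The plan is to split the event $\cbra{\abs{\abs{f(x,\rho)}/n-a/2^d}>n^{-1/(800d)}}$ into two pieces according to the size of the deviation: the piece where it exceeds $4^{-d}$, and the piece where it lies in the window $\big(n^{-1/(800d)},\,4^{-d}\big]$. The first piece has probability exactly $\delta$ by the definition of $\delta$, so the entire task reduces to bounding the probability of the second piece by $\poly(\eps)$, and for that I would invoke \Cref{clm:lem:dyadic_weight_after_cond_2}.

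The key (elementary) observation is that throughout the window $\big(n^{-1/(800d)},\,4^{-d}\big]$ one has $\abs{\abs{f(x,\rho)}/n-a/2^d}=\err\pbra{\abs{f(x,\rho)}/n,\,d}$. Indeed, consecutive integer multiples of $2^{-d}$ are spaced $2^{-d}$ apart, and since $d\ge1$ we have $4^{-d}=2^{-2d}\le 2^{-d-1}$, i.e.\ at most half that spacing; hence whenever $\abs{f(x,\rho)}/n$ lies within $4^{-d}$ of $a/2^d$, the number $a/2^d$ is a closest integer multiple of $2^{-d}$ to it (recalling $0\le a\le 2^d$ and $0\le\abs{f(x,\rho)}/n\le1$, so no relevant boundary multiple is omitted), which gives the claimed identity. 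Consequently the event $\cbra{n^{-1/(800d)}<\abs{\abs{f(x,\rho)}/n-a/2^d}\le 4^{-d}}$ is contained in $\cbra{\err\pbra{\abs{f(x,\rho)}/n,\,d}>n^{-1/(800d)}}$, whose probability is at most $\poly(\eps)$ by \Cref{clm:lem:dyadic_weight_after_cond_2}. Combining the two pieces yields
$$
\Pr_{x\sim\bin^{[m]\setminus S}}\sbra{\abs{\frac{\abs{f(x,\rho)}}n-\frac a{2^d}}>\frac1{n^{1/(800d)}}}\le\delta+\poly(\eps).
$$

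There is essentially no obstacle beyond being careful with the geometry of dyadic rationals: verifying the factor-of-two comparison $4^{-d}\le 2^{-d-1}$ (which is precisely why $d\ge1$ is assumed), checking that at the degenerate boundary case $d=1$ with deviation exactly $2^{-d-1}$ the identity still holds because $\err$ is a minimum, and confirming $a/2^d\in[0,1]$ so it is genuinely among the nearest multiples. All the real content was already isolated in \Cref{clm:lem:dyadic_weight_after_cond_1} and \Cref{clm:lem:dyadic_weight_after_cond_2}; this claim is the bookkeeping step converting a bound on the binary representation error $\err(\cdot,d)$ into a bound on the distance to the specific dyadic rational $a/2^d$.
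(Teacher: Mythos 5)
Your proof is correct and takes essentially the same approach as the paper's: both decompose the target event into the $\cbra{\text{deviation}>4^{-d}}$ piece (probability $\delta$ by definition) and a complementary piece handled via \Cref{clm:lem:dyadic_weight_after_cond_2}, using the same dyadic-spacing observation to relate the deviation to $\err(\cdot,d)$. The only cosmetic difference is that you split on the size of the deviation directly and observe that deviation $\le 4^{-d}$ forces $\err=$ deviation, whereas the paper splits on the size of $\err$ and observes that $\err\le n^{-1/(800d)}$ with deviation $>n^{-1/(800d)}$ forces deviation $>4^{-d}$; these are the same geometric fact read in opposite directions.
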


\Cref{clm:lem:dyadic_weight_after_cond_3} implies a constant $1/4$ upper bound on $\delta$.
To further improve it, we prove the following claim.
\begin{claim}\label{clm:lem:dyadic_weight_after_cond_5}
$$
\Pr_{x\sim\bin^{[m]\setminus S}}\sbra{8^{-d}\le\abs{\frac{|f(x,\rho)|}n-\frac a{2^d}}\le4^{-d}}\ge\delta/2^{20d}.
$$
\end{claim}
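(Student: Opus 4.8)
The plan is to show that whenever $|f(x,\rho)|/n$ lands more than $4^{-d}$ away from $a/2^d$, it cannot jump ``too far'' from the bulk: morally, $|f(x,\rho)|$ is a sum of $n$ Boolean $d$-juntas in which each input bit is shared by at most $n/A$ of them, so re-randomizing a single input bit moves the Hamming weight by at most $d$. The idea is to use this bounded-difference/Lipschitz structure to argue that from any input $x$ with $|f(x,\rho)|/n$ far from $a/2^d$ on the ``far'' side, one can walk back toward the concentration region of \Cref{clm:lem:dyadic_weight_after_cond_3} while passing through the intermediate band $8^{-d} \le ||f(x,\rho)|/n - a/2^d| \le 4^{-d}$, and that a non-negligible fraction of such walks reside in the band.

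Concretely, I would proceed as follows. First, let $B = \{x : ||f(x,\rho)|/n - a/2^d| > 4^{-d}\}$ be the ``far'' event, so $\Pr[B] = \delta$, and let $I = \{x : 8^{-d} \le ||f(x,\rho)|/n - a/2^d| \le 4^{-d}\}$ be the intermediate band whose probability we must lower bound by $\delta/2^{20d}$. Second, I would pick a clean subset of input bits to re-randomize: since the juntas $g_1,\dots,g_n$ each depend on $\le d$ input bits and each input bit feeds $\le n/A$ of them, greedily selecting a set $R$ of coordinates that pairwise share no $g_i$ gives $|R| \ge n / (dn/A) = A/d$ juntas worth of ``independent control,'' and flipping one coordinate in $R$ changes $|f(x,\rho)|$ by at most $d$, hence changes the normalized weight by at most $d/n \ll 8^{-d}$ (using $n$ large in terms of $d$). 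Third, for $x \in B$ consider the random walk that flips, one at a time in a uniformly random order, the coordinates of $R$ in a way that moves the (signed) deviation $\Delta(x) := |f(x,\rho)|/n - a/2^d$ toward $0$; by a direct second-moment / Chebyshev argument of exactly the type used in \Cref{eq:lem:dyadic_weight_after_cond_2} (the variance of $|f(\,\cdot\,,\rho)|$ is $\le dn^2/2^{100d}$, even after further conditioning on the complement of $R$), with probability at least $1/2$ the endpoint of the walk has $|\Delta| \le 3 \cdot 2^{-30d} < 4^{-d}$, i.e.\ it exits $B$. Since each step moves $|\Delta|$ by at most $d/n < 8^{-d}$, the walk cannot jump over the whole band $[8^{-d}, 4^{-d}]$ in one step, so it must land in $I$ on some intermediate step. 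Averaging over $x \sim \Ucal^{[m]\setminus S}$ conditioned on $B$, over the random order, and over which step lands in $I$, a fresh sample from $\Ucal^{[m]\setminus S}$ hits $I$ with probability at least $\Omega(\delta / |R|) \ge \delta \cdot (d/A) \ge \delta \cdot 2^{-20d}$, using $A \ge 2^{100d}$; this is the claimed bound (with room to spare, which the subsequent argument presumably absorbs).

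The main obstacle I anticipate is making the ``walk toward $0$'' and its concentration rigorous: one needs that conditioned on an arbitrary setting of the coordinates outside $R$, flipping the $R$-coordinates still gives enough variance-reduction to guarantee arrival near $a/2^d$ with constant probability, \emph{and} that the sign of the step can be chosen (measurably, as a function of the current configuration) to be monotone toward $0$ without ever overshooting by more than $d/n$. An alternative, possibly cleaner route that avoids the ordered-walk bookkeeping: couple each $x \in B$ with the string $x'$ obtained by resampling the $R$-coordinates afresh; then $|f(x',\rho)|/n$ is within $3\cdot2^{-30d}$ of $a/2^d$ with probability $\ge 1/2$, while $|f(x,\rho)|$ and $|f(x',\rho)|$ differ by at most $d|R| \le dA/d = A$, hence $||f(x,\rho)| - |f(x',\rho)||/n \le A/n$; this doesn't directly land in the band, but by instead resampling a random \emph{size-$\Theta(A)$ prefix} of $R$ and using that the partial sums form a martingale with increments $\le d$, a level-crossing argument shows the trajectory from $x$'s weight to (near) $a/2^d$ must pass within $O(d)$ additive error of $(a/2^d + 8^{-d})n$, which is inside the band; quantifying the probability of this crossing being realized by the random resampling is where the factor $2^{-20d}$ and the constraint $A \ge 2^{100d}$ enter. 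Either way, I expect the delicate point to be the measurability/monotonicity of the step choice and the clean accounting of the $2^{-20d}$ loss; everything else mirrors the second-moment estimates already established.
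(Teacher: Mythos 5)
There is a genuine gap, and it stems from two conflated issues.

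First, your Route 1 (and similarly Route 2) rerandomizes only a small independent set $R$ of input coordinates with $|R| \approx A/d$, and asserts that after this rerandomization the normalized weight lands within $3\cdot 2^{-30d}$ of $a/2^d$ with probability $\ge 1/2$. But the concentration in \Cref{eq:lem:dyadic_weight_after_cond_2} (and \Cref{clm:lem:dyadic_weight_after_cond_3}) is a statement about \emph{uniform} $x \in \bin^{[m]\setminus S}$. Once you condition on the coordinates outside $R$, the conditional mean of $|f(\cdot,\rho)|/n$ can sit essentially anywhere; rerandomizing only $R$ rerandomizes only an $O(1)$-sized chunk of the input, and the second-moment bound you cite simply does not control this conditional distribution. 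You flag this yourself as ``the main obstacle,'' and it is fatal rather than technical.

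Second, your Lipschitz estimate is off by a lot: flipping one input coordinate $j$ changes $|f(x,\rho)|$ by up to $\deg(j)$, which after conditioning on $S$ is bounded only by $n/A$, not by $d$. (The parameter $d$ bounds how many inputs an \emph{output} depends on; the degree of an input can still be as large as $n/A$.) So a single step moves the normalized weight by up to $1/A$, not $d/n$, and in particular the entire set $R$ of size $\approx A/d$ moves it by at most $\approx 1/d$ — nowhere near enough to traverse from $|\Delta| > 4^{-d}$ to $|\Delta| \le 3\cdot 2^{-30d}$ in the first place, on top of the concentration issue.

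The paper's proof avoids both problems by rerandomizing \emph{all} of $[m]\setminus S$. It picks a uniformly random permutation $\pi$ of $[m]\setminus S$ and sets $z$ to be $x$ with a $\Bin(m', 1/2)$-sized prefix (in $\pi$-order) flipped, so $z$ is exactly uniform and \Cref{clm:lem:dyadic_weight_after_cond_3} applies to $z$; it also sets $y$ to be $x$ with a uniform-length prefix flipped, which is again exactly uniform. The ``one step is small'' bound in the paper is $1/A \le 2^{-100d}$ (using the conditioning on $S$), and the event $\Ecal_\pi$ ensures that every block of $L \approx m'/2^{10d}$ consecutive steps has total degree $\le n/2^{5d}$, so the walk from $x$ to $z$ stays in the band $[8^{-d},4^{-d}]$ for $\ge L$ consecutive steps; since $t$ is uniform over $\{0,\ldots,m'\}$, this gives a $(L+1)/m' \ge 2^{-10d}$ fraction of steps, and hence the claimed $\delta/2^{20d}$. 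The random-prefix-of-a-random-permutation device is exactly what makes both the endpoint and the intermediate points genuinely uniform, which is the piece your proposal is missing.
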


Such a claim is true, because if $|f(x,\rho)|/n$ is typically close to $a/2^d$ but has a $\delta$ probability of being far from it, then there must be a decent probability that $|f(x,\rho)|/n$ is close, but not too close, to $a/2^d$, since $|f(x,\rho)|/n$ is roughly continuous in $x$.
Formally, the proof of \Cref{clm:lem:dyadic_weight_after_cond_5} will operate via a coupling argument.
We consider two independent inputs $x$ and $z$ where $|f(z,\rho)|/n$ is close to $a/2^d$ but $|f(x,\rho)|/n$ is not.
By slowly changing $x$ into $z$ by flipping bits of $x$ on which they disagree in a random order, we can find inputs $y$ where $|f(y,\rho)|/n$ is in the range in question.

\begin{proof}[Proof of \Cref{clm:lem:dyadic_weight_after_cond_5}]
Let $\pi$ be a uniformly random permutation on $[m]\setminus S$.
Pick $t$ uniformly at random among $0,1,\ldots,m-|S|$ and sample $r\sim\mathrm{Bin}(m-|S|,1/2)$ (i.e., $\Pr[r=s]=2^{|S|-m}\binom{m-|S|}s$ for all $s=0,1,\ldots,m$).
Define $y\in\bin^{[m]\setminus S}$ as $x$ with the first $t$ bits in $\pi$ flipped; and define $z\in\bin^{[m]\setminus S}$ as $x$ with the first $r$ bits in $\pi$ flipped.
Since $y$ has the same distribution as $x$, it suffices to show
\begin{align}\label{eq:clm:lem:dyadic_weight_after_cond_5_1}
\Pr_{x,z,\pi,t}\sbra{8^{-d}\le\abs{\frac{|f(y,\rho)|}n-\frac a{2^d}}\le4^{-d}}\ge\delta/2^{20d}.
\end{align}

Observe that $z$ is uniform over $\bin^{[m]\setminus S}$ and is independent of $x$.
Define $\Ecal_x$ to be the event that $\abs{\frac{|f(x,\rho)|}n-\frac a{2^d}}>4^{-d}$ and $\Ecal_z$ to be the event that $\abs{\frac{|f(z,\rho)|}n-\frac a{2^d}}\le3/2^{30d}$.
Then by independence and \Cref{clm:lem:dyadic_weight_after_cond_3}, we have
\begin{align}\label{eq:clm:lem:dyadic_weight_after_cond_5_2}
\Pr_{x,z}\sbra{\Ecal_z\mid\Ecal_x}\ge\frac34.
\end{align}

Recall we wish to show that if $\Ecal_x$ and $\Ecal_z$ both hold, there is a good probability (over $t$) that $|f(y,\rho)|/n$ is between $8^{-d}$- and $4^{-d}$-close to $a/2^d$.
Note that changing $t$ by 1 only changes $|f(y,\rho)|$ by at most the degree of the relevant input.
Since $|f(y,\rho)|$ must pass through the ``bad'' region, \Cref{eq:clm:lem:dyadic_weight_after_cond_5_1} holds as long as it does not pass through too quickly.
This is captured by the following event $\Ecal_\pi$.

Denote $m'=m-|S|$ and for each $j\in[m']$ we use $\deg_\pi(j)$ to denote the degree of the $j$-th input bit under $\pi$.
Define $L=\floorbra{m'/2^{10d}}$.
Let $\Ecal_\pi$ be the event that no $L$ consecutive (under $\pi$) input bits starting at a multiple of $L$ have degree sum larger than $n/2^{5d}$; or more formally that $\sum_{j=1}^L\deg_\pi(L\cdot i+j)\le n/2^{5d}$ holds for each $i=0,1,\ldots,\floorbra{m'/L}$.
Whenever $\Ecal_\pi$ holds, we know that any $L$ consecutive (under $\pi$) bit flips of $x$ will only change the output weight of $f$ by at most $2\cdot n/2^{5d}$, since any length $L$ interval can only intersect two length $L$ intervals that start at a multiple of $L$ (as in the definition of $\Ecal_\pi$).
For $i=0$, we have 
$$
\E_\pi\sbra{\sum_{j=1}^L\deg_\pi(j)}=L\cdot\E_\pi\sbra{\deg_\pi(1)}\le \frac{dnL}{m'}\le\frac{dn}{2^{10d}}
$$ 
as the total degree is at most $dn$.
Define the indicator variable $I_i$ to denote that the $i$-th input bit (in the standard order) is in $\pi(1),\ldots,\pi(L)$. Let $\deg(i)$ be the degree of the $i$-th input bit (in the standard order).
Then we also have
\begin{align*}
\Var_\pi\sbra{\sum_{j=1}^L\deg_\pi(j)}
&=\Var_\pi\sbra{\sum_{i=1}^{m'}\deg(i)\cdot I_i}
=\sum_i\deg(i)^2\Var_\pi\sbra{I_i}+\sum_{i\neq i'}\deg(i)\deg(i')\Cov\sbra{I_i,I_{i'}}\\
&\le\sum_i\deg(i)^2\cdot\Var_\pi\sbra{I_i}
\tag{since $\Cov\sbra{I_i,I_{i'}}=\frac{L(L-1)}{m'(m'-1)}-\pbra{\frac L{m'}}^2\le0$}\\
&\le \frac L{m'}\sum_i\deg(i)^2 \tag{since $\Var_\pi\sbra{I_i} \le \Pr_\pi\sbra{I_i = 1}$} \\
&\le \frac L{m'}\cdot \max_i \deg(i) \cdot \sum_i\deg(i) \le\frac L{m'}\cdot\frac n{2^{100d}}\cdot dn,
\end{align*}
since $\deg(i)\le n/A \le n/2^{100d}$ and $\sum_i\deg(i) \le dn$.
Therefore
\begin{align*}
\Pr_\pi\sbra{\sum_{j=1}^L\deg_\pi(j)\ge n/2^{5d}}
&\le\Pr_\pi\sbra{\sum_{j=1}^L\deg_\pi(j)-\E\sbra{\sum_{j=1}^L\deg_\pi(j)}\ge n/2^{5d+1}}
\tag{since $2^{-5d}-d\cdot2^{-10d}\ge2^{-5d}/2$ for $d\ge1$}\\
&\le\frac{\Var_\pi\sbra{\sum_{j=1}^L\deg_\pi(j)}}{4n^2\cdot2^{-10d}}
\tag{by Chebyshev inequality}\\
&\le\frac{d\cdot L}{4\cdot 2^{90d} \cdot m'}.
\end{align*}
Note that the same estimate works for all $i=1,\ldots,\floorbra{m'/L}$.
Hence by independence and a union bound
\begin{align}
\Pr_{x,\pi}\sbra{\Ecal_\pi\mid\Ecal_x}
&=\Pr_\pi\sbra{\Ecal_\pi}
\ge1-\pbra{1+\floorbra{\frac{m'}L}}\cdot\frac{d L}{4\cdot 2^{90d} \cdot m'}
\ge 1 - \frac{d}{2^{20d}}
\ge\frac78.
\label{eq:clm:lem:dyadic_weight_after_cond_5_3}
\end{align}

Combining \Cref{eq:clm:lem:dyadic_weight_after_cond_5_2} and \Cref{eq:clm:lem:dyadic_weight_after_cond_5_3}, we have $\Pr_{x,z,\pi}\sbra{\Ecal_\pi\land\Ecal_z\mid\Ecal_x}\ge5/8$ by a union bound.
Since $\Pr_x\sbra{\Ecal_x}=\delta$, we have
\begin{align}\label{eq:clm:lem:dyadic_weight_after_cond_5_4}
\Pr_{x,z,\pi}\sbra{\Ecal_x\land\Ecal_z\land\Ecal_\pi}\ge\frac{5\delta}8.
\end{align}

Recall that $z$ is $x$ with the first $r$ bits (in the order of $\pi$) flipped.
Assuming $\Ecal_x$ and $\Ecal_z$, we know that $\abs{\frac{|f(x,\rho)|}n-\frac a{2^d}}>4^{-d}$ and $\abs{\frac{|f(z,\rho)|}n-\frac a{2^d}}\le3/2^{30d}$.
For each $i=0,1,\ldots,m'$, we use $z^{(i)}$ to denote the string $x$ with the first $i$ bits (in the order of $\pi$) flipped.
Then $z^{(0)}=x$ and $z^{(r)}=z$.
Note that the Hamming difference between $f(z^{(i)},\rho)$ and $f(z^{(i-1)},\rho)$ is upper bounded by the degree of the $i$-th input bit (under $\pi$), which is further upper bounded by $n/2^{100d}$ by our construction.
Since $4^{-d}>6^{-d}>3/2^{30d}$, there exists some $i^*\in\cbra{0,1,\ldots,r}$ such that
$$
6^{-d}\le\abs{\frac{\abs{f(z^{(i^*)},\rho)}}n-\frac a{2^d}}\le6^{-d}+2^{-100d}.
$$
Assuming $\Ecal_\pi$, for each $j=i^*,i^*-1,\ldots,i^*-L$ the Hamming difference between $f(z^{(j)},\rho)$ and $f(z^{(i^*)},\rho)$ is upper bounded by the total degree of input bits from $i^*$ to $i^*-L$, which is at most $2\cdot n/2^{5d}$.
Hence for each such $j$, we have
$$
8^{-d}\le6^{-d}-2^{-5d+1}\le\abs{\frac{\abs{f(z^{(j)},\rho)}}n-\frac a{2^d}}\le6^{-d}+2^{-100d}+2^{-5d+1}\le4^{-d}.
$$
In particular, this implies $i^*-L>0$ as $\abs{\frac{|f(z^{(0)},\rho)|}n-\frac a{2^d}}=\abs{\frac{|f(x,\rho)|}n-\frac a{2^d}}>4^{-d}$.
At this point, recall that $y=z^{(t)}$ where $t\sim\cbra{0,1,\ldots,m'}$.
Hence
\begin{align*}
&\phantom{\ge}\Pr_{x,z,\pi,t}\sbra{8^{-d}\le\abs{\frac{\abs{f(y,\rho)}}n-\frac a{2^d}}\le4^{-d}\mid\Ecal_x\land\Ecal_z\land\Ecal_\pi}\\
&\ge\Pr_{x,z,\pi,t}\sbra{i^*-L\le t\le i^*\mid\Ecal_x\land\Ecal_z\land\Ecal_\pi}\\
&=\frac{L+1}{m'}\ge2^{-10d}.
\tag{by independence}
\end{align*}
This, combined with \Cref{eq:clm:lem:dyadic_weight_after_cond_5_4}, proves \Cref{eq:clm:lem:dyadic_weight_after_cond_5_1} and completes the proof of \Cref{clm:lem:dyadic_weight_after_cond_5}.
\end{proof}

Given \Cref{clm:lem:dyadic_weight_after_cond_5}, we can put a much tighter upper bound on $\delta$, as demanded in \Cref{clm:lem:dyadic_weight_after_cond_4}.
Observe that $8^{-d}\le\abs{\frac{|f(x,\rho)|}n-\frac a{2^d}}\le4^{-d}$ implies that $\abs{\frac{|f(x,\rho)|}n-\frac{a'}{2^d}}\ge2^{-d}-4^{-d}\ge8^{-d}$ holds for any integer $a'\ne a$.
Hence when the event in \Cref{clm:lem:dyadic_weight_after_cond_5} happens, we have
$$
\err\pbra{\frac{|f(x,\rho)|}n,d}\ge8^{-d}.
$$
\Cref{clm:lem:dyadic_weight_after_cond_2} implies this should happen with probability at most $\poly(\eps)$.
Hence by \Cref{clm:lem:dyadic_weight_after_cond_5}, this means $\delta\le \poly(\eps)$.
Therefore \Cref{eq:lem:dyadic_weight_after_cond_3} follows directly by \Cref{clm:lem:dyadic_weight_after_cond_4}. 
This completes the proof of \Cref{lem:dyadic_weight_after_cond}.
\end{proof}

We now have the requisite tools to prove \Cref{prop:independence_after_cond}, the main result of \Cref{subsec:matching_moments}.
We restate it below for convenience.

\propindependenceaftercond*

\begin{proof}
Fix an arbitrary $\rho$ and shorthand $T=T_\rho,\gamma=\gamma_\rho$.
Assume there is a maximal set of $R$ disjoint $k$-tuples of output bits with the wrong distribution.
Then it suffices to show $R=O_{d,k,A}(1)$, since we can set $T$ to be the union of these output bits, and every $k$-tuple that does not have distribution $\Ucal_{\gamma}^k$ must intersect $T$ by maximality.
Moreover, $|T|\le k\cdot R=O_{d,k,A}(1)$.

Now for the $i$-th $k$-tuple with the wrong distribution, we have a degree-$k$ multilinear polynomial $P_i\colon\bin^n\to\bin$ such that 
\begin{align}\label{eq:lem:independence_after_cond_1}
\E_{x\sim\bin^{[m]\setminus S}}\sbra{P_i(f(x,\rho))}\ge\E_{z\sim\Ucal_\gamma^n}\sbra{P_i(z)}+2^{-kd}.
\end{align}
To see this, let $W\subseteq\bin^k$ witness the total variation distance $\delta$ between $\Ucal_\gamma^k$ and the $k$-tuple output in $f(\Ucal^{[m]\setminus S},\rho)$, i.e.,
\begin{align}\label{eq:lem:independence_after_cond_2}
\Pr_{x\sim\bin^{[m]\setminus S}}\sbra{\text{the $k$-tuple output of }f(x,\rho)\in W}\ge\Pr_{z\sim\Ucal_\gamma^k}\sbra{z\in W}+\delta.
\end{align}
Then we define the polynomial $P_i\colon\bin^n\to\bin$ as the indicator function that the $k$-tuple output lies in $W$. This is a $k$-junta and is naturally of degree $k$.
Since $\gamma$ is an integer multiple of $2^{-d}$, we know that the probability density function of $\Ucal_\gamma^k$ has granularity $2^{-kd}$.
In addition, since $f$ is $d$-local, the probability density function of the $k$-tuple output of $f(\Ucal^{[m]\setminus S},\rho)$ also has granularity $2^{-kd}$.
As the two distributions are different, their total variation distance is $\delta\ge2^{-kd}$.
Putting this into \Cref{eq:lem:independence_after_cond_2} gives \Cref{eq:lem:independence_after_cond_1}.

Given the constructions of $P_i$'s, we define $P=\sum_{i\in[R]}P_i$.
Then by \Cref{eq:lem:independence_after_cond_1}, we have
\begin{align*}
\E_{x\sim\bin^{[m]\setminus S}}\sbra{P(f(x,\rho))}\ge\E_{z\sim\Ucal_\gamma^n}\sbra{P(z)}+2^{-kd}\cdot R.
\end{align*}
Since each $P_i$ is a degree $k$ polynomial of the output bits of $f$ and $f$ is $d$-local, $P$ is a degree $k$ polynomial of the output bits of $f$, and $P(f(x,\rho))$ is degree $k\cdot d$ polynomial of $x$.
By \Cref{fct:low-deg_beyond_exp}, with probability at least $2^{-O(kd)}$ over $x\sim\bin^{[m]\setminus S}$, we have
$$
P(f(x,\rho))\ge\E_x\sbra{P(f(x,\rho))}\ge\E_{z\sim\Ucal_\gamma^n}\sbra{P(z)}+2^{-kd}\cdot R.
$$
Thus by \Cref{lem:dyadic_weight_after_cond} and a union bound, with probability at least $2^{-O(kd)}-\poly(\eps)$, we have
$$
P(f(x,\rho))\ge\E_{z\sim\Ucal_\gamma^n}\sbra{P(z)}+2^{-kd}\cdot R
\quad\text{and}\quad
\abs{\frac{|f(x,\rho)|}n-\gamma}\le\frac1{n^{1/(800d)}}.
$$
Recall that $|S|\le dA$.
Hence by randomizing also the coordinates in $S$, with probability at least $2^{-dA}\pbra{2^{-O(kd)} - \poly(\eps)}$ over $y\sim\bin^m$, we have
$$
P(f(y))\ge\E_{z\sim\Ucal_\gamma^n}[P(z)]+2^{-kd}\cdot R
\quad\text{and}\quad
\abs{\frac{|f(y)|}n-\gamma}\le\frac1{n^{1/(800d)}}.
$$
Since $f(\Ucal^m)$ is $\eps$-close to a symmetric distribution $\Dcal$ over $\bin^n$, we also have that with probability at least $2^{-dA}\pbra{2^{-O(kd)} - \poly(\eps)} - \eps$ over $w\sim\Dcal$, 
\begin{align}\label{eq:lem:independence_after_cond_4}
P(w)\ge\E_{z\sim\Ucal_\gamma^n}[P(z)]+2^{-kd}\cdot R
\quad\text{and}\quad
\abs{\frac {|w|}{n}-\gamma}\le\frac1{n^{1/(800d)}}.
\end{align}

Call an input $w \in \bin^n$ \emph{large} if 
\[
    P(w)\ge\E_{z\sim\Ucal_\gamma^n}[P(z)]+2^{-kd}\cdot R.
\]
We will show that it is very unlikely for $w \sim \Dcal$ to be large while having weight close to $\gamma n$.
Recall that $T$ is the union of the output bits in each of the $R$ $k$-tuples with the wrong distribution, and define the distribution $\Dcal^*$ to be the marginal distribution of $\Dcal$ on $T$ conditioned on $w \sim \Dcal$ satisfying
\begin{equation}\label{eq:w_bias_close_to_gamma}
    \abs{\frac{|w|}{n}-\gamma}\le\frac1{n^{1/(800d)}}.
\end{equation}
Then by sequentially coupling the $t \coloneqq |T| = Rk$ bits based on their marginals, we have
\begin{align*}
\tvdist{\Ucal_\gamma^T-\Dcal^*}
&\le\sum_{i=1}^t\max_{\substack{0\le j\le i-1 \\ w \in \supp{\Dcal^*}}}\abs{\gamma-\frac{|w|-j}{n-i+1}}
\tag{by union bound}\\
&\le\sum_{i=1}^t\max_{w \in \supp{\Dcal^*}}\pbra{\abs{\gamma-\frac{|w|}{n}}+\abs{\frac{|w|}{n}-\frac{|w|-i+1}{n-i+1}}}\\
&\le\sum_{i=1}^t\pbra{\max_{w \in \supp{\Dcal^*}}\abs{\gamma-\frac{|w|}{n}}+\frac t{n-t+1}}\\
&\le\frac t{n^{1/(800d)}}+\frac{t^2}{n-t+1}.
\tag{by \Cref{eq:w_bias_close_to_gamma}}
\end{align*}
Recall that $P=\sum_{i\in[R]}P_i$ where the $P_i\colon\bin^n\to\bin$ are supported on disjoint sets of $k$-tuples.
Thus \Cref{fct:hoeffding} implies
\[
    \Pr_{x\sim\Ucal_\gamma^n}\sbra{x \text{ is large}} = \Pr_{x\sim\Ucal_\gamma^n}\sbra{\sum_{i \in [R]} P_i(x)-\E_{z\sim\Ucal_\gamma^n}[P_i(z)] \ge 2^{-kd}\cdot R} \le \exp\cbra{-\frac{R}{2^{2kd}}}.
\]
Observing that $\sum_{i} P_i$ only depends on the bits in $T$, we can upper bound the event in \Cref{eq:lem:independence_after_cond_4} by
\begin{align*}
\Pr_{w\sim\Dcal}\sbra{w \text{ is large}
\land
\abs{\frac{|w|}n-\gamma}\le\frac1{n^{1/(800d)}}} &\le
\Pr_{w\sim\Dcal}\sbra{w \text{ is large}
\mid
\abs{\frac{|w|}n-\gamma}\le\frac1{n^{1/(800d)}}}\\
&\le \Pr_{x\sim\Ucal_\gamma^n}\sbra{x \text{ is large}} + \tvdist{\Ucal_\gamma^T-\Dcal^*} \\
&\le \exp\cbra{-\frac{R}{2^{2kd}}} + \frac t{n^{1/(800d)}}+\frac{t^2}{n-t+1}.
\end{align*}
Combining with the lower bound from \Cref{eq:lem:independence_after_cond_4}, we have
$$
\exp\cbra{-\frac{R}{2^{2kd}}} + \frac{Rk}{n^{1/(800d)}}+\frac{(Rk)^2}{n-Rk+1}\ge 2^{-dA}\pbra{2^{-O(kd)} - \poly(\eps)} - \eps.
$$
By our assumptions on the size of $n, k,$ and $\eps$, we have $R\le O_{d,k,A}(1)$ as desired.
This completes the proof of \Cref{prop:independence_after_cond}.
\end{proof}

\subsection{Kolmogorov Distance}\label{subsec:kol_dist}

Our second step toward proving \Cref{thm:main} is to show that the output weight distribution of $f(\Ucal^m)$ is close in Kolmogorov distance\footnote{Recall the \emph{Kolmogorov distance} between two distributions $\Pcal$, $\Qcal$ is given by $\sup_{t \in \Rbb}|\Pr_{x\in \Pcal}[x \ge t] - \Pr_{y\in \Qcal}[y \ge t]|$.} to some binomial distribution $\Bin(n,\gamma)$, where $\gamma$ is an integer multiple of $2^{-d}$.
Moreover in the case of $\gamma = 1/2$, we will show that these distributions are close even accounting for parity.

We will require the fact that biased $k$-wise independence fools a simple type of threshold function.
In particular, we will use the following special case of \cite[Theorem I.5]{gopalan2010fooling}.

\begin{lemma}\label{lem:biased_kol_dist}
    Let $\gamma \in (0,1), k \in \Nbb,$ and $t \in \Rbb^{\ge 0}$. If $\Dcal$ is a distribution on $\bin^n$ such that every $k$-tuple of bits in $[n]$ has distribution $\Ucal_{\gamma}^k$, then
    \[
        \left|\Pr_{x\sim \Dcal}\sbra{x_1 + \cdots + x_n \ge t} - \Pr_{x\sim \Ucal_\gamma^n}\sbra{x_1 + \cdots + x_n \ge t}\right| \le O\pbra{\frac{\polylog(k)}{\sqrt{k}\cdot \poly(\min\cbra{\gamma, 1-\gamma})}}.
    \]
\end{lemma}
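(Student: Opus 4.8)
The plan is to recognize the event $\cbra{x_1 + \cdots + x_n \ge t}$ as a halfspace and invoke the fooling theorem of Gopalan, O'Donnell, Wu, and Zuckerman directly. First I would dispose of the degenerate cases: if $t \le 0$ then both probabilities equal $1$, and if $t > n$ then both equal $0$, so in either case the left-hand side vanishes and there is nothing to prove; hence assume $0 < t \le n$. In this range, $h(x) \coloneqq \indicator\sbra{x_1 + \cdots + x_n \ge t}$ is a halfspace (a degree-one threshold function) on $\bin^n$, so that $\Pr_{x\sim\Pcal}\sbra{x_1 + \cdots + x_n \ge t} = \E_{x\sim\Pcal}\sbra{h(x)}$ for any distribution $\Pcal$ over $\bin^n$, and in particular for $\Pcal = \Dcal$ and $\Pcal = \Ucal_\gamma^n$.

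Next I would observe that the hypothesis ``every $k$-tuple of coordinates of $\Dcal$ has distribution $\Ucal_\gamma^k$'' says precisely that $\Dcal$ is a $k$-wise independent distribution whose one-dimensional marginals are $\gamma$-biased; equivalently, $\Dcal$ and the product distribution $\Ucal_\gamma^n$ assign equal expectation to every function depending on at most $k$ of the coordinates. This is exactly the form of input required by \cite[Theorem I.5]{gopalan2010fooling}, which guarantees that under the $\gamma$-biased product measure any such distribution $\delta$-fools every halfspace with $\delta = O\pbra{\polylog(k)/\pbra{\sqrt{k}\cdot\poly(\min\cbra{\gamma,1-\gamma})}}$. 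Applying this with the particular halfspace $h$ yields $\abs{\E_{x\sim\Dcal}\sbra{h(x)} - \E_{x\sim\Ucal_\gamma^n}\sbra{h(x)}} \le \delta$, which is exactly the asserted inequality.

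The only real content is bookkeeping: checking that the quantitative error bound in whatever precise form of \cite[Theorem I.5]{gopalan2010fooling} we cite can be absorbed into the stated expression $\polylog(k)/\pbra{\sqrt{k}\cdot\poly(\min\cbra{\gamma,1-\gamma})}$ (folding any lower-order additive terms into the $O(\cdot)$), and confirming that the theorem's hypothesis, namely agreeing with a $\gamma$-biased product distribution on all $k$-juntas, is literally our assumption on $\Dcal$. I do not expect a genuine obstacle here: the lemma is essentially a restatement of the cited theorem specialized to the ``sum of all $n$ coordinates'' halfspace.
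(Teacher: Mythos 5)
Your proposal is exactly the paper's approach: the paper states this lemma as a "special case of \cite[Theorem I.5]{gopalan2010fooling}" with no further proof, and your bookkeeping remarks (the event is a halfspace, the hypothesis is precisely biased $k$-wise independence) correctly spell out why the citation applies. The proposal is correct.
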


We now leverage \Cref{lem:biased_kol_dist} to show the output weight of $f(\Ucal^m)$ is close in Kolmogorov distance to a certain binomial distribution.
Note the following can be viewed as a generalization of Lemma 5.2 in (the arXiv version of) \cite{kane2024locally2} to arbitrary biases and mixtures.
The proof there holds with minor modifications, but we reproduce it for completeness.

\begin{proposition}\label{prop:kol_dist}
    Let $k\ge 2$ and $\ell \ge 1$ be integers, and let $f_1, \dots, f_\ell\colon\bin^m\to\bin^n$ be $d$-local functions.
    For each $f_i$, suppose no input bit affects more than $n/A$ output bits, and suppose there exists a subset $T_i\subseteq[n]$ of size $|T_i|\le O_{d,k,A}(1)$ such that every $k$-tuple of output bits in $[n]\setminus T_i$ has distribution $\Ucal_{\gamma}^k$, where $\gamma=a/2^d$ for some fixed integer $0 < a < 2^d$. 
    If $n$ is sufficiently large in terms of $d$, $k$, and $A$, then any mixture $F$ of the $f_i$'s and any $t \in \Rbb$ satisfy
    \[
        \left|\Pr\sbra{|F(\Ucal^m)| \ge t} - \Pr\sbra{\Bin(n, \gamma) \ge t}\right| \le O\pbra{\frac{\polylog(k)}{\sqrt{k}\cdot \poly(\min\cbra{\gamma, 1-\gamma})}}.
    \]
    Moreover if $a = 2^{d-1}$, then there exists an $\eta \in [0,1]$ such that for any $\delta \in (0,1/2)$ and $t\in \Rbb$, we have
    \[
        \left|\Pr\sbra{|F(\Ucal^m)| \ge t \text{ and } |F(\Ucal^m)| \text{ is even}} - \eta\Pr\sbra{\Bin(n, 1/2) \ge t}\right|
    \]
    and 
    \[
        \left|\Pr\sbra{|F(\Ucal^m)| \ge t \text{ and } |F(\Ucal^m)| \text{ is odd}} - (1-\eta)\Pr\sbra{\Bin(n, 1/2) \ge t}\right|
    \]
    are at most
    \[
        O\pbra{\frac{\log(1/\delta)^{O(d)^d}}{\sqrt{A\delta}} + \frac{\polylog(k)}{\sqrt{k}} + \delta}.
    \]
\end{proposition}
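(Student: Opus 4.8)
The plan is to reduce both bounds to a single function by convexity, dispatch the unconditional Kolmogorov bound via \Cref{lem:biased_kol_dist}, and then handle the parity-refined bound through the degree-$d$ parity polynomial together with \cite[Theorem 3.1]{chattopadhyay2020xor}. For a fixed threshold $t$, the quantity $\Pr[|F(\Ucal^m)|\ge t]-\Pr[\Bin(n,\gamma)\ge t]$ is linear in the mixing weights $\alpha_i$ of $F=\sum_i\alpha_i f_i$, so its absolute value is convex and it suffices to prove the first bound for a single $f_i$. For the parity statement, set $\eta:=\Pr[|F(\Ucal^m)|\text{ is even}]=\sum_i\alpha_i\cdot\Pr[|f_i(\Ucal^m)|\text{ is even}]$ (note $\eta$ does not depend on $\delta$, as the statement requires); then the parity quantity equals $\sum_i\alpha_i\pbra{\Pr[|f_i(\Ucal^m)|\ge t\text{ and even}]-\Pr[|f_i(\Ucal^m)|\text{ is even}]\cdot\Pr[\Bin(n,1/2)\ge t]}$, again reducing to a single $f_i$. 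Fix such an $f$ with exceptional set $T$ of size $O_{d,k,A}(1)$ and bias $\gamma=a/2^d$.

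For the first bound, write $|f(\Ucal^m)|=X+Y$ with $X=\sum_{j\notin T}y_j$ and $Y=\sum_{j\in T}y_j\in[0,|T|]$. Since $X$ is a sum of $k$-wise independent bias-$\gamma$ bits, \Cref{lem:biased_kol_dist} bounds the Kolmogorov distance between $X$ and $\Bin(n-|T|,\gamma)$ by $O\pbra{\polylog(k)/(\sqrt{k}\cdot\poly(\min\cbra{\gamma,1-\gamma}))}$. Adding $Y$ shifts every threshold by at most $|T|$, at cost $O\pbra{|T|/\sqrt{\gamma(1-\gamma)n}}$ by \Cref{fct:bin_interval}, and replacing $\Bin(n-|T|,\gamma)$ by $\Bin(n,\gamma)$ costs another $O\pbra{|T|/\sqrt{\gamma(1-\gamma)n}}$ (couple by appending $|T|$ fresh bias-$\gamma$ bits and apply \Cref{fct:bin_interval}). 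Since $|T|=O_{d,k,A}(1)$, for $n$ large in terms of $d,k,A$ these error terms drop below $\polylog(k)/\sqrt k$, which yields the first bound.

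For $\gamma=1/2$ (i.e.\ $a=2^{d-1}$), the parity $|f(x)|\bmod 2$ is an $\Fbb_2$-polynomial $q$ of degree at most $d$ (the sum of the $n$ output-bit polynomials). Apply \cite[Theorem 3.1]{chattopadhyay2020xor} to $q$ with error parameter $\delta$. If $q$ is $\delta$-close to a constant $b\in\bin$, say $b=0$, then $\Pr[|f|\text{ even}]\ge1-\delta$, so $\Pr[|f|\ge t\text{ and even}]=\Pr[|f|\ge t]-\Pr[|f|\ge t\text{ and odd}]=\Pr[|f|\ge t]\pm\delta$, which by the first bound is $\Pr[\Bin(n,1/2)\ge t]\pm(\delta+O(\polylog(k)/\sqrt k))$, while $\Pr[|f|\text{ even}]\cdot\Pr[\Bin(n,1/2)\ge t]=\Pr[\Bin(n,1/2)\ge t]\pm\delta$; the odd expression and the case $b=1$ are symmetric. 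Otherwise the theorem produces a set $R$ of at most $\log(1/\delta)^{O(d)^d}$ input bits such that, for all but a $\delta$-fraction of restrictions $\tau$ of the other input bits, re-randomizing the bits of $R$ re-randomizes the parity of $|f|$. Let $U:=N_G(R)\cap[n]$, so $|U|\le|R|\cdot n/A$; write $x=(\tau,r)$ with $r\in\bin^R$, and split $|f(\tau,r)|=W(\tau)+Z(\tau,r)$ where $W$ sums the $k$-wise independent bias-$1/2$ bits in $[n]\setminus(U\cup T)$ (independent of $r$) and $Z\in[0,|U|+|T|]$ sums the rest.

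The crux is to show that over $r$, and for most $\tau$, the parity of $|f(\tau,r)|$ is essentially independent of the threshold event $\cbra{|f(\tau,r)|\ge t}$. The obstacle is that $Z$ — which carries the entire $r$-dependence of the parity — has range up to $|U|\approx n/A$, so re-randomizing $R$ to flip the parity also displaces the weight. I would control this displacement by combining (i) the $k$-wise-independence concentration of the $U\setminus T$ contribution to the weight (\Cref{fct:k-moments}), which bounds the typical displacement by $\approx\sqrt{|U|}$ up to failure probability polynomial in $\delta$, with (ii) the first bound together with \Cref{fct:bin_interval}, which cap the mass $|f(\Ucal^m)|$ assigns to any length-$\ell$ window by $O(\ell/\sqrt n+\polylog(k)/\sqrt k)$; taking $\ell\approx\sqrt{|U|/\delta}$ and using $|U|/n\le|R|/A$ produces the $\log(1/\delta)^{O(d)^d}/\sqrt{A\delta}$ term. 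Averaging the per-$\tau$ estimates over the good $\tau$ (the bad $\tau$ contributing $\le\delta$), using $\E_\tau\Pr_r[\cdot]=\Pr_x[\cdot]$, the first bound, and the definition of $\Pr[|f|\text{ even}]$, gives the claimed bound for $f$; the odd case is identical, and undoing the convexity reduction with $\eta=\sum_i\alpha_i\cdot\Pr[|f_i(\Ucal^m)|\text{ even}]$ completes the proof. I expect this last step — pinning down precisely how the parity-flipping re-randomization of $R$ displaces the weight, and extracting the $1/\sqrt{A\delta}$ dependence — to be the main technical hurdle; everything else is bookkeeping on top of \cite{gopalan2010fooling,chattopadhyay2020xor} and the binomial estimates of \Cref{sec:prelim}.
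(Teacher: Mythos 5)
Your strategy matches the paper's in its ingredients: convexity reduction to a single $f_i$, the GOWZ Kolmogorov bound plus a $|T|$-shift for the first statement, and parity re-randomization via \cite[Theorem 3.1]{chattopadhyay2020xor} for the second. The first bound and the bookkeeping (the choice of $\eta$, the case split when $q$ is near-constant — which is actually unnecessary since the theorem produces a set $R$ unconditionally) are fine. But the step you identify as the ``crux'' has a genuine gap as written. A per-$\tau$ argument (``averaging the per-$\tau$ estimates'') runs into the following obstacle: for a fixed restriction $\tau$ of the non-$R$ inputs, the contribution $Z(\tau,r)$ of the output bits touched by $R$ can range over a window of length $|U|+|T|\approx n|R|/A$ as $r$ varies over $\bin^R$, so the threshold event $\{|f(\tau,r)|\ge t\}$ and the parity $(W(\tau)+Z(\tau,r))\bmod 2$ can be strongly correlated whenever $W(\tau)$ falls in a window of length $\Theta(n|R|/A)$ around $t$; the mass of $W$ in that window is $\Theta(\sqrt n\,|R|/A)$, which grows with $n$. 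The $k$-moment concentration of $Z$ that you invoke controls the deviation of $Z$ from its mean \emph{over the joint choice} of $(\tau,r)$, not the per-$\tau$ spread $\max_r Z(\tau,r)-\min_r Z(\tau,r)$; converting one into the other by a union bound over $r$ costs $2^{|R|}$, which is superpolynomial in $1/\delta$.

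The paper sidesteps per-$\tau$ reasoning entirely by proving a one-sided lower bound. With $S$ the set of output bits in $T$ or touched by $R$ and $C\coloneqq|S|/2-\sqrt{|S|/\delta}-|T|$, it defines the relaxed threshold event $\mathsf{BIG}\coloneqq\{|f(\Ucal^m)[[n]\setminus S]|\ge t-C\}$, which is measurable in the non-$R$ input bits. Then
\[
\Pr\sbra{|f|\ge t\text{ and even}}\ \ge\ \Pr\sbra{\mathsf{BIG}}\cdot\Pr\sbra{\text{even}\mid\mathsf{BIG}}\ -\ \Pr\sbra{|f[S]|<C},
\]
where the first factor is compared to $\Pr[\Bin(n,1/2)\ge t]$ via \Cref{lem:biased_kol_dist} and a shift argument costing $\Theta\pbra{(|S|/2-C)/\sqrt n}\le\log(1/\delta)^{O(d)^d}/\sqrt{A\delta}$ (exactly the term your sketch aims to extract), the conditional probability is controlled by the XOR lemma because $\mathsf{BIG}$ depends only on $x_{R^c}$, and the last term is at most $\delta$ by Chebyshev on the $2$-wise-independent bits in $S\setminus T$. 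The matching upper bound comes from applying the same one-sided inequality to the complemented output $1^n-f$. This relaxed, $R^c$-measurable threshold event is the device your proposal is missing; your parameter choices ($\sigma\approx\sqrt{|U|/\delta}$, the $|R|/A$ trade) are exactly the right ones once it is in place.
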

\begin{proof}
    It suffices to prove both results for arbitrary $f_i$, henceforth denoted $f$. We also use $T$ to denote $T_i$ in this simpler setting.
    
    We will first handle the general case, assuming $\gamma \le 1/2$ for notational simplicity, and afterwards address the $\gamma = 1/2$ case.
    Our main tool will be \Cref{lem:biased_kol_dist}. 
    By assumption, each $k$-tuple of output bits in $[n]\setminus T$ has distribution $\Ucal_{\gamma}^k$ under $f(\Ucal^m)$, so \Cref{lem:biased_kol_dist} implies
    \begin{align*}
        \Pr\sbra{|f(\Ucal^m)| \ge t} &\ge \Pr\sbra{|f(\Ucal^m)[[n]\setminus T]| \ge t} \\
        &\ge \Pr\sbra{\Bin(n-|T|, \gamma) \ge t} - O\pbra{\frac{\polylog(k)}{\sqrt{k}\cdot \poly(\gamma)}} \tag{by \Cref{lem:biased_kol_dist}} \\
        &\ge\Pr\sbra{\Bin(n, \gamma) \ge t + |T|} - O\pbra{\frac{\polylog(k)}{\sqrt{k}\cdot \poly(\gamma)}} \\
        &= \Pr\sbra{\Bin(n, \gamma) \ge t} - \Pr\sbra{t \le \Bin(n, \gamma) < t + |T|} - O\pbra{\frac{\polylog(k)}{\sqrt{k}\cdot \poly(\gamma)}} \\
        &\ge \Pr\sbra{\Bin(n, \gamma) \ge t} - O\pbra{\frac{|T|}{\sqrt{\gamma(1-\gamma)n}} + \frac{\polylog(k)}{\sqrt{k}\cdot \poly(\gamma)}} \tag{by \Cref{fct:bin_interval}} \\ 
        &\ge \Pr\sbra{\Bin(n, \gamma) \ge t} - O\pbra{\frac{\polylog(k)}{\sqrt{k}\cdot \poly(\gamma)}},
    \end{align*}
    where the final inequality follows from our assumption that $n$ is sufficiently large.
    A similar upper bound follows from comparing the complement distribution $1^n - f(\Ucal^m)$ to the binomial distribution with bias $1-\gamma$.
    This concludes the proof of the general case.

    We now turn to the case of $\gamma = 1/2$. 
    The proof will proceed similarly, but with some additional technical work.
    Here, we require a structural result about low degree $\Fbb_2$-polynomials.
    The following is essentially \cite[Theorem 3.1]{chattopadhyay2020xor}.
    \begin{theorem}\label{parity randomization theorem}
    Let $p$ be a degree-$d$ polynomial over $\mathbb{F}_2^n$ and $\delta \in (0,1/2)$. There exists a subset $R\subseteq [n]$ with $|R| \leq \log(1/\delta)^{O(d)^d}$ such that if we write $p(x) = p(x_{R^c}, x_R)$ where $x_R$ and $x_{R^c}$ are the coordinates in $R$ and not in $R$ respectively, then with probability at least $1-\delta$ over the choice of a random value of $x_{R^c}$ we have that
    \begin{equation}\label{parity randomization theorem eqn}
        \left|\Pr_{x_R}\sbra{p(x_{R^c},x_R) = 1} - \Pr_x\sbra{p(x)=1}\right| < \delta.
    \end{equation}
    \end{theorem}
    In words, \Cref{parity randomization theorem} says that we can (with high probability) re-randomize the output of the polynomial $p$ by simply re-randomizing the input coordinates of a small set $R$.

    Define $p\colon \bin^m \to \bin$ by $p(x) = |f(x)| \bmod 2$ (i.e., the parity of $f$'s output).
    Applying \Cref{parity randomization theorem} yields a set $R$ of at most $\log(1/\delta)^{O(d)^d}$ input bits.
    Let $S \subseteq [n]$ be the set of output bits in $T$ or affected by input bits in $R$.
    Note that
    \[
        |S| \le \frac{n|R|}A + |T| \le \frac{n\log(1/\delta)^{O(d)^d}}A+O_{d,k,A}(1)=\frac{n\log(1/\delta)^{O(d)^d}}A,
    \]
    where we used the fact that $n$ is sufficiently large in terms of $d,k$ and $A$ for the last equality.
    
    We will set $\eta = \Pr_{x\sim \Ucal^m}[p(x) = 0]=\Pr[|f(\Ucal^m)|\text{ is even}]$ and show
    \begin{align*}
        &\phantom{\le}\Pr\sbra{|f(\Ucal^m)| \ge t \text{ and } |f(\Ucal^m)| \text{ is even}} - \eta\Pr\sbra{\Bin(n, 1/2) \ge t}\\
        &\ge -O\pbra{\frac{\log(1/\delta)^{O(d)^d}}{\sqrt{A\delta}} + \frac{\polylog(k)}{\sqrt{k}} + \delta}.
    \end{align*}
    The case of odd $|f(\Ucal^m)|$ is almost identical, and as above, a similar upper bound follows from analyzing the complement distribution $1^n - f(\Ucal^m)$.

    Before proceeding further, we define several variables and events for the sake of future clarity. 
    Let 
    \begin{equation}\label{eq:prop:kol_dist_def_C}
    C\coloneqq\frac{|S|}2 - \sqrt{\frac{|S|}\delta}-|T|\ge\frac{|S|}2-\frac{\sqrt n\cdot\log(1/\delta)^{O(d)^d}}{\sqrt{A\delta}},
    \end{equation}
    where we again used the fact that $n$ is sufficiently large in terms of $d$, $k$, and $A$.
    Let \textsf{EVEN} be the event that $|f(\Ucal^m)|$ is even (and similarly for \textsf{ODD}). 
    Additionally, let \textsf{BIG} be the event that $|f(\Ucal^m)[[n]\setminus S]| \ge t-C$. Finally, let \textsf{GOOD} be the event that $x_{R^c}$ satisfies \Cref{parity randomization theorem eqn} (and \textsf{BAD} be the complement event). We have
    \begin{align}
        \Pr\sbra{|f(\Ucal^m)| \ge t \textrm{ and } \textsf{EVEN}} &\ge \Pr\sbra{\textsf{BIG} \textrm{ and } |f(\Ucal^m)[S]| \ge C \textrm{ and } \textsf{EVEN}} 
        \notag\\
        &= \Pr\sbra{\textsf{BIG} \textrm{ and } \textsf{EVEN}} - \Pr\sbra{\textsf{BIG} \textrm{ and } |f(\Ucal^m)[S]| < C \textrm{ and } \textsf{EVEN}} 
        \notag\\
        &\ge \Pr\sbra{\textsf{BIG}}\cdot \Pr\sbra{\textsf{EVEN} \bigm| \textsf{BIG}} - \Pr\sbra{|f(\Ucal^m)[S]| < C}.\label{eq:prop:kol_dist_t_and_even}
    \end{align}
    By assumption, each $k$-tuple of output bits in $[n]\setminus T$ has distribution $\Ucal_{1/2}^k$.
    Since $T\subseteq S$, \Cref{lem:biased_kol_dist} implies
    \begin{equation}\label{eq:prop:kol_dist_big_first}
    \Pr\sbra{\textsf{BIG}} \ge \Pr\sbra{\Bin(n-|S|, 1/2) \ge t-C} - O\pbra{\frac{\polylog(k)}{\sqrt{k}}}.
    \end{equation}
    Now let $X\sim\Bin(n-|S|,1/2)$ and $Y\sim\Bin(|S|,1/2)$. Then $X+Y\sim\Bin(n,1/2)$ and
    \begin{align*}
    &\phantom{\le}\Pr\sbra{\Bin(n-|S|, 1/2) \ge t-C}\\
    &=\Pr\sbra{X\ge t-C}=\Pr\sbra{X+Y\ge t-C+Y}\\
    &\ge \Pr[X+Y\ge t]-\Pr[t\le X+Y<t-C+Y]\\
    &=\Pr[\Bin(n,1/2)\ge t]-\E_Y\sbra{\Pr_X[t\le X+Y<t-C+Y]}\\
    &\ge\Pr[\Bin(n,1/2)\ge t]-O\pbra{\E_Y\sbra{\frac{(Y-C)\cdot1_{Y\ge C}}{\sqrt n}}}
    \tag{by \Cref{fct:bin_interval} and independence of $X,Y$}\\
    &=\Pr[\Bin(n,1/2)\ge t]-O\pbra{\E_Y\sbra{\frac{Y-C}{\sqrt n}}+\E_Y\sbra{\frac{(C-Y)\cdot1_{Y<C}}{\sqrt n}}}\\
    &\ge\Pr[\Bin(n,1/2)\ge t]-O\pbra{\E_Y\sbra{\frac{Y-C}{\sqrt n}}+\E_Y\sbra{\frac{\abs{Y-|S|/2}}{\sqrt n}}}
    \tag{since $C\le|S|/2$}\\
    &\ge\Pr[\Bin(n,1/2)\ge t]-O\pbra{\frac{\pbra{|S|/2-C}+\sqrt{|S|}}{\sqrt n}}
    \tag{since $Y\sim\Bin(|S|,1/2)$}\\
    &=\Pr[\Bin(n,1/2)\ge t]-O\pbra{\frac{|S|/2-C}{\sqrt n}}
    \tag{since $C\le|S|/2-\sqrt{|S|/\delta}$}\\
    &\ge\Pr[\Bin(n,1/2)\ge t]-O\pbra{\frac{\log(1/\delta)^{O(d)^d}}{\sqrt{A\delta}}}.
    \tag{by \Cref{eq:prop:kol_dist_def_C}}
    \end{align*}
    Combining with \Cref{eq:prop:kol_dist_big_first}, we have
    \begin{align}
        \Pr\sbra{\textsf{BIG}} \ge \Pr\sbra{\Bin(n, 1/2) \ge t} - O\pbra{\frac{\log(1/\delta)^{O(d)^d}}{\sqrt{A\delta}}+ \frac{\polylog(k)}{\sqrt{k}}}. \label{eq:cdf_f_Bc}
    \end{align}

    To lower bound the conditional probability $\Pr\sbra{\textsf{EVEN} \bigm| \textsf{BIG}}$, it will be slightly more convenient to upper bound $\Pr\sbra{\textsf{ODD} \bigm| \textsf{BIG}}$. For this, observe that the events $\textsf{BIG}$ and $\textsf{GOOD}$ depend only on input bits in $R^c$.
    Hence by \Cref{parity randomization theorem}, conditioned on events $\textsf{BIG}$ and $\textsf{GOOD}$, event $\textsf{ODD}$ (which rerandomizes input bits in $R$) happens with probability at most $1-\eta+\delta$.
    Moreover, event $\textsf{BAD}$ itself happens with probability at most $\delta$.
    Therefore
    \begin{align*}
        \Pr\sbra{\textsf{ODD} \bigm| \textsf{BIG}} &= \Pr\sbra{\textsf{GOOD} \bigm| \textsf{BIG}}\cdot \Pr\sbra{\textsf{ODD} \bigm| \textsf{GOOD} \text{ and } \textsf{BIG}} \\
        & \ \ \: + \Pr\sbra{\textsf{BAD} \bigm| \textsf{BIG}}\cdot \Pr\sbra{\textsf{ODD} \bigm| \textsf{BAD} \text{ and } \textsf{BIG}} \\
        &\le \Pr\sbra{\textsf{ODD} \bigm| \textsf{GOOD}\text{ and } \textsf{BIG}} + \Pr\sbra{\textsf{BAD} \bigm| \textsf{BIG}} \\
        &\le \Pr\sbra{\textsf{ODD} \bigm| \textsf{GOOD}\text{ and } \textsf{BIG}} + \frac{\Pr\sbra{\textsf{BAD}}}{\Pr\sbra{\textsf{BIG}}} \\
        &\le 1-\eta + \delta + \frac{\delta}{\Pr\sbra{\textsf{BIG}}}
        \le 1-\eta + \frac{2\delta}{\Pr\sbra{\textsf{BIG}}}.
    \end{align*}
    Combining with \Cref{eq:cdf_f_Bc}, we have that
    \begin{align*}
        \Pr\sbra{\textsf{BIG}}\cdot \Pr\sbra{\textsf{EVEN} \bigm| \textsf{BIG}}
        &\ge \eta \Pr\sbra{\Bin(n,1/2) \ge t} - O\pbra{\frac{\log(1/\delta)^{O(d)^d}}{\sqrt{A\delta}} + \frac{\polylog(k)}{\sqrt{k}} + \delta}.
    \end{align*}
    In light of \Cref{eq:prop:kol_dist_t_and_even} and to complete our proof, it remains to bound $\Pr\sbra{|f(\Ucal^m)[S]| < C}$. We have
    \begin{align*}
        \Pr\sbra{|f(\Ucal^m)[S]| < C} &\le \Pr\sbra{|f(\Ucal^m)[S\setminus T]| < \frac{|S|}{2} - \sqrt{\frac{|S|}{\delta}}-|T|} \\
        &\le \Pr\sbra{|f(\Ucal^m)[S\setminus T]| < \frac{|S\setminus T|}{2} - \sqrt{\frac{|S \setminus T|}{\delta}}} < \delta,
    \end{align*}
    where the final inequality follows from Chebyshev's inequality and the observation that the outputs in $S\setminus T$ are $2$-wise independent.
\end{proof}

\subsection{Approximate Continuity}\label{subsec:approx_continuity}

Our third step in proving \Cref{thm:main} is a type of continuity result for the output weight of $f(\Ucal^m)$. 
The argument will require three ingredients from prior works.
The first allows us to find many independent collections of output bits.
Recall we may view the input-output dependencies of a $d$-local function $f\colon\bin^m\to\bin^n$ as a hypergraph on the vertex set $[n]$ with one edge for each of the $m$ input bits containing all the output bits it affects.
By the locality assumption, no vertex has degree more than $d$.

\begin{lemma}[{\cite[Corollary 4.11]{kane2024locality}}]\label{lem:neighborhoods}
Let $G$ be a hypergraph on $n$ vertices with maximum degree at most $d$. For any increasing function $F\colon\mathbb{N}\rightarrow \mathbb{N}$, there exists a set $S$ of edges in $G$ whose removal yields at least\footnote{Recall $O_{d,F}(1)$ denotes a quantity whose value is constant once $d$ and $F$ are fixed.} $r = n/O_{d,F}(1)$ vertices in $G$ whose neighborhoods have size at most $t=O_{d,F}(1)$ and are pairwise non-adjacent, and satisfies $|S| \leq r/F(t)$.
\end{lemma}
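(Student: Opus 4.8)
The plan is to delete a carefully chosen set $S$ of edges so that $G' := G \setminus S$ becomes ``locally sparse,'' and then greedily harvest a large pairwise-non-connected family of bounded-neighborhood vertices from $G'$. The basic quantitative lever is the handshake identity $\sum_e |e| = \sum_{i \in [n]} |I_G(i)| \le dn$: for any threshold $M$, at most $dn/M$ edges have size larger than $M$, so deleting all of them costs at most $dn/M$ deletions and leaves every nonempty neighborhood of size at most $1 + d(M-1) < dM$. In such a hypergraph, if $N^{+}(v)$ denotes the set of vertices lying in some edge meeting $N_{G'}(v)$, then $v$ is non-connected to every vertex outside $\bigcup_{w \in N^{+}(v)} N_{G'}(w)$, a set of size $(dM)^{O(1)}$; repeatedly picking a surviving vertex and discarding this ``distance-two'' set then yields $r \ge n/(dM)^{O(1)}$ vertices with pairwise-non-connected neighborhoods, each of size at most $t := dM$.

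This one-shot choice is, however, far too wasteful for the target bound $|S| \le r/F(t)$: it only makes $|S|/r$ a fixed polynomial in $d$ and $t$, whereas $1/F(t)$ can be driven arbitrarily small. The deletion budget must instead exploit that large deleted edges pay for themselves — deleting an edge of size $s$ removes up to $s-1$ vertices from the neighborhoods of all $s$ of its members at once. To use this I would argue recursively (on $d$, or on the number of edges): if $G$ already contains a linear-sized set of vertices whose distance-two balls are bounded by a constant depending only on $d$ and $F$, take $S = \emptyset$ and harvest directly; otherwise there are enough large edges that deleting a well-chosen collection of them — each charged against the many vertices whose neighborhood it shrinks — strictly simplifies $G$ while keeping the ratio (deleted edges)/(cleaned vertices) below $1/F(t)$, and one recurses. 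The extreme cases make this plausible: when the obstructing edges are genuinely huge one can afford to delete every edge through a linear set of vertices and isolate them, taking $t$ as small as $1$; when they are already of constant size, nothing need be deleted and $t$ can be that constant; the recursion interpolates, with $t$ permitted to grow with $d$ and $F$.

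I expect the accounting in the recursive step to be the main obstacle: proving, uniformly over all $d$-local hypergraphs, that $\Omega(n)$ vertices can be brought to neighborhood size $t = O_{d,F}(1)$ using at most $r/F(t)$ deletions. The hard instances are the intermediate ``cluster-like'' configurations — neither one colossal edge (cheap to delete) nor a bounded-size hypergraph (nothing to delete) — where the deletion budget and the neighborhood bound $t$ must be traded against each other and where the arbitrary growth of $F$ forces $t$ to scale with $F$. Once an economical $S$ is secured, re-running the greedy distance-two harvest on $G \setminus S$ is routine and produces the required pairwise-non-connected family of size $r = n/O_{d,F}(1)$ with neighborhoods of size at most $t$, with $|S| \le r/F(t)$ holding by construction.
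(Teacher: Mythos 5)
The paper does not actually prove this lemma; it imports it verbatim as \cite[Corollary~4.11]{kane2024locality}. So there is no in-paper argument to compare against, and the question reduces to whether your sketch, on its own, constitutes a proof. It does not, and the gap is exactly where you flag it.

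Your analysis of the one-shot approach is accurate and worth being precise about why it fails: deleting all edges of size $> M$ costs at most $dn/M$ deletions, leaves every remaining neighborhood of size at most roughly $dM =: t$, and the greedy harvest on distance-$3$ balls (note: what you call the ``distance-two'' set $\bigcup_{w\in N^+(v)}N_{G'}(w)$ is really $N^{(3)}_{G'}(v)$) gives $r \ge n/\Theta(t^3)$. Hence $|S|/r = \Theta(d\cdot t^3/M) = \Theta(d^2 t^2)$, a quantity $\ge 1$, while the target is $|S|/r \le 1/F(t) \le 1$. No choice of $M$ can close this gap with a single threshold, so some genuinely different mechanism is needed. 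Your recursive/charging idea is aimed in the right direction, but you explicitly concede that ``the accounting in the recursive step'' is unresolved, and that accounting is the entire content of the lemma; as written, the proposal is a statement of the problem plus a plan, not a proof.

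There is also a concrete soft spot in the ``interpolation'' you describe. You treat the two extremes as (i) a few huge edges (delete them, isolate $\Omega(n)$ vertices, take $t=1$) and (ii) all edges of constant size (delete nothing). But the second extreme is not actually safe: a vertex all of whose incident edges have constant size can still have an enormous distance-$2$ or distance-$3$ ball, because its neighbors' neighbors may lie in much larger edges. So ``nothing need be deleted'' is false there, and the real difficulty lies precisely in this mixed regime, which your sketch identifies as the hard case and then leaves open. The cited proof in \cite{kane2024locality} resolves it by pigeonholing over a short tower of scales chosen adaptively in terms of $d$ and $F$, using $\sum_e |e| \le dn$ to find a level at which the problematic edges are simultaneously few and cheap to pay for; your proposal never supplies the mechanism that makes some such level exist.
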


The neighborhoods from \Cref{lem:neighborhoods} appear in our context as collections of independent groups of output bits.
When we analyze the behavior of their Hamming weight, they become a sum of independent integer random variables.
This sum has a ``continuous'' output distribution unless almost all of the integer random variables are nearly constant modulo some integer $s \ge 2$.
In fact, a similar property holds even if the integer random variables are only noticeably non-constant modulo $s$ for $s \ge 3$, except now the continuity only holds for weights that are an even distance apart.
More formally, we will use the following density comparison result, which is a special case of \cite[Theorem A.1]{kane2024locally2}.

\begin{lemma}[{\cite[Theorem A.1]{kane2024locally2}}]\label{lem:comp_llt}
Let $t\ge1$ be an integer, and let $X_1,\ldots,X_n$ be independent random variables in $\cbra{0,1,\ldots,t}$.
For each $i\in[n]$ and integer $s\ge1$, define $p_{s,i}=\max_{x\in\Zbb}\Pr\sbra{X_i\equiv x\Mod s}$.
Suppose for some $L > 0$,
\begin{equation*}
\sum_{i\in[n]}(1-p_{s,i})\ge L\cdot n
\quad\text{holds for all $s\in\cbra{2,3,\dots, t}$.}
\end{equation*}
Then for any $x\in\Zbb$ and $\Delta\in\Zbb$, we have
$$
\Pr\sbra{\sum_{i\in[n]}X_i=x}-\Pr\sbra{\sum_{i\in[n]}X_i=x+\Delta}\le O_{L,t}\pbra{\frac{|\Delta|}{n}}.
$$
If instead $\sum_{i\in[n]}(1-p_{s,i})\ge L\cdot n$ only holds for $s\in\cbra{3,4,\dots, t}$, the same conclusion still holds for any even $\Delta \in \Zbb$.
\end{lemma}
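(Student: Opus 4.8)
The plan is to prove \Cref{lem:comp_llt} by Fourier analysis on the circle, in the style of a local limit theorem. Write $S=\sum_{i\in[n]}X_i$ and let $\phi_i(\theta)=\E\sbra{e^{2\pi\mathrm{i}X_i\theta}}$ be the characteristic function of $X_i$, so the characteristic function of $S$ is $\hat f(\theta)=\prod_{i\in[n]}\phi_i(\theta)$. Fourier inversion gives
\[
\Pr\sbra{S=x}-\Pr\sbra{S=x+\Delta}=\int_{-1/2}^{1/2}\hat f(\theta)\,e^{-2\pi\mathrm{i}x\theta}\bigl(1-e^{-2\pi\mathrm{i}\Delta\theta}\bigr)\,d\theta,
\]
so it suffices to bound $\int_{-1/2}^{1/2}\abs{\hat f(\theta)}\cdot\abs{1-e^{-2\pi\mathrm{i}\Delta\theta}}\,d\theta$. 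I would use the elementary estimate $\abs{1-e^{-2\pi\mathrm{i}\Delta\theta}}\le\min\cbra{2,\,2\pi\abs{\Delta}\cdot\|\theta\|}$, where $\|\cdot\|$ is distance to $\Zbb$; for the second half of the lemma I instead use that when $\Delta$ is even, $1-e^{-2\pi\mathrm{i}\Delta\theta}$ vanishes at both $0$ and $1/2$, so $\abs{1-e^{-2\pi\mathrm{i}\Delta\theta}}\le\min\cbra{2,\,2\pi\abs{\Delta}\cdot d(\theta)}$ with $d(\theta)$ the distance from $\theta$ to $\tfrac12\Zbb$.

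The core is a pointwise bound on $\abs{\hat f(\theta)}$ via a circle-method dissection of $[-\tfrac12,\tfrac12)$ into a major arc around $0$ and minor arcs around the low-denominator rationals. On the major arc $\abs\theta\le\delta_0$ (a small constant $\delta_0=\delta_0(t)$), Taylor-expanding $\cos$ in $\abs{\phi_i(\theta)}^2=\E\sbra{\cos\pbra{2\pi(X_i-X_i')\theta}}$ (with $X_i'$ an independent copy of $X_i$) gives $\abs{\phi_i(\theta)}\le1-\Omega(\theta^2\Var[X_i])$, hence $\abs{\hat f(\theta)}\le\exp\pbra{-\Omega\pbra{\theta^2\sum_i\Var[X_i]}}$; since $\Var[X_i]\ge\tfrac14(1-\max_j\Pr[X_i=j])\ge\tfrac14(1-p_{s,i})$ for every $s$, the hypothesis forces $\sum_i\Var[X_i]=\Omega_L(n)$, so $\abs{\hat f(\theta)}\le e^{-\Omega_L(n\theta^2)}$ there, and combined with $\abs{1-e^{-2\pi\mathrm{i}\Delta\theta}}\le2\pi\abs{\Delta}\abs\theta$ this region contributes $\int e^{-\Omega_L(n\theta^2)}\cdot2\pi\abs{\Delta}\abs\theta\,d\theta=O_L(\abs{\Delta}/n)$, the dominant term. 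For $\theta$ outside the major arc, Dirichlet's theorem writes $\theta=a/s+\beta$ with $\gcd(a,s)=1$, $2\le s\le Q=O_t(1)$, and $\abs\beta$ as small as we wish (say $\le\eps_0/(st)$); for $Y_i\coloneqq X_i-X_i'\in\cbra{-t,\dots,t}$ with $Y_i\not\equiv0\pmod s$ we get $\|Y_i\theta\|\ge(1-\eps_0)/s$, whence $1-\abs{\phi_i(\theta)}^2=2\E\sbra{\sin^2(\pi Y_i\theta)}\gtrsim s^{-2}\Pr\sbra{Y_i\not\equiv0\pmod s}\gtrsim_t(1-p_{s,i})$ when $s\le t$, and $\gtrsim_t(1-\max_j\Pr[X_i=j])$ when $s>t$ (since then $Y_i\not\equiv0\pmod s$ iff $Y_i\ne0$). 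Either way the relevant hypothesis yields $\sum_i(1-\abs{\phi_i(\theta)}^2)=\Omega_{L,t}(n)$, so $\abs{\hat f(\theta)}\le e^{-\Omega_{L,t}(n)}$ on every minor arc, contributing $O\pbra{e^{-\Omega_{L,t}(n)}}=O(\abs{\Delta}/n)$ for $\abs\Delta\ge1$ (the case $\Delta=0$ being trivial). Summing the arc contributions proves the first assertion.

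For the second assertion we only have the hypothesis for $s\in\cbra{3,\dots,t}$, and the argument above fails exactly at $s=2$, i.e.\ on the arc around $\theta=1/2$, where $\abs{\hat f}$ need not be small (consider $X_i$ all of fixed parity). Choosing $\delta_0$ and the arc radii so that the only minor arcs with $s=2$ lie inside $\abs{\theta-\tfrac12}\le\delta_0$, I would treat $\cbra{\theta:\abs{\theta-\tfrac12}\le\delta_0}$ as a second major arc: writing $\theta=\tfrac12+\beta$ and splitting $Y_i$ into its even and odd parts, a computation parallel to the one at $0$ gives $\abs{\phi_i(\tfrac12+\beta)}\le1-\Omega_t(\beta^2\Var[X_i])$, hence $\abs{\hat f(\tfrac12+\beta)}\le e^{-\Omega_{L,t}(n\beta^2)}$, and combined with the even-$\Delta$ bound $\abs{1-e^{-2\pi\mathrm{i}\Delta\theta}}\le2\pi\abs{\Delta}\abs\beta$ this arc contributes $O_{L,t}(\abs{\Delta}/n)$, while all remaining arcs have $s\ge3$ and are bounded exactly as before. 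I expect this last step to be the main obstacle — showing $\hat f$ still has Gaussian-type decay in a neighborhood of $1/2$ without the $s=2$ hypothesis, and pairing that decay against the first-order vanishing of $1-e^{-2\pi\mathrm{i}\Delta\theta}$ at $1/2$ for even $\Delta$, which is precisely the phenomenon responsible for the lemma's two-clause form; a secondary nuisance is the bookkeeping needed to make the dissection's parameters ($\delta_0$, $Q$, the arc radii, the bound $\abs\beta\le\eps_0/(st)$) mutually consistent and to check that the minor arcs cover everything outside the two major arcs, but the rest is routine Taylor expansion and integration. (We tacitly assume $t$ is large enough — at least $2$ for the first assertion and at least $3$ for the second — so that the stated hypotheses are not vacuous.)
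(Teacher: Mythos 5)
The paper states \Cref{lem:comp_llt} as a citation of \cite[Theorem A.1]{kane2024locally2} and does not reprove it, so there is no in-paper proof to compare against; I can only assess your argument on its own terms. Your Fourier-analytic proof is correct, and it is the standard circle-method route for such density-comparison local limit theorems, which is almost certainly the approach of the cited reference. The one step you flag as the main obstacle — Gaussian-type decay of $\abs{\hat f}$ near $\theta=1/2$ without the $s=2$ hypothesis — does go through: writing $1-\abs{\phi_i(\tfrac12+\beta)}^2=\E\bigl[1-(-1)^{Y_i}\cos(2\pi Y_i\beta)\bigr]$ and bounding the odd-$Y_i$ part from below by $\Pr[Y_i\text{ odd}]$ and the even-$Y_i$ part by $\Omega(\beta^2\E[Y_i^2\mathbbm{1}(Y_i\text{ even})])$, one gets, for $\abs{\beta}\le\tfrac1{4t}$,
\[
1-\abs{\phi_i(\tfrac12+\beta)}^2\;\ge\;\Pr[Y_i\text{ odd}]\bigl(1-O(t^2\beta^2)\bigr)+\Omega(\beta^2)\,\Var[X_i]\;\ge\;\Omega(\beta^2)\,\Var[X_i],
\]
and $\sum_i\Var[X_i]\ge\tfrac12\sum_i(1-p_{3,i})\ge\tfrac L2 n$ gives the desired $\abs{\hat f(\tfrac12+\beta)}\le e^{-\Omega_L(n\beta^2)}$; paired with the even-$\Delta$ vanishing of $1-e^{-2\pi i\Delta\theta}$ at $1/2$, this arc contributes $O_{L,t}(\abs\Delta/n)$ exactly as at $\theta=0$. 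The remaining items (choosing $\eps_0$, $Q=\lceil t/\eps_0\rceil$, and $\delta_0=\Theta(1/t)$ compatibly so that Dirichlet's arcs with $s=1$ and $s=2$ sit inside the two major arcs, and the $O_t(1)$ remaining minor arcs each give $\abs{\hat f}\le e^{-\Omega_{L,t}(n)}$) are bookkeeping, as you say, and are consistent. Your implicit assumption $t\ge2$ (resp.\ $t\ge3$) is necessary and is also needed in the original statement, since otherwise the hypothesis is vacuous and the conclusion false (e.g.\ all $X_i\equiv0$).
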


In order to satisfy the variability assumption of \Cref{lem:comp_llt}, we will need to exploit the fact that two coupled, $\gamma$-biased random vectors have different Hamming weight distributions, as long as part of their entries are independent.
The following is essentially \cite[Lemma 4.4]{kane2024locality}.

\begin{lemma}\label{lem:anticoncentration_after_coupling_all}
    Let $(X,Y,Z,W)$ be a random variable where $X,Z\in\bin$ and $Y,W\in\bin^{t-1}$.
    Let $q\ge\min\cbra{3,t+1}$ be an integer.\footnote{If $q\ge t+1$, then one may instead apply \Cref{lem:anticoncentration_after_coupling_all} with modulus $t+1$, since $X+|Y|\equiv Z+|W|\Mod q$ is equivalent to $X+|Y|=Z+|W|$ for $q\ge t+1$.}
    Assume
    \begin{itemize}
    \item $X$ is independent from $(Z,W)$ and $Z$ is independent from $(X,Y)$, and
    \item $(X,Y)$ and $(Z,W)$ have distribution $\Ucal_\gamma^t$ for some $\gamma\in(0,1)$.
    \end{itemize}
    Then we have
    $$
    \Pr\sbra{X+|Y|\equiv Z+|W|\Mod q} < 1.
    $$
    Moreover, the same conclusion holds for $q \ge 2$ when $\gamma \ne 1/2$.
\end{lemma}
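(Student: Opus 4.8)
I would argue by contradiction, supposing $\Pr\sbra{X+|Y|\equiv Z+|W|\Mod{q}}=1$. The crux of the argument is that the independence of $X$ from $(Z,W)$ can be pushed through the congruence to force $|Y|\bmod q$ to be uniformly distributed on $\Zbb/q\Zbb$, and a one-line character computation then shows this is impossible outside the excluded case $q=2,\gamma=1/2$.

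Concretely, the first step is to note that since $X$ is independent of $(Z,W)$, it is independent of the random variable $(Z+|W|)\bmod q$, a deterministic function of $(Z,W)$. By the assumed congruence this random variable equals $(X+|Y|)\bmod q$ almost surely, and almost-surely-equal random variables share the same independence relations, so $X$ is independent of $(X+|Y|)\bmod q$ as well. Next I would unpack this independence: because $(X,Y)$ has distribution $\Ucal_\gamma^t$ we have $X\perp|Y|$, so for every residue $j$ and every $x\in\bin$,
\[
\Pr\sbra{(X+|Y|)\bmod q=j\mid X=x}=\Pr\sbra{|Y|\equiv j-x\Mod{q}},
\]
and both of $\cbra{X=0}$ and $\cbra{X=1}$ have positive probability since $\gamma\in(0,1)$. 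Independence of $X$ and $(X+|Y|)\bmod q$ then gives $\Pr\sbra{|Y|\equiv j\Mod{q}}=\Pr\sbra{|Y|\equiv j-1\Mod{q}}$ for every $j$, i.e.\ $|Y|\bmod q$ is uniform on $\Zbb/q\Zbb$.

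The last step is to contradict this uniformity. If $t=1$ then $Y$ is the empty string, so $|Y|=0$ is a point mass and cannot be uniform on $\Zbb/q\Zbb$ once $q\ge2$; this handles $t=1$ in both regimes. If $t\ge2$, then $|Y|\sim\Bin(t-1,\gamma)$ with $t-1\ge1$, and for the character $\omega=e^{2\pi i/q}$ we compute $\E\sbra{\omega^{|Y|}}=\pbra{(1-\gamma)+\gamma\omega}^{t-1}$; uniformity forces this to vanish, hence $(1-\gamma)+\gamma\omega=0$, i.e.\ $\omega=1-1/\gamma$. But $\omega=e^{2\pi i/q}$ has modulus $1$ while $1-1/\gamma$ is a negative real, so this is only possible when $1-1/\gamma=-1=e^{2\pi i/q}$, i.e.\ $\gamma=1/2$ and $q=2$. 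Thus uniformity fails whenever $q\ge3$, and also whenever $q\ge2$ with $\gamma\ne1/2$ — precisely the two cases in the statement, since $q\ge\min\cbra{3,t+1}$ together with $t\ge2$ already forces $q\ge3$.

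I do not anticipate a genuine obstacle here; the argument is short, and the only points requiring care are (i) the ``transfer of independence'' step, where one must observe that $(X+|Y|)\bmod q$ inherits independence from $X$ precisely because it coincides almost surely with a function of $(Z,W)$ — incidentally this route does not even use the hypotheses $Z\perp(X,Y)$ or $(Z,W)\sim\Ucal_\gamma^t$ — and (ii) matching the edge case $t=1$ (empty $Y,W$) against the threshold $q\ge\min\cbra{3,t+1}$. The footnote's remark that congruence modulo $q$ coincides with equality when $q\ge t+1$ is not needed for this approach.
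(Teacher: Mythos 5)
Your proof is correct, and it takes a genuinely different (and cleaner) route than the paper's. The paper expands $\Pr[X+|Y|\equiv Z+|W|\Mod q]$ by conditioning on $(X,Z)$, bounds it above by $1-\tvdist{\Dcal_0-\Dcal_1}\cdot(1-\Pr[X=Z])$ where $\Dcal_0,\Dcal_1$ are the laws of $|V|\bmod q$ and $1+|V|\bmod q$ for $V\sim\Ucal_\gamma^{t-1}$, and then computes an explicit positive lower bound $\tvdist{\Dcal_0-\Dcal_1}\ge\sqrt{r(1-4\gamma(1-\gamma)r)^{t-1}}$ with $r=\sin^2(\pi/q)$ using the same character $\omega=e^{2\pi i/q}$. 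Your argument instead supposes the probability equals $1$, transfers the independence $X\perp(Z,W)$ through the almost-sure identity to get $X\perp\big((X+|Y|)\bmod q\big)$, and concludes that $|Y|\bmod q$ must be shift-invariant, hence uniform; the character computation $\E[\omega^{|Y|}]=\big((1-\gamma)+\gamma\omega\big)^{t-1}$ then kills uniformity unless $\gamma=1/2$ and $q=2$. The Fourier-analytic core is the same (deciding when $(1-\gamma)+\gamma\omega$ vanishes), but your wrapper is a soft contradiction rather than a quantitative TVD bound. The trade-off: the paper's version gives an explicit upper bound on the probability (potentially reusable elsewhere), while yours is shorter, isolates the mechanism more transparently, and as you correctly observe actually proves a slightly stronger statement since it never invokes $Z\perp(X,Y)$ or the distribution of $(Z,W)$ — only $X\perp(Z,W)$ and $(X,Y)\sim\Ucal_\gamma^t$ are used. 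Your handling of the $t=1$ edge case (point mass at $0$ cannot be uniform on $\Zbb/q\Zbb$ for $q\ge2$) also matches the needed threshold $q\ge\min\{3,t+1\}$ correctly.
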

We remark that the ``moreover'' part of the conclusion is not explicitly stated in \cite{kane2024locality}, but the proof can be easily modified to show this.
For completeness, we include the full details in \Cref{app:missing_sec:characterize}.

We can now state and prove the main result of this subsection.
\begin{proposition}\label{prop:continuity}
    Let $f\colon\bin^m\to\bin^n$ be a $d$-local function with $n$ sufficiently large in terms of $d$, and let $\gamma \in (0,1)$ be an integer multiple of $2^{-d}$.
    Then the distribution $f(\Ucal^m)$ can be written as a mixture of distributions $E$ and $W$, where
    \begin{enumerate}
        \item $\tvdist{E - \Dcal} \ge 1 - \exp\cbra{-\Omega_d(n)}$ for any symmetric distribution $\Dcal$ over $\bin^n$ with weights $\gamma n \pm n^{2/3}$, and

        \item For all $w \in \cbra{0,1,\dots, n}$ and $\Delta \in \Zbb$ if $\gamma \ne 1/2$ (or even $\Delta \in \Zbb$ if $\gamma = 1/2$), we have
        \[
            \bigg|\Pr\sbra{|W| = w} - \Pr\sbra{|W| = w + \Delta}\bigg| \le O_d\pbra{\frac{|\Delta|}{n}}.
        \]
    \end{enumerate}
\end{proposition}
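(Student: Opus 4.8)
The plan is to follow the strategy sketched in Step 3 of the overview. First I would apply \Cref{lem:neighborhoods} to the hypergraph $\Hcal$ associated to $f$, with an increasing function $F$ to be chosen large enough (depending only on $d$) that the edges we delete can be absorbed by a small conditioning. This yields a set of $r = \Omega_d(n)$ vertices $v_1,\dots,v_r \in [n]$ whose neighborhoods $N_1,\dots,N_r$ have size $O_d(1)$, are pairwise non-adjacent, and such that the deleted edges number at most $r/F(t)$. Reinterpreting the deleted edges as a set $B \subseteq [m]$ of input bits, we obtain for every $\sigma \in \bin^B$ a sub-function $f_{\rho,\sigma}$ (here there is no $\rho$; just $f_\sigma$) in which the neighborhoods $N_1,\dots,N_r$ depend on disjoint sets of the surviving input bits, hence are mutually independent. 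I would then define $E$ to be the part of the mixture over $\sigma \in \bin^B$ for which at least $r/2$ of the neighborhoods $N_i$ have marginal distribution differing from $\Ucal_\gamma^{N_i}$, and $W$ the remaining part. Item (1) then follows by combining \Cref{lem:tvdist_after_product} (each bad neighborhood's marginal differs from that of any symmetric $\Dcal$ on weights $\gamma n \pm n^{2/3}$ by $\Omega_d(1)$, since $\Ucal_\gamma^{N_i}$ is pointwise close to the uniform distribution on such strings, up to the $n^{2/3}$ window) with \Cref{lem:tvdist_after_conditioning} to push the bound through the conditioning on $B$; the $\exp\{-\Omega_d(n)\}$ comes from the $\exp\{-\eps^2 s/2\}$ term with $s = \Omega_d(n)$.

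For item (2), I would work with each $\sigma$ contributing to $W$ separately (the continuity bound is convex over mixtures). Fix such a $\sigma$, so that at least $r/2$ of $N_1,\dots,N_r$ have marginal $\Ucal_\gamma^{N_i}$. Further condition on all surviving input bits that do not affect any of $v_1,\dots,v_r$; this freezes every output bit outside $N_1 \cup \cdots \cup N_r$, so $|f_\sigma(\Ucal^\cdot)|$ becomes a fixed integer plus $\sum_{i} Y_i$ where $Y_i = |f_\sigma(\cdot)[N_i]|$ are independent integer random variables in $\{0,\dots,t\}$. To invoke \Cref{lem:comp_llt}, I need: for each modulus $s \in \{2,\dots,t\}$, a constant fraction of the $Y_i$ are noticeably non-constant mod $s$. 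This is exactly where \Cref{lem:anticoncentration_after_coupling_all} enters. For a good neighborhood $N_i$, pick $v_i \in N_i$ and split $N_i = \{v_i\} \cup (N_i \setminus \{v_i\})$; couple two independent copies by re-randomizing the input bits determining $v_i$ only. Since the marginal on $N_i$ is $\Ucal_\gamma^{N_i}$, \Cref{lem:anticoncentration_after_coupling_all} gives that $Y_i$ conditioned on the $v_i$-bits being one value differs from $Y_i$ conditioned on them being another value, modulo $s$ — hence $Y_i$ is non-constant mod $s$ for all $s \ge 3$ (all $\gamma$), and also for $s = 2$ when $\gamma \ne 1/2$. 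Conditioning further on the bits outside the $v_i$'s doesn't destroy this for a constant fraction of the $i$'s, which is \Cref{clm:X_i_weight_fixed}-type reasoning: formally one argues by averaging that with high probability over that conditioning, a constant fraction of the good $N_i$ still have variable $Y_i \bmod s$ (a bad $N_i$ could become fixed, but a good one becomes fixed only with bounded probability). Summing, $\sum_i (1 - p_{s,i}) \ge L n$ for the appropriate range of $s$, so \Cref{lem:comp_llt} yields $|\Pr[|W| = w] - \Pr[|W| = w+\Delta]| \le O_{L,t}(|\Delta|/n) = O_d(|\Delta|/n)$, for all $\Delta$ when $\gamma \ne 1/2$ and for even $\Delta$ when $\gamma = 1/2$, since in that case $s=2$ is excluded from our variability guarantee.

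The main obstacle I expect is the $\gamma = 1/2$, $s = 2$ exceptional case handled by the overview's "second moment / neighborhood" analysis: even a good neighborhood $N_i$ with marginal $\Ucal_{1/2}^{N_i}$ can have $Y_i \bmod 2$ completely determined by the frozen input bits (the PARITY construction $h$ is the prototype), so there is genuinely no variability mod $2$ to extract and one must accept the restriction to even $\Delta$. This is precisely why the statement is split on whether $\gamma = 1/2$, and why the parity-sensitive version of the Kolmogorov bound in \Cref{prop:kol_dist} was proved. A secondary technical point is bookkeeping the three nested conditionings — on $B$ (to decouple the neighborhoods), then on the bits outside the $v_i$'s (to freeze the complement) — and checking that the fraction of "good and variable-mod-$s$" neighborhoods stays bounded below by a constant after all of them with high probability; this is routine averaging but needs care to make the final $L$ (and hence the $O_d(\cdot)$ constant) depend only on $d$ through $t$ and $F$.
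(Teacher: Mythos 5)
Your proposal follows essentially the same path as the paper's proof: apply \Cref{lem:neighborhoods}, split into $E$ (bad conditionings, where many neighborhoods have wrong marginals) and $W$ (good conditionings), push through \Cref{lem:tvdist_after_product} and \Cref{lem:tvdist_after_conditioning} for part (1), and for part (2) further condition on extraneous inputs, invoke \Cref{lem:anticoncentration_after_coupling_all} plus a Chernoff-type averaging to satisfy the hypotheses of \Cref{lem:comp_llt}, with the $\gamma=1/2$, $s=2$ case correctly identified as the source of the even-$\Delta$ restriction. The paper's version just phrases the averaging step as a lower bound $\Pr_\sigma[\Ecal_i]\ge 2^{-dt}$ on each neighborhood being variable mod $s$ (rather than an upper bound on becoming fixed) before applying Chernoff, and uses the $2^{-dt}$-granularity argument rather than appealing loosely to pointwise closeness in part (1), but these are the same ideas.
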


The above proposition is very similar to \cite[Proposition 4.9]{kane2024locally2}, and our proof largely follows the one present there.
At a high level, we proceed by using \Cref{lem:neighborhoods} to obtain many independent output neighborhoods.
We then classify these neighborhoods according to whether their marginal distributions differ from the $\gamma$-biased product distribution or not.
The former situation corresponds to the first conclusion of \Cref{prop:continuity}, while the latter corresponds to the second.

\begin{proof}
    Let $S\subseteq [m]$ be the set of input coordinates promised by \Cref{lem:neighborhoods}, taking $F(t)$ to be a sufficiently large multiple of $2^{2dt}$.
    For each conditioning $\rho \in \bin^S$, consider the restricted function $f_\rho\colon \bin^{[m]\setminus S}\to\bin^n$ defined by $f_\rho(x) = f(x,\rho)$.
    We call a $\rho$ \emph{good} if at least half of the $r$ non-adjacent neighborhoods $\cbra{N_i}$ (of size at most $t \le C_d$) promised by \Cref{lem:neighborhoods} satisfy $f_\rho(\Ucal^{[m]\setminus S})[N_i] = \Ucal_\gamma^{N_i}$.
    Set
    \[
        E \coloneqq \E_{\rho \,:\, \rho \text { is not good}} f_\rho(\Ucal^{[m]\setminus S}) \qquad\text{and}\qquad W \coloneqq \E_{\rho \,:\, \rho \text { is good}} f_\rho(\Ucal^{[m]\setminus S}),
    \]
    so that $f(\Ucal^m) = \beta E + (1-\beta) W$ where $\beta \in [0,1]$ is the fraction of not good $\rho$.
    
    We first prove conclusion 1. 
    Suppose $\rho\in \bin^S$ is not good.
    Let $N$ be one of the at least $r/2$ non-adjacent neighborhoods with $f_\rho(\Ucal^{[m]\setminus S})[N] \ne \Ucal_\gamma^{N}$.
    Since $N$ depends on at most $dt$ many input bits and $\gamma$ is an integer multiple of $2^{-d}$, $f_\rho(\Ucal^{[m]\setminus S})[N]$ is at least $2^{-dt}$-far from $\Ucal_\gamma^{N}$.
    Moreover, define
    $$
    \nu \coloneqq \min_{x \in \supp{\Dcal}}\frac{\Ucal_\gamma^{n}(x)}{\Dcal(x)}.
    $$
    Let $x \in \supp{\Dcal}$ of weight $k=\gamma n\pm n^{2/3}$.
    Since $\Dcal$ is symmetric, we have $\Dcal(x) \le 1/\binom{n}{k}$.
    Thus,
    \begin{align}
        \nu 
        &\ge\frac{\gamma^{k}(1-\gamma)^{n-k}}{1/\binom{n}{k}}
        \ge \frac{2^{n\Hcal(\frac{k}{n})}}{\sqrt{8k(1-\frac{k}{n})}} \cdot \gamma^{k}(1-\gamma)^{n-k}\tag{by \Cref{fct:individual_binom}} \\
        &= \frac{1}{\sqrt{8|x|(1-\frac{k}{n})}} \cdot \pbra{\frac{\gamma}{\frac{k}{n}}}^{k}\pbra{\frac{1-\gamma}{1-\frac{k}{n}}}^{n-k} \notag \\
        &= \frac{1}{\sqrt{8|x|(1-\frac{k}{n})}} \cdot \pbra{\frac{\gamma}{\gamma \pm n^{-1/3}}}^{k}\pbra{\frac{1-\gamma}{1-\gamma \mp n^{-1/3}}}^{n-k}  \notag \\
        &\ge \Theta_d(n^{-1/2}) \cdot \pbra{\frac{1}{1 + O_d(n^{-1/3})}}^{n} \ge \exp\cbra{-O_d(n^{2/3})}, \label{eq:eta_LB}
    \end{align}
    where we used the fact that $|k-\gamma n| \le n^{2/3}$ and $n$ sufficiently large for the last line.
    Noting that the non-adjacency of the neighborhoods implies the distributions $f_\rho(\Ucal^m)[N_i]$ and $f_\rho(\Ucal^{[m]\setminus S})[N_j]$ are independent for $i \ne j$, we can apply \Cref{lem:tvdist_after_product} to conclude 
    \[
        \tvdist{f_\rho(\Ucal^{[m]\setminus S}) - \Dcal} \ge 1-2 \exp\cbra{-r2^{-2dt - 2}}/\nu.
    \]
    Then \Cref{lem:tvdist_after_conditioning} and our choice of $F(t) = \Omega(2^{2dt})$ sufficiently large imply
    \begin{align*}
        \tvdist{E - \Dcal} &\ge 1-2^{|S|}\cdot\pbra{2 \exp\cbra{-r2^{-2dt - 2}}}/\nu \\
        &\ge1 - 2\pbra{\exp\cbra{\frac{r}{F(t)} - \frac{r}{2^{2dt+2}}}}/\nu \\
        &\ge 1 - \exp\cbra{-\Omega_d\pbra{n}}\cdot \exp\cbra{O_d(n^{2/3})} \ge 1 - \exp\cbra{-\Omega_d\pbra{n}} \tag{by \Cref{eq:eta_LB}}.
    \end{align*}
    
    It remains to verify conclusion 2.
    Fix an arbitrary good $\rho$ for the remainder of the argument.
    We assume without loss of generality that for some $r' \ge r/2$, the neighborhoods $N(1), \dots, N(r') \subseteq [n]$ satisfy $f_\rho(\Ucal^{[m]\setminus S})[N_i] = \Ucal_\gamma^{N_i}$.
    
    Let $B \subseteq [m]\setminus S$ be the set of input bits that do not affect any central elements (i.e., the $r'$ output bits that generate $N(1), \dots, N(r')$), which we henceforth refer to as \emph{extraneous inputs}.
    For each conditioning $\sigma \in \bin^B$, we define the restricted functions $f_{\rho, \sigma}\colon \bin^{[m]\setminus (S\cup B)} \to \bin^n$ by $f_{\rho, \sigma}(x) = f(x,\rho, \sigma)$, so that
    \[
        f_\rho = \E_{\sigma \in \bin^B} \sbra{f_{\rho, \sigma}}.
    \]
    Observe that the value of every output bit in $[n] \setminus \pbra{N(1) \cup \cdots \cup N(r')}$ is fixed for all $f_{\rho, \sigma}$, and the weight of the output bits of each neighborhood becomes a random variable
    \[
        X_{\sigma, i} \coloneqq \sum_{j \in N(i)} f_{\rho, \sigma}(\Ucal^{[m]\setminus (S\cup B)})[\cbra{j}].
    \]
    In particular, the total weight of the output of $f_{\rho, \sigma}(\Ucal^{[m]\setminus (S\cup B)})$ is some constant plus the sum of the $X_{\sigma, i}$'s, which are independent. 
    We would like to claim that \Cref{lem:comp_llt} can be applied to this situation with high probability.
    
    From here, we proceed similarly to \cite[Claims 5.16 \& 5.23]{kane2024locality}. 
    For each integer $s \ge 2$, define
    \[
        p_{\sigma, s, i} = \max_{x\in \Zbb} \Pr\sbra{X_{\sigma, i} \equiv x \Mod{s} \mid \rho, \sigma},
    \]
    where recall we previously fixed a good $\rho$.
    
    \begin{claim}\label{clm:X_i_weight_fixed}
        For any $i \in [r']$ and $s \ge 3$, there exists some $\sigma \in \bin^B$ satisfying $p_{\sigma, s, i} < 1$ (i.e., $X_{\sigma, i}$ is not constant modulo $s$). Moreover if $\gamma \ne 1/2$, the same is true for $s = 2$.
    \end{claim}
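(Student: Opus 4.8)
The plan is to argue by contradiction using \Cref{lem:anticoncentration_after_coupling_all}. Suppose that for some fixed $i \in [r']$ and some modulus $s$ (with $s \ge 3$, or $s = 2$ in the case $\gamma \ne 1/2$) we have $p_{\sigma,s,i} = 1$ for \emph{every} $\sigma \in \bin^B$, i.e.\ for each $\sigma$ there is an integer $c(\sigma)$ with $X_{\sigma,i} \equiv c(\sigma) \Mod s$ almost surely over the remaining free input coordinates.

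The first thing to establish is a structural fact about which inputs are relevant to $N(i)$. Let $v_i$ be the central vertex generating $N(i)$, and let $E_i \subseteq [m]$ denote the set of input bits whose edge contains $v_i$. Because the neighborhoods $N(1), \dots, N(r')$ are pairwise non-adjacent, I would first show that any input bit in $[m]\setminus(S\cup B)$ affecting some vertex of $N(i)$ must in fact contain $v_i$: otherwise it would connect a non-central vertex of $N(i)$ to a central vertex $v_j$, making $N(i)$ and $N(j)$ adjacent. Conversely, every edge containing $v_i$ lies inside $N(i)$, hence has size $O_d(1) < n/A$ (so is not in $S$) and affects the central element $v_i$ (so is not in $B$), giving $E_i \subseteq [m]\setminus(S\cup B)$. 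Consequently, once the good $\rho$ and any $\sigma$ are fixed, every output bit $y_j$ with $j \in N(i)$ is a deterministic function of $x_{E_i}$, and -- crucially -- the central bit $y_{v_i}$ is a fixed function $g(x_{E_i})$ that does not depend on $\rho$, on $\sigma$, or on any other free coordinate.

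Next I would set up the coupling. Draw $\sigma \sim \Ucal^B$ and two independent copies $x_{E_i}, x'_{E_i} \sim \Ucal^{E_i}$, and let $(y_j)_{j \in N(i)}$ and $(y'_j)_{j \in N(i)}$ be the $N(i)$-coordinates of $f_\rho$ evaluated at $(\sigma, x_{E_i})$ and $(\sigma, x'_{E_i})$ respectively (with fresh uniform bits everywhere else). Since $\rho$ is good, $f_\rho(\Ucal^{[m]\setminus S})[N_i] = \Ucal_\gamma^{N_i}$, and because this marginal depends only on $(\sigma, x_{E_i})$, both tuples are distributed as $\Ucal_\gamma^{N_i}$. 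Set $X = y_{v_i}$, $Y = (y_u)_{u \in N(i)\setminus\{v_i\}}$, $Z = y'_{v_i}$, and $W = (y'_u)_{u \in N(i)\setminus\{v_i\}}$. Then $(X,Y)$ and $(Z,W)$ each have distribution $\Ucal_\gamma^{|N(i)|}$; and since $X = g(x_{E_i})$ while $(Z,W)$ is a function of $(\sigma, x'_{E_i})$, and $Z = g(x'_{E_i})$ while $(X,Y)$ is a function of $(\sigma, x_{E_i})$, the two independence hypotheses of \Cref{lem:anticoncentration_after_coupling_all} are satisfied. Applying that lemma with modulus $q = s$ yields $\Pr\sbra{X + |Y| \equiv Z + |W| \Mod s} < 1$.

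Finally, I would observe that $X + |Y| = \sum_{j \in N(i)} y_j$ and $Z + |W| = \sum_{j \in N(i)} y'_j$ are exactly $X_{\sigma,i}$ evaluated at $(\sigma, x_{E_i})$ and at $(\sigma, x'_{E_i})$ for the \emph{same} $\sigma$; under the contradiction hypothesis both are $\equiv c(\sigma) \Mod s$ with probability one, contradicting the strict inequality above. Hence some $\sigma$ must satisfy $p_{\sigma,s,i} < 1$, which is the claim, and the $s = 2$, $\gamma \ne 1/2$ case follows identically via the ``moreover'' clause of \Cref{lem:anticoncentration_after_coupling_all}. I expect the main obstacle to be the structural step: pinning down that $E_i$ is the only block of free inputs touching $N(i)$, and that the central coordinate $y_{v_i}$ is a function of $x_{E_i}$ alone, since this is precisely what supplies the somewhat delicate pair of independence conditions that \Cref{lem:anticoncentration_after_coupling_all} requires.
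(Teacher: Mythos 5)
Your proof is correct and essentially identical to the paper's: the paper's $\Ical$ is your $E_i$ (taken among the free bits), the paper's $(V,Z),(V',Z')$ are your $(X,Y),(Z,W)$, and the paper likewise uses the non-adjacency of the neighborhoods to reduce the dependence of $f_\rho[N(i)]$ to $B\cup\Ical$ before applying \Cref{lem:anticoncentration_after_coupling_all} with the same two-copy coupling over a shared $\sigma$. One small slip: in \Cref{prop:continuity} the set $S$ is the edge set removed by \Cref{lem:neighborhoods}, not a degree-threshold set, so the inference ``has size $O_d(1)<n/A$ (so is not in $S$)'' is the wrong reason---but simply defining $E_i$ from the start as the free (i.e.\ outside $S\cup B$) input bits whose edge contains $v_i$ repairs this, and your non-adjacency argument that these are the only free bits touching $N(i)$ is exactly the content the paper asserts without elaboration.
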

    \begin{proof}
        Consider the neighborhood $N \coloneqq N(i)$ of size $t_i \le t$, and let $\Ical$ be the set of input bits that the output bit $i$ depends on.
        Note $|\Ical| \le d$.
        We randomly sample $\sigma \in \bin^B$ and two independent $\lambda, \lambda' \in \bin^\Ical$. 
        Since $f_{\rho}[N]$ only depends on the bits in $B \cup \Ical$, we can define 
        $(V,Z) = f_{\rho}((\sigma, \lambda))$ and $(V',Z') = f_{\rho}((\sigma, \lambda'))$, where $V, V' \in \bin$ are the values of $N$'s center and $Z, Z' \in \bin^{t_i - 1}$ are the values of the remaining bits in $N$. 
        Observe that $V$ is independent from $(V', Z')$ and likewise $V'$ is independent from $(V, Z)$, so we may apply \Cref{lem:anticoncentration_after_coupling_all} to conclude
        \[  
            \Pr\sbra{V+|Z|\equiv V'+|Z'|\Mod s} < 1.
        \]
        In particular, there must exist some $\sigma$ where $X_{\sigma,i}$ is not constant modulo $s$.
    \end{proof}
    
    For our fixed good $\rho$ and some fixed $2\leq s \leq t$ (or $3\leq s \leq t$ if $\gamma=1/2$), let $\Ecal_i$ be the event that $X_{\sigma, i}$ is at least $2^{-d}$-far from constant modulo $s$.
    Since $X_{\sigma, i}$ depends on at most $d$ input bits (namely, the bits $i$ depends on), if it is not constant modulo $s$, it must be at least $2^{-d}$-far from constant. 
    Furthermore, the bits in the neighborhood $N(i)$ depend on at most $dt$ input bits, so
    \[
        \Pr_{\sigma \in \bin^B}\sbra{\Ecal_i} \ge 2^{-dt}.
    \]
    Recall that the neighborhoods are non-adjacent (after removing the edges in $S$), so the extraneous bits used to determine $X_{\sigma, i}$ are disjoint from those used to determine $X_{\sigma, j}$ for $i\neq j$.
    Thus, whether or not $X_i$ is constant modulo $s$ is independent of whether $X_j$ is.
    Applying Chernoff's inequality (\Cref{fct:chernoff}) with $\delta = 1/2$, we have
    \begin{align*}
        \Pr_{\sigma \in \bin^B} \sbra{\sum_{i\in [r']} \Ecal_i \le 2^{-dt - 1}r'} \le \exp\cbra{-\frac{2^{-dt}\cdot r'}{8}} \le \exp\cbra{-\frac{2^{-dt}\cdot r}{16}} = \exp\cbra{-\Omega_d(n)}.
    \end{align*}

    Now call a $\sigma \in \bin^B$ \emph{fluid} if $\sum_i \Ecal_i \ge 2^{-dt - 1}r'$ for \emph{every} $2\leq s \leq t$ (or $3\leq s \leq t$ if $\gamma=1/2$).
    By a union bound and the fact that $t = O_d(1)$, we find
    \begin{equation}\label{eq:good_sigma}
        \Pr_{\sigma \in \bin^B}\sbra{\sigma \text{ is fluid}} = 1-\exp\cbra{-\Omega_d(n)}.
    \end{equation}
    Note any fluid $\sigma$ satisfies
    \[
        \sum_{i\in[r']}(1-p_{\sigma, s,i}) \ge 2^{-dt-1}\cdot r'\cdot 2^{-d} \ge 2^{-d(t+1)-1}\cdot r \ge \Omega_d(n),
    \]
    so we can apply \Cref{lem:comp_llt} to find for any $w\in\Zbb$ and $\Delta\in\Zbb$ (or even $\Delta \in\Zbb$ if $\gamma = 1/2$) that
    \begin{equation}\label{eq:conditioned_continuity}
            \left|\Pr_{x\sim \Ucal^{[m]\setminus (S\cup B)}}\sbra{|f_{\rho, \sigma}(x)| = w} - \Pr_{x\sim \Ucal^{[m]\setminus (S\cup B)}}\sbra{|f_{\rho, \sigma}(x)| = w + \Delta}\right| \le O_d\pbra{\frac{|\Delta|}{n}}.
    \end{equation}
    Taking the mixture over all such subfunctions, we have
    \begin{align*}
        \phantom{=}&\:\bigg|\Pr\sbra{|W| = w} - \Pr\sbra{|W| = w + \Delta}\bigg| \\
        =&\: \bigg|\E_{\text{good } \rho} \sbra{\Pr_{x\sim \Ucal^{[m]\setminus S}}\sbra{|f_\rho(x)| = w}} - \E_{\text{good }\rho} \sbra{\Pr_{x\sim \Ucal^{[m]\setminus S}}\sbra{|f_\rho(x)| = w+\Delta}}\bigg| \\
        \le&\: \E_{\substack{\text{good } \rho \\ \sigma}}\bigg|\Pr_{x\sim \Ucal^{[m]\setminus (S\cup B)}}\sbra{|f_{\rho, \sigma}(x)| = w} -  \Pr_{x\sim \Ucal^{[m]\setminus (S\cup B)}}\sbra{|f_{\rho, \sigma}(x)| = w+\Delta}\bigg| \tag{by triangle inequality} \\
        \le&\: e^{-\Omega_d(n)} + \E_{\substack{\text{good } \rho \\ \text{fluid } \sigma}}\bigg|\Pr_{x\sim \Ucal^{[m]\setminus (S\cup B)}}\sbra{|f_{\rho, \sigma}(x)| = w} -  \Pr_{x\sim \Ucal^{[m]\setminus (S\cup B)}}\sbra{|f_{\rho, \sigma}(x)| = w+\Delta}\bigg| \tag{by \Cref{eq:good_sigma}} \\
        \le&\: e^{-\Omega_d(n)} + O_d\pbra{\frac{|\Delta|}{n}}. \tag{by \Cref{eq:conditioned_continuity}}
    \end{align*}
    If $\Delta = 0$, then $\bigg|\Pr\sbra{|W| = w} - \Pr\sbra{|W| = w + \Delta}\bigg|$ is trivially at most $O_d\pbra{\frac{|\Delta|}{n}}$.
    Hence we may assume $|\Delta| \ge 1$, in which case $e^{-\Omega_d(n)} + O_d\pbra{\frac{|\Delta|}{n}} \le O_d\pbra{\frac{|\Delta|}{n}}$.
    This concludes the proof.
\end{proof}

\subsection{Putting It Together}\label{subsec:put_together}

In this final subsection, we combine our earlier results to prove \Cref{thm:main}.
Recall \Cref{lem:distance_to_sym} gives that the distance between $f(\Ucal^m)$ and any symmetric distribution $\Pcal$ is
\begin{equation}\label{eq:lem:lem:distance_to_sym:restated}
    \tvdist{f(\Ucal^m) - \Pcal}  = \Theta\pbra{\tvdist{|f(\Ucal^m)|-|\Pcal|} + \tvdist{f(\Ucal^m)-f(\Ucal^m)_\sym}},
\end{equation}
where the symmetrization $f(\Ucal^m)_\sym$ is the distribution resulting from randomly permuting the coordinates of a string $x\sim f(\Ucal^m)$.
Under the assumption $f(\Ucal^m)$ is $\eps$-close to a symmetric distribution $\Dcal$, \Cref{eq:lem:lem:distance_to_sym:restated} implies $\tvdist{f(\Ucal^m) - f(\Ucal^m)_\sym} \le O(\eps)$.
Thus it remains to show that the weight distribution of $f(\Ucal^m)$ is close to the weight distribution of a mixture of the form specified in \Cref{thm:main}.
We first prove the simpler case where no input bit of $f$ affects many output bits, the output distribution of $f$ restricted to (almost) any small set of output bits looks like a $\gamma$-biased product distribution, and the symmetric distribution $\Dcal$ is supported on weights around $\gamma n$.

\begin{lemma}\label{lem:close_to_bin_weights_mixture}
    Let $k\ge 2$ and $\ell \ge 1$ be integers, and let $f_1, \dots, f_\ell\colon\bin^m\to\bin^n$ be $d$-local functions.
    For each $f_i$, suppose no input bit affects more than $n/A$ output bits, and suppose there exists a subset $T_i\subseteq[n]$ of size $|T_i|\le O_{d,k,A}(1)$ such that every $k$-tuple of output bits in $[n]\setminus T_i$ has distribution $\Ucal_{\gamma}^k$, where $\gamma=a/2^d$ for some fixed integer $0 \le a \le 2^d$.
    
    Furthermore, assume there is some mixture $F$ of the $f_i$'s such that $F(\Ucal^m)$ is $\eps$-close to a symmetric distribution $\Dcal$ over $\bin^n$ which is supported on strings of weight $\gamma n \pm n^{2/3}$.
    Then if $\gamma\ne1/2$ and $n$ is sufficiently large in terms of $d$, $k$, $A$, $\ell$, and $\eps$, we have
    \[
        \tvdist{|F(\Ucal^m)| - \Bin(n,\gamma)} \le O_d(\eps + k^{-1/5}).
    \]
    Moreover if $\gamma = 1/2$, then there exists a mixture $\Qcal = \eta|\Deven| + (1-\eta)|\Dodd|$ satisfying
    \[
        \tvdist{|F(\Ucal^m)| - \Qcal} \le O_d\pbra{\eps + k^{-1/5} + \log(k)\sqrt{\frac{\log(A)^{O(d)^d}}{A^{1/3}} + \frac{\polylog(k)}{\sqrt{k}}}}.
    \]
\end{lemma}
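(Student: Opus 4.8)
The plan is to reduce the claim to our Kolmogorov distance bound (\Cref{prop:kol_dist}) and our approximate continuity result (\Cref{prop:continuity}), treating the degenerate biases $\gamma=a/2^d\in\cbra{0,1}$ separately. Write $F(\Ucal^m)=\sum_i\alpha_i f_i(\Ucal^m)$. Since Kolmogorov distance is at most total variation distance, and total variation distance between weight distributions is a sum of absolute differences of point masses, the heart of the matter is a binning argument: partition $\cbra{0,1,\dots,n}$ into consecutive blocks of length $B=c\sqrt n$, show that $|F(\Ucal^m)|$ and $\Bin(n,\gamma)$ assign nearly equal mass to each block (from the Kolmogorov bound), and show that within a block each distribution is nearly flat (from continuity for $|F(\Ucal^m)|$, and \Cref{clm:bin_difference} for $\Bin(n,\gamma)$), so that most individual weights receive comparable mass.

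First I would show the ``irregular'' part of $F(\Ucal^m)$ carries little mass. Applying \Cref{prop:continuity} to each $f_i$ gives $f_i(\Ucal^m)=\lambda_iE_i+(1-\lambda_i)W_i$; combining yields $F(\Ucal^m)=\lambda E+(1-\lambda)W$ with $\lambda=\sum_i\alpha_i\lambda_i$, where $W$ is the induced convex combination of the $W_i$ (so it still obeys the continuity estimate, which passes to convex combinations) and $E$ is a convex combination of the $E_i$. Using $\tvdist{\Pcal-\Qcal}=1-\sum_x\min(\Pcal(x),\Qcal(x))$ together with the elementary inequalities $\min(a+b,c)\le\min(a,c)+\min(b,c)$ and $\min(c'a,c)\le\min(a,c)$ for $c'\le1$ (all quantities nonnegative), one checks $\tvdist{E-\Dcal}\ge1-\exp\cbra{-\Omega_d(n)}$ whenever $\Dcal$ is symmetric and supported on weights $\gamma n\pm n^{2/3}$; feeding the witnessing event (\Cref{fct:tvdist}) into $F(\Ucal^m)$ and invoking $\tvdist{F(\Ucal^m)-\Dcal}\le\eps$ forces $\lambda=O(\eps)$ for $n$ large. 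The key consequence is that, summed over all blocks, the within-block non-flatness of $|F(\Ucal^m)|$ exceeds that of $|W|$ by at most $2\lambda$ (since $\lambda\sum_J\Pr\sbra{|E|\in J}\le\lambda$), so the $\eps$-scale error enters the final estimate only once rather than once per weight.

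For $\gamma\in(0,1)\setminus\cbra{1/2}$ I would then run the binning argument. By \Cref{prop:kol_dist}, $|F(\Ucal^m)|$ is within Kolmogorov distance $\kappa_0=O_d\pbra{\polylog(k)/\sqrt{k}}$ of $\Bin(n,\gamma)$, so any block $J$ has $\abs{\Pr\sbra{|F(\Ucal^m)|\in J}-\Pr\sbra{\Bin(n,\gamma)\in J}}\le2\kappa_0$. Within $J$, continuity of $|W|$ (valid for all integer shifts since $\gamma\ne1/2$) together with $\lambda=O(\eps)$ bounds the deviation of $|F(\Ucal^m)|$ from its block average by $O_d(B^2/n)$ (plus the aggregate $O(\eps)$ above), while \Cref{clm:bin_difference} bounds the deviation of $\Bin(n,\gamma)$ from its block average by $O_d(B^2/n)$ (here $\gamma(1-\gamma)=\Omega_d(1)$). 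Restricting to the $N=O_d\pbra{\sqrt{\log(1/\beta)}/c}$ blocks that capture all but a $\beta$ fraction of $\Bin(n,\gamma)$---the complement being controlled again by Kolmogorov distance---and summing the three error sources gives
\[
\tvdist{|F(\Ucal^m)|-\Bin(n,\gamma)}\le O(\eps)+O_d\pbra{c\sqrt{\log(1/\beta)}+\frac{\kappa_0\sqrt{\log(1/\beta)}}{c}+\beta}.
\]
Choosing $c=\sqrt{\kappa_0}$ and $\beta$ a small power of $\kappa_0$ gives the claimed $O_d(\eps+k^{-1/5})$, once $n$ is large enough that $B=c\sqrt n\ge1$ and the $\exp\cbra{-\Omega_d(n)}$ terms are negligible.

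The case $\gamma=1/2$ is the same argument while tracking the parity of the weight. Since \Cref{prop:continuity} now yields continuity of $|W|$ only for even shifts, I would split each block into its even and odd parts and compare, separately, $\Pr\sbra{|F(\Ucal^m)|=w\text{ and }|F(\Ucal^m)|\text{ even}}$ with $2\eta\Pr\sbra{\Bin(n,1/2)=w}$ and the odd analogue with $2(1-\eta)\Pr\sbra{\Bin(n,1/2)=w}$, for the $\eta$ furnished by the parity part of \Cref{prop:kol_dist}; the cumulative comparison needed for the block masses is exactly that parity Kolmogorov bound, whose free parameter I would take to be about $A^{-1/3}$, used together with the elementary fact that conditioning $\Bin(n,1/2)$ on its parity changes its cumulative distribution function by only $O(1/\sqrt{n})$. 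Optimizing as before produces the stated bound with $\Qcal=\eta|\Deven|+(1-\eta)|\Dodd|$. Finally, for $\gamma\in\cbra{0,1}$ (excluded from \Cref{prop:kol_dist,prop:continuity}), the hypothesis forces every output bit of each $f_i$ outside $T_i$ to be constant, so $|F(\Ucal^m)|$ is supported on $\cbra{0,\dots,O_{d,k,A}(1)}$ and $F(\Ucal^m)$ puts its non-$0^n$ (resp.\ non-$1^n$) mass on only $O_{d,k,A}(1)$ strings of each weight, each receiving $\le1/n$ mass under any symmetric $\Dcal$ supported on weights $\gamma n\pm n^{2/3}$; hence $\eps$-closeness forces $\Pr\sbra{|F(\Ucal^m)|\ne\gamma n}=O(\eps)$ for $n$ large, which is precisely $\tvdist{|F(\Ucal^m)|-\Bin(n,\gamma)}=O(\eps)$. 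I expect the main obstacle to be controlling the interacting error terms---keeping the $\lambda$-contribution out of the per-weight sum, and carrying the parity through the $\gamma=1/2$ analysis.
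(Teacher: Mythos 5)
Your proof is correct and follows essentially the same route as the paper: decompose each $f_i$ via \Cref{prop:continuity} into $\lambda_iE_i+(1-\lambda_i)W_i$, use the $\eps$-closeness to $\Dcal$ and the ``$E_i$ far from $\Dcal$'' guarantee to force $\lambda=O(\eps)$, then run a binning argument pairing the Kolmogorov bound (\Cref{prop:kol_dist}) with per-block flatness (\Cref{prop:continuity} for $|W|$, \Cref{clm:bin_difference} for $\Bin(n,\gamma)$), splitting blocks by parity when $\gamma=1/2$. The only substantive departure is your handling of $\gamma\in\cbra{0,1}$: the paper invokes the symmetrization machinery (\Cref{lem:distance_to_sym}, bounding $\Pr[F(\Ucal^m)_\sym[T]\ne0^T]$), whereas you directly note that the support of $F(\Ucal^m)$ off $0^n$ is $O_{d,k,A,\ell}(1)$ strings each of positive weight, hence each carries mass $\le1/n$ under a symmetric $\Dcal$ concentrated near weight $0$; this is arguably more elementary, though like the paper's argument it implicitly needs $n$ large in terms of $\ell$ as well. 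Minor cosmetic differences—optimizing the block length $c\sqrt n$ explicitly, deriving $\lambda=O(\eps)$ via the $\min$-characterization of total variation rather than a union bound over the $\Ecal_i$, and carrying the $\lambda E$ contribution inside the per-block sum rather than separating $\tvdist{|F|-|W|}=O(\eps)$ by triangle inequality at the end—are all equivalent.
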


Much of the following proof overlaps with \cite[Section 5.3]{kane2024locally2}.
The overall idea is to use \Cref{prop:continuity} to argue that the output weight of (most of) $f(\Ucal^m)$ satisfies a continuity property: weights at distance $\Delta$ apart are assigned mass that differs by at most $O_d(\Delta/n)$.
Since \Cref{prop:kol_dist} shows any contiguous weight interval is given roughly the same probability as a binomial distribution, our continuity property implies that the output weight of $f(\Ucal^m)$ behaves similarly to a binomial distribution pointwise.
In the case of $\gamma = 1/2$, we can combine the precise control \Cref{prop:kol_dist} provides on the probability of being in a fixed interval and even/odd with the parity continuity of \Cref{prop:continuity} to carry out a similar argument.

\begin{proof}
    We first address the case of $\gamma \in \bin$, where several of the tools we have developed (e.g., \Cref{prop:kol_dist} and \Cref{prop:continuity}) do not apply.
    We will show the argument for $\gamma = 0$; the case of $\gamma = 1$ is essentially identical.
    Let $T = \bigcup_i T_i$, and observe that by assumption, all output bits in $[n] \setminus T$ are identically zero.
    Thus,
    \begin{equation}\label{eq:gamma_is_0_binom}
        \tvdist{|F(\Ucal^m)| - \Bin(n,0)} = \Pr\sbra{F(\Ucal^m)[T] \ne 0^{T}}.
    \end{equation}
    Let us briefly consider the symmetrized distribution $F(\Ucal^m)_\sym$.
    For clarity, let $t = |T|$.
    Since any string $x \sim F(\Ucal^m)$ has Hamming weight $|x| \le t \le \ell\cdot O_{d,k,A}(1)$, we have
    \begin{align}
        \Pr\sbra{F(\Ucal^m)_\sym[T] \ne 0^{T}} &\le 1 - \frac{\binom{n-t}{t}}{\binom{n}{t}} = 1 - \prod_{i=0}^{t-1}\frac{n-t-i}{n-i} \notag \\
        &\le 1 - \pbra{1 - \frac{t}{n-t}}^t \le 1 - \exp\cbra{-\frac{2t^2}{n-t}} \le \eps, \label{eq:f_sym_T_not_zero}
    \end{align}
    since $n$ is sufficiently large in terms of $d$, $k$, $\ell$, $A$, and $\eps$.
    By \Cref{lem:distance_to_sym} and our initial assumption, we know that
    \[
        \Pr\sbra{F(\Ucal^m)[T] \ne 0^{T}} - \Pr\sbra{F(\Ucal^m)_\sym[T] \ne 0^{T}} \le \tvdist{F(\Ucal^m) - F(\Ucal^m)_\sym} = O(\eps).
    \]
    Combining with \Cref{eq:gamma_is_0_binom} and \Cref{eq:f_sym_T_not_zero} yields the desired upper bound of 
    \[
        \tvdist{|F(\Ucal^m)| - \Bin(n,0)} \le \Pr\sbra{F(\Ucal^m)_\sym[T] \ne 0^{T}} + O(\eps) = O(\eps).
    \]

    We now turn to the case of $\gamma \not\in \cbra{0,1/2,1}$, assuming for notational convenience that $\gamma \le 1/2$.
    Afterwards, we will describe the necessary modifications to handle the remaining $\gamma = 1/2$ case.
    Define $\kappa \coloneqq \sqrt{\polylog(k) / (\sqrt{k}\cdot \poly(\gamma))}$ with the exponents on the $\polylog(k)$ and $\poly(\gamma)$ corresponding to those in \Cref{prop:kol_dist}, and partition $\cbra{0,1,\dots, n}$ into (consecutive) intervals of length $\Theta(\kappa\sqrt{n})$.
    Observe that for any such interval $\Ical$, \Cref{prop:kol_dist} implies
    \begin{equation}\label{eq:close_in_intervals}
        \left|\Pr\sbra{|F(\Ucal^m)|\in \Ical} - \Pr\sbra{\Bin(n, \gamma) \in \Ical}\right| = O(\kappa^2).
    \end{equation}
    Next, we apply \Cref{prop:continuity} to each $f_i$ to write $f_i(\Ucal^m) = \lambda_i E_i + (1-\lambda_i)W_i$, where
    \begin{enumerate}
        \item\label{itm:1_large_dist} $\tvdist{E_i - \Dcal} \ge 1 - \exp\cbra{-\Omega_d(n)}$, and

        \item For all $w \in \cbra{0,1,\dots, n}$ and $\Delta \in \Zbb$, we have
        \begin{equation}\label{eq:lem:close_to_bin_weights_mixture:continuity}
            \bigg|\Pr\sbra{|W_i| = w} - \Pr\sbra{|W_i| = w + \Delta}\bigg| \le O_d\pbra{\frac{|\Delta|}{n}}.
        \end{equation}
    \end{enumerate}
    
    Since $F$ is a mixture of the $f_i$'s, there exist $c_1, \dots, c_\ell$ such that
    \[
        F(\Ucal^m) = \sum_{i} c_i\lambda_i E + \pbra{1 - \sum_{i} c_i\lambda_i} W,
    \]
    where
    \[
        E = \frac{1}{\sum_{i} c_i\lambda_i}\cdot \sum_{i} c_i\lambda_i E_i \quad\text{and}\quad W = \frac{1}{1 - \sum_{i} c_i\lambda_i}\cdot \sum_{i} c_i(1-\lambda_i) W_i.
    \]
    For each $i \in [\ell]$, our distance bound between $E_i$ and $\Dcal$ (\Cref{itm:1_large_dist}) guarantees an event $\Ecal_i$ with mass at least $1 - \exp\cbra{-\Omega_d(n)}$ in $E_i$, but mass at most $\exp\cbra{-\Omega_d(n)}$ in $\Dcal$.
    Thus if we define $\Ecal = \cup \Ecal_i$, then
    \[
        \eps \ge \tvdist{F(\Ucal^m) - \Dcal} \ge \sum_{i} c_i\lambda_i(1 - \exp\cbra{-\Omega_d(n)}) - \ell\cdot \exp\cbra{-\Omega_d(n)}.
    \]
    In particular, $\sum_{i} c_i\lambda_i \le O(\eps)$, and
    \begin{equation}\label{eq:TVD_on_W_Dcal}
        \tvdist{W - F(\Ucal^m)} \le O(\eps).
    \end{equation}

    By expanding the definition of total variation distance, we find
    \begin{align}
        \tvdist{|W| - \Bin(n,\gamma)} &= \frac{1}{2} \sum_{w=0}^n \left|\Pr\sbra{|W| = w} - \Pr\sbra{\Bin(n,\gamma) = w}\right| \notag \\
        &= \frac{1}{2} \sum_{\Ical}\sum_{w\in \Ical} \left|\Pr\sbra{|W| = w} - \Pr\sbra{\Bin(n,\gamma) = w}\right| \notag \\ 
        &\le \sum_{\substack{\Ical\\w\in \Ical}}\left|\Pr\sbra{|W| = w} - \frac{\Pr\sbra{|W| \in \Ical}}{|\Ical|}\right| + \left|\frac{\Pr\sbra{|W| \in \Ical} - \Pr\sbra{\Bin(n,\gamma) \in \Ical}}{|\Ical|}\right| \notag \\
        &\qquad + \left|\Pr\sbra{\Bin(n,\gamma) = w} - \frac{\Pr\sbra{\Bin(n,\gamma) \in \Ical}}{|\Ical|}\right|. \notag
    \end{align}
For any $\delta \in (0,1)$, \Cref{fct:hoeffding} implies that all but $O(\delta)$ of the mass of $\Bin(n,\gamma)$ is supported on $O(\log(1/\delta)/\kappa)$ intervals, which we call \emph{big} (and the remaining intervals \emph{small}).
Moreover, $|W|$ also only assigns $C\cdot(\eps + \kappa\log(1/\delta) + \delta)$ mass to small intervals for some sufficiently large constant $C > 0$, as otherwise we obtain the contradiction
\begin{align*}
    C\cdot(\eps + \kappa\log(1/\delta)) &\le C\cdot(\eps + \kappa\log(1/\delta) + \delta) - O(\delta) \\
    &\le \sum_{\text{small } \Ical}\Pr\sbra{|W| \in \Ical} - \sum_{\text{small } \Ical} \Pr\sbra{\Bin(n,\gamma) \in \Ical} \\
    &= \sum_{\text{big } \Ical}\Pr\sbra{\Bin(n,\gamma) \in \Ical} - \sum_{\text{big } \Ical} \Pr\sbra{|W| \in \Ical} \\
    &\le \sum_{\text{big } \Ical}\abs{\Pr\sbra{\Bin(n,\gamma) \in \Ical} - \Pr\sbra{|W| \in \Ical}} \\
    &\le \sum_{\text{big } \Ical}\Big(\abs{ \Pr\sbra{\Bin(n,\gamma) \in \Ical} - \Pr\sbra{|F(\Ucal^m)| \in \Ical}} \\
    &\qquad + \abs{\Pr\sbra{|F(\Ucal^m)| \in \Ical} - \Pr\sbra{|W| \in \Ical}}\Big) \\
    &\le O(\kappa\log(1/\delta)) + 2\tvdist{|F(\Ucal^m)| - |W|} \le O(\eps + \kappa\log(1/\delta)),
\end{align*}
where the final two inequalities follow from \Cref{eq:close_in_intervals} and \Cref{eq:TVD_on_W_Dcal}, respectively.
Hence,
\begin{align}
    &\phantom{\le}\tvdist{|W| - \Bin(n,\gamma)}
    \notag\\
    &\le O(\eps + \kappa\log(1/\delta) + \delta) + \sum_{\text{big } \Ical}\sum_{w\in \Ical} \left|\Pr\sbra{|W| = w} - \frac{\Pr\sbra{|W| \in \Ical}}{|\Ical|}\right|\notag \\
    &\qquad\: + \left|\frac{\Pr\sbra{|W| \in \Ical} - \Pr\sbra{\Bin(n,\gamma) \in \Ical}}{|\Ical|}\right| + \left|\Pr\sbra{\Bin(n,\gamma) = w} - \frac{\Pr\sbra{\Bin(n,\gamma) \in \Ical}}{|\Ical|}\right| \label{eq:decomposed_sum}.
\end{align}
Clearly, 
\begin{align}
    &\phantom{\le}\sum_{w\in \Ical}\left|\frac{\Pr\sbra{|W| \in \Ical} - \Pr\sbra{\Bin(n,\gamma) \in \Ical}}{|\Ical|}\right|
    \notag\\
    &= \left|\Pr\sbra{|W| \in \Ical} - \Pr\sbra{\Bin(n,\gamma) \in \Ical}\right|\notag \\
    &\le \left|\Pr\sbra{|W| \in \Ical} - \Pr\sbra{|F(\Ucal^m)| \in \Ical}\right| + \left|\Pr\sbra{|F(\Ucal^m)| \in \Ical} - \Pr\sbra{\Bin(n,\gamma) \in \Ical}\right|\notag \\
    &\le \left|\Pr\sbra{|W| \in \Ical} - \Pr\sbra{|F(\Ucal^m)| \in \Ical}\right| + O(\kappa^2). \tag{by \Cref{eq:close_in_intervals}}
\end{align}
Summing over all big intervals, the first term is at most $2\tvdist{W - F(\Ucal^m)} \le O(\eps)$ by \Cref{eq:TVD_on_W_Dcal}, and the second term is at most $O\pbra{\kappa \log(1/\delta)}$.

Additionally, note that \Cref{eq:lem:close_to_bin_weights_mixture:continuity} and the triangle inequality imply
\[
    \left|\Pr\sbra{|W| = w} - \frac{\Pr\sbra{|W| \in \Ical}}{|\Ical|}\right| \le \max_{y\in \Ical} \left|\Pr\sbra{|W| = w} - \Pr\sbra{|W| = y}\right| \le O_d\pbra{\frac{\kappa}{\sqrt{n}}}.
\]
Summing over all $w \in \Ical$ gives an upper bound of $O_d(\kappa^2)$, and further summing over big intervals gives $O_d(\kappa\log(1/\delta))$.
The sum of the 
\[
    \left|\Pr\sbra{\Bin(n,\gamma) = w} - \frac{\Pr\sbra{\Bin(n,\gamma) \in \Ical}}{|\Ical|}\right|
\]
terms can be bounded similarly, since
\[
    \max_{y\in\Ical}\left|\Pr\sbra{\Bin(n,\gamma) = w} - \Pr\sbra{\Bin(n,\gamma)=y}\right| \le O\pbra{\frac{\kappa \sqrt{n}}{\gamma(1-\gamma)n}} = O_d\pbra{\frac{\kappa}{\sqrt{n}}}. \tag{see \Cref{clm:bin_difference}}
\]

Combining these inequalities together with \Cref{eq:decomposed_sum}, we find 
\[
    \tvdist{|W| - \Bin(n,\gamma)} \le O_d(\eps + \delta + \kappa\log(1/\delta)).
\]
Set $\delta = k^{-1/5}$ and recall $\kappa = \sqrt{\polylog(k) / (\sqrt{k}\cdot \poly(\gamma))}$.
Again applying \Cref{eq:TVD_on_W_Dcal}, we conclude
\[
    \tvdist{|F(\Ucal^m)| - \Bin(n,\gamma)} \le \tvdist{|F(\Ucal^m)| - |W|} + \tvdist{|W| - \Bin(n,\gamma)} = O_d(\eps + k^{-1/5}).
\]

We now consider the case of $\gamma = 1/2$.
The proof is almost identical to the previous case, so we only highlight the relevant differences.
In this setting, \Cref{prop:continuity} only provides a continuity guarantee on weights differing by an even integer.
Hence, we refine the previously considered intervals into their even and odd parts.
Note this only changes the number of intervals in our analysis by a factor of two.
Moreover, \Cref{prop:kol_dist} (applied with $\delta = A^{-1/3}$) now provides a bound on the Kolmogorov distance between $|f(\Ucal^m)|$ and a mixture $\Mcal = \eta|\Deven| + (1-\eta)|\Dodd|$ of 
\[
    O\pbra{\frac{\log(A)^{O(d)^d}}{A^{1/3}} + \frac{\polylog(k)}{\sqrt{k}}} \text{\reflectbox{$\coloneqq$}~} \kappa^2.
\]
Carrying out the remaining steps with $\delta$ set to $k^{-1/5}$ as before, we conclude 
\begin{align*}
    \tvdist{|F(\Ucal^m)| - \Mcal} &\le \tvdist{|F(\Ucal^m)| - |W|} + \tvdist{|W| - \Mcal} \\
    &= O_d(\eps + \delta + \kappa\log(1/\delta)) \\
    &= O_d\pbra{\eps + k^{-1/5} + \log(k)\sqrt{\frac{\log(A)^{O(d)^d}}{A^{1/3}} + \frac{\polylog(k)}{\sqrt{k}}}}. \qedhere
\end{align*}
\end{proof}

We now use \Cref{lem:close_to_bin_weights_mixture} to prove the general case.
To obtain that lemma's required assumptions, we first condition on all input bits of large degree.
This will certainly result in the setting where no input bit affects many output bits, but using \Cref{prop:independence_after_cond}, it also lets us conclude the restricted functions generate distributions which resemble $\gamma$-biased product distributions.
For the final condition of \Cref{lem:close_to_bin_weights_mixture}, we need the symmetric distribution $\Dcal$ to be supported on weights around $\gamma n$.
Through a somewhat laborious but straightforward calculation, we show in \Cref{clm:gamma_restricted_TVD} that the restrictions of $f(\Ucal^m)$ with bias roughly $\gamma$ are relatively close to $\Dcal$ conditioned on its output weight being $\gamma n \pm n^{2/3}$, as desired.

\begin{lemma}\label{lem:combining_weights_primitive}
    Let $f\colon\bin^m\to\bin^n$ be a $d$-local function, and let $\eps \in (0,1)$.
    Assume $f(\Ucal^m)$ is $\eps$-close to a symmetric distribution $\Dcal$ over $\bin^n$ where $n$ is sufficiently large in terms of $d$ and $\eps$, and $\eps$ is sufficiently small in terms of $d$.
    Then the distribution over the Hamming weight of $f(\Ucal^m)$ is $O_d\pbra{\frac{1}{\log(1/\eps)}}^{1/5}$-close to a convex combination of the form
    \[
        \sum_{\substack{a\in [0,2^d] \cap \Zbb \\ 
        a \ne 2^{d-1}}} c_a\cdot\Bin(n, a/2^d) + c_e \cdot |\Deven| + c_o\cdot |\Dodd|.
    \]
\end{lemma}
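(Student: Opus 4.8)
## Proof Plan for Lemma \ref{lem:combining_weights_primitive}

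The plan is to combine the three preceding steps in the manner sketched in Step 4 of the proof overview. First I would fix parameters: set $A = O_{d,\eps}(1)$ large enough for the Kolmogorov distance bound of \Cref{prop:kol_dist} to be meaningful (specifically so that $\log(A)^{O(d)^d}/A^{1/3}$ is bounded by something like $(\log(1/\eps))^{-1}$), and set $k = \lfloor \log(1/\eps)/C_d \rfloor$ as in \Cref{prop:independence_after_cond}. Let $S \subseteq [m]$ be the set of input bits of degree at least $n/A$; note $|S| \le dA = O_{d,\eps}(1)$. For each conditioning $\rho \in \bin^S$, \Cref{prop:independence_after_cond} gives a set $T_\rho$ of size $O_{d,k,A}(1)$ and a dyadic bias $\gamma_\rho = a_\rho/2^d$ such that every $k$-tuple of output bits of $f_\rho = f(\cdot,\rho)$ outside $T_\rho$ has distribution $\Ucal_{\gamma_\rho}^k$. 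Grouping the $\rho$'s by their bias, write $f(\Ucal^m) = \sum_{a} c_a' \cdot F_a(\Ucal^{[m]\setminus S})$ where $c_a' = \Pr_\rho[\gamma_\rho = a/2^d]$ (a multiple of $2^{-|S|}$) and $F_a$ is the corresponding sub-mixture.

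Next I would argue that for each bias $\gamma = a/2^d$, the sub-mixture $F_\gamma(\Ucal^{[m]\setminus S})$ has weight distribution close to $\Bin(n,\gamma)$ (or, for $\gamma = 1/2$, to a mixture $\eta_\gamma|\Deven| + (1-\eta_\gamma)|\Dodd|$) via \Cref{lem:close_to_bin_weights_mixture}. To invoke that lemma I need its hypothesis that $\Dcal$ is essentially supported on weights $\gamma n \pm n^{2/3}$. This is not literally true for $\Dcal$ itself, so the move is: by \Cref{lem:dyadic_weight_eps} applied after conditioning (i.e., \Cref{lem:dyadic_weight_after_cond}), the weight of $F_\gamma$ concentrates within $n^{1-1/(800d)}$ of $\gamma n$, hence within $n^{2/3}$ for $n$ large. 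Then one shows (this is the ``somewhat laborious but straightforward'' claim \Cref{clm:gamma_restricted_TVD} alluded to) that $F_\gamma(\Ucal^{[m]\setminus S})$ is $O_d(\eps)$-close to $\Dcal$ conditioned on its weight lying in $[\gamma n - n^{2/3}, \gamma n + n^{2/3}]$: indeed $f(\Ucal^m)$ is $\eps$-close to $\Dcal$, the bulk of the weight mass of $F_\gamma$ sits in this window, and the total mass $f(\Ucal^m)$ places on this window is comparable to the mass $F_\gamma$ contributes there (other $F_{\gamma'}$ place exponentially small mass on weights near $\gamma n$, since $\gamma' n$ is $\Omega(n/2^d)$ away and their weights also concentrate). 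Applying \Cref{lem:close_to_bin_weights_mixture} with this conditioned $\Dcal$ then gives $\tvdist{|F_\gamma(\Ucal^{[m]\setminus S})| - \Bin(n,\gamma)} \le O_d(\eps + k^{-1/5})$ for $\gamma \notin \{0,1/2,1\}$, and the analogous bound with the even/odd mixture for $\gamma = 1/2$; the edge cases $\gamma \in \{0,1\}$ are handled directly inside that lemma.

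Finally I would assemble the mixture. Since the weight distribution is convex over the mixture, $\tvdist{|f(\Ucal^m)| - \Mcal} \le \sum_a c_a' \cdot \tvdist{|F_a(\Ucal^{[m]\setminus S})| - \Bin(n,a/2^d)}$ where $\Mcal = \sum_{a \ne 2^{d-1}} c_a' \Bin(n,a/2^d) + c_e|\Deven| + c_o|\Dodd|$, with $c_e = \sum_{\rho : \gamma_\rho = 1/2} 2^{-|S|}\eta_\rho$ and $c_o = \sum_{\rho : \gamma_\rho = 1/2} 2^{-|S|}(1-\eta_\rho)$. Each summand is at most $O_d(\eps + k^{-1/5})$ for the non-$1/2$ biases, and for $\gamma = 1/2$ it is at most $O_d(\eps + k^{-1/5} + \log(k)\sqrt{\log(A)^{O(d)^d}/A^{1/3} + \polylog(k)/\sqrt{k}})$; choosing $A$ as a sufficiently large function of $\eps$ and recalling $k = \Theta_d(\log(1/\eps))$ makes every term $O_d((1/\log(1/\eps))^{1/5})$, since $(1/\log(1/\eps))^{1/5}$ dominates $\eps$ when $\eps$ is small in terms of $d$. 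This is the claimed bound. The main obstacle I anticipate is the bookkeeping in \Cref{clm:gamma_restricted_TVD} — carefully showing that conditioning the mixture decomposition of $f(\Ucal^m)$ on a weight window near $\gamma n$ isolates $F_\gamma$ up to $O_d(\eps)$ error, which requires the concentration bounds from Step 1 to control the cross-contamination from the other biases and to verify the window captures all but a $\poly(\eps)$ fraction of $F_\gamma$'s mass.
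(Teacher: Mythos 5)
Your plan follows the same route as the paper's proof: set $k = \Theta_d(\log(1/\eps))$, condition on the input bits of degree at least $n/A$, group the restrictions by the bias $\gamma_\rho$ from \Cref{prop:independence_after_cond}, then apply \Cref{lem:close_to_bin_weights_mixture} to each group $F_\gamma$ against $\Dcal$ windowed around $\gamma n$, and recombine over the mixture. This matches the paper's structure, but there is a genuine gap in the concentration step.

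You assert that, by \Cref{lem:dyadic_weight_after_cond}, the weight of $F_\gamma$ concentrates within $n^{1-1/(800d)}$ of $\gamma n$, ``hence within $n^{2/3}$ for $n$ large.'' This implication runs backwards: for $d \ge 1$ we have $1 - 1/(800d) > 2/3$, so $n^{1-1/(800d)} \gg n^{2/3}$, and \Cref{lem:dyadic_weight_after_cond} gives strictly \emph{weaker} concentration than the window $\gamma n \pm n^{2/3}$ that \Cref{lem:close_to_bin_weights_mixture} and \Cref{clm:gamma_restricted_TVD} require. In particular, you cannot conclude from it that $\Pr_{x\sim F_\gamma}\left[\,|x| \in \gamma n \pm n^{2/3}\,\right] \ge 1 - \poly(\eps)$; the cited lemma only places the weight in a much wider band. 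The tight concentration in the paper comes from a different source: \Cref{prop:independence_after_cond} yields $k$-wise independence of the output bits outside $T_\rho$, and then the $k$-th moment bound \Cref{fct:k-moments} gives concentration at scale roughly $\sqrt{nk}$, comfortably inside $n^{2/3}$; this is what the paper's proof of \Cref{clm:gamma_restricted_TVD} actually uses for both the lower bound on $\Pr_{x\sim F_\gamma}[\Ecal^*(x)]$ and the cross-bias contamination bound. A secondary issue: requiring $\log(A)^{O(d)^d}/A^{1/3} \le (\log(1/\eps))^{-1}$ forces $A \gtrsim (\log(1/\eps))^3$, which is incompatible with the premise $\eps < 2^{-cdA}$ of \Cref{prop:independence_after_cond}; the paper simply sets $A = k = \Theta_d(\log(1/\eps))$ and lives with the resulting polylogarithmically corrected decay.
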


\begin{proof}
    Let $k$ be the largest even integer less than $\log(1/\eps)/C_d$, where $C_d\ge1$ is a sufficiently large constant depending only on $d$.
    By choosing $\eps$ small enough, we may assume $k \ge 4$ (in order to later apply \Cref{fct:k-moments}).
    Define $S \subseteq [m]$ to be the set of input bits with degree at least $n/k$.
    Note that by the locality assumption, $|S| \le dk$.
    For each conditioning $\rho \in \bin^S$ on the bits in $S$, \Cref{prop:independence_after_cond} guarantees a subset $T_\rho \subseteq [n]$ of size $|T_\rho|\le O_{d,k}(1)$ and a bias $\gamma_\rho=a_\rho/2^d$, where $0 \le a_\rho \le 2^d$ is an integer, such that every $k$-tuple of output bits in $[n]\setminus T_\rho$ has distribution $\Ucal_{\gamma_\rho}^k$.

    We proceed by grouping the restricted functions according to their biases.
    More formally, we write $f(\Ucal^m)$ as the mixture
    \[
        f(\Ucal^m) = \sum_{\gamma} \Pr_\rho\sbra{\gamma_\rho = \gamma}\cdot f_\gamma(\Ucal^{[m]\setminus S}) \quad\text{where}\quad f_\gamma(\Ucal^{[m]\setminus S}) \coloneqq \E_{\rho} \sbra{f(\Ucal^{[m]\setminus S}, \rho) \mid \gamma_\rho = \gamma}.
    \]
    Similarly, we let $\Dcal_\gamma$ denote the distribution $\Dcal$ conditioned on the Hamming weight being $\gamma n \pm n^{2/3}$.
    Since $f(\Ucal^m)$ is close to $\Dcal$ by assumption, $f_\gamma(\Ucal^{[m]\setminus S})$ should be close to $\Dcal_\gamma$.
    We formalize this intuition in the following claim.
    \begin{claim}\label{clm:gamma_restricted_TVD}
        If $\Pr_\rho\sbra{\gamma_\rho = \gamma} > 0$, then $\tvdist{f_\gamma(\Ucal^{[m]\setminus S}) - \Dcal_\gamma} \le O_d(\sqrt{\eps})$.
    \end{claim}
    \noindent The proof of \Cref{clm:gamma_restricted_TVD} is routine but rather tedious, so we defer the details to \Cref{app:missing_sec:characterize}.

    For each $\gamma$, we combine \Cref{clm:gamma_restricted_TVD} and \Cref{lem:close_to_bin_weights_mixture} 
    to deduce
    \begin{align}
        \tvdist{|f_\gamma(\Ucal^{[m]\setminus S})| - \Pcal_\gamma} &\le O_d\pbra{\sqrt{\eps} + k^{-1/5} + \log(k)\sqrt{\frac{(\log(k))^{O(d)^d}}{k^{1/3}} + \frac{\polylog(k)}{\sqrt{k}}}} \notag \\
        &\le O_d\pbra{\sqrt{\eps} + \pbra{\frac{1}{\log(1/\eps)}}^{1/5}} \tag{since $k = \Theta_d(\log(1/\eps))$} \\
        &\le O_d\pbra{\frac{1}{\log(1/\eps)}}^{1/5},\label{eq:conditioned_weight_distance}
    \end{align}
    where
    \[
        \Pcal_\gamma = 
        \begin{cases}
            \Bin(n,\gamma) & \text{ if } \gamma \ne 1/2 \\
            \eta |\Deven| + (1-\eta) |\Dodd| & \text{ if } \gamma = 1/2
        \end{cases}
    \]
    for some $\eta \in [0,1]$.
    Define the mixture $\Pcal \coloneqq \sum_{\gamma} \Pr_\rho\sbra{\gamma_\rho = \gamma}\cdot \Pcal_\gamma$.
    Writing the output weight of $f$ as a convex combination of the conditionings, we find
    \begin{align*}
        \tvdist{|f(\Ucal^m)| - \Pcal} &= \tvdist{\sum_{\gamma} \Pr_\rho\sbra{\gamma_\rho = \gamma}\cdot |f_\gamma(\Ucal^{[m]\setminus S})| - \sum_{\gamma} \Pr_\rho\sbra{\gamma_\rho = \gamma} \cdot \Pcal_\gamma} \\
        &\le \sum_{\gamma} \Pr_\rho\sbra{\gamma_\rho = \gamma} \tvdist{|f_\gamma(\Ucal^{[m]\setminus S})| - \Pcal_\gamma} \tag{by triangle inequality} \\ 
        &\le O_d\pbra{\frac{1}{\log(1/\eps)}}^{1/5}. \tag{by \Cref{eq:conditioned_weight_distance}}
    \end{align*}
    This completes the proof of \Cref{lem:combining_weights_primitive}.
\end{proof}

\Cref{lem:combining_weights_primitive} states that the weight distribution of $f(\Ucal^m)$ must be close to a mixture of specific distributions.
The following lemma provides additional information about this mixture by describing the structure of the mixing weights.

\begin{lemma}\label{lem:combining_weights}
    Let $f\colon\bin^m\to\bin^n$ be a $d$-local function, and let $\eps \in (0,1)$.
    Assume $f(\Ucal^m)$ is $\eps$-close to a symmetric distribution $\Dcal$ over $\bin^n$ where $n$ is sufficiently large in terms of $d$ and $\eps$, and $\eps$ is sufficiently small in terms of $d$.
    Then the distribution over the Hamming weight of $f(\Ucal^m)$ is $O_d\pbra{\frac{1}{\log(1/\eps)}}^{1/5}$-close to a convex combination of the form
    \[
        \sum_{\substack{a\in [0,2^d] \cap \Zbb \\ 
        a \ne 2^{d-1}}} c_a\cdot\Bin(n, a/2^d) + c_e \cdot |\Deven| + c_o\cdot |\Dodd|,
    \]
    where each $c_a = c_a' / 2^C$ for some integer $0 \le c_a' \le 2^C$ and a fixed integer $C = O_d(1)$.
    Moreover, there exist at most $2^C$ many degree-$d$ $\Fbb_2$-polynomials $\{p_i \colon \Fbb_2^m \to \Fbb_2\}$, each with $O_d(n)$ monomials, such that 
    \[
        c_e = \frac{1}{2^C}\cdot \sum_i \Pr_{x\sim \Ucal^m}\sbra{p_i(x) = 0} \quad\text{and}\quad c_o = \frac{1}{2^C}\cdot \sum_i \Pr_{x\sim \Ucal^m}\sbra{p_i(x) = 1}.
    \]
\end{lemma}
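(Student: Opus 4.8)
The plan is to bootstrap \Cref{lem:combining_weights_primitive} by performing a \emph{second}, coarser conditioning on the input bits of $f$, this time using a threshold that depends only on $d$. Concretely, fix $C = O_d(1)$ large enough (say $C$ at least $d$ times the implicit constant from \Cref{prop:independence_after_cond} applied with parameter $A = 2^{100d}$), let $S' \subseteq [m]$ be the set of input bits with degree at least $n/2^C$, and note $|S'| \le d\cdot 2^C \le 2^{C'}$ for another $d$-dependent constant; after inflating $C$ we may assume $|S'| \le C$, so there are at most $2^C$ conditionings $\kappa \in \bin^{S'}$. By \Cref{prop:independence_after_cond} (whose \hyperref[par:remove_large_inf]{Step 1} conclusions only require the weaker threshold $n/O_d(1)$), each restricted function $f_\kappa$ produces a distribution with a well-defined bias $\gamma_\kappa = a_\kappa/2^d$, and $|f_\kappa(x)|/n$ concentrates around $\gamma_\kappa$ by \Cref{lem:dyadic_weight_after_cond}. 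Thus $f(\Ucal^m) = 2^{-|S'|}\sum_{\kappa} f_\kappa(\Ucal^{[m]\setminus S'})$ exhibits $f(\Ucal^m)$ as a mixture in which every mixing weight is an integer multiple of $2^{-|S'|}$, hence of $2^{-C}$.

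Next I would reconcile this coarse mixture with the mixture $\Pcal = \sum_\gamma \Pr_\rho[\gamma_\rho=\gamma]\cdot\Pcal_\gamma$ produced in the proof of \Cref{lem:combining_weights_primitive} using the finer threshold $n/k$. For each $\gamma \ne 1/2$, standard concentration (Chebyshev as in \Cref{lem:dyadic_weight_after_cond}, or \Cref{fct:k-moments}) shows that $\Pr_{x\sim f(\Ucal^m)}[\,|x|/n = \gamma \pm n^{-1/(800d)}\,]$ equals $\sum_{\gamma':\,|\gamma'-\gamma|\,\text{small}}(\text{mixing weight on }\gamma')$ up to $\poly(\eps)$ error, for \emph{both} decompositions, since distinct integer multiples of $2^{-d}$ are $2^{-d}$-separated. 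Hence $\Pr_\rho[\gamma_\rho=\gamma] = 2^{-|S'|}\#\{\kappa : \gamma_\kappa = \gamma\} \pm O_d(\eps)$, and since \Cref{lem:combining_weights_primitive} already certified that $|f(\Ucal^m)|$ is $O_d((1/\log(1/\eps))^{1/5})$-close to $\Pcal$, it is also that close to the mixture with weights $c_a \coloneqq 2^{-C}\#\{\kappa : \gamma_\kappa = a/2^d\}$ (absorbing the $O_d(\eps)$ slack into the already-worse $(1/\log(1/\eps))^{1/5}$ bound after re-normalizing). This gives the $c_a = c_a'/2^C$ form for $a \ne 2^{d-1}$.

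For the $\gamma = 1/2$ part I would track parity. The subtlety is that the $\gamma = 1/2$ piece of $\Pcal$ is $\eta|\Deven| + (1-\eta)|\Dodd|$ where $\eta$ came from \Cref{prop:kol_dist}; I claim $\eta$ is (up to small error) $\frac{2^{-|S'|}\sum_{\kappa:\gamma_\kappa=1/2}\Pr_x[|f_\kappa(x)|\text{ even}]}{2^{-|S'|}\#\{\kappa:\gamma_\kappa=1/2\}}$. Indeed, conditioning on weight near $n/2$, the event ``$|f(\Ucal^m)|$ is even'' splits over the $\kappa$'s, and among the $\kappa$ with $\gamma_\kappa = 1/2$ the parity is exactly $|f_\kappa(x)|\bmod 2$ (the concentration in \Cref{lem:dyadic_weight_after_cond} rules out weight far from $n/2$ up to $\poly(\eps)$); among $\kappa$ with $\gamma_\kappa \ne 1/2$ the contribution to weight near $n/2$ is exponentially small. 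So $c_e \coloneqq 2^{-C}\sum_{\kappa:\gamma_\kappa=1/2}\Pr_x[|f_\kappa(x)|\text{ even}]$ and $c_o$ defined analogously with ``odd'' satisfy $c_e + c_o = c_{2^{d-1}}$ (the total weight on bias $1/2$) and match the $\eta$-split up to $O_d(\eps)$. Finally, setting $p_i(y) \coloneqq |f(y,\kappa_i)|\bmod 2 = \bigoplus_{j\in[n]} (\text{the output bit } f_{\kappa_i}(y)_j)$ and noting that, since $f$ is $d$-local with $n$ output bits, each $p_i$ is a degree-$d$ $\Fbb_2$-polynomial in the remaining $m - |S'|$ variables with at most $dn$ monomials (pad with dummy variables so the domain is $\Fbb_2^m$; the count $O_d(n)$ is unaffected), we get exactly $c_e = 2^{-C}\sum_i \Pr_{x}[p_i(x) = 0]$ and $c_o = 2^{-C}\sum_i \Pr_x[p_i(x)=1]$ after re-indexing the $\kappa$ with $\gamma_\kappa = 1/2$ as $i = 1, \dots, \#\{\kappa:\gamma_\kappa=1/2\} \le 2^C$.

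The main obstacle is the bookkeeping that links the \emph{value} $\eta$ (an analytic object coming out of \Cref{prop:kol_dist}) to the \emph{combinatorial} quantity $2^{-C}\sum_{\kappa:\gamma_\kappa=1/2}\Pr_x[|f_\kappa(x)|\text{ even}]$: one must argue that the ``approximate continuity'' and Kolmogorov-distance machinery pins down $\eta$ uniquely (up to $O_d(\eps)$) as the even-parity mass, rather than leaving it as a free parameter. This is exactly the point flagged in the proof overview ("it must be that the mixing weight on $\Bin(n,\gamma)$ in $\Mcal$ is roughly $\Pr_\rho[\gamma_\rho=\gamma]$"), and it relies on the assumption that $f(\Ucal^m)$ is genuinely $\eps$-close to the symmetric $\Dcal$ — without that, $\eta$ would not be determined. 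I would isolate this as a short sub-claim: \emph{any} mixture of the form in the statement that is $o(1)$-close to $|f(\Ucal^m)|$ must have its weight on bias $1/2$ split between $|\Deven|$ and $|\Dodd|$ in the ratio $\Pr[|f(\Ucal^m)|\text{ even}\mid |f(\Ucal^m)|\approx n/2] : \Pr[\text{odd}\mid\cdot]$, which follows because $|\Deven|$ and $|\Dodd|$ have disjoint supports and every other component of the mixture is either supported away from $n/2$ or is itself one of $|\Deven|,|\Dodd|$.
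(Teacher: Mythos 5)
Your proposal takes essentially the same approach as the paper's proof: condition on input bits of degree above a $d$-dependent threshold to obtain a mixture with dyadic-rational weights, match those weights to the coefficients from \Cref{lem:combining_weights_primitive} by comparing the probability mass near each $\gamma n$ (using concentration of $|f_\kappa|/n$ around $a_\kappa/2^d$ via \Cref{lem:dyadic_weight_after_cond}), and encode the $\gamma=1/2$ split via the polynomials $p_\kappa(x)=|f(x,\kappa)|\bmod 2$. Two small bookkeeping fixes: the degree threshold should be fixed first (the paper uses $n/2^{100d}$) and then $C\coloneqq|S'|\le d\cdot 2^{100d}=O_d(1)$ set afterwards, rather than choosing $C$ so that $|S'|\le C$ while $S'$ itself depends on $C$; and each output bit, being a $d$-junta, can contribute up to $2^d$ monomials (not $d$), though the stated bound of $O_d(n)$ monomials in total is what matters and is correct.
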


\begin{proof}
    We know from \Cref{lem:combining_weights_primitive} that $f(\Ucal^m)$'s weight distribution is $O_d\pbra{\frac{1}{\log(1/\eps)}}^{1/5}$-close to a convex combination of the form
    \[
        \Pcal = \sum_{\substack{a\in [0,2^d] \cap \Zbb \\ 
        a \ne 2^{d-1}}} c_a\cdot\Bin(n, a/2^d) + c_e \cdot |\Deven| + c_o\cdot |\Dodd|,
    \]
    so it remains to reason about the mixing weights.
    To this end, let $S \subseteq [m]$ be the set of input bits with degree at least $n/2^{100d}$.
    Observe that this is a smaller set than the $S$ used in the proof of \Cref{lem:combining_weights_primitive}; this will ultimately provide stronger control over the mixing weights.
    
    For each conditioning $\sigma \in \bin^S$ on the bits in $S$, \Cref{lem:dyadic_weight_after_cond} guarantees an integer $0\le a_\sigma\le2^d$ such that
    \begin{equation}\label{eq:concentration_around_a_sigma}
        \Pr_{x\sim\Ucal^{[m]\setminus S}}\sbra{\abs{\frac{\abs{f(x,\sigma)}}n-\frac{a_\sigma}{2^d}}\ge\frac1{n^{1/(800d)}}}\le \poly(\eps).
    \end{equation}
    This also implies for any integer $b\ne a_\sigma$, we have
    \begin{align}
        \Pr_{x\sim\Ucal^{[m]\setminus S}}\sbra{\abs{\frac{\abs{f(x,\sigma)}}n-\frac{b}{2^d}}\le\frac1{n^{1/(800d)}}} &\le \Pr_{x\sim\Ucal^{[m]\setminus S}}\sbra{\abs{\frac{\abs{f(x,\sigma)}}n-\frac{a_\sigma}{2^d}}\ge \frac1{2^d} - \frac1{n^{1/(800d)}}} \notag \\
        &\le \Pr_{x\sim\Ucal^{[m]\setminus S}}\sbra{\abs{\frac{\abs{f(x,\sigma)}}n-\frac{a_\sigma}{2^d}} \ge \frac1{n^{1/(800d)}}} \le \poly(\eps). \label{eq:anticoncentration_around_b}
    \end{align}
    Define $p_\sigma \colon \bin^{[m]\setminus S} \to \bin$ to be $|f(x, \sigma)| \bmod 2$.
    Note that $p_\sigma$ is a sum (modulo 2) of $n$ output bits, each of which depends on at most $d$ input bits.
    Hence, $p_\sigma$ can be expressed as a degree-$d$ $\Fbb_2$-polynomial with $O_d(n)$ monomials.
    We will show that the conclusion of \Cref{lem:combining_weights} is satisfied by
    \begin{equation}\label{eq:qcal_defn}
        \Qcal = \sum_{\substack{a\in [0,2^d] \cap \Zbb \\ 
        a \ne 2^{d-1}}} \frac{c_a'}{2^{|S|}}\cdot\Bin(n, a/2^d) + \frac{c_e'}{2^{|S|}} \cdot |\Deven| + \frac{c_o'}{2^{|S|}}\cdot |\Dodd|,
    \end{equation}
    where
    \[
        c_a' = \#\cbra{\sigma : a_\sigma = a}, \quad c_e' = \sum_{\sigma : a_\sigma = 2^{d-1}} \Pr_{x\sim \Ucal^{[m]\setminus S}}\sbra{p_\sigma(x) = 0},\ \text{and}\ c_o' = \sum_{\sigma : a_\sigma = 2^{d-1}} \Pr_{x\sim \Ucal^{[m]\setminus S}}\sbra{p_\sigma(x) = 1}.
    \]

    The proof will proceed in two steps.
    First, we will show that $c_a$ is essentially the probability that $|f(\Ucal^m)|/n$ is close to $a/2^d$.
    Second, we will show that this probability is closely approximated by the fraction of conditionings which concentrate around $a$.
    (As usual, there are some additional considerations in the $a = 2^{d-1}$ case.)

    For notational convenience, define $c_{2^{d-1}} = c_e + c_o$ and
    \begin{itemize}
        \item $\delta_a = \Pr_{x \sim \Ucal^m}\sbra{\abs{\frac{|f(x)|}{n} - \frac{a}{2^d}} \le \frac{1}{n^{1/(800d)}}}$,
        \item $\delta_e = \Pr_{x \sim \Ucal^m}\sbra{\abs{\frac{|f(x)|}{n} - \frac{1}{2}} \le \frac{1}{n^{1/(800d)}} \land |f(x)| \text{ is even}}$,
        \item $\delta_o = \Pr_{x \sim \Ucal^m}\sbra{\abs{\frac{|f(x)|}{n} - \frac{1}{2}} \le \frac{1}{n^{1/(800d)}} \land |f(x)| \text{ is odd}}$.
    \end{itemize}
    
    \begin{claim}\label{clm:ca_close_to_deltaa}
        For any $0 \le a \le 2^d$, we have
        $\abs{c_a - \delta_a} \le O_d\pbra{\frac{1}{\log(1/\eps)}}^{1/5}$.
        Moreover, the same upper bound holds on $\abs{c_e - \delta_e}$ and $ \abs{c_o - \delta_o}$.
    \end{claim}
    \begin{proof}[Proof of \Cref{clm:ca_close_to_deltaa}]
        We first record a number of consequences of Hoeffding's inequality (\Cref{fct:hoeffding}) for sufficiently large $n$.
        We have
        \begin{equation*}
             \Pr_{w \sim \Bin(n, a/2^d)}\sbra{\abs{\frac{w}{n} - \frac{a}{2^d}} \le \frac{1}{n^{1/(800d)}}} \ge 1 - \exp\cbra{-2n^{1 - (1/(400d))}} \ge 1 - e^{-\sqrt{n}}
        \end{equation*}
        and
        \begin{align*}
            \Pr_{w\sim \Bin(n, 1/2)}\sbra{\abs{\frac{w}{n} - \frac{1}{2}} \le \frac{1}{n^{1/(800d)}} \mid w \text{ is even}} &= 1 - \Pr_{w\sim \Bin(n, 1/2)}\sbra{\abs{\frac{w}{n} - \frac{1}{2}} > \frac{1}{n^{1/(800d)}} \mid w \text{ is even}} \\
            &\ge 1 - \frac{\Pr_{w\sim \Bin(n, 1/2)}\sbra{\abs{\frac{w}{n} - \frac{1}{2}} > \frac{1}{n^{1/(800d)}}}}{\Pr_{w\sim \Bin(n, 1/2)}\sbra{w \text{ is even}}} \\
            &\ge 1 - 2\exp\cbra{-2n^{1 - (1/(400d))}} \ge 1 - e^{-\sqrt{n}}.
        \end{align*}
        Furthermore for $b \ne a$, 
        \begin{align*}
            \Pr_{w \sim \Bin(n, b/2^d)}\sbra{\abs{\frac{w}{n} - \frac{a}{2^d}} \le \frac{1}{n^{1/(800d)}}} &\le \Pr_{w \sim \Bin(n, b/2^d)}\sbra{\abs{\frac{w}{n} - \frac{b}{2^d}} \le \frac{1}{2^d} - \frac{1}{n^{1/(800d)}}} \\
            &\le \exp\cbra{-2n\pbra{\frac{1}{2^d} - \frac{1}{n^{1/(800d)}}}^2} \le e^{-\sqrt{n}}.
        \end{align*}
        Thus for any $0 \le a \le 2^d$, we have
        \begin{align*}
            \delta_a' \coloneqq \Pr_{w\sim \Pcal}\sbra{\abs{\frac{w}{n} - \frac{a}{2^d}} \le \frac{1}{n^{1/(800d)}}} &= c_a\cdot \Pr_{w\sim \Bin(n, a/2^d)}\sbra{\abs{\frac{w}{n} - \frac{a}{2^d}} \le \frac{1}{n^{1/(800d)}}} \\
            &\qquad + \sum_{b\ne a} c_b\cdot \Pr_{w\sim \Bin(n, b/2^d)}\sbra{\abs{\frac{w}{n} - \frac{a}{2^d}} \le \frac{1}{n^{1/(800d)}}} \\
            &\le c_a + e^{-\sqrt{n}}.
        \end{align*}
        Similarly, $\delta_a' \ge c_a - e^{-\sqrt{n}}$.
        Combining, we find that
        \begin{align*}
            \abs{c_a - \delta_a} \le \abs{c_a - \delta_a'} + \abs{\delta_a' - \delta_a} &\le e^{-\sqrt{n}} + \tvdist{\Pcal - |f(\Ucal^m)|} \le O_d\pbra{\frac{1}{\log(1/\eps)}}^{1/5}.
        \end{align*}

        For the case of $\delta_e$, we have
        \begin{align*}
            \delta_e' &\coloneqq \Pr_{w\sim \Pcal}\sbra{\abs{\frac{w}{n} - \frac{1}{2}} \le \frac{1}{n^{1/(800d)}} \land w \text{ is even}} \\
            &= c_e\cdot \Pr_{w\sim \Bin(n, 1/2)}\sbra{\abs{\frac{w}{n} - \frac{1}{2}} \le \frac{1}{n^{1/(800d)}} \mid w \text{ is even}} \\
            &\qquad + \sum_{b\ne 2^{d-1}} c_b\cdot \Pr_{w\sim \Bin(n, b/2^d)}\sbra{\abs{\frac{w}{n} - \frac{1}{2}} \le \frac{1}{n^{1/(800d)}} \land w \text{ is even}} \\
            &\le c_e + \sum_{b\ne 2^{d-1}} c_b\cdot \Pr_{w\sim \Bin(n, b/2^d)}\sbra{\abs{\frac{w}{n} - \frac{1}{2}} \le \frac{1}{n^{1/(800d)}}} \le c_e + e^{-\sqrt{n}}
        \end{align*}
        and
        \begin{align*}
            \delta_e' &\ge c_e\cdot \Pr_{w\sim \Bin(n, 1/2)}\sbra{\abs{\frac{w}{n} - \frac{1}{2}} \le \frac{1}{n^{1/(800d)}} \mid w \text{ is even}} \ge c_e - e^{-\sqrt{n}}.
        \end{align*}
        Thus,
        \begin{align*}
            \abs{c_e - \delta_e} \le \abs{c_e - \delta_e'} + \abs{\delta_e' - \delta_e} &\le e^{-\sqrt{n}} + \tvdist{\Pcal - |f(\Ucal^m)|} \le O_d\pbra{\frac{1}{\log(1/\eps)}}^{1/5}.
        \end{align*}
        The analysis of $\delta_o$ is essentially identical.
    \end{proof}

    Recall $p_\sigma(x) = |f(x, \sigma)| \bmod 2$.
    \begin{claim}\label{clm:deltaa_close_to_bias_count}
        For any $0 \le a \le 2^d$, we have $\abs{\delta_a - \frac{\#\cbra{\sigma : a_\sigma = a}}{2^{|S|}}} \le \poly(\eps).$
        Moreover, the same upper bound holds on 
        \[
            \abs{\delta_e - \frac{1}{2^{|S|}} \sum_{\sigma : a_\sigma = 2^{d-1}} \Pr_{x\sim \Ucal^{[m]\setminus S}}\sbra{p_\sigma(x) = 0}} \quad\text{and}\quad \abs{\delta_o - \frac{1}{2^{|S|}} \sum_{\sigma : a_\sigma = 2^{d-1}} \Pr_{x\sim \Ucal^{[m]\setminus S}}\sbra{p_\sigma(x) = 1}}.
        \]
    \end{claim}
    \begin{proof}[Proof of \Cref{clm:deltaa_close_to_bias_count}]
        We can express $\delta_a$ as
        \[
            \frac{1}{2^{|S|}}\Bigg(\sum_{\sigma : a_\sigma = a} \Pr_{x \sim \Ucal^{[m]\setminus S}}\sbra{\abs{\frac{|f(x, \sigma)|}{n} - \frac{a}{2^d}} \le n^{-\frac1{800d}}} + \sum_{\sigma : a_\sigma \ne a} \Pr_{x \sim \Ucal^{[m]\setminus S}}\sbra{\abs{\frac{|f(x, \sigma)|}{n} - \frac{a}{2^d}} \le n^{-\frac1{800d}}}\Bigg).
        \]
        If $a_\sigma = a$, then 
        \[
            1 - \poly(\eps) \le \Pr_{x \sim \Ucal^{[m]\setminus S}}\sbra{\abs{\frac{|f(x, \sigma)|}{n} - \frac{a}{2^d}} \le \frac{1}{n^{1/(800d)}}} \le 1
        \]
        by \Cref{eq:concentration_around_a_sigma}. 
        Additionally, if $a_\sigma \ne a$, then
        \[
            0 \le \Pr_{x \sim \Ucal^{[m]\setminus S}}\sbra{\abs{\frac{|f(x, \sigma)|}{n} - \frac{a}{2^d}} \le \frac{1}{n^{1/(800d)}}} \le \poly(\eps)
        \]
        by \Cref{eq:anticoncentration_around_b}.
        Thus,
        \begin{align*}
            \delta_a \le \frac{1}{2^{|S|}}\Bigg(\#\cbra{\sigma : a_\sigma = a} + \#\cbra{\sigma : a_\sigma \ne a}\cdot \poly(\eps) \Bigg) \le \frac{\#\cbra{\sigma : a_\sigma = a}}{2^{|S|}} + \poly(\eps).
        \end{align*}
        The lower bound on $\delta_a$ follows similarly.

        For the case of $\delta_e$, we have
        \begin{align*}
            \delta_e &= \frac{1}{2^{|S|}}\Bigg(\sum_{\sigma : a_\sigma = 2^{d-1}} \Pr_{x \sim \Ucal^{[m]\setminus S}}\sbra{\abs{\frac{|f(x, \sigma)|}{n} - \frac{1}{2}} \le \frac{1}{n^{1/(800d)}} \land |f(x, \sigma)| \text{ is even}} \\
            & \qquad \qquad + \sum_{\sigma : a_\sigma \ne 2^{d-1}} \Pr_{x \sim \Ucal^{[m]\setminus S}}\sbra{\abs{\frac{|f(x, \sigma)|}{n} - \frac{1}{2}} \le \frac{1}{n^{1/(800d)}}\land |f(x, \sigma)| \text{ is even}}\Bigg) \\
            &\le \frac{1}{2^{|S|}}\Bigg(\sum_{\sigma : a_\sigma = 2^{d-1}} \Pr_{x \sim \Ucal^{[m]\setminus S}}\sbra{|f(x, \sigma)| \text{ is even}} + \sum_{\sigma : a_\sigma \ne 2^{d-1}} \Pr_{x \sim \Ucal^{[m]\setminus S}}\sbra{\abs{\frac{|f(x, \sigma)|}{n} - \frac{1}{2}} \le \frac{1}{n^{1/(800d)}}}\Bigg) \\
            &\le \frac{1}{2^{|S|}}\sum_{\sigma : a_\sigma = 2^{d-1}} \Pr_{x \sim \Ucal^{[m]\setminus S}}\sbra{p_\sigma(x) = 0} + \poly(\eps)
        \end{align*}
        and
        \begin{align*}
            \delta_e &\ge \frac{1}{2^{|S|}}\sum_{\sigma : a_\sigma = 2^{d-1}} \Pr_{x \sim \Ucal^{[m]\setminus S}}\sbra{\abs{\frac{|f(x, \sigma)|}{n} - \frac{1}{2}} \le \frac{1}{n^{1/(800d)}} \land |f(x, \sigma)| \text{ is even}} \\
            &\ge \frac{1}{2^{|S|}}\sum_{\sigma : a_\sigma = 2^{d-1}} \pbra{\Pr_{x \sim \Ucal^{[m]\setminus S}}\sbra{|f(x, \sigma)| \text{ is even}} - \Pr_{x \sim \Ucal^{[m]\setminus S}}\sbra{\abs{\frac{|f(x, \sigma)|}{n} - \frac{1}{2}} > \frac{1}{n^{1/(800d)}}}} \\
            &\ge \frac{1}{2^{|S|}}\sum_{\sigma : a_\sigma = 2^{d-1}} \Pr_{x \sim \Ucal^{[m]\setminus S}}\sbra{p_\sigma(x) = 0} - \poly(\eps).
        \end{align*}
        The case of $\delta_o$ is essentially identical.
    \end{proof}

    Combining \Cref{clm:ca_close_to_deltaa} and \Cref{clm:deltaa_close_to_bias_count} gives
    \begin{equation*}
        \abs{c_a - \frac{\#\cbra{\sigma : a_\sigma = a}}{2^{|S|}}} \le O_d\pbra{\frac{1}{\log(1/\eps)}}^{1/5}
    \end{equation*}
    for all $0 \le a \le 2^d$, and
    \[
        \abs{c_e - \frac{1}{2^{|S|}} \sum_{\sigma : a_\sigma = 2^{d-1}} \Pr_{x\sim \Ucal^{[m]\setminus S}}\sbra{p_\sigma(x) = 0}} \le O_d\pbra{\frac{1}{\log(1/\eps)}}^{1/5}
    \]
    (and similarly for $c_o$).
    Hence $|f(\Ucal^m)|$ is $O_d\pbra{\frac{1}{\log(1/\eps)}}^{1/5}$-close to $\Qcal$ (defined in \Cref{eq:qcal_defn}), as desired.
\end{proof}

Now that we have the appropriate result for weight distributions, our main result \Cref{thm:main} quickly follows from \Cref{lem:distance_to_sym}.
We restate \Cref{thm:main} below for convenience.

\thmclassificationfull*

\begin{proof}
    We will prove 
    \begin{equation}\label{eq:final_to_prove}
        \tvdist{f(\Ucal^m) - \Mcal} \le C_d\cdot \pbra{\frac{1}{\log(1/\eps)}}^{1/5},
    \end{equation}
    where $\Mcal$ is a distribution of the form in the theorem statement, and $C_d \ge 1$ is a sufficiently large constant depending only on $d$.
    (Importantly, we will want $1/C_d$ to be at most the required upper bound on $\eps$ in the premise of \Cref{lem:combining_weights}.)
    We assume $\eps \le 1/C_d$, as otherwise the bound in \Cref{eq:final_to_prove} exceeds 1 and trivially holds.
    We also assume $d \ge 1$, as otherwise $f$ is a constant function, so we can set $\Mcal$ to be either the 0 or 1-biased product distribution.
    
    Combining our original distance assumption with \Cref{lem:distance_to_sym}, we have
    \[
        \eps \ge \tvdist{f(\Ucal^m) - \Dcal} = \Theta(\tvdist{|f(\Ucal^m)|-|\Dcal|} + \tvdist{f(\Ucal^m)-f(\Ucal^m)_\sym}).
    \]
    In particular, $\tvdist{f(\Ucal^m)-f(\Ucal^m)_\sym} = O(\eps)$.
    By \Cref{lem:combining_weights}, we have that the distribution over the Hamming weight of $f(\Ucal^m)$ is $O_d\pbra{\frac{1}{\log(1/\eps)}}^{1/5}$-close to a distribution $\Qcal$ of the form
    \[
        \sum_{\substack{a\in [0,2^d] \cap \Zbb \\ 
        a \ne 2^{d-1}}} c_a\cdot\Bin(n, a/2^d) + c_e \cdot |\Deven| + c_o\cdot |\Dodd|,
    \]
    where the mixing weights have the desired properties.
    Let $\Mcal$ be the symmetric distribution over $\bin^n$ with weight distribution $|\Mcal| = \Qcal$.
    Again applying \Cref{lem:distance_to_sym}, we have
    \begin{align*}
        \tvdist{f(\Ucal^m) - \Mcal} &= O\pbra{\tvdist{|f(\Ucal^m)| - \Qcal} + \tvdist{f(\Ucal^m)-f(\Ucal^m)_\sym}} = O_d\pbra{\frac{1}{\log(1/\eps)}}^{1/5}. \qedhere
    \end{align*}
\end{proof}

\section*{Acknowledgments}
We thank Farzan Byramji and anonymous reviewers for helpful comments on an earlier draft of this paper.

\bibliographystyle{alpha} 
\bibliography{biblio}

\appendix
\section{An Example Towards the Exact Characterization}\label{app:open_problems}

In this appendix, we provide the full details behind the example described in \Cref{sec:open_prob}.
Consider the distribution $\Pcal = \Ucal_{1/4}^n + 2^{-n-1}\Deven - 2^{-n-1}\Dodd$.
(Recall $\Pcal$ is not a mixture, but rather the distribution that assigns $x$ probability $\Ucal_{1/4}^n(x) + 2^{-n-1}\Deven(x) - 2^{-n-1}\Dodd(x)$.)

\begin{claim}\label{clm:conj_exp}
    $\Pcal$ is not a distribution of the form given by \Cref{thm:main}.
\end{claim}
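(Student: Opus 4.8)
The plan is to write $\Pcal$ as the product distribution $\Ucal_{1/4}^n$ plus a single ``top parity'' perturbation, and then show via Fourier/moment considerations that this perturbation cannot be absorbed by a \emph{non-negative} mixture of the allowed building blocks. The first step is the algebraic identity
\[
    \Pcal(x) = \Ucal_{1/4}^n(x) + 2^{-2n}\cdot(-1)^{|x|},
\]
which follows from the elementary computation $\Deven(x) - \Dodd(x) = 2^{-(n-1)}(-1)^{|x|}$ (so $2^{-n-1}(\Deven - \Dodd)$ equals $2^{-2n}(-1)^{|x|}$ as a function on $\bin^n$). Equivalently, in the Fourier basis $\chi_S(x) = \prod_{i\in S}(-1)^{x_i}$ we have $\Deven = 2^{-n}\chi_\emptyset + 2^{-n}\chi_{[n]}$, $\Dodd = 2^{-n}\chi_\emptyset - 2^{-n}\chi_{[n]}$, $\Ucal_\gamma^n = 2^{-n}\sum_S (1-2\gamma)^{|S|}\chi_S$, and $\Pcal = 2^{-n}\sum_S (1/2)^{|S|}\chi_S + 2^{-2n}\chi_{[n]}$.

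Next I would argue by contradiction: suppose $\Pcal = \sum_{a \in [0,2^d]\cap\Zbb,\ a \ne 2^{d-1}} c_a\cdot\Ucal_{a/2^d}^n + c_e\Deven + c_o\Dodd$ with $c_a, c_e, c_o \ge 0$ and $\sum_a c_a + c_e + c_o = 1$. Writing $t_a := 1 - 2a/2^d$ (these are distinct points of $[-1,1]$, none equal to $0$, and $t_a = 1/2$ exactly when $a = 2^{d-2}$), and matching Fourier coefficients on each level, the vanishing of $\Deven,\Dodd$ on Fourier levels $1 \le |S| \le n-1$ yields the moment relations
\[
    \sum_a c_a\, t_a^{\,j} = 2^{-j} \quad\text{for all } j = 1,\dots,n-1,
\]
together with the level-$0$ relation $\sum_a c_a + c_e + c_o = 1$ and the level-$n$ relation $\sum_a c_a\, t_a^{\,n} + c_e - c_o = 2^{1-n}$ (this last one is where the extra $2^{-2n}\chi_{[n]}$ term shows up). So far everything is a routine verification of Fourier/character expectations, using $\E_{x\sim\Ucal_\gamma^n}[\chi_S(x)] = (1-2\gamma)^{|S|}$.

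The crux is a short positivity step applied to the moment relations. Let $M := \sum_a c_a$; since $c_e, c_o \ge 0$ we have $M \le 1$. Expanding the manifestly non-negative quantity $\sum_a c_a(t_a - 1/2)^2 = \sum_a c_a t_a^2 - \sum_a c_a t_a + \tfrac14 M = \tfrac14 - \tfrac12 + \tfrac14 M = \tfrac{M-1}{4}$ forces $M = 1$ (hence $c_e = c_o = 0$) and simultaneously forces $\sum_a c_a(t_a - 1/2)^2 = 0$, so $c_a = 0$ for every $a$ with $t_a \ne 1/2$. If $d \le 1$ there is no such $a$ at all, an immediate contradiction with $M = 1$; if $d \ge 2$ we conclude $\Pcal = \Ucal_{1/4}^n$, which contradicts the level-$n$ relation $\sum_a c_a t_a^n + c_e - c_o = 2^{1-n}$ (its left side is $(1/2)^n = 2^{-n} \ne 2^{1-n}$) — or, more transparently, contradicts the nonzero $2^{-2n}(-1)^{|x|}$ term in $\Pcal$. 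I do not expect a genuine obstacle; the only point requiring a little care is that the argument must hold for \emph{every} $d$ simultaneously, which it does precisely because $(t-1/2)^2$ has a unique zero and the discrete biases $t_a$ avoid it except at $a = 2^{d-2}$, and because only moments up to order $\max(2,n)$ are used, so $n \ge 3$ suffices.
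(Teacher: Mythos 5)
Your proof is correct, and it proceeds through the same Fourier framework as the paper (matching the expectations of the characters $\chi_S$ across the mixture, using $\E_{\Ucal_\gamma^n}[\chi_S]=(1-2\gamma)^{|S|}$ and the vanishing of $\Deven,\Dodd$ on $1\le|S|\le n-1$), but the key deduction is genuinely different. The paper pins down the $c_a$'s by invoking invertibility of the $2^d\times2^d$ Vandermonde matrix built from the distinct points $b_a=1-a/2^{d-1}$; you instead use a single sum-of-squares inequality, expanding $0\le\sum_a c_a(t_a-\tfrac12)^2=(M-1)/4$ with $M=\sum_a c_a\le1$, which simultaneously forces $M=1$ (hence $c_e=c_o=0$) and annihilates every $c_a$ with $t_a\ne1/2$. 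Your route needs only the first two moment relations (so $n\ge3$ suffices rather than $n>2^d$), exploits the nonnegativity of the mixture weights intrinsically rather than as a final appeal, and sidesteps a small bookkeeping wrinkle in the paper's argument: the Vandermonde system there nominally uses $|S|=0$, but at level zero the relation is really $\sum_a c_a + c_e + c_o = 1$ (the even/odd parts contribute), so one ought to run the Vandermonde over $|S|=1,\dots,2^d$; your level-0 handling is explicit and correct. Both approaches are short and elementary, and both land on $\Qcal=\Ucal_{1/4}^n$ with a level-$n$ contradiction; yours is arguably the more transparent of the two.
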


\begin{proof}
    Suppose by contradiction there exists a mixture of the form specified by \Cref{thm:main}:
    \begin{equation}\label{eq:clm:conj_exp_1}
        \Qcal = \sum_{\substack{a\in [0,2^d] \cap \Zbb \\ 
        a \ne 2^{d-1}}} c_a\cdot\Ucal_{a/2^d}^n + c_e \cdot \Deven + c_o\cdot \Dodd
    \end{equation}
    where $\Pcal = \Qcal$ and $\{c_a\},c_e,c_o$ are nonnegative values summing to $1$.
    For a subset $S \subseteq [n]$, consider the parity function $\chi_S(x) = (-1)^{\sum_{i\in S} x_i}$.
    Observe that for any set $S$ and bias $\gamma$, we have
    \[
        \E_{x\sim \Ucal_\gamma^n}[\chi_S(x)] = \prod_{i\in S}\E_{x\sim \Ucal_\gamma^n}[(-1)^{x_i}] = \pbra{1-2\gamma}^{|S|}.
    \]
    Additionally, 
    \begin{itemize}
        \item If $|S| = 0$, we have $\E_{x\sim \Deven}[\chi_S(x)] = \E_{x\sim \Dodd}[\chi_S(x)] = 1$,
        \item If $0 < |S| < n$, we have $\E_{x\sim \Deven}[\chi_S(x)] = \E_{x\sim \Dodd}[\chi_S(x)] = 0$,
        \item If $|S| = n$, we have $\E_{x\sim \Deven}[\chi_S(x)] = 1$ and $\E_{x\sim \Dodd}[\chi_S(x)] = -1$.
    \end{itemize}
    Thus,
    \[
        \E_{x\sim \Pcal}[\chi_S(x)] = \E_{x\sim \Ucal_{1/4}^n}[\chi_S(x)] = \frac{1}{2^{|S|}}
    \]
    for every $S$ of size $0 < |S| < n$.
    In order for $\Pcal = \Qcal$, it must be that for every such $S$, we have
    \begin{equation}\label{eq:VDMD}
        \sum_{\substack{a\in [0,2^d] \cap \Zbb \\ 
        a \ne 2^{d-1}}} c_a\cdot \pbra{1 - \frac{a}{2^{d-1}}}^{|S|} = \E_{x\sim \Qcal}[\chi_S(x)] = \E_{x\sim \Pcal}[\chi_S(x)] = \frac{1}{2^{|S|}}.
    \end{equation}
    
    Let $b_a = 1 - \frac{a}{2^{d-1}}$.
    We claim that $c_e + c_o = 0$, so \Cref{eq:VDMD} also holds for $S = \emptyset$.
    Indeed, applying the Cauchy-Schwarz inequality to the established $|S|=1,2$ cases gives
    \[
        \frac{1}{4} \le \pbra{\sum_a c_a b_a}^2 \le \pbra{\sum_a c_a} \cdot \pbra{\sum_a c_a b_a^2} \le \frac{1}{4} \pbra{\sum_a c_a}.
    \]
    That is, $\sum_a c_a = 1$.
    Since we assume \Cref{eq:clm:conj_exp_1} is a mixture, we know that $\{c_a\}$'s and $c_e,c_o$ are nonnegative values that sum to $1$; this enforces $c_e=c_o=0$, as claimed.
    
    We now consider the left-hand side of \Cref{eq:VDMD} for $|S| = 0,1, \dots, 2^d - 1$, this corresponds to the $2^d \times 2^d$ Vandermonde matrix
    \begin{equation*}
        V = \begin{pmatrix}
            1 & b_0 & b_0^2 & \cdots & b_0^{2^d - 1} \\
            1 & b_1 & b_1^2 & \cdots & b_1^{2^d - 1} \\
            1 & b_2 & b_2^2 & \cdots & b_2^{2^d - 1} \\
            \vdots & \vdots & \vdots & \ddots & \vdots \\
            1 & b_{2^d} & b_{2^d}^2 & \cdots & b_{2^d}^{2^d - 1}.
        \end{pmatrix}.
    \end{equation*}
    Note that $V$ does not have a row corresponding to $b_{2^{d-1}}$ as $a=2^{d-1}$ will be handled by $\Deven$ and $\Dodd$.
    Since the $b_a$'s are all distinct, $V$ is invertible (see, e.g., \cite{wiki:Vandermonde_matrix}), so the $c_a$'s are uniquely determined to be $c_a = 1$ for $a = 2^{d-2}$ and 0 otherwise.
    Recall we already established $c_e = c_o = 0$, so this implies $\Qcal = \Ucal_{1/4}^n$, creating the contradiction
    \[
        \frac{1}{2^{n}} = \E_{x\sim \Ucal_{1/4}^n}[\chi_{[n]}(x)] = \E_{x\sim \Qcal}[\chi_{[n]}(x)] = \E_{x\sim \Pcal}[\chi_{[n]}(x)] = \frac{1}{2^{n}} + \frac{1}{2^{n+1}} - \frac{-1}{2^{n+1}} =  \frac{1}{2^{n-1}}. \qedhere
    \]
\end{proof}

\begin{claim}
    $\Pcal$ can be sampled exactly by a 3-local function.
\end{claim}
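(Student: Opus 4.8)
The plan is to exhibit an explicit $3$-local map and verify by a direct computation that its output distribution is exactly $\Pcal$. Take $f\colon\bin^{2n}\to\bin^n$ on inputs $(x,y)=(x_1,\dots,x_n,y_1,\dots,y_n)$ defined by
\[
    f(x,y)_i = (x_i\oplus x_{i+1})\wedge y_i \qquad(\text{indices mod }n).
\]
Each output bit reads only $x_i,x_{i+1},y_i$, so $f$ is $3$-local. Writing $z_i = x_i\oplus x_{i+1}$, the vector $z$ is the standard $2$-local sampler for $\Deven$: the map $x\mapsto z$ is $2$-to-$1$ onto the even-weight strings, so under $x\sim\Ucal^n$ the string $z$ is uniform over the $2^{n-1}$ even-weight strings and is independent of $y\sim\Ucal^n$. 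Thus $f(\Ucal^{2n})$ is distributed as $z\wedge y$ with $z\sim\Deven$ and $y\sim\Ucal^n$ independent, which is precisely the sampler mentioned in \Cref{ex:conj}.

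\textbf{Computing the point masses.} The next step is to fix a target $w\in\bin^n$ with $|w|=k$ and $T=\supp w$, and evaluate $\Pr[z\wedge y=w]$ by conditioning on $z$. We need $z_i=1$ for all $i\in T$, and then over the uniform $y$ the event $z\wedge y=w$ forces $y_i=1$ on $T$, $y_i=0$ on $\supp z\setminus T$, and leaves $y$ free elsewhere; this has probability $2^{-|z|}$. Writing $\supp z = T\sqcup U$ with $U\subseteq[n]\setminus T$ and $|U|\equiv k\pmod 2$, we obtain
\[
    \Pr[z\wedge y = w] = \frac{2^{-k}}{2^{n-1}}\sum_{\substack{U\subseteq[n]\setminus T\\ |U|\equiv k\!\!\pmod 2}} 2^{-|U|}.
\]
Applying the elementary even/odd binomial identities $\sum_{|U|\text{ even}}2^{-|U|} = \tfrac12\bigl((3/2)^{n-k}+(1/2)^{n-k}\bigr)$ and $\sum_{|U|\text{ odd}}2^{-|U|} = \tfrac12\bigl((3/2)^{n-k}-(1/2)^{n-k}\bigr)$ over the $(n-k)$-element set $[n]\setminus T$, and simplifying with $2^{-k}(3/2)^{n-k}=3^{n-k}2^{-n}$ and $2^{-k}(1/2)^{n-k}=2^{-n}$, this collapses to
\[
    \Pr[z\wedge y=w] = 2^{-2n}\bigl(3^{n-k}+(-1)^k\bigr).
\]

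\textbf{Matching $\Pcal$ and concluding.} Finally I would compare with $\Pcal(w)$. Since $\Ucal_{1/4}^n(w)=3^{n-k}2^{-2n}$ and $2^{-n-1}\Deven(w)-2^{-n-1}\Dodd(w) = 2^{-n-1}\cdot 2^{-(n-1)}(-1)^k = (-1)^k 2^{-2n}$, we get $\Pcal(w) = 2^{-2n}(3^{n-k}+(-1)^k)$, which equals $\Pr[z\wedge y=w]$ for every $w$. Hence $f(\Ucal^{2n})=\Pcal$, proving the claim (and incidentally confirming $\Pcal$ is a genuine distribution).

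\textbf{Expected difficulty.} There is essentially no obstacle here; this is a bookkeeping argument. The only points requiring care are (i) correctly describing the conditional event over $y$ — in particular that coordinates outside $\supp z$ are unconstrained — (ii) the even/odd split of the binomial sum, which is what produces the $\pm\Deven/\Dodd$ correction, and (iii) checking the boundary weights $k=0$ and $k=n$, where the formula $2^{-2n}(3^{n-k}+(-1)^k)$ still gives the correct (and nonnegative) values $2^{1-2n}$ and $2^{-2n}(1+(-1)^n)$.
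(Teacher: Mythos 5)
Your proposal is correct and takes essentially the same route as the paper: both realize $\Pcal$ as the bitwise AND of $\Deven$ (sampled with the cyclic $2$-local XOR map) and the uniform distribution, and both verify the identity by conditioning on the $\Deven$ sample and computing point masses. The only cosmetic differences are that you write down the explicit $3$-local map and verify the point-mass formula $2^{-2n}(3^{n-k}+(-1)^k)$ for a general $w$, whereas the paper does the computation for $w=1^k0^{n-k}$ and invokes symmetry; these are equivalent calculations.
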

\begin{proof}
    Let $\Dall$ be the uniform distribution over $\bin^n$. 
    Consider the distribution $\Qcal$ defined by sampling $x \sim \Deven$ and $y \sim \Dall$, and returning the bitwise AND $z = x \land y$. 
    Clearly, $\Qcal$ can be sampled with a 3-local function (where $x\sim\Deven$ is sampled using the telescoping construction discussed in the introduction).
    We will show $\Pcal = \Qcal$.

    For a nonnegative integer $k$, observe that
    \begin{align*}
        \Pr_{z\sim \Qcal} \sbra{z = 1^k 0^{n-k}} &= \sum_{S \subseteq [n-k]} \Pr_{x\sim \Deven} \sbra{x = 1^k S} \Pr_{y\sim \Dall} \sbra{y = 1^k \cdots, y|_S = 0^{|S|}} \\
        &= \sum_{S \subseteq [n-k]} \frac{1+(-1)^{k+|S|}}{2^n} \cdot \frac{1}{2^{k + |S|}} \\
        &= \sum_{S \subseteq [n-k]} \pbra{\frac{1}{2^{n+k}} \cdot \frac{1}{2^{|S|}}} + (-1)^k \sum_{S \subseteq [n-k]}\pbra{\frac{1}{2^{n+k}} \pbra{-\frac{1}{2}}^{|S|}} \\
        &= \frac{1}{2^{n+k}}\cdot \pbra{\frac{3}{2}}^{n-k} + \frac{(-1)^k}{2^{n+k}}\cdot\pbra{\frac{1}{2}}^{n-k} \\
        &= \frac{3^{n-k} + (-1)^k}{4^n} = \Pr_{z\sim \Pcal} \sbra{z = 1^k 0^{n-k}}. 
    \end{align*}
    Since both $\Pcal$ and $\Qcal$ are symmetric distributions, the above calculation implies they must be equal.
\end{proof}

\section{Missing Proofs in Section \ref{sec:prelim}}\label{sec:app_prelim}

\subsection{Proof of Fact \ref{fct:bin_interval}}

\begin{proof}[Proof of \Cref{fct:bin_interval}]
    It suffices to assume $\gamma\in(0,1/2]$ and show for any integer $k$, we have
    \begin{equation}\label{eq:clm:bin_interval_1}
    \Pr\sbra{\Bin(n, \gamma)=k} \le \frac{O(1)}{\sqrt{\gamma n}}.
    \end{equation}
    To this end, we draw samples from $\Bin(n,\gamma)$ in the following way.
    \begin{itemize}
        \item For each $i\in[n]$, sample an unbiased random coin $B_i\in\bin$ and sample an independent $W_i\in\bin$ with probability $\Pr[W_i=1]=2\gamma$ and $\Pr[W_i=0]=1-2\gamma$.
        \item Define $X_i=W_i\cdot B_i$ for each $i\in[n]$. Then output $\sum_{i\in[n]}X_i$.
    \end{itemize}
    
    Now define $\Ecal$ to be the event that $\sum_{i\in[n]}W_i\le\gamma n$.
    Then by \Cref{fct:chernoff} with $\delta=1/2$ and $\mu=2\gamma n$, we have
    \begin{equation}\label{eq:clm:bin_interval_2}
        \Pr[\Ecal]\le e^{-\gamma n/4}.
    \end{equation}
    For fixed $W=(W_1,\ldots,W_n)$ under which $\Ecal$ does not happen, let $S=\cbra{i\in[n] : W_i=1}$.
    Then,
    \begin{align}
        \Pr\sbra{\sum_{i\in[n]}X_i=k\mid W}
        &=\Pr\sbra{\sum_{i\in S}B_i=k}
        =\Pr\sbra{\Bin(|S|,1/2)=k}
        \le\frac{O(1)}{\sqrt{|S|}}\le\frac{O(1)}{\sqrt{\gamma n}},
        \label{eq:clm:bin_interval_3}
    \end{align}
    where we use $|S|\ge\gamma n$ for the last inequality.
    
    Now we prove \Cref{eq:clm:bin_interval_1}:
    \begin{align*}
        \text{LHS of \Cref{eq:clm:bin_interval_1}}
        &=\Pr\sbra{\sum_{i\in[n]}X_i=k}
        \le\Pr[\Ecal]+\Pr\sbra{\sum_{i\in[n]}X_i=k\mid\neg\Ecal}\\
        &\le e^{-\gamma n/4}+\frac{O(1)}{\sqrt{\gamma n}}
        \tag{by \Cref{eq:clm:bin_interval_2} and \Cref{eq:clm:bin_interval_3}}\\
        &\le\frac{O(1)}{\sqrt{\gamma n}}=\text{RHS of \Cref{eq:clm:bin_interval_1}}.
        \tag*{\qedhere}
    \end{align*}
\end{proof}

\subsection{Proof of Fact \ref{clm:bin_difference}}
\begin{proof}[Proof of \Cref{clm:bin_difference}]
    By potentially swapping $a$ and $b$ or by replacing $(\gamma, a, b)$ with $(1-\gamma, n-a, n-b)$, we may assume $b \ge a$ and $\Pr\sbra{\Bin(n,\gamma) = a} \ge \Pr\sbra{\Bin(n,\gamma) = b}$.
    From here, we proceed similarly to the proof of \Cref{fct:bin_interval}:
    define $X_i = W^{(X)}_i \cdot B^{(X)}_i$ and $Y_i = W^{(Y)}_i \cdot B^{(Y)}_i$, where each $B^{(\cdot)}_i\in \bin$ is an independent unbiased random coin, and each $W^{(\cdot)}_i \in \bin$ is an independent random variable satisfying $\Pr[W^{(\cdot)}_i = 1] = 2\gamma$ and $\Pr[W^{(\cdot)}_i = 0] = 1-2\gamma$.
    For each $w \in \bin^n$, we additionally define $S_w = \cbra{i\in[n] : w_i=1}$.
    Then,
    \begin{align*}
        \phantom{=}& \Pr\sbra{\Bin(n,\gamma) = a} - \Pr\sbra{\Bin(n,\gamma) = b} \\
        =& \sum_w \Pr\sbra{W^{(X)} = w}\Pr\sbra{\sum_{i\in[n]}X_i=a\mid W^{(X)}=w}\\
        &\qquad\qquad- \sum_w\Pr\sbra{W^{(Y)} = w}\Pr\sbra{\sum_{i\in[n]}Y_i=b\mid W^{(Y)}=w} \\
        =& \sum_w\Pr\sbra{W^{(X)} = w}\pbra{\Pr\sbra{\sum_{i\in S_w}B_i^{(X)}=a} - \Pr\sbra{\sum_{i\in S_w}B_i^{(Y)}=b}} \\
        =& \sum_w\Pr\sbra{W^{(X)} = w}\pbra{\Pr\sbra{\Bin(|S_w|,1/2)=a} - \Pr\sbra{\Bin(|S_w|,1/2)=b}} \\
        \le& \Pr\sbra{W^{(X)} = 0^n} + \sum_{w\ne 0^n}\Pr\sbra{W^{(X)} = w}\cdot O\pbra{\frac{b-a}{|S_w|}},
    \end{align*}
    where the final inequality is somewhat standard (see, e.g., \cite[Fact A.3]{kane2024locally2}).

    For the remainder of the argument we will assume $\gamma \in (0,1/2]$; the remaining case is similar.
    Define $\Ecal$ to be the event that $\sum_{i\in[n]}W^{(X)}_i\le\gamma n$.
    Then by \Cref{fct:chernoff} with $\delta=1/2$ and $\mu=2\gamma n$, we have
    \begin{equation*}
        \Pr[\Ecal]\le e^{-\gamma n/4},
    \end{equation*}
    so we may continue our calculation as follows:
    \begin{align*}
        &\phantom{\le}\Pr\sbra{W^{(X)} = 0^n} + \sum_{w\ne 0^n}\Pr\sbra{W^{(X)} = w}\cdot O\pbra{\frac{b-a}{|S_w|}}\\
        &\le \Pr[\Ecal] + \sum_{w : |S_w| \ge \gamma n}\Pr\sbra{W^{(X)} = w}\cdot O\pbra{\frac{b-a}{|S_w|}} \\
        &\le e^{-\gamma n/4} + O\pbra{\frac{b-a}{\gamma n}} \\
        &\le O\pbra{\frac{b-a}{\gamma(1-\gamma)n}}. \qedhere
    \end{align*}
\end{proof}

\section{Missing Proofs in Section \ref{sec:characterize}}\label{app:missing_sec:characterize}

\subsection{Missing Proofs in Subsection \ref{subsec:matching_moments}}

\subsubsection{Proof of Claim \ref{clm:lem:dyadic_weight_after_cond_1}}
\begin{proof}[Proof of \Cref{clm:lem:dyadic_weight_after_cond_1}]
Assume $\err(p,d)>2\cdot2^{-30d}$.
Then for any $x\in\bin^{m-|S|}$ with $\abs{\frac{|f(x,\rho)|}n-p}\le2^{-30d}$, we have $\err\pbra{\frac{|f(x,\rho)|}n,d}\ge2^{-30d}$.
Combining with \Cref{eq:lem:dyadic_weight_after_cond_2}, we then have
\begin{align*}
\Pr_{y\sim\bin^m}\sbra{\err\pbra{\frac{|f(y)|}n,d}\ge2^{-30d}}
&\ge\Pr\sbra{y_S=\rho}\cdot\Pr_{x\sim\bin^{[m]\setminus S}}\sbra{\abs{\frac{|f(x,\rho)|}n-p}\le2^{-30d}}\ge2^{-|S|-1}.
\end{align*}
Recall $|S| \le dA$ and $\eps < 2^{-cdA}$ for some large constant $c > 0$.
Thus for sufficiently large $n$, this contradicts \Cref{lem:dyadic_weight_eps}.
\end{proof}

\subsubsection{Proof of Claim \ref{clm:lem:dyadic_weight_after_cond_2}}
\begin{proof}[Proof of \Cref{clm:lem:dyadic_weight_after_cond_2}]
Observe that
\begin{align*}
\Pr_{x\sim\bin^{[m]\setminus S}}\sbra{\err\pbra{\frac{|f(x,\rho)|}n,d}>\frac1{n^{1/(800d)}}}
&=\Pr_{y\sim\bin^m}\sbra{\err\pbra{\frac{|f(y)|}n,d}>\frac1{n^{1/(800d)}}\mid y_S=\rho}\\
&\le2^{|S|}\cdot\Pr_{y\sim\bin^m}\sbra{\err\pbra{\frac{|f(y)|}n,d}>\frac1{n^{1/(800d)}}}\\
&\le 2^{dA}\cdot O\pbra{\eps+e^{-n^{0.9}}}
\tag{by \Cref{lem:dyadic_weight_eps}} \\
&\le \poly(\eps) \tag{since $\eps < 2^{-cdA}$}
\end{align*}
for sufficiently large $n$.
\end{proof}

\subsubsection{Proof of Claim \ref{clm:lem:dyadic_weight_after_cond_4}}
\begin{proof}[Proof of \Cref{clm:lem:dyadic_weight_after_cond_4}]
Observe that the LHS event has the following two cases:
\begin{itemize}
\item $\err\pbra{\frac{|f(x,\rho)|}n,d}>n^{-1/(800d)}$. By \Cref{clm:lem:dyadic_weight_after_cond_2}, this happens with probability $\poly(\eps)$.
\item $\err\pbra{\frac{|f(x,\rho)|}n,d}\le n^{-1/(800d)}$ but $\abs{\frac{|f(x,\rho)|}n-\frac a{2^d}}>\frac1{n^{1/(800d)}}$. Then it means $\abs{\frac{|f(x,\rho)|}n-\frac{a'}{2^d}}\le n^{-1/(800d)}$ for some $a'\ne a$.
Then
$$
\abs{\frac{|f(x,\rho)|}n-\frac a{2^d}}\ge 2^{-d}-n^{-1/(800d)}>4^{-d}
$$
as $n$ is sufficiently large in terms of $d$. This happens with probability at most $\delta$.
\qedhere
\end{itemize}
\end{proof}

\subsection{Missing Proofs in Subsection \ref{subsec:approx_continuity}}

For the convenience of the reader, we include a full proof of \Cref{lem:anticoncentration_after_coupling_all}.
We emphasize that aside from minor modifications to handle the case of $\gamma = 1/2$ and relax some quantitative dependencies, the proof is essentially copied verbatim from \cite{kane2024locality}.

\begin{proof}[Proof of \Cref{lem:anticoncentration_after_coupling_all}]
    We assume without loss of generality $\gamma \in (0,1/2]$ by flipping the zeros and ones of $(X,Y,Z,W)$ if necessary.
    Observe that this preserves the congruence.
    If $t=1$ then we have that $\Pr\sbra{X+|Y|=Z+|W|}=\Pr\sbra{X=Z}$ as $q\ge2$.
    Since $X$ and $Z$ are independent copies of $\Ucal_\gamma^1$, we find
    \begin{equation}\label{eq:lem:anticoncentration_after_coupling_2}
    \Pr\sbra{X=Z}=\Pr\sbra{X=1}^2+(1-\Pr\sbra{X=1})^2 = \gamma^2+(1-\gamma)^2 < 1,
    \end{equation}
    where we use the fact that $\gamma\in(0,1/2]$.
    
    Now we assume $t, q\ge2$.
    Expand $\Pr\sbra{X+|Y|\equiv Z+|W|\Mod q}$ as
    \begin{align}\label{eq:lem:anticoncentration_after_coupling_1}
    \sum_{x,z\in\bin}\Pr\sbra{X=x,Z=z}\Pr\sbra{x+|Y|\equiv z+|W|\Mod q\mid X=x,Z=z}.
    \end{align}
    For fixed $x$ and $z$, consider the distribution of $x+|Y|\bmod q$ conditioned on $X=x,Z=z$.
    Since $Z$ is independent from $(X,Y)$, it is the same as the distribution, denoted by $\Pcal_x$, of $x+|Y|\bmod q$ conditioned on $X=x$.
    Similarly define $\Qcal_z$ as the distribution of $z+|W|\bmod q$ conditioned on $Z=z$ (or equivalently, conditioned on $Z=z,X=x$).
    
    Since $(X,Y)$ has distribution $\Ucal_\gamma^t$, $\Pcal_0$ has distribution $\Dcal_0$, the distribution of $|V|\bmod q$ for $V\sim\Ucal_\gamma^{t-1}$.
    Similarly, $\Qcal_1$ has distribution $\Dcal_1$, the distribution of $1+|V|\bmod q$ for $V\sim\Ucal_\gamma^{t-1}$.
    Hence by \Cref{fct:tvdist},
    \[
        \Pr\sbra{|Y|\equiv1+|W|\Mod q\mid X=0,Z=1} \le1-\tvdist{\Pcal_0-\Qcal_1} = 1-\tvdist{\Dcal_0-\Dcal_1}.
    \]
    The same bound holds for $\Pr\sbra{1+|Y|\equiv|W|\Mod q\mid X=1,Z=0}$.
    Plugging back into \Cref{eq:lem:anticoncentration_after_coupling_1}, we have
    \begin{align*}
        \Pr\sbra{X+|Y|\equiv Z+|W|\Mod q} &\le \Pr[X=Z]+\Pr[X\neq Z]\cdot\pbra{1-\tvdist{\Dcal_0-\Dcal_1}} \\
        &= \Pr[X=Z]+\pbra{1-\Pr[X=Z]}\pbra{1-\tvdist{\Dcal_0-\Dcal_1}} \\
        &= 1 - \tvdist{\Dcal_0-\Dcal_1}\pbra{1 - \Pr[X=Z]}.
    \end{align*}
    We know $\Pr[X=Z] < 1$ by \Cref{eq:lem:anticoncentration_after_coupling_2}, so the desired result follows from showing $\tvdist{\Dcal_0-\Dcal_1} > 0$ for any choice of $q\ge 3$, as well as for $q=2$ if $\gamma \ne 1/2$.

    For this we use Fourier analysis. 
    Let $\omega_q=e^{2\pi i/q}$ be the primitive $q$-th root of unity.
    We consider the following quantity
    $$
    Q=\abs{\E_{X\sim\Dcal_0}\sbra{\omega_q^X}-\E_{X\sim\Dcal_1}\sbra{\omega_q^X}}.
    $$
    On the one hand, we have
    \begin{equation}\label{eq:clm:tvdist_gamma_biased_shift_1}
    Q\le\sum_{c\in\Zbb/q\Zbb}\abs{\omega_q^c\cdot\pbra{\Dcal_0(c)-\Dcal_1(c)}}=\sum_{c\in\Zbb/q\Zbb}\abs{\Dcal_0(c)-\Dcal_1(c)}=2\cdot\tvdist{\Dcal_0-\Dcal_1}.
    \end{equation}
    On the other hand, we have 
    \begin{align}
    Q
    &=\abs{\pbra{1-\gamma+\gamma\cdot\omega_q}^{t-1}-\omega_q\cdot\pbra{1-\gamma+\gamma\cdot\omega_q}^{t-1}}
    \tag{by the definition of $\Dcal_0$ and $\Dcal_1$}\\
    &=\abs{1-\omega_q}\cdot\abs{1-\gamma+\gamma\cdot\omega_q}^{t-1}.
    \label{eq:clm:tvdist_gamma_biased_shift_2}
    \end{align}
    Let $r=\sin^2\pbra{\frac\pi q}$.
    Then
    $$
    \abs{1-\omega_q}=\sqrt{\pbra{1-\cos\pbra{\frac{2\pi}q}}^2+\sin^2\pbra{\frac{2\pi}q}}
    =2\cdot\abs{\sin\pbra{\frac\pi q}}
    =2\sqrt r
    $$
    and
    \begin{align*}
    \abs{1-\gamma+\gamma\cdot\omega_q}
    &=\sqrt{\pbra{1-\gamma+\gamma\cdot\cos\pbra{\frac{2\pi}q}}^2+\gamma^2\cdot\sin^2\pbra{\frac{2\pi}q}}
    \notag\\
    &=\sqrt{1-4\gamma(1-\gamma)\cdot\sin^2\pbra{\frac\pi q}}
    =\sqrt{1-4\gamma(1-\gamma)r}.
    \end{align*}
    Combining these with \Cref{eq:clm:tvdist_gamma_biased_shift_1} and \Cref{eq:clm:tvdist_gamma_biased_shift_2}, we have
    \begin{equation*}
    \tvdist{\Dcal_0-\Dcal_1}
    \ge\sqrt{r\cdot\pbra{1-4\gamma(1-\gamma)r}^{t-1}},
    \end{equation*}
    which is strictly larger than 0 unless $\gamma = 1/2$ and $q = 2$.
    This completes the proof.
\end{proof}

\subsection{Missing Proofs in Subsection \ref{subsec:put_together}}

\begin{proof}[Proof of \Cref{clm:gamma_restricted_TVD}]
    For clarity, we define/recall the following notation:
    \begin{itemize}
        \item $\Ecal^*(x)$ is the event that $|x| = \gamma n \pm n^{2/3}$,

        \item $p_\gamma = \Pr_\rho\sbra{\gamma_\rho = \gamma}$,

        \item $\Fcal_\gamma = \E_\rho \sbra{f(\Ucal^{[m]\setminus S}, \rho) \mid \gamma_\rho = \gamma}$,

        \item $\Gcal_\gamma$ is $f(\Ucal^m)$ conditioned on $\Ecal^*$,

        \item $\Dcal_\gamma$ is $\Dcal$ conditioned on $\Ecal^*$.
    \end{itemize}
    We will individually bound $\tvdist{\Fcal_\gamma - \Gcal_\gamma}$ and $\tvdist{\Gcal_\gamma - \Dcal_\gamma}$, and obtain our claim via the triangle inequality.
    Throughout the following, let $\Ecal$ be an arbitrary event.

    We first compare $\Fcal_\gamma$ and $\Gcal_\gamma$.
    By definition,
    \begin{align}
        \Pr_{x\sim \Gcal_\gamma}\sbra{\Ecal(x)} &= \frac{\Pr_{x\sim f(\Ucal^m)}\sbra{\Ecal(x) \land \Ecal^*(x)}}{\Pr_{x\sim f(\Ucal^m)}\sbra{\Ecal^*(x)}} \notag = \frac{\sum_{\alpha} p_\alpha \Pr_{x\sim \Fcal_\alpha}\sbra{\Ecal(x) \land \Ecal^*(x)}}{\sum_{\alpha} p_\alpha \Pr_{x\sim \Fcal_\alpha}\sbra{\Ecal^*(x)}} \notag \\
        &= \frac{p_\gamma\Pr_{x\sim \Fcal_\gamma}\sbra{\Ecal(x) \land \Ecal^*(x)} + \sum_{\alpha \ne \gamma} p_\alpha \Pr_{x\sim \Fcal_\alpha}\sbra{\Ecal(x) \land \Ecal^*(x)}}{p_\gamma\Pr_{x\sim \Fcal_\gamma}\sbra{\Ecal^*(x)} + \sum_{\alpha \ne \gamma} p_\alpha \Pr_{x\sim \Fcal_\alpha}\sbra{\Ecal^*(x)}}. \label{eq:prob_Estar_in_G}
    \end{align}
    We will separately bound each of the four terms appearing in \Cref{eq:prob_Estar_in_G}.
    Note that if $\alpha \ne \gamma$, then $|\alpha - \gamma| \ge 2^{-d}$.
    Thus, if $|x|$ is close to $\gamma n$, it must be reasonably far from $\alpha n$.
    That is,
    \begin{align*}
        \Pr_{x\sim \Fcal_\alpha}\sbra{\Ecal^*(x)} &=  \Pr_{x\sim \Fcal_\alpha}\sbra{\big||x|-\gamma n\big| \le n^{2/3}} \notag \\
        &\le \Pr_{x\sim \Fcal_\alpha}\sbra{\big||x|-\alpha n\big| \ge \big|\alpha n - \gamma n\big| - n^{2/3}} \notag \\
        &\le \Pr_{x\sim \Fcal_\alpha}\sbra{\big||x|-\alpha n\big| \ge 2^{-d}n - n^{2/3}} \tag{since $|\alpha - \gamma| \ge 2^{-d}$} \\
        &\le \Pr_{x\sim \Fcal_\alpha}\sbra{\big||x|-\alpha n\big| \ge 2^{-d-1}n} \tag{since $n \gg d$}.
    \end{align*}
    Recall for each restricted function $f_\rho(\Ucal^{[m]\setminus S})$ there exists some subset $T_\rho \subseteq [n]$ of size $|T_\rho| \le O_{d,k}(1)$ such that every $k$-tuple of bits in $[n] \setminus T_\rho$ sampled from $f_\rho(\Ucal^{[m]\setminus S})$ has distribution $\Ucal_{\gamma_\rho}^k$ (i.e., is $k$-wise independent for some even $k$).
    Define $T$ to be the union over all $T_\rho$ in the mixture $\Fcal_\alpha$, and note that $|T| \le 2^{|S|}\cdot O_{d,k}(1) = O_{d,k}(1)$.
    Let $\bar{x}$ denote the restriction of $x$ to the bits in $[n] \setminus T$.
    Then we may continue the previous chain of inequalities by
    \begin{align}
        \Pr_{x\sim \Fcal_\alpha}\sbra{\Ecal^*(x)} &\le \Pr_{x\sim \Fcal_\alpha}\sbra{\big||\bar{x}|-\alpha n\big| \ge 2^{-d-1}n - |T|} \notag \\
        &\le \Pr_{x\sim \Fcal_\alpha}\sbra{\big||\bar{x}|-\alpha (n-|T|)\big| \ge 2^{-d-1}n - (1+\alpha)|T|} \notag \\
        &\le \Pr_{x\sim \Fcal_\alpha}\sbra{\big||\bar{x}|-\E[\bar{x}]\big| \ge 2^{-d-2}n} \tag{since $n \gg d,k,\ell$} \\
        &\le 2\pbra{\frac{nk}{\pbra{2^{-d-2}n}^2}}^{k/2} \tag{by \Cref{fct:k-moments}} \\
        &\le\pbra{\frac{2^{2d+5}\cdot k}{n}}^{k/2}. \label{eq:Estar_upper_F}
    \end{align}

    Now consider $p_\gamma\Pr_{x\sim \Fcal_\gamma}\sbra{\Ecal^*(x)}$.
    We know by assumption that $p_\gamma > 0$, so there must exist some setting $\rho$ of the bits in $S$ such that $\gamma_\rho = \gamma$.
    Hence, we in fact have the stronger lower bound
    \begin{equation}\label{eq:p_gamma_LB}
        p_\gamma \ge 2^{-|S|}.
    \end{equation}
    For the remaining factor, we find
    \begin{align}
        \Pr_{x\sim \Fcal_\gamma}\sbra{\Ecal^*(x)} &= \Pr_{x\sim \Fcal_\gamma}\sbra{\big||x|-\gamma n\big| \le n^{2/3}} \notag \\
        &\ge 1 - \Pr_{x\sim \Fcal_\gamma}\sbra{\big||\bar{x}|-\gamma(n-|T|)\big| > n^{2/3} - (1+\gamma)|T|} \notag \\
        &\ge 1 - \Pr_{x\sim \Fcal_\gamma}\sbra{\big||\bar{x}|-\E[\bar{x}]\big| > \frac{n^{2/3}}{2}} \tag{since $n \gg d,k$} \\
        &\ge 1 - 2\pbra{\frac{nk}{(n^{2/3}/2)^2}}^{k/2} \tag{by \Cref{fct:k-moments}} \\
        &\ge 1 - \pbra{\frac{8k}{n^{1/3}}}^{k/2}. \label{eq:LB_on_Estar_F}
    \end{align}

    Finally, we consider $p_\gamma\Pr_{x\sim \Fcal_\gamma}\sbra{\Ecal(x) \land \Ecal^*(x)}$.
    We can again use \Cref{eq:p_gamma_LB} to lower bound $p_\gamma$.
    Additionally, 
    \begin{align}
        \Pr_{x\sim \Fcal_\gamma}\sbra{\Ecal(x) \land \Ecal^*(x)} &= \Pr_{x\sim \Fcal_\gamma}\sbra{\Ecal(x)} - \Pr_{x\sim \Fcal_\gamma}\sbra{\Ecal(x)\land \neg\Ecal^*(x)} \notag \\
        &\ge \Pr_{x\sim \Fcal_\gamma}\sbra{\Ecal(x)} - \pbra{1 - \Pr_{x\sim \Fcal_\gamma}\sbra{\Ecal^*(x)}} \notag \\
        &\ge \Pr_{x\sim \Fcal_\gamma}\sbra{\Ecal(x)} - \pbra{\frac{8k}{n^{1/3}}}^{k/2}, \label{eq:E_and_Estar_LB_F}
    \end{align}
    where the final inequality uses \Cref{eq:LB_on_Estar_F}.
    Substituting \Cref{eq:Estar_upper_F}, \Cref{eq:p_gamma_LB}, \Cref{eq:LB_on_Estar_F}, and \Cref{eq:E_and_Estar_LB_F} into \Cref{eq:prob_Estar_in_G}, we find that
    \begin{align*}
        \Pr_{x\sim \Gcal_\gamma}\sbra{\Ecal(x)} &= \frac{p_\gamma\Pr_{x\sim \Fcal_\gamma}\sbra{\Ecal(x) \land \Ecal^*(x)} + \sum_{\alpha \ne \gamma} p_\alpha \Pr_{x\sim \Fcal_\alpha}\sbra{\Ecal(x) \land \Ecal^*(x)}}{p_\gamma\Pr_{x\sim \Fcal_\gamma}\sbra{\Ecal^*(x)} + \sum_{\alpha \ne \gamma} p_\alpha \Pr_{x\sim \Fcal_\alpha}\sbra{\Ecal^*(x)}} \\
        &\le \frac{\Pr_{x\sim \Fcal_\gamma}\sbra{\Ecal(x)} + \frac{1}{p_\gamma}\max_{\alpha \ne \gamma} \Pr_{x\sim \Fcal_\alpha}\sbra{\Ecal^*(x)}}{\Pr_{x\sim \Fcal_\gamma}\sbra{\Ecal^*(x)}} \\
        &\le \frac{\Pr_{x\sim \Fcal_\gamma}\sbra{\Ecal(x)} + 2^{|S|}\pbra{\frac{2^{2d+5}\cdot k}{n}}^{k/2}}{1 - \pbra{\frac{8k}{n^{1/3}}}^{k/2}}.
    \end{align*}
    Rearranging gives
    \begin{equation}\label{eq:G_minus_F}
        \Pr_{x\sim \Gcal_\gamma}\sbra{\Ecal(x)} - \Pr_{x\sim \Fcal_\gamma}\sbra{\Ecal(x)} \le \pbra{\frac{8k}{n^{1/3}}}^{k/2} + 2^{|S|}\pbra{\frac{2^{2d+5}\cdot k}{n}}^{k/2} \le n^{-k/10},
    \end{equation}
    since $n$ is sufficiently large in terms of $d,k,$ and $\eps$.
    Similarly, we find that
    \begin{align*}
        \Pr_{x\sim \Gcal_\gamma}\sbra{\Ecal(x)} &= \frac{p_\gamma\Pr_{x\sim \Fcal_\gamma}\sbra{\Ecal(x) \land \Ecal^*(x)} + \sum_{\alpha \ne \gamma} p_\alpha \Pr_{x\sim \Fcal_\alpha}\sbra{\Ecal(x) \land \Ecal^*(x)}}{p_\gamma\Pr_{x\sim \Fcal_\gamma}\sbra{\Ecal^*(x)} + \sum_{\alpha \ne \gamma} p_\alpha \Pr_{x\sim \Fcal_\alpha}\sbra{\Ecal^*(x)}} \\ 
        &\ge \frac{\Pr_{x\sim \Fcal_\gamma}\sbra{\Ecal(x) \land \Ecal^*(x)}}{1 + \frac{1}{p_\gamma}\max_{\alpha \ne \gamma} \Pr_{x\sim \Fcal_\alpha}\sbra{\Ecal^*(x)}} \\
        &\ge \frac{\Pr_{x\sim \Fcal_\gamma}\sbra{\Ecal(x)} - \pbra{\frac{8k}{n^{1/3}}}^{k/2}}{1 + 2^{|S|}\pbra{\frac{2^{2d+5}\cdot k}{n}}^{k/2}},
    \end{align*}
    or equivalently
    \begin{equation}\label{eq:F_minus_G}
        \Pr_{x\sim \Fcal_\gamma}\sbra{\Ecal(x)} -  \Pr_{x\sim \Gcal_\gamma}\sbra{\Ecal(x)} \le \pbra{\frac{8k}{n^{1/3}}}^{k/2} + 2^{|S|}\pbra{\frac{2^{2d+5}\cdot k}{n}}^{k/2} \le n^{-k/10}.
    \end{equation}
    Combining \Cref{eq:G_minus_F} and \Cref{eq:F_minus_G} yields
    \begin{equation}\label{eq:TVD_F_G}
        \tvdist{\Fcal_\gamma - \Gcal_\gamma} \le n^{-k/10}.
    \end{equation}

    We now compare $\Gcal_\gamma$ and $\Dcal_\gamma$.
    For clarity, define
    \[
         \frac{A}{B} \coloneqq \frac{\Pr_{x \sim f(\Ucal^m)}\sbra{\Ecal(x) \land \Ecal^*(x)}}{\Pr_{x \sim f(\Ucal^m)}\sbra{\Ecal^*(x)}} = \Pr_{x\sim \Gcal_\gamma}\sbra{\Ecal(x)}
    \]
    and 
    \[
        \frac{A'}{B'} \coloneqq \frac{\Pr_{y \sim \Dcal}\sbra{\Ecal(y) \land \Ecal^*(y)}}{\Pr_{y \sim \Dcal}\sbra{\Ecal^*(y)}} = \Pr_{x\sim \Dcal_\gamma}\sbra{\Ecal(x)}.
    \]
    By our initial assumption, we know
    \begin{equation}\label{eq:A_vs_Aprime_and_B}
        \max\cbra{|A - A'|, |B - B'|} \le \tvdist{f(\Ucal^m) - \Dcal} \le \eps.
    \end{equation}
    Thus,
    \begin{align}
        \left|\Pr_{x\sim \Gcal_\gamma}\sbra{\Ecal(x)} - \Pr_{x\sim \Dcal_\gamma}\sbra{\Ecal(x)}\right| &= \frac{\left|AB' - A'B\right|}{BB'} \le \frac{A+B}{B(B-\eps)}\cdot \eps \tag{by \Cref{eq:A_vs_Aprime_and_B}} \\
        &\le \frac{2B}{B(B-\eps)}\cdot \eps \notag \\
        &\le 2\eps \cdot \pbra{\frac{1}{p_\gamma \Pr_{x \sim \Fcal_\gamma}\sbra{\Ecal^*(x)} - \eps}} \notag \\
        &\le 2\eps \cdot \pbra{\frac{1}{2^{-|S|} \pbra{1 - \pbra{\frac{8k}{n^{1/3}}}^{k/2}} - \eps}} \tag{by \Cref{eq:p_gamma_LB} \& \Cref{eq:LB_on_Estar_F}} \\
        &\le 2\eps \cdot \pbra{\frac{1}{2^{-2|S|} - \eps}}. \tag{since $n$ large in terms of $k,\eps$}
    \end{align}
    Recall that $|S| \le dk$, where $k \le \log(1/\eps)/C_d$ for some sufficiently large constant $C_d > 0$ depending only on $d$.
    Hence,
    \begin{equation}
        \left|\Pr_{x\sim \Gcal_\gamma}\sbra{\Ecal(x)} - \Pr_{x\sim \Dcal_\gamma}\sbra{\Ecal(x)}\right| \le O_d(\sqrt{\eps}). \label{eq:TVD_G_D}
    \end{equation}
    Combining \Cref{eq:TVD_F_G} and \Cref{eq:TVD_G_D}, we conclude
    \[
        \tvdist{\Fcal_\gamma - \Dcal_\gamma} \le \tvdist{\Fcal_\gamma - \Gcal_\gamma} + \tvdist{\Gcal_\gamma - \Dcal_\gamma} \le n^{-k/10} + O_d(\sqrt{\eps}) \le O_d(\sqrt{\eps})
    \]
    for large enough $n$.
\end{proof}

\end{document}